\def\doi{9(3:2)2013}
\def\refeq#1{(\ref{#1})}
 \def\dated#1{\def\thedate{#1}}
\newdimen\high%
\newdimen\ul%
\newdimen\wdth
\def\ratchet#1#2{\ifnum#1<#2\global #1=#2\fi}%
\def\ifnextchar#1#2#3{\let\@tempe
#1\def\@tempa{#2}\def\@tempb{#3}\futurelet
    \@tempc\@ifnch}%
\def\@ifnch{\ifx \@tempc \@sptoken \let\@tempd\@xifnch
      \else \ifx \@tempc \@tempe\let\@tempd\@tempa\else\let\@tempd\@tempb\fi
      \fi \@tempd}%
\def\:{\let\@sptoken= } \:  
\def\:{\@xifnch} \expandafter\def\: {\futurelet\@tempc\@ifnch}%
\let\ifnextchar\@ifnextchar
\newdimen\axis \axis=\fontdimen22\textfont2
\def\scalefactor#1{\ul=#1\ul \X@xbase=#1\X@xbase \Y@ybase=#1\Y@ybase}%
\def\fontscale#1{%
\if#1h\relax
\font\xydashfont=xydash10 scaled \magstephalf
\font\xyatipfont=xyatip10 scaled \magstephalf
\font\xybtipfont=xybtip10 scaled \magstephalf
\font\xybsqlfont=xybsql10 scaled \magstephalf
\font\xycircfont=xycirc10 scaled \magstephalf
\else
\font\xydashfont=xydash10 scaled \magstep#1%
\font\xyatipfont=xyatip10 scaled \magstep#1%
\font\xybtipfont=xybtip10 scaled \magstep#1%
\font\xybsqlfont=xybsql10 scaled \magstep#1%
\font\xycircfont=xycirc10 scaled \magstep#1%
\fi}
\def\bfig{\vcenter\bgroup\xy}
\def\efig{\endxy\egroup}
\def\car#1#2\nil{#1}%
\def\morphism{\ifnextchar({\morphismp}{\morphismp(0,0)}}%
\def\morphismp(#1){\ifnextchar|{\morphismpp(#1)}{\morphismpp(#1)|a|}}%
\def\morphismpp(#1)|#2|{\ifnextchar/{\morphismppp(#1)|#2|}%
    {\morphismppp(#1)|#2|/>/}}%
\def\morphismppp(#1)|#2|/#3/{%
    \ifnextchar<{\morphismpppp(#1)|#2|/#3/}%
    {\morphismpppp(#1)|#2|/#3/<\default,0>}}%
\def\morphismpppp(#1,#2)|#3|/#4/<#5,#6>[#7`#8;#9]{%
\xend#1\advance \xend by #5%
\yend#2\advance \yend by #6%
\domorphism(#1,#2)|#3|/#4/<#5,#6>[{#7}`{#8};{#9}]}
\def\domorphism(#1,#2)|#3|/#4/<#5,#6>[#7`#8;#9]{%
\def\next{\car#4.\nil}%
\if@\next\relax
 \if#3l%
  \ifnum #6>0%
   \POS(#1,#2)*+!!<0ex,\axis>{#7}\ar#4^-{#9} (\xend,\yend)*+!!<0ex,\axis>{#8}%
  \else%
   \POS(#1,#2)*+!!<0ex,\axis>{#7}\ar#4_-{#9} (\xend,\yend)*+!!<0ex,\axis>{#8}%
  \fi%
 \else \if#3m%
    \setbox0\hbox{$#9$}%
   \ifdim \wd0=0pt
     \POS(#1,#2)*+!!<0ex,\axis>{#7}\ar#4 (\xend,\yend)*+!!<0ex,\axis>{#8}%
   \else
     \POS(#1,#2)*+!!<0ex,\axis>{#7}\ar#4|-*+<1pt,4pt>{\labelstyle#9}
       (\xend,\yend)*+!!<0ex,\axis>{#8}%
   \fi
 \else \if#3r%
  \ifnum #6<0%
   \POS(#1,#2)*+!!<0ex,\axis>{#7}\ar#4^-{#9} (\xend,\yend)*+!!<0ex,\axis>{#8}%
  \else%
   \POS(#1,#2)*+!!<0ex,\axis>{#7}\ar#4_-{#9} (\xend,\yend)*+!!<0ex,\axis>{#8}%
  \fi%
 \else \if#3a%
  \ifnum #5>0%
   \POS(#1,#2)*+!!<0ex,\axis>{#7}\ar#4^-{#9} (\xend,\yend)*+!!<0ex,\axis>{#8}%
  \else%
   \POS(#1,#2)*+!!<0ex,\axis>{#7}\ar#4_-{#9} (\xend,\yend)*+!!<0ex,\axis>{#8}%
  \fi%
 \else \if#3b%
  \ifnum #5<0%
   \POS(#1,#2)*+!!<0ex,\axis>{#7}\ar#4^-{#9} (\xend,\yend)*+!!<0ex,\axis>{#8}%
  \else%
   \POS(#1,#2)*+!!<0ex,\axis>{#7}\ar#4_-{#9} (\xend,\yend)*+!!<0ex,\axis>{#8}%
  \fi%
 \else
   \POS(#1,#2)*+!!<0ex,\axis>{#7}\ar#4 (\xend,\yend)*+!!<0ex,\axis>{#8}%
 \fi\fi\fi\fi\fi%
\else%
 \if#3l%
  \ifnum #6>0%
   \POS(#1,#2)*+!!<0ex,\axis>{#7}\ar@{#4}^-{#9} (\xend,\yend)*+!!<0ex,\axis>{#8}%
  \else%
   \POS(#1,#2)*+!!<0ex,\axis>{#7}\ar@{#4}_-{#9} (\xend,\yend)*+!!<0ex,\axis>{#8}%
  \fi%
 \else \if#3m%
    \setbox0\hbox{$#9$}%
   \ifdim \wd0=0pt
     \POS(#1,#2)*+!!<0ex,\axis>{#7}\ar@{#4} (\xend,\yend)*+!!<0ex,\axis>{#8}%
   \else
     \POS(#1,#2)*+!!<0ex,\axis>{#7}\ar@{#4}|-*+<1pt,4pt>{\labelstyle#9}
         (\xend,\yend)*+!!<0ex,\axis>{#8}%
   \fi
 \else \if#3r%
  \ifnum #6<0%
   \POS(#1,#2)*+!!<0ex,\axis>{#7}\ar@{#4}^-{#9} (\xend,\yend)*+!!<0ex,\axis>{#8}%
  \else%
   \POS(#1,#2)*+!!<0ex,\axis>{#7}\ar@{#4}_-{#9} (\xend,\yend)*+!!<0ex,\axis>{#8}%
  \fi%
 \else \if#3a%
  \ifnum #5>0%
   \POS(#1,#2)*+!!<0ex,\axis>{#7}\ar@{#4}^-{#9} (\xend,\yend)*+!!<0ex,\axis>{#8}%
  \else%
   \POS(#1,#2)*+!!<0ex,\axis>{#7}\ar@{#4}_-{#9} (\xend,\yend)*+!!<0ex,\axis>{#8}%
  \fi%
 \else \if#3b%
  \ifnum #5<0%
   \POS(#1,#2)*+!!<0ex,\axis>{#7}\ar@{#4}^-{#9} (\xend,\yend)*+!!<0ex,\axis>{#8}%
  \else%
   \POS(#1,#2)*+!!<0ex,\axis>{#7}\ar@{#4}_-{#9} (\xend,\yend)*+!!<0ex,\axis>{#8}%
  \fi%
 \else
   \POS(#1,#2)*+!!<0ex,\axis>{#7}\ar@{#4} (\xend,\yend)*+!!<0ex,\axis>{#8}%
 \fi\fi\fi\fi\fi
\fi\ignorespaces}%
\def\vector(#1,#2)/#3/<#4,#5>{%
 \xend#1 \yend#2 \advance\xend by #4 \advance\yend by #5
     \POS(#1,#2)\ar#3 (\xend,\yend)}
\def\squarepppp(#1,#2)|#3|/#4`#5`#6`#7/<#8>[#9]{%
\xpos#1\ypos#2%
\def\next|##1##2##3##4|{%
 \def\xa{##1}\def\xb{##2}\def\xc{##3}\def\xd{##4}\ignorespaces}%
\next|#3|%
\def\next<##1,##2>{\deltax=##1\deltay=##2\ignorespaces}%
\next<#8>%
\def\next[##1`##2`##3`##4;##5`##6`##7`##8]{%
    \def\nodea{##1}\def\nodeb{##2}\def\nodec{##3}\def\noded{##4}%
    \def\labela{##5}\def\labelb{##6}\def\labelc{##7}\def\labeld{##8}\ignorespaces}%
\next[#9]%
\morphism(\xpos,\ypos)|\xd|/{#7}/<\deltax,0>[\nodec`\noded;\labeld]%
\advance \ypos by \deltay%
\morphism(\xpos,\ypos)|\xb|/{#5}/<0,-\deltay>[\nodea`\nodec;\labelb]%
\morphism(\xpos,\ypos)|\xa|/{#4}/<\deltax,0>[\nodea`\nodeb;\labela]%
 \advance \xpos by \deltax%
\morphism(\xpos,\ypos)|\xc|/{#6}/<0,-\deltay>[\nodeb`\noded;\labelc]%
\ignorespaces}%
\def\square{\ifnextchar({\squarep}{\squarep(0,0)}}%
\def\squarep(#1){\ifnextchar|{\squarepp(#1)}{\squarepp(#1)|alrb|}}%
\def\squarepp(#1)|#2|{\ifnextchar/{\squareppp(#1)|#2|}%
    {\squareppp(#1)|#2|/>`>`>`>/}}%
\def\squareppp(#1)|#2|/#3`#4`#5`#6/{%
    \ifnextchar<{\squarepppp(#1)|#2|/#3`#4`#5`#6/}%
    {\squarepppp(#1)|#2|/#3`#4`#5`#6/<\default,\default>}}%
\def\ptrianglepppp(#1,#2)|#3|/#4`#5`#6/<#7>[#8]{%
\xpos#1\ypos#2%
\def\next|##1##2##3|{\def\xa{##1}\def\xb{##2}\def\xc{##3}}%
\next|#3|%
\def\next<##1,##2>{\deltax=##1\deltay=##2\ignorespaces}%
\next<#7>%
\def\next[##1`##2`##3;##4`##5`##6]{%
    \def\nodea{##1}\def\nodeb{##2}\def\nodec{##3}%
    \def\labela{##4}\def\labelb{##5}\def\labelc{##6}}%
\next[#8]%
\advance\ypos by \deltay%
\morphism(\xpos,\ypos)|\xa|/{#4}/<\deltax,0>[\nodea`\nodeb;\labela]%
\morphism(\xpos,\ypos)|\xb|/{#5}/<0,-\deltay>[\nodea`\nodec;\labelb]%
\advance\xpos by \deltax%
\morphism(\xpos,\ypos)|\xc|/{#6}/<-\deltax,-\deltay>[\nodeb`\nodec;\labelc]%
\ignorespaces}%
\def\qtrianglepppp(#1,#2)|#3|/#4`#5`#6/<#7>[#8]{%
\xpos#1\ypos#2%
\def\next|##1##2##3|{\def\xa{##1}\def\xb{##2}\def\xc{##3}}%
\next|#3|%
\def\next<##1,##2>{\deltax=##1\deltay=##2\ignorespaces}%
\next<#7>%
\def\next[##1`##2`##3;##4`##5`##6]{%
    \def\nodea{##1}\def\nodeb{##2}\def\nodec{##3}%
    \def\labela{##4}\def\labelb{##5}\def\labelc{##6}}%
\next[#8]%
\advance\ypos by \deltay%
\morphism(\xpos,\ypos)|\xa|/{#4}/<\deltax,0>[\nodea`\nodeb;\labela]%
\morphism(\xpos,\ypos)|\xb|/{#5}/<\deltax,-\deltay>[\nodea`\nodec;\labelb]%
\advance\xpos by \deltax%
\morphism(\xpos,\ypos)|\xc|/{#6}/<0,-\deltay>[\nodeb`\nodec;\labelc]%
\ignorespaces}%
\def\dtrianglepppp(#1,#2)|#3|/#4`#5`#6/<#7>[#8]{%
\xpos#1\ypos#2%
\def\next|##1##2##3|{\def\xa{##1}\def\xb{##2}\def\xc{##3}}%
\next|#3|%
\def\next<##1,##2>{\deltax=##1\deltay=##2\ignorespaces}%
\next<#7>%
\def\next[##1`##2`##3;##4`##5`##6]{%
    \def\nodea{##1}\def\nodeb{##2}\def\nodec{##3}%
    \def\labela{##4}\def\labelb{##5}\def\labelc{##6}}%
\next[#8]%
\morphism(\xpos,\ypos)|\xc|/{#6}/<\deltax,0>[\nodeb`\nodec;\labelc]%
\advance\ypos by \deltay\advance \xpos by \deltax%
\morphism(\xpos,\ypos)|\xa|/{#4}/<-\deltax,-\deltay>[\nodea`\nodeb;\labela]%
\morphism(\xpos,\ypos)|\xb|/{#5}/<0,-\deltay>[\nodea`\nodec;\labelb]%
\ignorespaces}%
\def\btrianglepppp(#1,#2)|#3|/#4`#5`#6/<#7>[#8]{%
\xpos#1\ypos#2%
\def\next|##1##2##3|{\def\xa{##1}\def\xb{##2}\def\xc{##3}}%
\next|#3|%
\def\next<##1,##2>{\deltax=##1\deltay=##2\ignorespaces}%
\next<#7>%
\def\next[##1`##2`##3;##4`##5`##6]{%
    \def\nodea{##1}\def\nodeb{##2}\def\nodec{##3}%
    \def\labela{##4}\def\labelb{##5}\def\labelc{##6}}%
\next[#8]%
\morphism(\xpos,\ypos)|\xc|/{#6}/<\deltax,0>[\nodeb`\nodec;\labelc]%
\advance\ypos by \deltay%
\morphism(\xpos,\ypos)|\xa|/{#4}/<0,-\deltay>[\nodea`\nodeb;\labela]%
\morphism(\xpos,\ypos)|\xb|/{#5}/<\deltax,-\deltay>[\nodea`\nodec;\labelb]%
\ignorespaces}%
\def\Atrianglepppp(#1,#2)|#3|/#4`#5`#6/<#7>[#8]{%
\xpos#1\ypos#2%
\def\next|##1##2##3|{\def\xa{##1}\def\xb{##2}\def\xc{##3}}%
\next|#3|%
\def\next<##1,##2>{\deltax=##1\deltay=##2\ignorespaces}%
\next<#7>%
\def\next[##1`##2`##3;##4`##5`##6]{%
    \def\nodea{##1}\def\nodeb{##2}\def\nodec{##3}%
    \def\labela{##4}\def\labelb{##5}\def\labelc{##6}}%
\next[#8]%
\multiply\deltax by 2%
\morphism(\xpos,\ypos)|\xc|/{#6}/<\deltax,0>[\nodeb`\nodec;\labelc]%
\divide\deltax by 2
\advance\ypos by \deltay\advance\xpos by \deltax%
\morphism(\xpos,\ypos)|\xa|/{#4}/<-\deltax,-\deltay>[\nodea`\nodeb;\labela]%
\morphism(\xpos,\ypos)|\xb|/{#5}/<\deltax,-\deltay>[\nodea`\nodec;\labelb]%
\ignorespaces}%
\def\Vtrianglepppp(#1,#2)|#3|/#4`#5`#6/<#7>[#8]{%
\xpos#1\ypos#2%
\def\next|##1##2##3|{\def\xa{##1}\def\xb{##2}\def\xc{##3}}%
\next|#3|%
\def\next<##1,##2>{\deltax=##1\deltay=##2\ignorespaces}%
\next<#7>%
\def\next[##1`##2`##3;##4`##5`##6]{%
    \def\nodea{##1}\def\nodeb{##2}\def\nodec{##3}%
    \def\labela{##4}\def\labelb{##5}\def\labelc{##6}}%
\next[#8]%
\advance\ypos by \deltay%
\morphism(\xpos,\ypos)|\xb|/{#5}/<\deltax,-\deltay>[\nodea`\nodec;\labelb]%
\multiply\deltax by 2%
\morphism(\xpos,\ypos)|\xa|/{#4}/<\deltax,0>[\nodea`\nodeb;\labela]%
\advance\xpos by \deltax \divide \deltax by 2
\morphism(\xpos,\ypos)|\xc|/{#6}/<-\deltax,-\deltay>[\nodeb`\nodec;\labelc]%
\ignorespaces}%
\def\Ctrianglepppp(#1,#2)|#3|/#4`#5`#6/<#7>[#8]{%
\xpos#1\ypos#2%
\def\next|##1##2##3|{\def\xa{##1}\def\xb{##2}\def\xc{##3}}%
\next|#3|%
\def\next<##1,##2>{\deltax=##1\deltay=##2\ignorespaces}%
\next<#7>%
\def\next[##1`##2`##3;##4`##5`##6]{%
    \def\nodea{##1}\def\nodeb{##2}\def\nodec{##3}%
    \def\labela{##4}\def\labelb{##5}\def\labelc{##6}}%
\next[#8]%
\advance \ypos by \deltay%
\morphism(\xpos,\ypos)|\xc|/{#6}/<\deltax,-\deltay>[\nodeb`\nodec;\labelc]%
\advance\ypos by \deltay \advance \xpos by \deltax%
\morphism(\xpos,\ypos)|\xa|/{#4}/<-\deltax,-\deltay>[\nodea`\nodeb;\labela]%
\multiply\deltay by 2%
\morphism(\xpos,\ypos)|\xb|/{#5}/<0,-\deltay>[\nodea`\nodec;\labelb]%
\ignorespaces}%
\def\Dtrianglepppp(#1,#2)|#3|/#4`#5`#6/<#7>[#8]{%
\xpos#1\ypos#2%
\def\next|##1##2##3|{\def\xa{##1}\def\xb{##2}\def\xc{##3}}%
\next|#3|%
\def\next<##1,##2>{\deltax=##1\deltay=##2\ignorespaces}%
\next<#7>%
\def\next[##1`##2`##3;##4`##5`##6]{%
    \def\nodea{##1}\def\nodeb{##2}\def\nodec{##3}%
    \def\labela{##4}\def\labelb{##5}\def\labelc{##6}}%
\next[#8]%
\advance\xpos by \deltax \advance\ypos by \deltay%
\morphism(\xpos,\ypos)|\xc|/{#6}/<-\deltax,-\deltay>[\nodeb`\nodec;\labelc]%
\advance\xpos by -\deltax \advance\ypos by \deltay%
\morphism(\xpos,\ypos)|\xb|/{#5}/<\deltax,-\deltay>[\nodea`\nodeb;\labelb]%
\multiply \deltay by 2%
\morphism(\xpos,\ypos)|\xa|/{#4}/<0,-\deltay>[\nodea`\nodec;\labela]%
\ignorespaces}%
\def\ptrianglep(#1){\ifnextchar|{\ptrianglepp(#1)}{\ptrianglepp(#1)|alr|}}%
\def\ptrianglepp(#1)|#2|{\ifnextchar/{\ptriangleppp(#1)|#2|}%
    {\ptriangleppp(#1)|#2|/>`>`>/}}%
\def\ptriangleppp(#1)|#2|/#3`#4`#5/{%
    \ifnextchar<{\ptrianglepppp(#1)|#2|/#3`#4`#5/}%
    {\ptrianglepppp(#1)|#2|/#3`#4`#5/<\default,\default>}}%
\def\qtrianglep(#1){\ifnextchar|{\qtrianglepp(#1)}{\qtrianglepp(#1)|alr|}}%
\def\qtrianglepp(#1)|#2|{\ifnextchar/{\qtriangleppp(#1)|#2|}%
    {\qtriangleppp(#1)|#2|/>`>`>/}}%
\def\qtriangleppp(#1)|#2|/#3`#4`#5/{%
    \ifnextchar<{\qtrianglepppp(#1)|#2|/#3`#4`#5/}%
    {\qtrianglepppp(#1)|#2|/#3`#4`#5/<\default,\default>}}%
\def\dtriangle{\ifnextchar({\dtrianglep}{\dtrianglep(0,0)}}%
\def\dtrianglep(#1){\ifnextchar|{\dtrianglepp(#1)}{\dtrianglepp(#1)|lrb|}}%
\def\dtrianglepp(#1)|#2|{\ifnextchar/{\dtriangleppp(#1)|#2|}%
    {\dtriangleppp(#1)|#2|/>`>`>/}}%
\def\dtriangleppp(#1)|#2|/#3`#4`#5/{%
    \ifnextchar<{\dtrianglepppp(#1)|#2|/#3`#4`#5/}%
    {\dtrianglepppp(#1)|#2|/#3`#4`#5/<\default,\default>}}%
\def\btrianglep(#1){\ifnextchar|{\btrianglepp(#1)}{\btrianglepp(#1)|lrb|}}%
\def\btrianglepp(#1)|#2|{\ifnextchar/{\btriangleppp(#1)|#2|}%
    {\btriangleppp(#1)|#2|/>`>`>/}}%
\def\btriangleppp(#1)|#2|/#3`#4`#5/{%
    \ifnextchar<{\btrianglepppp(#1)|#2|/#3`#4`#5/}%
    {\btrianglepppp(#1)|#2|/#3`#4`#5/<\default,\default>}}%
\def\Atriangle{\ifnextchar({\Atrianglep}{\Atrianglep(0,0)}}%
\def\Atrianglep(#1){\ifnextchar|{\Atrianglepp(#1)}{\Atrianglepp(#1)|lrb|}}%
\def\Atrianglepp(#1)|#2|{\ifnextchar/{\Atriangleppp(#1)|#2|}%
    {\Atriangleppp(#1)|#2|/>`>`>/}}%
\def\Atriangleppp(#1)|#2|/#3`#4`#5/{%
    \ifnextchar<{\Atrianglepppp(#1)|#2|/#3`#4`#5/}%
    {\Atrianglepppp(#1)|#2|/#3`#4`#5/<\default,\default>}}%
\def\Vtriangle{\ifnextchar({\Vtrianglep}{\Vtrianglep(0,0)}}%
\def\Vtrianglep(#1){\ifnextchar|{\Vtrianglepp(#1)}{\Vtrianglepp(#1)|alb|}}%
\def\Vtrianglepp(#1)|#2|{\ifnextchar/{\Vtriangleppp(#1)|#2|}%
    {\Vtriangleppp(#1)|#2|/>`>`>/}}%
\def\Vtriangleppp(#1)|#2|/#3`#4`#5/{%
    \ifnextchar<{\Vtrianglepppp(#1)|#2|/#3`#4`#5/}%
    {\Vtrianglepppp(#1)|#2|/#3`#4`#5/<\default,\default>}}%
\def\Ctrianglep(#1){\ifnextchar|{\Ctrianglepp(#1)}{\Ctrianglepp(#1)|arb|}}%
\def\Ctrianglepp(#1)|#2|{\ifnextchar/{\Ctriangleppp(#1)|#2|}%
    {\Ctriangleppp(#1)|#2|/>`>`>/}}%
\def\Ctriangleppp(#1)|#2|/#3`#4`#5/{%
    \ifnextchar<{\Ctrianglepppp(#1)|#2|/#3`#4`#5/}%
    {\Ctrianglepppp(#1)|#2|/#3`#4`#5/<\default,\default>}}%
\def\Dtrianglep(#1){\ifnextchar|{\Dtrianglepp(#1)}{\Dtrianglepp(#1)|alb|}}%
\def\Dtrianglepp(#1)|#2|{\ifnextchar/{\Dtriangleppp(#1)|#2|}%
    {\Dtriangleppp(#1)|#2|/>`>`>/}}%
\def\Dtriangleppp(#1)|#2|/#3`#4`#5/{%
    \ifnextchar<{\Dtrianglepppp(#1)|#2|/#3`#4`#5/}%
    {\Dtrianglepppp(#1)|#2|/#3`#4`#5/<\default,\default>}}%
\def\Atrianglepairpppp(#1)|#2|/#3`#4`#5`#6`#7/<#8>[#9]{%
\def\next(##1,##2){\xpos##1\ypos##2}%
\next(#1)%
\def\next|##1##2##3##4##5|{\def\xa{##1}\def\xb{##2}%
\def\xc{##3}\def\xd{##4}\def\xe{##5}}%
\next|#2|%
\def\next<##1,##2>{\deltax=##1\deltay=##2\ignorespaces}%
\next<#8>%
\def\next[##1`##2`##3`##4;##5`##6`##7`##8`##9]{%
 \def\nodea{##1}\def\nodeb{##2}\def\nodec{##3}\def\noded{##4}%
 \def\labela{##5}\def\labelb{##6}\def\labelc{##7}\def\labeld{##8}\def\labele{##9}}%
\next[#9]%
\morphism(\xpos,\ypos)|\xd|/{#6}/<\deltax,0>[\nodeb`\nodec;\labeld]%
\advance\xpos by \deltax%
\morphism(\xpos,\ypos)|\xe|/{#7}/<\deltax,0>[\nodec`\noded;\labele]%
\advance\ypos by \deltay%
\morphism(\xpos,\ypos)|\xa|/{#3}/<-\deltax,-\deltay>[\nodea`\nodeb;\labela]%
\morphism(\xpos,\ypos)|\xb|/{#4}/<0,-\deltay>[\nodea`\nodec;\labelb]%
\morphism(\xpos,\ypos)|\xc|/{#5}/<\deltax,-\deltay>[\nodea`\noded;\labelc]%
\ignorespaces}%
\def\Vtrianglepairpppp(#1)|#2|/#3`#4`#5`#6`#7/<#8>[#9]{%
\def\next(##1,##2){\xpos##1\ypos##2}%
\next(#1)%
\def\next|##1##2##3##4##5|{\def\xa{##1}\def\xb{##2}%
\def\xc{##3}\def\xd{##4}\def\xe{##5}}%
\next|#2|%
\def\next<##1,##2>{\deltax=##1\deltay=##2\ignorespaces}%
\next<#8>%
\def\next[##1`##2`##3`##4;##5`##6`##7`##8`##9]{%
 \def\nodea{##1}\def\nodeb{##2}\def\nodec{##3}\def\noded{##4}%
 \def\labela{##5}\def\labelb{##6}\def\labelc{##7}\def\labeld{##8}\def\labele{##9}}%
\next[#9]%
\advance\ypos by \deltay%
\morphism(\xpos,\ypos)|\xa|/{#3}/<\deltax,0>[\nodea`\nodeb;\labela]%
\morphism(\xpos,\ypos)|\xc|/{#5}/<\deltax,-\deltay>[\nodea`\noded;\labelc]%
\advance\xpos by \deltax%
\morphism(\xpos,\ypos)|\xb|/{#4}/<\deltax,0>[\nodeb`\nodec;\labelb]%
\morphism(\xpos,\ypos)|\xd|/{#6}/<0,-\deltay>[\nodeb`\noded;\labeld]%
\advance\xpos by \deltax%
\morphism(\xpos,\ypos)|\xe|/{#7}/<-\deltax,-\deltay>[\nodec`\noded;\labele]%
\ignorespaces}%
\def\Ctrianglepairpppp(#1)|#2|/#3`#4`#5`#6`#7/<#8>[#9]{%
\def\next(##1,##2){\xpos##1\ypos##2}%
\next(#1)%
\def\next|##1##2##3##4##5|{\def\xa{##1}\def\xb{##2}%
\def\xc{##3}\def\xd{##4}\def\xe{##5}}%
\next|#2|%
\def\next<##1,##2>{\deltax=##1\deltay=##2\ignorespaces}%
\next<#8>%
\def\next[##1`##2`##3`##4;##5`##6`##7`##8`##9]{%
 \def\nodea{##1}\def\nodeb{##2}\def\nodec{##3}\def\noded{##4}%
 \def\labela{##5}\def\labelb{##6}\def\labelc{##7}\def\labeld{##8}\def\labele{##9}}%
\next[#9]%
\advance\ypos by \deltay%
\morphism(\xpos,\ypos)|\xe|/{#7}/<0,-\deltay>[\nodec`\noded;\labele]%
\advance\xpos by -\deltax%
\morphism(\xpos,\ypos)|\xc|/{#5}/<\deltax,0>[\nodeb`\nodec;\labelc]%
\morphism(\xpos,\ypos)|\xd|/{#6}/<\deltax,-\deltay>[\nodeb`\noded;\labeld]%
\advance\ypos by \deltay%
\advance\xpos by \deltax%
\morphism(\xpos,\ypos)|\xa|/{#3}/<-\deltax,-\deltay>[\nodea`\nodeb;\labela]%
\morphism(\xpos,\ypos)|\xb|/{#4}/<0,-\deltay>[\nodea`\nodec;\labelb]%
\ignorespaces}%
\def\Dtrianglepairpppp(#1)|#2|/#3`#4`#5`#6`#7/<#8>[#9]{%
\def\next(##1,##2){\xpos##1\ypos##2}%
\next(#1)%
\def\next|##1##2##3##4##5|{\def\xa{##1}\def\xb{##2}%
\def\xc{##3}\def\xd{##4}\def\xe{##5}}%
\next|#2|%
\def\next<##1,##2>{\deltax=##1\deltay=##2\ignorespaces}%
\next<#8>%
\def\next[##1`##2`##3`##4;##5`##6`##7`##8`##9]{%
 \def\nodea{##1}\def\nodeb{##2}\def\nodec{##3}\def\noded{##4}%
 \def\labela{##5}\def\labelb{##6}\def\labelc{##7}\def\labeld{##8}\def\labele{##9}}%
\next[#9]%
\advance\ypos by \deltay%
\morphism(\xpos,\ypos)|\xc|/{#5}/<\deltax,0>[\nodeb`\nodec;\labelc]%
\morphism(\xpos,\ypos)|\xd|/{#6}/<0,-\deltay>[\nodeb`\noded;\labeld]%
\advance\ypos by \deltay%
\morphism(\xpos,\ypos)|\xa|/{#3}/<0,-\deltay>[\nodea`\nodeb;\labela]%
\morphism(\xpos,\ypos)|\xb|/{#4}/<\deltax,-\deltay>[\nodea`\nodec;\labelb]%
\advance\ypos by -\deltay%
\advance\xpos by \deltax%
\morphism(\xpos,\ypos)|\xe|/{#7}/<-\deltax,-\deltay>[\nodec`\noded;\labele]%
\ignorespaces}%
\def\Atrianglepairp(#1){\ifnextchar|{\Atrianglepairpp(#1)}%
{\Atrianglepairpp(#1)|lmrbb|}}%
\def\Atrianglepairpp(#1)|#2|{\ifnextchar/{\Atrianglepairppp(#1)|#2|}%
    {\Atrianglepairppp(#1)|#2|/>`>`>`>`>/}}%
\def\Atrianglepairppp(#1)|#2|/#3`#4`#5`#6`#7/{%
    \ifnextchar<{\Atrianglepairpppp(#1)|#2|/#3`#4`#5`#6`#7/}%
    {\Atrianglepairpppp(#1)|#2|/#3`#4`#5`#6`#7/<\default,\default>}}%
\def\Vtrianglepairp(#1){\ifnextchar|{\Vtrianglepairpp(#1)}%
{\Vtrianglepairpp(#1)|aalmr|}}%
\def\Vtrianglepairpp(#1)|#2|{\ifnextchar/{\Vtrianglepairppp(#1)|#2|}%
    {\Vtrianglepairppp(#1)|#2|/>`>`>`>`>/}}%
\def\Vtrianglepairppp(#1)|#2|/#3`#4`#5`#6`#7/{%
    \ifnextchar<{\Vtrianglepairpppp(#1)|#2|/#3`#4`#5`#6`#7/}%
    {\Vtrianglepairpppp(#1)|#2|/#3`#4`#5`#6`#7/<\default,\default>}}%
\def\Ctrianglepairp(#1){\ifnextchar|{\Ctrianglepairpp(#1)}%
{\Ctrianglepairpp(#1)|lrmlr|}}%
\def\Ctrianglepairpp(#1)|#2|{\ifnextchar/{\Ctrianglepairppp(#1)|#2|}%
    {\Ctrianglepairppp(#1)|#2|/>`>`>`>`>/}}%
\def\Ctrianglepairppp(#1)|#2|/#3`#4`#5`#6`#7/{%
    \ifnextchar<{\Ctrianglepairpppp(#1)|#2|/#3`#4`#5`#6`#7/}%
    {\Ctrianglepairpppp(#1)|#2|/#3`#4`#5`#6`#7/<\default,\default>}}%
\def\Dtrianglepairp(#1){\ifnextchar|{\Dtrianglepairpp(#1)}%
{\Dtrianglepairpp(#1)|lrmlr|}}%
\def\Dtrianglepairpp(#1)|#2|{\ifnextchar/{\Dtrianglepairppp(#1)|#2|}%
    {\Dtrianglepairppp(#1)|#2|/>`>`>`>`>/}}%
\def\Dtrianglepairppp(#1)|#2|/#3`#4`#5`#6`#7/{%
    \ifnextchar<{\Dtrianglepairpppp(#1)|#2|/#3`#4`#5`#6`#7/}%
    {\Dtrianglepairpppp(#1)|#2|/#3`#4`#5`#6`#7/<\default,\default>}}%
\def\pplace[#1](#2,#3)[#4]{\POS(#2,#3)*+!!<0ex,\axis>!#1{#4}\ignorespaces}%
\def\cplace(#1,#2)[#3]{\POS(#1,#2)*+!!<0ex,\axis>{#3}\ignorespaces}%
\def\place{\ifnextchar[{\pplace}{\cplace}}
\def\pullback#1]#2]{\square#1]\trident#2]\ignorespaces}%
\def\tridentppp|#1#2#3|/#4`#5`#6/<#7,#8>[#9]{%
\def\next[##1;##2`##3`##4]{\def\nodee{##1}\def\labele{##2}%
   \def\labelf{##3}\def\labelg{##4}}%
\next[#9]%
\advance \xpos by -\deltax%
\advance \xpos by -#7\advance \ypos by #8%
\advance\deltax by #7%
\morphism(\xpos,\ypos)|#1|/{#4}/<\deltax,-#8>[\nodee`\nodeb;\labele]%
\advance\deltax by -#7%
\morphism(\xpos,\ypos)|#2|/{#5}/<#7,-#8>[\nodee`\nodea;\labelf]%
\advance\deltay by #8%
\morphism(\xpos,\ypos)|#3|/{#6}/<#7,-\deltay>[\nodee`\nodec;\labelg]%
\ignorespaces}%
\def\trident{\ifnextchar|{\tridentp}{\tridentp|amb|}}%
\def\tridentp|#1|{\ifnextchar/{\tridentpp|#1|}{\tridentpp|#1|/{>}`{>}`{>}/}}%
\def\tridentpp|#1|/#2/{\ifnextchar<{\tridentppp|#1|/#2/}%
  {\tridentppp|#1|/#2/<500,500>}}%
\def\setmorphismwidth#1#2#3#4{%
 \setbox0=\hbox{$#1{\labelstyle#3#3}#2$}#4=\wd0%
 \divide #4 by 2 \divide #4 by \ul%
 \advance #4 by 350 \ratchet{#4}{500}}%
\def\setSquarewidth[#1`#2`#3`#4;#5`#6`#7`#8]{%
 \setmorphismwidth{#1}{#2}{#5}{\topw}%
 \setmorphismwidth{#3}{#4}{#8}{\botw}%
\ratchet{\topw}{\botw}}%
\def\Squarepppp(#1)|#2|/#3/<#4>[#5]{%
 \setSquarewidth[#5]%
 \squarepppp(#1)|#2|/#3/<\topw,#4>[#5]%
\ignorespaces}%
\def\Square{\ifnextchar({\Squarep}{\Squarep(0,0)}}%
\def\Squarep(#1){\ifnextchar|{\Squarepp(#1)}{\Squarepp(#1)|alrb|}}%
\def\Squarepp(#1)|#2|{\ifnextchar/{\Squareppp(#1)|#2|}%
    {\Squareppp(#1)|#2|/>`>`>`>/}}%
\def\Squareppp(#1)|#2|/#3`#4`#5`#6/{%
    \ifnextchar<{\Squarepppp(#1)|#2|/#3`#4`#5`#6/}%
    {\Squarepppp(#1)|#2|/#3`#4`#5`#6/<\default>}}%
\def\hSquarespppp(#1,#2)|#3|/#4/<#5>[#6;#7]{%
\Xpos=#1\Ypos=#2%
\def\next|##1##2##3##4##5##6##7|{%
 \def\Xa{##1}\def\Xb{##2}\def\Xc{##3}\def\Xd{##4}%
 \def\Xe{##5}\def\Xf{##6}\def\Xg{##7}}%
\next|#3|%
\deltaY=#5%
\def\next[##1`##2`##3`##4`##5`##6]{%
 \def\Nodea{##1}\def\Nodeb{##2}\def\Nodec{##3}%
 \def\Noded{##4}\def\Nodee{##5}\def\Nodef{##6}}%
\next[#6]%
\def\next[##1`##2`##3`##4`##5`##6`##7]{%
 \def\Labela{##1}\def\Labelb{##2}\def\Labelc{##3}\def\Labeld{##4}%
 \def\Labele{##5}\def\Labelf{##6}\def\Labelg{##7}}%
\next[#7]%
\dohSquares/#4/}%
\def\dohSquares/#1`#2`#3`#4`#5`#6`#7/{%
\Squarepppp(\Xpos,\Ypos)|\Xa\Xc\Xd\Xf|/#1`#3`#4`#6/<\deltaY>%
 [\Nodea`\Nodeb`\Noded`\Nodee;\Labela`\Labelc`\Labeld`\Labelf]%
 \advance \Xpos by \topw
\Squarepppp(\Xpos,\Ypos)|\Xb\Xd\Xe\Xg|/#2``#5`#7/<\deltaY>%
[\Nodeb`\Nodec`\Nodee`\Nodef;\Labelb``\Labele`\Labelg]%
\ignorespaces}%
\def\hSquares{\ifnextchar({\hSquaresp}{\hSquaresp(0,0)}}%
\def\hSquaresp(#1){\ifnextchar|{\hSquarespp(#1)}{\hSquarespp%
(#1)|aalmrbb|}}%
\def\hSquarespp(#1)|#2|{\ifnextchar/{\hSquaresppp(#1)|#2|}%
    {\hSquaresppp(#1)|#2|/>`>`>`>`>`>`>/}}%
\def\hSquaresppp(#1)|#2|/#3/{%
    \ifnextchar<{\hSquarespppp(#1)|#2|/#3/}%
    {\hSquarespppp(#1)|#2|/#3/<\default>}}%
\def\vSquarespppp(#1,#2)|#3|/#4/<#5,#6>[#7;#8]{%
\Xpos=#1\Ypos=#2%
\def\next|##1##2##3##4##5##6##7|{%
 \def\Xa{##1}\def\Xb{##2}\def\Xc{##3}\def\Xd{##4}%
 \def\Xe{##5}\def\Xf{##6}\def\Xg{##7}}%
\next|#3|%
\deltaX=#5%
\deltaY=#6%
\def\next[##1`##2`##3`##4`##5`##6]{%
 \def\Nodea{##1}\def\Nodeb{##2}\def\Nodec{##3}%
 \def\Noded{##4}\def\Nodee{##5}\def\Nodef{##6}}%
\next[#7]%
\def\next[##1`##2`##3`##4`##5`##6`##7]{%
 \def\Labela{##1}\def\Labelb{##2}\def\Labelc{##3}\def\Labeld{##4}%
 \def\Labele{##5}\def\Labelf{##6}\def\Labelg{##7}}%
\next[#8]%
\dovSquares/#4/\ignorespaces}%
\def\dovSquares/#1`#2`#3`#4`#5`#6`#7/{%
\setmorphismwidth{\Nodea}{\Nodeb}{\Labela}{\topw}%
\setmorphismwidth{\Nodec}{\Noded}{\Labeld}{\botw}%
\ratchet{\topw}{\botw}%
\setmorphismwidth{\Nodee}{\Nodef}{\Labelg}{\botw}%
\ratchet{\topw}{\botw}%
\square(\Xpos,\Ypos)|\Xd\Xe\Xf\Xg|/`#5`#6`#7/<\topw,\deltaX>%
 [\Nodec`\Noded`\Nodee`\Nodef;`\Labele`\Labelf`\Labelg]%
\advance \Ypos by \deltaX%
\square(\Xpos,\Ypos)|\Xa\Xb\Xc\Xd|/#1`#2`#3`#4/<\topw,\deltaY>%
 [\Nodea`\Nodeb`\Nodec`\Noded;\Labela`\Labelb`\Labelc`\Labeld]%
}%
\def\vSquaresp(#1){\ifnextchar|{\vSquarespp(#1)}{\vSquarespp%
(#1)|alrmlrb|}}%
\def\vSquarespp(#1)|#2|{\ifnextchar/{\vSquaresppp(#1)|#2|}%
    {\vSquaresppp(#1)|#2|/>`>`>`>`>`>`>/}}%
\def\vSquaresppp(#1)|#2|/#3/{%
    \ifnextchar<{\vSquarespppp(#1)|#2|/#3/}%
    {\vSquarespppp(#1)|#2|/#3/<\default,\default>}}%
\def\osquarepppp(#1)|#2|/#3`#4`#5`#6/<#7>[#8]{\squarepppp%
 (#1)|#2|/#3`#4`#5`#6/<#7>[#8]%
 \let\Nodea\nodea\let\Nodeb\nodeb%
\let\Nodec\nodec\let\Noded\noded\Xpos=\xpos\Ypos=\ypos%
\deltaX=\deltax \deltaY=\deltay \isquare}
\def\osquarep(#1){\ifnextchar|{\osquarepp(#1)}{\osquarepp(#1)|alrb|}}%
\def\osquarepp(#1)|#2|{\ifnextchar/{\osquareppp(#1)|#2|}%
    {\osquareppp(#1)|#2|/>`>`>`>/}}%
\def\osquareppp(#1)|#2|/#3`#4`#5`#6/{%
    \ifnextchar<{\osquarepppp(#1)|#2|/#3`#4`#5`#6/}%
    {\osquarepppp(#1)|#2|/#3`#4`#5`#6/<1500,1500>}}%
\def\isquarepppp(#1)|#2|/#3`#4`#5`#6/<#7>[#8]{%
 \squarepppp(#1)|#2|/#3`#4`#5`#6/<#7>[#8]%
\ifnextchar|{\cubep}{\cubep|mmmm|}}%
\def\cubep|#1|{\ifnextchar/{\cubepp|#1|}{\cubepp|#1|/>`>`>`>/}}%
\def\isquare{\ifnextchar({\isquarep}{\isquarep(\default,\default)}}%
\def\isquarep(#1){\ifnextchar|{\isquarepp(#1)}{\isquarepp(#1)|alrb|}}
\def\isquarepp(#1)|#2|{\ifnextchar/{\isquareppp(#1)|#2|}%
    {\isquareppp(#1)|#2|/>`>`>`>/}}%
\def\isquareppp(#1)|#2|/#3`#4`#5`#6/{%
    \ifnextchar<{\isquarepppp(#1)|#2|/#3`#4`#5`#6/}%
    {\isquarepppp(#1)|#2|/#3`#4`#5`#6/<500,500>}}%
\def\cubepp|#1#2#3#4|/#5`#6`#7`#8/[#9]{%
\def\next[##1`##2`##3`##4]{\gdef\Labela{##1}%
\gdef\Labelb{##2}\gdef\Labelc{##3}\gdef\Labeld{##4}}\next[#9]%
\xend\xpos \yend\ypos
\Xend\xend\advance\Xend by -\Xpos
\Yend\yend\advance\Yend by -\Ypos
\domorphism(\Xpos,\Ypos)|#2|/#6/<\Xend,\Yend>[\Nodeb`\nodeb;\Labelb]%
\advance\Xpos by-\deltaX
\advance\xend by-\deltax
\Xend\xend\advance\Xend by -\Xpos
\domorphism(\Xpos,\Ypos)|#1|/#5/<\Xend,\Yend>[\Nodea`\nodea;\Labela]%
\advance\Ypos by-\deltaY
\advance\yend by-\deltay
\Yend\yend\advance\Yend by -\Ypos
\domorphism(\Xpos,\Ypos)|#3|/#7/<\Xend,\Yend>[\Nodec`\nodec;\Labelc]%
\advance\Xpos by\deltaX
\advance\xend by\deltax
\Xend\xend\advance\Xend by -\Xpos
\domorphism(\Xpos,\Ypos)|#4|/#8/<\Xend,\Yend>[\Noded`\noded;\Labeld]%
\ignorespaces}
\def\setwdth#1#2{\setbox0\hbox{$\labelstyle#1$}\wdth=\wd0
\setbox0\hbox{$\labelstyle#2$}\ifnum\wdth<\wd0 \wdth=\wd0 \fi}
\def\topppp/#1/<#2>^#3_#4{\allowbreak\mathrel{%
\ifnum#2=0
   \setwdth{#3}{#4}\deltax=\wdth \divide \deltax by \ul
   \advance \deltax by \defaultmargin  \ratchet{\deltax}{200}%
\else \deltax #2
\fi
\xy\ar@{#1}^{#3}_{#4}(\deltax,0) \endxy
\ignorespaces}}
\def\toppp/#1/<#2>^#3{\ifnextchar_{\topppp/#1/<#2>^{#3}}{\topppp/#1/<#2>^{#3}_{}}}
\def\topp/#1/<#2>{\ifnextchar^{\toppp/#1/<#2>}{\toppp/#1/<#2>^{}}}
\def\toop/#1/{\ifnextchar<{\topp/#1/}{\topp/#1/<0>}}
\def\to{\ifnextchar/{\toop}{\toop/>/}}
\def\twopppp/#1`#2/<#3>^#4_#5{\allowbreak\mathrel{%
\ifnum0=#3
  \setwdth{#4}{#5}\deltax=\wdth \divide \deltax by \ul \advance \deltax
  by \defaultmargin \ratchet{\deltax}{200}%
\else \deltax#3 \fi
\xy\ar@{#1}@<2.5pt>^{#4}(\deltax,0)%
\ar@{#2}@<-2.5pt>_{#5}(\deltax,0)\endxy\ignorespaces}}
\def\twoppp/#1`#2/<#3>^#4{\ifnextchar_{\twopppp/#1`#2/<#3>^{#4}}%
  {\twopppp/#1`#2/<#3>^{#4}_{}}}
\def\twopp/#1`#2/<#3>{\ifnextchar^{\twoppp/#1`#2/<#3>}{\twoppp/#1`#2/<#3>^{}}}
\def\twop/#1`#2/{\ifnextchar<{\twopp/#1`#2/}{\twopp/#1`#2/<0>}}
\def\threeppppp/#1`#2`#3/<#4>^#5|#6_#7{\allowbreak\mathrel{%
\ifnum0=#4
\setbox0\hbox{$\labelstyle#5$}\wdth=\wd0
\setbox0\hbox{$\labelstyle#6$}\ifnum\wdth<\wd0 \wdth=\wd0 \fi
\setbox0\hbox{$\labelstyle#7$}\ifnum\wdth<\wd0 \wdth=\wd0 \fi
\deltax=\wdth \divide \deltax by \ul \advance \deltax by
\defaultmargin \ratchet{\deltax}{300}%
\else\deltax#4 \fi
    \xy \ifnum\wd0=0 \ar@{#2}(\deltax,0)
    \else \ar@{#2}|{#6}(\deltax,0)\fi
\ar@{#1}@<4.5pt>^{#5}(\deltax,0)
\ar@{#3}@<-4.5pt>_{#7}(\deltax,0)\endxy\ignorespaces}}
\def\threepppp/#1`#2`#3/<#4>^#5|#6{\ifnextchar_{\threeppppp
  /#1`#2`#3/<#4>^{#5}|{#6}}{\threeppppp/#1`#2`#3/<#4>^{#5}|{#6}_{}}}
\def\threeppp/#1`#2`#3/<#4>^#5{\ifnextchar|{\threepppp
  /#1`#2`#3/<#4>^{#5}}{\threepppp/#1`#2`#3/<#4>^{#5}|{}}}
\def\threepp/#1`#2`#3/<#4>{\ifnextchar^{\threeppp/#1`#2`#3/<#4>}%
  {\threeppp/#1`#2`#3/<#4>^{}}}
\def\threep/#1`#2`#3/{\ifnextchar<{\threepp/#1`#2`#3/}%
  {\threepp/#1`#2`#3/<0>}}
\def\twoar(#1,#2){{%
 \scalefactor{0.1}
 \deltax#1\deltay#2%
 \deltaX=\ifnum\deltax<0-\fi\deltax
 \deltaY=\ifnum\deltay<0-\fi\deltay
 \Xend\deltax \multiply \Xend by \deltax
 \Yend\deltay \multiply \Yend by \deltay
 \advance\Xend by \Yend \multiply \Xend by 3
 \ifnum \deltaX > \deltaY
    \multiply \deltaX by 3 \advance \deltaX by \deltaY
 \else
    \multiply \deltaY by 3 \advance \deltaX by \deltaY
 \fi
 \multiply\deltax by 500
 \multiply\deltay by 500
 \xpos\deltax \multiply \xpos by 3 \divide\xpos by \deltaX
 \Xpos\deltax \multiply \Xpos by \deltaX \divide \Xpos by \Xend
 \advance \xpos by \Xpos
 \ypos\deltay \multiply \ypos by 3 \divide\ypos by \deltaX
 \Ypos\deltay \multiply \Ypos by \deltaX \divide \Ypos by \Xend
 \advance \ypos by \Ypos
 \xy \ar@{=>}(\xpos,\ypos) \endxy
}\ignorespaces}
\def\iiixiiipppppp(#1,#2)|#3|/#4/<#5>#6<#7>[#8;#9]{%
 \xpos#1\ypos#2\relax
 \def\next|##1##2##3##4##5##6##7|{\def\xa{##1}\def\xb{##2}%
 \def\xc{##3}\def\xd{##4}\def\xe{##5}\def\xf{##6}\nextt|##7|}%
 \def\nextt|##1##2##3##4##5##6|{\def\xg{##1}\def\xh{##2}%
 \def\xi{##3}\def\xj{##4}\def\xk{##5}\def\xl{##6}}%
 \next|#3|%
 \def\next<##1,##2>{\deltax##1\deltay##2}%
 \next<#5>%
 \def\next<##1,##2>{\deltaX##1\deltaY##2}%
 \next<#7>%
 \def\next##1{\topw##1\relax
 \ifodd\topw \def\zl{}\else\def\zl{\relax}\fi \divide\topw by 2
 \ifodd\topw \def\zk{}\else\def\zk{\relax}\fi \divide\topw by 2
 \ifodd\topw \def\zj{}\else\def\zj{\relax}\fi \divide\topw by 2
 \ifodd\topw \def\zi{}\else\def\zi{\relax}\fi \divide\topw by 2
 \ifodd\topw \def\zh{}\else\def\zh{\relax}\fi \divide\topw by 2
 \ifodd\topw \def\zg{}\else\def\zg{\relax}\fi \divide\topw by 2
 \ifodd\topw \def\zf{}\else\def\zf{\relax}\fi \divide\topw by 2
 \ifodd\topw \def\ze{}\else\def\ze{\relax}\fi \divide\topw by 2
 \ifodd\topw \def\zd{}\else\def\zd{\relax}\fi \divide\topw by 2
 \ifodd\topw \def\zc{}\else\def\zc{\relax}\fi \divide\topw by 2
 \ifodd\topw \def\zb{}\else\def\zb{\relax}\fi \divide\topw by 2
 \ifodd\topw \def\za{}\else\def\za{\relax}\fi}%
 \next{#6}%
 \def\next[##1`##2`##3`##4`##5`##6`##7`##8`##9]{%
 \def\nodea{##1}\def\nodeb{##2}\def\nodec{##3}%
 \def\noded{##4}\def\nodee{##5}\def\nodef{##6}%
 \def\nodeg{##7}\def\nodeh{##8}\def\nodei{##9}}%
 \next[#8]%
 \def\next[##1`##2`##3`##4`##5`##6`##7]{%
 \def\labela{##1}\def\labelb{##2}\def\labelc{##3}%
 \def\labeld{##4}\def\labele{##5}\def\labelf{##6}\nextt[##7]}%
 \def\nextt[##1`##2`##3`##4`##5`##6]{%
 \def\labelg{##1}\def\labelh{##2}\def\labeli{##3}%
 \def\labelj{##4}\def\labelk{##5}\def\labell{##6}}%
 \next[#9]%
 \def\next/##1`##2`##3`##4`##5`##6`##7/{%
\morphism(\xpos,\ypos)|\xe|/{##5}/<\deltax,0>[\nodeg`\nodeh;\labele]%
 \ifx\zi\empty\relax \morphism(\xpos,\ypos)||/<-/<-\deltaX,0>[\nodeg`0;]\fi
 \ifx\zd\empty\relax \morphism(\xpos,\ypos)||<0,-\deltaY>[\nodeg`0;]\fi
 \advance\xpos by \deltax
 \morphism(\xpos,\ypos)|\xf|/{##6}/<\deltax,0>[\nodeh`\nodei;\labelf]%
 \ifx\ze\empty\relax \morphism(\xpos,\ypos)||<0,-\deltaY>[\nodeh`0;]\fi
 \advance\xpos by \deltax
 \ifx\zf\empty\relax \morphism(\xpos,\ypos)||<0,-\deltaY>[\nodei`0;]\fi
 \ifx\zl\empty\relax \morphism(\xpos,\ypos)||<\deltaX,0>[\nodei`0;]\fi
 \advance\ypos by \deltay
 \ifx\zk\empty\relax \morphism(\xpos,\ypos)||<\deltaX,0>[\nodef`0;]\fi
 \advance\xpos by -\deltax
 \morphism(\xpos,\ypos)|\xd|/{##4}/<\deltax,0>[\nodee`\nodef;\labeld]%
 \advance\xpos by -\deltax
 \morphism(\xpos,\ypos)|\xc|/{##3}/<\deltax,0>[\noded`\nodee;\labelc]%
 \ifx\zh\empty\relax \morphism(\xpos,\ypos)||/<-/<-\deltaX,0>[\noded`0;]\fi
 \advance\ypos by \deltay
 \morphism(\xpos,\ypos)|\xa|/{##1}/<\deltax,0>[\nodea`\nodeb;\labela]%
 \ifx\zg\empty\relax \morphism(\xpos,\ypos)||/<-/<-\deltaX,0>[\nodea`0;]\fi
 \ifx\za\empty\relax \morphism(\xpos,\ypos)||/<-/<0,\deltaY>[\nodea`0;]\fi
 \advance\xpos by \deltax
 \morphism(\xpos,\ypos)|\xb|/{##2}/<\deltax,0>[\nodeb`\nodec;\labelb]%
 \ifx\zb\empty\relax \morphism(\xpos,\ypos)||/<-/<0,\deltaY>[\nodeb`0;]\fi
 \advance\xpos by \deltax
 \ifx\zc\empty\relax \morphism(\xpos,\ypos)||/<-/<0,\deltaY>[\nodec`0;]\fi
 \ifx\zj\empty\relax \morphism(\xpos,\ypos)||<\deltaX,0>[\nodec`0;]\fi
 \nextt/##7/}%
 \def\nextt/##1`##2`##3`##4`##5`##6/{%
 \morphism(\xpos,\ypos)|\xi|/{##3}/<0,-\deltay>[\nodec`\nodef;\labeli]%
 \advance\xpos by -\deltax
 \morphism(\xpos,\ypos)|\xh|/{##2}/<0,-\deltay>[\nodeb`\nodee;\labelh]%
 \advance\xpos by -\deltax
 \morphism(\xpos,\ypos)|\xg|/{##1}/<0,-\deltay>[\nodea`\noded;\labelg]%
 \advance\ypos by -\deltay
 \morphism(\xpos,\ypos)|\xj|/{##4}/<0,-\deltay>[\noded`\nodeg;\labelj]%
 \advance\xpos by \deltax
 \morphism(\xpos,\ypos)|\xk|/{##5}/<0,-\deltay>[\nodee`\nodeh;\labelk]%
 \advance\xpos by \deltax
 \morphism(\xpos,\ypos)|\xl|/{##6}/<0,-\deltay>[\nodef`\nodei;\labell]}%
 \next/#4/\ignorespaces}
\def\iiixiiip(#1){\ifnextchar|{\iiixiiipp(#1)}%
  {\iiixiiipp(#1)|aammbblmrlmr|}}%
\def\iiixiiipp(#1)|#2|{\ifnextchar/{\iiixiiippp(#1)|#2|}%
    {\iiixiiippp(#1)|#2|/>`>`>`>`>`>`>`>`>`>`>`>/}}%
\def\iiixiiippp(#1)|#2|/#3/{%
    \ifnextchar<{\iiixiiipppp(#1)|#2|/#3/}%
    {\iiixiiipppp(#1)|#2|/#3/<\default,\default>}}%
\def\iiixiiipppp(#1)|#2|/#3/<#4>{\ifnextchar[{\iiixiiippppp(#1)|#2|/#3/%
   <#4>0<0,0>}{\iiixiiippppp(#1)|#2|/#3/<#4>}}%
\def\iiixiiippppp(#1)|#2|/#3/<#4>#5{\ifnextchar<%
   {\iiixiiipppppp(#1)|#2|/#3/<#4>{#5}}%
   {\iiixiiipppppp(#1)|#2|/#3/<#4>{#5}<400,400>}}%
\def\iiixiipppppp(#1,#2)|#3|/#4/<#5>#6<#7>[#8;#9]{%
 \xpos#1\ypos#2\relax
 \def\next|##1##2##3##4##5##6##7|{\def\xa{##1}\def\xb{##2}%
 \def\xc{##3}\def\xd{##4}\def\xe{##5}\def\xf{##6}\def\xg{##7}}%
 \next|#3|%
 \def\next<##1,##2>{\deltax##1\deltay##2}%
 \next<#5>%
 \deltaX#7
 \topw#6
 \def\next{%
 \ifodd\topw \def\za{}\else\def\za{\relax}\fi \divide\topw by 2
 \ifodd\topw \def\zb{}\else\def\zb{\relax}\fi \divide\topw by 2
 \ifodd\topw \def\zc{}\else\def\zc{\relax}\fi \divide\topw by 2
 \ifodd\topw \def\zd{}\else\def\zd{\relax}\fi}%
 \next
 \def\next[##1`##2`##3`##4`##5`##6]{%
 \def\nodea{##1}\def\nodeb{##2}\def\nodec{##3}%
 \def\noded{##4}\def\nodee{##5}\def\nodef{##6}}%
 \next[#8]%
 \def\next[##1`##2`##3`##4`##5`##6`##7]{%
 \def\labela{##1}\def\labelb{##2}\def\labelc{##3}%
 \def\labeld{##4}\def\labele{##5}\def\labelf{##6}\def\labelg{##7}}%
 \next[#9]%
 \def\next/##1`##2`##3`##4`##5`##6`##7/{%
 \ifx\zc\empty\relax\morphism(\xpos,\ypos)<\deltaX,0>[0`\noded;]\fi
 \advance\xpos by\deltaX
 \morphism(\xpos,\ypos)|\xc|/##3/<\deltax,0>[\noded`\nodee;\labelc]%
 \advance\xpos by \deltax
 \morphism(\xpos,\ypos)|\xd|/##4/<\deltax,0>[\nodee`\nodef;\labeld]%
 \advance\xpos by \deltax
 \ifx\zd\empty\relax  \morphism(\xpos,\ypos)<\deltaX,0>[\nodef`0;]\fi
 \advance\xpos by -\deltaX  \advance\xpos by -\deltax
 \advance\xpos by -\deltax  \advance\ypos by \deltay
 \ifx\za\empty\relax\morphism(\xpos,\ypos)<\deltaX,0>[0`\nodea;]\fi
 \advance\xpos by\deltaX
 \morphism(\xpos,\ypos)|\xa|/##1/<\deltax,0>[\nodea`\nodeb;\labela]%
 \morphism(\xpos,\ypos)|\xe|/##5/<0,-\deltay>[\nodea`\noded;\labele]%
 \advance\xpos by \deltax
 \morphism(\xpos,\ypos)|\xb|/##2/<\deltax,0>[\nodeb`\nodec;\labelb]%
 \morphism(\xpos,\ypos)|\xf|/##6/<0,-\deltay>[\nodeb`\nodee;\labelf]%
 \advance\xpos by \deltax
 \morphism(\xpos,\ypos)|\xg|/##7/<0,-\deltay>[\nodec`\nodef;\labelg]%
 \ifx\zb\empty\relax \morphism(\xpos,\ypos)<\deltaX,0>[\nodec`0;]\fi}%
 \next/#4/\ignorespaces}
\def\iiixiip(#1){\ifnextchar|{\iiixiipp(#1)}%
  {\iiixiipp(#1)|aabblmr|}}%
\def\iiixiipp(#1)|#2|{\ifnextchar/{\iiixiippp(#1)|#2|}%
    {\iiixiippp(#1)|#2|/>`>`>`>`>`>`>/}}%
\def\iiixiippp(#1)|#2|/#3/{%
    \ifnextchar<{\iiixiipppp(#1)|#2|/#3/}%
    {\iiixiipppp(#1)|#2|/#3/<\default,\default>}}%
\def\iiixiipppp(#1)|#2|/#3/<#4>{\ifnextchar[{\iiixiippppp(#1)|#2|/#3/%
   <#4>{0}<0>}{\iiixiippppp(#1)|#2|/#3/<#4>}}%
\def\iiixiippppp(#1)|#2|/#3/<#4>#5{\ifnextchar<%
   {\iiixiipppppp(#1)|#2|/#3/<#4>{#5}}%
   {\iiixiipppppp(#1)|#2|/#3/<#4>{#5}<0>}}%
\def\node#1(#2,#3)[#4]{%
\expandafter\gdef\csname x@#1\endcsname{#2}%
\expandafter\gdef\csname y@#1\endcsname{#3}%
\expandafter\gdef\csname ob@#1\endcsname{#4}%
}
\def\arrowp|#1|{\ifnextchar/{\arrowpp|#1|}{\arrowpp|#1|/>/}}
\def\arrowpp|#1|/#2/[#3`#4;#5]{%
\xfinish=\csname x@#4\endcsname
\yfinish=\csname y@#4\endcsname
\advance\xfinish by -\csname x@#3\endcsname
\advance\yfinish by -\csname y@#3\endcsname
\morphism(\csname x@#3\endcsname,\csname y@#3\endcsname)|#1|/#2/%
<\xfinish,\yfinish>[\csname ob@#3\endcsname`\csname ob@#4\endcsname;#5]%
}
\def\Loop(#1,#2)#3(#4,#5){\POS(#1,#2)*+!!<0ex,\axis>{#3}\ar@(#4,#5)}
\def\iloop#1(#2,#3){\xy\Loop(0,0)#1(#2,#3)\endxy}
\renewcommand{\tilde}[1]{\bigcirc #1}
\newcommand{\KVec}{\mathsf{Vec}_{\scriptstyle K}}
\newcommand{\takeout}[1]{\relax}
\newcommand{\Set}{\ensuremath{\St}}
\newcommand{\A}{\ensuremath{\mathscr{A}}}
\newcommand{\PP}{\ensuremath{\mathscr{P}}}
\newcommand{\pow}{\PP}
\newcommand{\set}[1]{\{ #1 \}}
\newcommand{\M}{\ensuremath{\mathscr{M}}}
\newcommand{\nn}{\ensuremath{\mathbb{N}}}
\newcommand{\Sig}{\ensuremath{\Sigma}}
\newcommand{\tec}{{\cdot}}
\newcommand{\nsi}[1]{\xrightarrow{\rule{1mm}{0mm}#1\rule{1mm}{0mm}}}
\newcommand{\Bool}{\ensuremath{\mathbf{ Bool}}}
\DeclareMathOperator{\wellp}{\mathsf{wp}}
\DeclareMathOperator{\St}{\textnormal{\textbf{Set}}}
\DeclareMathOperator{\id}{id}
\DeclareMathOperator{\Coalg}{\textnormal{\textbf{Coalg}}}
\DeclareMathOperator{\Ord}{\textnormal{\textbf{Ord}}}
\DeclareMathOperator{\Sub}{\textnormal{\textbf{Sub}}}
\DeclareMathOperator{\Gra}{\textnormal{\textbf{Gra}}}
\DeclareMathOperator*{\colim}{colim}
\DeclareMathOperator{\card}{card}
\DeclareMathOperator{\curry}{curry}
\DeclareMathOperator{\out}{out}
\DeclareMathOperator{\tail}{tail}
\DeclareMathOperator{\head}{head}
\DeclareMathOperator{\arit}{ar}
\theoremstyle{plain}
\newtheorem{theorem}{Theorem}[section]
\newtheorem{proposition}[theorem]{Proposition}
\newtheorem{corollary}[theorem]{Corollary}
\newtheorem{lemma}[theorem]{Lemma}
\theoremstyle{definition}
\newtheorem{definition}[theorem]{Definition}
\newtheorem{assumption}[theorem]{Assumption}
\newtheorem{example}[theorem]{Example}
\newtheorem{examples}[theorem]{Examples}
\newtheorem{notation}[theorem]{Notation}
\newtheorem{terminology}[theorem]{Terminology}
\newtheorem{observation}[theorem]{Observation}
\newtheorem{remark}[theorem]{Remark}
\newtheorem{construction}[theorem]{Construction}
\newtheorem*{remb}{Remark}
\numberwithin{equation}{section}
\begin{document}
%
%
%
\title{Well-pointed Coalgebras}

\author[J.~Ad\'amek]{Ji\v r\'\i\ Ad\'amek\rsuper a}
\address{{\lsuper a}Institut f\"ur Theoretische Informatik, Technische Universit\"at Braunschweig, Germany}
\email{adamek@iti.cs.tu-bs.de}

\author[S.~Milius]{Stefan Milius\rsuper b}
\address{{\lsuper b}Lehrstuhl f\"ur Theoretische Informatik, Friedrich-Alexander
  Universit\"at Erlangen-N\"urnberg, Germany}
\email{mail@stefan-milius.eu}

\author[L.~S.~Moss]{Lawrence S.~Moss\rsuper c}
\address{{\lsuper c}Department of Mathematics, Indiana University, Bloomington, IN, USA}
\email{lsm@cs.indiana.edu}

\author[L.~Sousa]{Lurdes Sousa\rsuper d}
\address{{\lsuper d}Polytechnic Institute of Viseu, Portugal \& Centre for Mathematics of the University of Coimbra, Portugal}
\email{sousa@mat.estv.ipv.pt}
\thanks{{\lsuper d}Financial support by CMUC/FCT (Portugal) and the FCT Grant PTDC/MAT/120222/2010 is acknowledged by the last author.}

\keywords{Well-founded coalgebra, well-pointed coalgebra, initial algebra, final coalgebra, iterative algebra}
\subjclass{F.1.1, F.4.3}
\ACMCCS{[{\bf Theory of computation}]:Models of computation; Formal languages and automata theory}

\begin{abstract}
For endofunctors of varieties preserving intersections, a new description of the
final coalgebra and the initial algebra is presented: the former
consists of all well-pointed coalgebras. These are the pointed
coalgebras having no proper subobject and no proper quotient. 
The
initial algebra consists of all well-pointed coalgebras that are
well-founded in the sense of Osius~\cite{O} and Taylor~\cite{Ta2}.  And initial algebras are precisely the final well-founded coalgebras.
Finally, the initial
iterative algebra consists of all finite well-pointed
coalgebras.  Numerous examples are discussed e.g. automata, graphs,
and labeled transition systems.
\end{abstract}

\maketitle


\section{Introduction}
\label{intr}


Initial algebras are known to be of primary interest in denotational
semantics, where abstract data types are often presented as initial
algebras for an endofunctor $H$ expressing the type of the
constructor operations of the data type. For example, finite binary trees are
the initial algebra for the functor $HX = X \times X +
1$ on sets. Analogously, final coalgebras for an endofunctor $H$ play an important role in the theory of
systems developed by Rutten~\cite{R1}: $H$ expresses the system type,
i.\,e., which kind of one-step reactions states can exhibit (input, output, state
transitions etc.), and the coalgebras for $H$ are precisely systems
with a set of states having reactions of type $H$. 
The elements of a final coalgebra
represent the behavior of all states in all systems of type $H$, and the unique
homomorphism from a system into the final one assigns to every state
its behavior. For example, deterministic automata with input alphabet
$I$ are coalgebras for $HX = X^I \times \{0,1\}$, and the final coalgebra is
the set of all languages on $I$.  

In this paper a unified description is presented for~(a) initial algebras, (b) final
coalgebras and (c)~initial iterative algebras (in the automata example
this is the set of all regular languages on $I$). We also demonstrate
that this new description provides a unifying view of a
number of important examples.   We first work with set functors~$H$
preserving intersections. This is an extremely mild requirement that
most ``everyday'' set functors satisfy, see Example \ref{E-coc}. We prove that the final
coalgebra for $H$ can then be described as the set of all
\textit{well-pointed coalgebras}, i.e., pointed coalgebras not having
any proper subobject and also not having any proper quotient. 
Moreover, the initial algebra
can be described as the set of all well-pointed coalgebras which are
well-founded in the sense of Osius~\cite{O} and
Taylor~\cite{Ta1,Ta2}.  We then extend these results to all endofunctors of varieties preserving intersections.

Before we mention the definition of well-founded coalgebra, recall that
the notion of well-foundedness of relations $R \subseteq X
\times X$
has several alternative forms:
\begin{enumerate}[(1)]
\item  No proper subset $Y$ of $X$ has  the property
that if all $R$-successors of a given point $x\in X$  lie in $Y$, then $x\in Y$
as well.   
\item There is no infinite path
$x_0 R x_1 R  x_2 R \cdots$.
\item There is a map  $\mathsf{rk}$ from $X$ to ordinals such that
$\mathsf{rk}(x) > \mathsf{rk}(y)$ whenever $x R y$.
\end{enumerate}
For sets and relations as usual, these  conditions are equivalent.
The first of these is an \emph{induction principle}, and this
is closest to what we are calling well-foundedness in this paper,
following Taylor.   
The equivalence of the first and the second requires
Dependent Choice, 
a weak form of the Axiom of Choice; in any case, our 
work in this area does not use this at all.   The last condition
is close to a result which we will see, but note as well that
even this requires something special about sets,
namely the Replacement Axiom.

The notion of well-foundedness of a coalgebra $(A,\alpha)$
generalizes condition (1) above.   It says that
no proper subcoalgebra~ $m:(A',\alpha')\hookrightarrow (A,\alpha)$
forms a pullback
\[\bfig
\square/->`^{ (}->`^{ (}->`->/<700,400>[A'`HA'`A`HA;%
\alpha'`m`Hm`\alpha]
\POS(120,300)
\ar@{-}+(-50,0)
\ar@{-}+(0,50)
\efig\]
This concept was first studied by Osius~\cite{O} for graphs considered as coalgebras for
the power-set functor $\pow$: a graph is well-founded 
in the coalgebraic sense iff it is well-founded in any of the equivalent 
senses above.  Taylor~\cite{Ta1,Ta2} introduced
well-founded coalgebras for general endofunctors, and he proved that
for set functors preserving inverse images the concepts of initial
algebra and final well-founded coalgebra coincide.

\takeout{%
We must mention that our motivation differs from Taylor's.
He is concerned with foundational matters connected to
recursion and induction, while we are interested in studying 
initial algebras and final coalgebras in as wide a setting as possible.
}%

Returning to our topic, we are going to prove that for \emph{every} set
functor $H$ the concepts of initial algebra and final well-founded
coalgebra coincide; the step towards making no assumptions on $H$ is
non-trivial.  We also prove the same result for endofunctors of locally finitely presentable categories preserving finite intersections. And if $H$ preserves    (wide) intersections, we describe its
final coalgebra and initial algebra using well-pointed coalgebras.

The last section takes a number of known important special cases:
deterministic (Mealy and Moore) automata, trees, labeled transition systems,
non-well-founded sets, etc., and demonstrates how well-pointed
coalgebras work in each case. Here we describe, in every example,
besides the initial algebra and the final coalgebra, the initial
iterative algebra~\cite{AMV} (equivalently, final locally finite coalgebra, see~\cite{M2,bms11})
as the set of all finite well-pointed coalgebras.

\section{Well-founded coalgebras}
\label{druha}
In this section we recall the concept of well-founded  coalgebra of
Osius~\cite{O} and Taylor~\cite{Ta1}. Our main result is that
\[\text{initial algebra}=\text{final well-founded coalgebra}\]
holds for all endofunctors of~\Set. (In the case where the endofunctor
preserves inverse images, this result can be found in~\cite{Ta1}.)
For more general categories the  result above holds whenever the
endofunctor preserves finite intersections.

\subsection{Well-founded coalgebras in locally finitely presentable categories}\label{Asss}
\hfill

\par\smallskip\noindent
We make several assumptions on the base category $\A$ in our study.

\begin{definition}\label{D-2.1}\hfill
\begin{enumerate}[(1)]
\item
A category $\A$ is \emph{locally finitely presentable} (LFP) if
\begin{enumerate}[(a)]
\item \A~is complete;
\item there is a set of finitely presentable objects whose closure under filtered colimits is all of~\A.
\end{enumerate}
 (See \cite{GU} or~\cite{AR} for more on LFP categories.)
\item An object $A$ of  (any category) $\A$ is called \emph{simple} if every strong epimorphism (see Remark \ref{R-fact}) with domain $A$ is invertible.  (In categories with (strong epi, mono)-factorizations, see Remark \ref{R-fact}, this is equivalent to saying that every morphism with domain $A$ is a monomorphism.)
\end{enumerate}
\end{definition}

\begin{remark}\label{R-new-simple} The concept of simple object stems from general Algebra, where
strong epimorphisms are precisely the surjective homomorphisms, thus,
an algebra is simple iff it has no nontrivial congruence.
\end{remark}

\begin{assumption}\label{ASS}   Throughout this section our base category $\A$ is  locally finitely presentable 
and has a simple initial object $0$. 
\end{assumption}

\begin{examples}
The categories of sets, graphs, posets, and semigroups 
are locally finitely presentable.   The initial objects of these categories are 
empty, hence simple. The LFP category of rings has the initial object
$\mathbb{Z}$ that is not simple.%
\takeout{For another such example see \ref{E-bip} below.
The category~$\Set_{0,1}$ of bipointed sets fulfils 1, but
not~2.}
\end{examples}

\begin{notation}
\label{N-fact2}
For every endofunctor~$H$
denote by
\[\Coalg H\]
the category of coalgebras $\alpha\colon A\to HA$ and coalgebra
homomorphisms. 
\end{notation}

\takeout{%
\noindent Since subcoalgebras play a basic role in the whole paper, and
quotients are important from Section~\ref{treti} onwards, we need to
make clear what we mean by those. Quotients are no problem: it is
clear that the forgetful functor of the category of coalgebras
preserves and reflects all colimits.  Consequently, epimorphisms in
$\Coalg H$ are precisely the homomorphisms carried by epimorphisms in
the base category.  And they represent the quotients of the domain
coalgebra (up to isomorphism, as usual).  What about subcoalgebras? If
the base category is $\Set$, it turns out that the homomorphisms
carried by monomorphisms are precisely the strong monomorphisms of
$\Coalg H$.  (Recall that a monomorphism is called \emph{strong} if it
has the diagonal fill-in property w.r.t. all epimorphisms.  In
``everyday'' categories this is equivalent to being a regular
monomorphism.)  As shown in Lemma \ref{str}, for general base categories we have an analogous fact
whenever the endofunctor $H$ preserves strong monomorphisms:
strong monomorphisms in~$\Coalg H$ are precisely the homomorphisms
$h\colon(A,\alpha)\to(B,\beta)$ for which $h$~is strongly monic
in~\A. 
For that reason we use the term \emph{subcoalgebra} of a coalgebra $(A,
\alpha)$ to mean a subobject
represented by a strong monomorphisms $m\colon (A', \alpha') \to (A,
\alpha)$ in $\Coalg H$.  But as we point out in Section~\ref{M}, one can obtain
analogous results for more general factorization systems.
}%

\begin{remark}
\label{R-fact}
There are some consequences of the LFP assumption that play an important role in our development:
\begin{enumerate}[1.]
\item $\A$ has (strong epi, mono)-factorizations, see
\cite[Proposition 1.16]{AR}. (Recall that an epimorphism $e$ is called \textit{strong} if it fulfils the diagonal fill-in property w.~r.~t.~all monomorphisms, i.e., $fe=mg$ with $m$ a monomorphism implies the existence of a unique factorization of $g$ through $e$.)

\item  $\A$ is  wellpowered,   see \cite[Remark 1.56]{AR}.  This implies that for every object $A$ the
poset $\Sub(A)$ of all  subobjects of $A$ is a complete lattice.

\item
Monomorphisms are closed under filtered colimits (see \cite[Proposition 1.62]{AR}).  We also use the fact (true in every category) that monomorphisms are closed under wide intersections and inverse images. \end{enumerate}
\end{remark}

\noindent Since subcoalgebras play a basic role in the whole paper, and
quotients are important from Section~\ref{treti} onwards, we need to
make clear what we mean by those. This is the aim of Remark \ref{R-str} and Terminology \ref{T-sub}.

\begin{remark} \label{R-str}  Assuming that $H$ preserves
    monomorphisms, homomorphisms of coalgebras factorize into those
    carried by strong epimorphisms followed by those carried by
    monomorphisms. Moreover, the two classes of homomorphisms form a
    factorization system in $\Coalg H$.  Indeed, let $h$ be a
    coalgebra homomorphism from the coalgebra $(A, \alpha)$ to the
    coalgebra $(B, \beta)$ and let $h=m\tec e$ be a
(strong epi,  mono)-factorization in~\A, then the diagonal fill-in property
yields a coalgebra for which $m$ and~$e$ are homomorphisms:
\[\bfig
\square(0,400)/->`->`->`/<700,400>[A`C`HA`B;e`\alpha`m`]
\square(0,0)/`->`->`->/<700,400>[HA`B`HC`HB;`He`\beta`Hm]
\morphism(700,800)/-->/<-700,-800>[C`HC;\gamma]
\efig\]
The diagonal fill-in property in $\Coalg H$ follows easily, too.

We also point out that the monomorphisms of $\Coalg H$ need not be
carried by mono\-mor\-phisms in $\A$.
\end{remark}

\begin{terminology}\label{T-sub}  When we speak about \textit{subcoalgebras} of a coalgebra $(A,\alpha)$ we mean those represented (up to isomorphism) by homomorphisms $m:(A',\alpha')\to (A, \alpha)$ with $m$ a monomorphism in $\A$. As usual, if $m$ is not invertible, the subcoalgebra is said to be \textit{proper}.  \textit{Quotients} of $(A,\alpha)$ are represented by homomorphisms with domain $(A,\alpha)$ carried by a strong epimorphism in $\A$; again, properness means they are not invertible.
\end{terminology}

\takeout{%
   \begin{example}
   \label{set-0-1}\label{BiPH}
    For a non-example which is still interesting for this paper, 
    we consider the category 
$\Set_{0,1}$   of \emph{bipointed sets};  
these are sets with two chosen points which morphisms must fix.
 $\Set_{0,1}$ is LFP.
The initial object $0$ is a set with two different elements, both chosen.
The final object $1$ is a single point.  The map $0\to 1$ is  an epimorphism,
so $0$ is not simple. 

In the category~$\Set_{0,1}$ of
bipointed sets, put
\[H(X,x_0,x_1)=\begin{cases}
1\quad\text{(final object)}&\text{if $x_0=x_1$}\\
(X+1,x_0,x_1)&\text{else}
\end{cases}\]
This $H$ preserves monomorphisms. 
However, we saw above that $0$ is not simple.
So $H$ will re-appear in examples which show
that  the simplicity of $0$ is necessary in most of our results below.
\takeout{as will the functor from Example~\ref{E-gr}.}
\end{example}
 
\begin{example}
\label{E-gr}
On the category $\Gra=\Set^{\mbox{\tiny$\stackrel{\scriptscriptstyle\rightarrow}{%
\scriptscriptstyle\rightarrow}$}}$
of graphs define an
endofunctor~$H$ by
\[HX=\begin{cases}
X+\{t\}\ (\text{no edges})&\text{if $X$ has no edges}\\
1, \text{terminal graph,}&\text{else.}
\end{cases}\]
Of course, this functor
$H$ does not preserve monomorphisms,
and so most of the foregoing results do not apply to it. (See also Example \ref{E-gr3}.)
\end{example}
}%
\takeout{%
\begin{lemma}
\label{efine
  $a^*_i$ to be the colimit morphism. }
Given a filtered colimit with a cocone $c_i\colon C_i\to C$ $(i\in
I)$,   every morphism $f\colon C\to D$ for which $f\tec c_i$~are
strong monomorphisms $(i\in I)$ is a strong monomorphism.
\end{lemma}
\begin{proof}
It is our task, for every commutative square
\[\bfig
\square<700,400>[X`Y`C`D;e`u`v`f]
\efig\]
where $e$~is an epimorphism to find a diagonal. We can assume,
without loss of generality, that $X$~is finitely presentable: indeed,
every epimorphism in a locally finitely presentable category is a
filtered colimit of epimorphisms with finitely presentable domains.

Since $C=\colim C_i$ is a filtered colimit, there exists~$i$ such
that $u$~factorizes through~$c_i$.
\[\bfig
\dtriangle<600,500>[X`C_i`C;u'``c_i]
\square(600,0)<800,500>[X`Y`C`D;e`u`v`f]
\morphism(1400,500)/@{-->}_<>(.35){d}/<-1400,-500>[Y`C_i;]
\efig\]
This yields, since $f\tec c_i$~is a strong monomorphism, a diagonal
$d\colon Y\to C_i$ for the outward square. Then $c_i\tec d$~is the
desired diagonal for the original square. 

To show that $f$ is a monomorphism, assume that $f \cdot m = f \cdot
n$. Take the coequalizer $e$ of $m$ and $n$, and let $w$ be the unique
mediating morphism with $w \cdot e = f$. Then the unique diagonal of
the commutative square $f \cdot \id = w \cdot e$ satifies $d \cdot e =
\id$, whence $e$ is an isomorphism. Thus, $m = n$ as desired. 

\end{proof}
}%

\begin{definition}
\label{D-wf}  
A \textbf{cartesian subcoalgebra} of a coalgebra $(A,\alpha)$ is a
subcoalgebra  $m:(A',\alpha')\hookrightarrow (A,\alpha)$  forming a pullback
\[\bfig
\square/->`^{ (}->`^{ (}->`->/<700,500>[A'`HA'`A`HA;%
\alpha'`m`Hm`\alpha]
\POS(120,400)
\ar@{-}+(-50,0)
\ar@{-}+(0,50)
\efig\]
A coalgebra is called \textbf{well-founded} if it has no proper
cartesian subcoalgebra.
\end{definition}

\begin{example}
\label{E-wf}\hfill
\begin{enumerate}[(1)]
\item The concept of well-founded coalgebra was introduced originally
  by Osius~\cite{O} for the power set functor~\PP.  Recall that coalgebras for~\PP~are simply graphs: given $\alpha:A\rightarrow \PP A$, then $\alpha(x)$ is the set of neighbors of
  $A$ in the graph.  However, coalgebra homomorphisms $h:A\rightarrow B$ are stronger than graph homomorphisms: $h$ not only preserves edges of $A$, but also for every edge $h(a)\rightarrow b$ in $B$ there exists an edge $a\rightarrow a'$ in $A$ with $b=h(a')$. Then a subcoalgebra of $A$ is an (induced)
  subgraph $A'$ with the property that every neighbor of a vertex of
  $A'$ lies in $A'$.  The subgraph $A'$ is cartesian iff it contains
  every vertex all of whose neighbors lie in~$A'$.  
  
  The graph $A$ is a
  well-founded coalgebra iff it has no infinite path. Indeed, the set $A'$ of all vertices lying on no infinite path forms clearly a cartesian subcoalgebra. And $A$ is well-founded iff $A=A'$.

\item Let $A$~be a deterministic automaton considered as a coalgebra
  for $HX=X^I\times\{0,1\}$. A subcoalgebra~$A'$ is cartesian iff it
  contains every state all whose successors (under the inputs from
  $I$) lie in~$A'$. This holds, in particular, for $A'=\emptyset$.
  Thus, no nonempty automaton is well-founded.

\item Coalgebras for $HX=X+1$ are dynamical systems with deadlocks. A
  subcoalgebra~$A'$ of a coalgebra~$A$ is cartesian iff $A'$
  contains all deadlocks and every state whose next state lies
  in~$A'$. So a dynamical system is well-founded iff it has no infinite
  computation.
\end{enumerate}
\end{example}

\begin{proposition}\label{P-init} Initial algebras are, as coalgebras, well-founded. 
\end{proposition}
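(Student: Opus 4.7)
My plan is to unravel what "cartesian subcoalgebra" means for the initial algebra viewed as a coalgebra, and then to exploit initiality to split every such inclusion.

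Let $(I,\iota)$ be the initial $H$-algebra. By Lambek's lemma, $\iota\colon HI\to I$ is an isomorphism, so the coalgebra structure under consideration is $\iota^{-1}\colon I\to HI$. Suppose
\[
m\colon (A',\alpha')\hookrightarrow (I,\iota^{-1})
\]
is a cartesian subcoalgebra, i.e.\ the square with edges $\alpha'$, $m$, $Hm$, $\iota^{-1}$ is a pullback. The first step is to observe that, because $\iota^{-1}$ is an isomorphism, pulling it back along $Hm$ yields an isomorphism; concretely, $\alpha'$ must itself be an isomorphism, and $m=\iota\circ Hm\circ\alpha'$. (The canonical pullback of an iso along $Hm$ is $HA'$ itself with projections $\iota\circ Hm$ and $\id_{HA'}$, and any other pullback is canonically isomorphic to this.)

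The second step is to turn $A'$ into an $H$-algebra via $\alpha'^{-1}\colon HA'\to A'$. The equation $m=\iota\circ Hm\circ\alpha'$, rewritten as $\iota\circ Hm = m\circ\alpha'^{-1}$, says exactly that $m$ is an algebra homomorphism from $(A',\alpha'^{-1})$ to $(I,\iota)$. Initiality of $(I,\iota)$ then yields a unique algebra homomorphism $h\colon (I,\iota)\to(A',\alpha'^{-1})$, so that $m\circ h\colon I\to I$ is an algebra endomorphism of $(I,\iota)$. By initiality again, $m\circ h=\id_I$.

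Finally, $m$ is a monomorphism in $\A$ (subcoalgebras are represented by monomorphisms, per Terminology~\ref{T-sub}), and from $m\circ h\circ m = m = m\circ\id_{A'}$ I can cancel $m$ on the left to obtain $h\circ m=\id_{A'}$. Hence $m$ is an isomorphism, so the given cartesian subcoalgebra is not proper, and $(I,\iota^{-1})$ is well-founded. I do not anticipate any real obstacle here: the only subtlety is to justify cleanly that the pullback of the iso $\iota^{-1}$ forces $\alpha'$ to be invertible, which is routine once spelled out.
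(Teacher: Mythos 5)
Your proof is correct and follows essentially the same route as the paper: observe that $\alpha'$ is invertible because it is a pullback of the isomorphism $\varphi^{-1}$, use initiality to obtain an algebra homomorphism into $(A',\alpha'^{-1})$, and conclude from the resulting endomorphism of the initial algebra that $m$ is split epic, hence (being monic) invertible. The extra detail you give on identifying the pullback of an iso and on cancelling $m$ to get $h\circ m=\id_{A'}$ is just a spelled-out version of the paper's argument.
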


\begin{remb}No assumptions on the base category are needed in the proof.
\end{remb}

\begin{proof}Let $\varphi:HI\rightarrow I$ be an initial algebra. Given a pullback
\[\bfig
\square/->`->`->`->/<700,400>[B`HB`I`HI;%
\beta`m`Hm`{\varphi^{-1}}]
\POS(90,310)
\ar@{-}+(-50,0)
\ar@{-}+(0,50)
\efig\]
with $m$ monic, we prove that $m$ is invertible. It is clear that $\beta$ is invertible (since $\varphi^{-1}$ is), and for the algebra $\beta^{-1}:HB\rightarrow B$ there exists an algebra homomorphism $f:(I,\varphi)\rightarrow (B, \beta^{-1})$. Since $m$ is also an algebra homomorphism, we conclude that $mf$ is an endomorphism of the initial algebra. Thus, $mf=\id$, proving that $m$ is invertible.
\end{proof}

\begin{remark} In contrast, final coalgebras are never well-founded, unless they coincide with initial algebras.
\end{remark}

To prove this, we are going to use the \textit{initial chain} defined in~\cite{A}.
This is the chain
\begin{equation}
H^i 0\quad(i\in\Ord)\qquad\text{and}\qquad w_{ij}\colon H^i 0\to
H^j 0\quad(i\leq j)
\label{R-chain}
\end{equation}
defined uniquely up to natural isomorphism by
\begin{align*}
H^0 0&=0\qquad\quad\text{(initial object of \A)}\\
H^{i+1} 0&=HH^i 0\quad\text{and}\quad w_{i+1,j+1}=Hw_{i,j}\\
\intertext{and for limit ordinals~$i$}
H^i 0&=\colim_{j<i} H^j 0\qquad\text{with colimit cocone
$w_{ij}\quad(i<j)$.}
\end{align*}
The chain is said to \textit{converge} at~$i$ if the connecting map
$w_{i,i+1}\colon H^i 0\to HH^i 0$ is invertible. The inverse then makes $H^i 0$
an initial algebra.

\begin{proposition}[\cite{TAKR}]
\label{P-chain}  Let $H$ preserve monomorphisms.
\begin{enumerate}[\em(1)]
\item Whenever there exists a fixed point of $H$, i.e. an object $X\cong HX$, 
then $H$ has an
  initial algebra. 
\item If $H$  has an initial algebra, then the initial chain converges. 
\end{enumerate}
\end{proposition}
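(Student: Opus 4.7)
The plan is to attack (1) via the initial chain of~\refeq{R-chain}, using a given fixed point $X \cong HX$ as an upper bound for the subobjects $H^i 0$, and then to derive~(2) from~(1) together with Lambek's lemma.

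For~(1), fix an isomorphism $\psi\colon X\to HX$ and construct by transfinite recursion a cocone $u_i\colon H^i 0\to X$ compatible with the connecting maps $w_{ij}$: take $u_0$ to be the unique morphism out of the initial object, put $u_{i+1}=\psi^{-1}\circ Hu_i$, and at a limit ordinal~$i$ invoke the universal property of $H^i 0=\colim_{j<i}H^j 0$. The crux is to show that each $u_i$ is a monomorphism. At stage~$0$ the simplicity of~$0$ does the job: factoring $u_0$ as a strong epi followed by a mono, the strong-epi part must be invertible, so $u_0$ is monic. At a successor stage, preservation of monomorphisms by $H$ combines with closure of monos under composition. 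At a limit stage, Remark~\ref{R-fact}(3) supplies exactly what is needed.

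Since the $u_i$ form a transfinite ascending chain in the set $\Sub(X)$ (wellpoweredness, Remark~\ref{R-fact}(2)), the chain must stabilize. At any ordinal $i$ where $u_i$ and $u_{i+1}$ represent the same subobject, the map $w_{i,i+1}\colon H^i 0\to H^{i+1}0$ is forced to be invertible, which is precisely convergence. Standard theory (see~\cite{A}) then certifies that $H^i 0$ with structure $w_{i,i+1}^{-1}$ is an initial algebra: for any algebra $(A,a)$ one obtains the unique homomorphism as $a_i$, where $a_j\colon H^j 0\to A$ is the cocone defined by $a_0$ being the unique arrow from~$0$, $a_{j+1}=a\circ Ha_j$, and colimit maps at limits.

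For~(2), observe that Lambek's lemma makes the structure map $\varphi$ of any initial algebra invertible, so $\mu H$ is itself a fixed point of $H$, and part~(1) then yields convergence of the initial chain. The main obstacle is preserving the monicity of $u_i$ across the transfinite recursion, especially at limit stages; this is where the three standing hypotheses (LFP base, simple initial object, and preservation of monos by $H$) are all brought to bear through Remark~\ref{R-fact}.
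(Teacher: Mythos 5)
Your proposal is correct and takes essentially the same route as the paper, which cites Theorem~II.4 of~\cite{TAKR} and sketches exactly this argument in Remark~\ref{R-cone}: the fixed point $X\cong HX$ yields a cone of monomorphisms $m_i\colon H^i0\to X$ over the initial chain (monic at $0$ by simplicity of the initial object, at successors since $H$ preserves monos, at limits by Remark~\ref{R-fact}), so the chain of subobjects of $X$ stabilizes and $w_{i,i+1}$ becomes invertible. Deriving part~(2) from the same construction via Lambek's lemma is also how the paper's argument is meant to be read, so there is nothing essentially different to compare.
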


\begin{remark}\label{R-cone}
This result was shown in Theorem II.4 of~\cite{TAKR}.
The proof uses 2.~and 3.~of Remark \ref{R-fact}.  It is based  on the fact that the isomorphism $u:HX\rightarrow X$ yields a cone $m_i:H^i0\rightarrow X$ ($i\in \Ord$) of the initial chain with all $m_i$ monic: $m_0:0\rightarrow X$ is unique and $m_{i+1}=u\cdot Hm_i$. Thus, the initial chain converges because it is a chain of subobjects of $X$.
\end{remark}
 
\takeout{%
\begin{theorem}
\label{T-initA}  Let $H$ preserve monomorphisms.
Initial algebras are, as coalgebras, well-founded.
\end{theorem}

This was proved in Taylor~\cite{Ta2} under the additional assumption that
$H$~preserves inverse images.

\begin{proof}
If $H$~has an initial algebra, then, by Proposition~\ref{P-chain}, it has the
form $(w_{j,j+1})^{-1}\colon HW_j\to W_j$ for some ordinal~$j$. We
prove that
for $i\leq j$, 
 the morphisms $a^*_i$ of Notation~\ref{N-starr}
are the same as the morphisms $w_{ij}$.
Consequently, $a^*=\id_{W_j}$, as requested. For
$i=0$
the equality $a^*_0=w_{0j}$ is clear. For the isolated step we need to
prove that the square
\[\bfig
\square<800,400>[H^{i+1} 0`HH^i 0`W_j`HW_j;\id`w_{i+1,j}`Hw_{ij}`w_{j,j+1}]
\efig\]
is a pullback. Indeed, the square commutes since
$Hw_{ij}=w_{i+1,j+1}$, and it is a pullback since both horizontal
arrows are invertible. Limit steps follow automatically.
 The coalgebra
$W_j$ is thus well-founded by Proposition~\ref{P-starr}.

\end{proof}
}%

\begin{proposition} \label{only}
If $H$ preserves monomorphisms, the only well-founded fixed points of $H$ are the initial algebras.
\end{proposition}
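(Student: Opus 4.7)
The plan is to realise the initial algebra as the convergence point of the initial chain, embed it monically into the given fixed point, and observe that the embedding is automatically cartesian, so that well-foundedness forces it to be invertible.

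Granting a fixed point $\alpha\colon X \to HX$, Proposition~\ref{P-chain}(1) supplies an initial algebra, and by~(2) the initial chain converges at some ordinal $j$; the initial algebra is $w_{j,j+1}^{-1}\colon HW_j \to W_j$ with $W_j := H^j 0$, and as a coalgebra $W_j$ carries the inverse structure $w_{j,j+1}\colon W_j \to HW_j$. Applying Remark~\ref{R-cone} to the isomorphism $\alpha^{-1}\colon HX \to X$, one obtains a cone $m_i\colon H^i 0 \to X$ of the initial chain with every $m_i$ monic (this is where preservation of monomorphisms by $H$ enters, via the inductive step $m_{i+1} = \alpha^{-1}\cdot Hm_i$). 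Unfolding that recursion together with the cocone identity $m_{j+1}\cdot w_{j,j+1} = m_j$ yields $\alpha\cdot m_j = Hm_j\cdot w_{j,j+1}$, so $m_j\colon W_j \to X$ is a subcoalgebra in the sense of Terminology~\ref{T-sub}.

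The key observation is that this subcoalgebra is automatically cartesian: both parallel sides of the defining square, namely $w_{j,j+1}$ on top and $\alpha$ at the bottom, are isomorphisms, and any commutative square with a pair of parallel invertible sides is a pullback. Well-foundedness of $(X,\alpha)$ then forces $m_j$ to be invertible, making $(X,\alpha^{-1})$ isomorphic to the initial algebra $(W_j,w_{j,j+1}^{-1})$. I do not anticipate any genuine obstacle beyond bookkeeping: the convergence of the chain, the monicity of the cone, and the parallel-isos-give-pullback fact are each either cited or essentially formal; the only mildly delicate point is confirming that $m_j$ is a coalgebra morphism, which follows from the recursive definition of the cone combined with the cocone identity at the convergence step.
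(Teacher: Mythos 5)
Your proof is correct and follows essentially the same route as the paper's: take the monic cone $m_i\colon H^i0\to X$ of Remark~\ref{R-cone}, locate the ordinal $j$ where the initial chain converges, observe that the square relating $m_j$, $w_{j,j+1}$ and the coalgebra structure commutes and is a pullback because both horizontal arrows are invertible, and conclude from well-foundedness that $m_j$ is an isomorphism onto the initial algebra. The only cosmetic difference is that you cite Proposition~\ref{P-chain} for convergence where the paper re-derives it inline from the chain of subobjects of $X$.
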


\begin{proof} Let $u:HX\stackrel{\sim\;}{\rightarrow}X$ be a fixed point such that $u^{-1}:X\rightarrow HX$ is a well-founded coalgebra. Then we prove that $(X,u)$ is an initial algebra. Let $m_i:H^i0\rightarrow X$ be the cone of Remark \ref{R-cone}. We know that there exists an ordinal $j$ such that $m_j$ and $m_{j+1}$ represent the same subobject, thus, $w_{j, j+1}: H^i0 \rightarrow H(H^i0)$ is invertible. Consequently, $w^{-1}_{j,j+1}:H(H^i0)\rightarrow H^i 0$ is an initial algebra.

The following square
$$\xymatrix{H^j 0\ar[rr]^{w_{j,j+1}}\ar[d]_{m_j}&&H(H^j 0)\ar[d]^{Hm_j}\\
X\ar[rr]_{u^{-1}}&&HX}$$
commutes: by definition we have $m_{j+1}=u\cdot Hm_j$ and since $m_j=m_{j+1}\cdot w_{j,j+1}$ (due to the compatibility of the $m_i$'s) we conclude 
$$m_j=u\cdot Hm_j\cdot w_{j, j+1}.$$
Since both horizontal arrows are invertible, the  square above is a pullback. From the well-foundednes of $(X,u^{-1})$ we conclude that $m_j$ is invertible. Thus, the algebra $(X,u)$ is isomorphic to the initial algebra $(H^j 0, w^{-1}_{j,j+1})$ via $m_j$. This proves that $(X,u)$ is initial.
\end{proof}

\begin{corollary}
\label{only2}
 If $H$ preserves monomorphisms and has a well-founded final coalgebra, then the initial algebra and final coalgebra coincide.
\end{corollary}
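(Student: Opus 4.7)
The plan is to derive this as a direct consequence of Proposition \ref{only} applied to the final coalgebra.

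First I would invoke Lambek's lemma: if $(T,\tau\colon T\to HT)$ is a final coalgebra, then $\tau$ is an isomorphism. Indeed, $(HT, H\tau)$ is also a coalgebra, and the unique homomorphism $u\colon (HT, H\tau)\to (T,\tau)$ composes with $\tau$ to give an endomorphism $u\tec\tau$ of the final coalgebra, which must be $\id_T$; combined with the defining equation $\tau\tec u = Hu\tec H\tau = H(u\tec\tau) = \id_{HT}$, this shows $\tau$ is invertible with $\tau^{-1}=u$. Hence $u=\tau^{-1}\colon HT\to T$ exhibits $T$ as a fixed point of $H$.

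Next I would apply Proposition \ref{only} directly: by assumption, the coalgebra $(T,\tau)$ is well-founded, and we have just seen that $\tau^{-1}\colon HT \to T$ is a fixed point whose coalgebra form $\tau$ is well-founded. Since $H$ preserves monomorphisms, Proposition \ref{only} tells us that $(T,\tau^{-1})$ is an initial algebra. Thus the final coalgebra and the initial algebra share the same carrier $T$, with the algebra structure and coalgebra structure mutually inverse, which is precisely what we mean by the initial algebra and final coalgebra coinciding.

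There is no real obstacle here; the only point requiring a moment's care is making sure the hypotheses of Proposition \ref{only} are met, namely that the fixed point we produce from the final coalgebra is the same object whose coalgebra structure is known to be well-founded. Since Lambek's lemma gives $\tau^{-1}$ as the inverse of $\tau$, this is automatic.
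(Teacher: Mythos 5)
Your proof is correct and follows the route the paper intends: the corollary is stated as an immediate consequence of Proposition~\ref{only}, obtained by noting (via Lambek's lemma) that the final coalgebra is a fixed point and then applying the proposition to this well-founded fixed point. Your spelled-out Lambek argument and the application of Proposition~\ref{only} are both accurate, so there is nothing to add.
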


\takeout{
\begin{example}
\label{E-bip}
Here we demonstrate that our assumption that  $0$ be simple is essential
in Proposition~\ref{only} and Corollary~\ref{only2}.
\begin{enumerate}[(a)]
\item
On the category of rings with unit consider the identity functor. Its terminal coalgebra is the trivial ring ($0=1$) and it is well-founded because it has no proper subrings. However, the initial algebra is the initial ring $\mathbb{Z}$.

\item  For a simpler category, consider 
 the category 
$\Set_{0,1}$   of \emph{bipointed sets};  
these are sets with two distinguished points which morphisms must fix.
 $\Set_{0,1}$ is LFP.
The initial object $0$ is a set with two different elements, both distinguished.
The final object $1$ is a single point. Again the identity functor has a well-founded final coalgebra different from its initial algebra.
\end{enumerate}
\end{example}
}

\begin{example}
\label{E-graphh}
This demonstrates that the assumption that $H$ preserves monomorphisms is essential. Consider the category $\Gra$ of graphs and graph morphisms (i.e., functions preserving edges). All assumptions in \ref{ASS} are fulfilled. 
The endofunctor
\[HX=\begin{cases}
X+\{t\}\ (\text{no edges})&\text{if $X$ has no edges}\\
1, \text{terminal graph,}&\text{else.}
\end{cases}\]
does not preserve monomorphisms. Its final coalgebra $1=H1$ is well-founded because neither
 of the two proper  subcoalgebras is cartesian. However, the initial algebra is carried by an infinite graph without edges.
\end{example}

\begin{definition}
\label{D-tilde}
Assume that $H$ preserves monomorphisms. Then for every coalgebra $\alpha\colon A\to HA$ we denote by $\tilde$ the endofunction on~$\Sub(A)$ (see Remark~\ref{R-fact}.2)  assigning to every subobject $m\colon A'\to
A$ the inverse image of~$Hm$ under~$\alpha$, i.\,e., we
have a pullback square:
\begin{equation}
\label{rdj}
\bfig
\square<700,500>[\tilde{A'}`HA'`A`HA;%
\alpha{[m]}`\tilde{m}`Hm`\alpha]
\POS(100,400)
\ar@{-}+(-50,0)
\ar@{-}+(0,50)
\efig
\end{equation}
\end{definition}

This function $m\to/|->/\tilde{m}$ is obviously order-preserving.  
By the Knaster-Tarski fixed point theorem, it
has a least fixed point.

\begin{corollary}
\label{C-least}
A coalgebra $(A,a)$ is well-founded iff the least fixed point of $\tilde$ is all of $A$.
\end{corollary}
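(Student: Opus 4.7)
The plan is to reduce the statement to the tautology that a monotone map has exactly one fixed point iff its least fixed point is the top element, by first showing that the fixed points of $\tilde{\ }$ are precisely the cartesian subcoalgebras of $(A,\alpha)$.

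First I would verify the bijective correspondence
\[
\{\text{fixed points of }\tilde{\ }\text{ in }\Sub(A)\}\;\longleftrightarrow\;\{\text{cartesian subcoalgebras of }(A,\alpha)\}.
\]
In one direction, given a cartesian subcoalgebra $m:(A',\alpha')\hookrightarrow (A,\alpha)$, the defining pullback square of Definition~\ref{D-wf} is exactly the pullback \eqref{rdj} used to compute $\tilde m$, so $\tilde m = m$ as subobjects. Conversely, if $\tilde m = m$ then the pullback square in \eqref{rdj} provides a coalgebra structure $\alpha[m]:A'\to HA'$ on the domain of $m$ for which $m$ becomes a coalgebra homomorphism (commutativity of \eqref{rdj} is precisely the homomorphism condition), and that same square witnesses $m$ as cartesian.

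Using this correspondence, well-foundedness of $(A,\alpha)$ — the absence of \emph{proper} cartesian subcoalgebras — is equivalent to saying that the only fixed point of $\tilde{\ }$ is the top element $\id_A$ of $\Sub(A)$. Now for the final equivalence: if $\id_A$ is the only fixed point, then it is in particular the least. Conversely, if the least fixed point equals $\id_A$, then every fixed point lies between $\id_A$ and the top of $\Sub(A)$, which is $\id_A$ itself, so every fixed point equals $\id_A$. Thus well-foundedness is equivalent to the least fixed point of $\tilde{\ }$ being all of $A$.

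There is no real obstacle here: the only thing to be careful about is that a fixed point of $\tilde{\ }$, a priori just a subobject of $A$, automatically carries a (unique) coalgebra structure making it a cartesian subcoalgebra, and this is read straight off the pullback \eqref{rdj}. Everything else is the Knaster–Tarski setup already recorded just before the corollary.
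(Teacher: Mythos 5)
Your proof is correct and follows exactly the route the paper intends (the corollary is stated there without an explicit proof): cartesian subcoalgebras are precisely the fixed points of $m\mapsto\tilde{m}$, so well-foundedness means the only fixed point is the top element $\id_A$ of $\Sub(A)$, which is equivalent to the least fixed point being all of $A$. The one subtlety worth checking — that a mere subobject fixed by $\tilde{\ }$ inherits a coalgebra structure from the pullback \eqref{rdj} making it a cartesian subcoalgebra — is exactly the point you address, so nothing is missing.
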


Incidentally, the notation $\tilde{m}$ comes from modal logic, especially 
the areas of temporal logic where one reads $\tilde{\phi}$ as ``$\phi$ is true
in the next moment,'' or ``next time $\phi$" for short.

\begin{example}   
\label{E-ptilde}
Recall our discussion of graphs from Example~\ref{E-wf} (1).
The pullback  $\tilde{A'}$ of a subgraph $A'$
is  the set of points in the graph $A$
all of whose neighbors belong to $A'$.
\end{example}

\begin{remark}
  \label{R-wf} As we mentioned in the introduction, the concept of
  well-founded coalgebra was introduced by Taylor~\cite{Ta1,Ta2}.
Our formulation is a bit simpler. In~\cite[Definition~6.3.2]{Ta2} he calls
a coalgebra $(A,\alpha)$ well-founded if for every pair of
monomorphisms $m\colon U \to A$ and $h\colon H \to U$ such that $h\tec
m$ is the inverse image of $Hm$ under $\alpha$ it follows that $m$ is
an isomorphism. Thus, in lieu of fixed 
points of $m \longmapsto \bigcirc{m}$ he uses pre-fixed points.

In addition, 
our overall work has a \emph{methodological} difference from Taylor's that is worth mentioning at this point.
Taylor is giving a general account of recursion and induction,
and so he is concerned with general principles that
underlie these phenomena.
Indeed, he is interested in settings like non-boolean toposes
where classical reasoning is not necessarily valid.
On the other hand, in this paper we are studying initial algebras,
final coalgebras, and similar concepts, using
standard classical mathematical reasoning. In particular, we
make free use of transfinite induction. 
\end{remark}

\takeout{%
\begin{example}
\label{825}
Here is an example showing that preservation of strong monomorphisms does not in general imply preservation of monomorphisms.
On the category $\Gra$ of graphs, this time let
$$\begin{array}{lcl}
HA &  = & \mbox{all finite independent $a \subseteq A$, together with a new point $t$}  \\
& & \mbox{with $a \leftrightarrow t$ for all $a$, and also $t \rightarrow t$} \\
\end{array}
$$
For a graph morphism $f: A\to B$, we take $Hf: HA \to HB$ to be 
$$
Hf(a)  = \begin{cases}
f[a] & \mbox{if $f[a]$ is independent in $B$}  \\
t & \mbox{otherwise} \\
\end{cases}
$$
This functor $H$ preserves strong monomorphisms
(they are the induced subgraphs),
and indeed it preserves intersections of them as well.
However, $H$ does not preserve   monomorphisms.
So we expect some of the results which depend on preservation 
of $\M$ will fail with $\M = $ all monomorphisms.
\end{example}
}%

\takeout{ 
\begin{example}
\label{E-gr2}
We return to Example~\ref{E-gr} concerning an 
endofunctor of graphs not preserving monomorphisms.
The final coalgebra $1=H1$ is well-founded. Indeed, the only proper
subgraphs of~1 are $m_0\colon0\to1$ and $m_1\colon\{t\}\to1$ with
0~the empty graph, and $\{t\}$~having no edges. Neither of the squares
\[
\bfig
\square/->`->`->`=/<700,500>[0`\{t\}`1`H1;`m_0`Hm_0`]
\square(1500,0)/->`->`->`=/<700,500>[\{t\}`\{t\}+\{t\}`1`H1;%
`m_1`Hm_1`]
\efig
\]
is a pullback. However, $m_1$~is a prefixed point of
$m\to/|->/\tilde{m}$.
\end{example}
}

\begin{notation}
\label{N-starr}\hfill
\begin{enumerate}[(a)]
\item Assume that $H$ preserves  monomorphisms.  For every
  coalgebra $\alpha\colon A\to HA$ denote by
  \begin{equation}
    \label{rdd}
    a^*\colon A^*\to A
  \end{equation}
  the least fixed point of the function $m\to/|->/\tilde{m}$
  of Definition~\ref{D-tilde}. (Thus, $(A,\alpha)$~is well-foun\-ded iff $a^*$~is
  invertible.) Since $a^*$~is a fixed point we have a coalgebra
  structure $\alpha^*\colon A^*\to HA^*$ making~$a^*$ a coalgebra homomorphism.
  
\item For every coalgebra $\alpha\colon A\to HA$ we define a chain of
   subobjects
  \[
  a^*_i\colon A^*_i\to A\qquad (i\in\Ord)
  \]
  of~$A$ in~\A\ by transfinite recursion:
  $a^*_0\colon 0\to A$ is unique; given~$a^*_i$, define~$a^*_{i+1}$ by the pullback
  \[
  \bfig
  \square<700,500>[A^*_{i+1}`HA^*_i`A`HA;%
  `a^*_{i+1}`Ha^*_i`\alpha]
  \POS(110,400)
  \ar@{-}+(-50,0)
  \ar@{-}+(0,50)
  \efig
  \]
  and for limit ordinals $i$ 
  define $a^*_i\colon A^*_i\to A$ to be the union of the chain of monomorphisms $a^*_j:A^*_j \to A$,
  \[a^*_i=\bigcup_{j<i} a^*_j.\]  
 It is easy to prove by transfinite induction that all $a_i^*$
   are monic (for $i=0$ recall that $0$ is simple). Moreover, for
   every limit ordinal $i$ the  union above coincides with the colimit
 of the chain, that is, the monomorphism $a^*_i:A^*_i\to A$ is just the induced morphism from the colimit of the chain to $A$, 
 see  Remark \ref{R-fact}, point 3.
   \end{enumerate}
\end{notation}
\begin{remark}
\label{R-connect}
We observe that for all ordinals $i\leq j$ the connecting maps
\[\bfig
\Vtriangle[A^*_i`A^*_j`A;a^*_{ij}`a^*_i`a^*_j]
\efig\]
of the chain of Notation~\ref{N-starr} form the following commutative 
diagram which can be used as a definition of the maps $a_{ij}^*$ (via the
universal property of pullbacks):
\begin{equation}
\label{nrdjt}
\bfig
\square|brlb|<800,400>[A^*_{j+1}`HA^*_j`A`HA;%
\alpha{[a^*_j]}`a^*_{j+1}`Ha^*_j`\alpha]
\square(0,400)|arlb|<800,400>[A^*_{i+1}`HA^*_i`A^*_{j+1}`HA^*_j;%
{\alpha[a^*_i]}`a^*_{i+1,j+1}`Ha^*_{ij}`]
\morphism(0,800)|l|/{@{>}@/_2em/}/<0,-800>[A^*_{i+1}`A;%
a^*_{i+1}]
\morphism(800,800)|r|/{@{>}@/^2em/}/<0,-800>[HA^*_i`HA;%
Ha^*_i]
 \POS(110,285)
  \ar@{-}+(-50,0)
  \ar@{-}+(0,50)
\efig
\end{equation}
\end{remark}

\begin{remark}
\label{R-least} This way, what
we have is nothing else than the construction of the least fixed point
of $m\to/|->/\tilde{m}$, see Remark~\ref{R-wf}, in the proof of the
Knaster-Tarski Theorem in \cite{Tar}. Thus,
$a^*=\bigcup_{i\in\Ord}a^*_i$.
However, since $A$~has only a set of subobjects,
\begin{equation}
  a^*=a^*_{i_0}\qquad\text{for some ordinal $i_0$.}
  \label{starsfso}
\end{equation}
And for this ordinal $i_0$, an easy verification shows that the coalgebra structure of $A^*$ above is
\begin{equation}
  \alpha^* =\alpha[a^*_{i_0}] = \alpha[a^*].
  \label{starsfso2}
\end{equation}
Henceforth, we  call $A^*$ the \emph{smallest} cartesian subcoalgebra of $A$.
\end{remark}

From now on, whenever we use the notations $\tilde{m}$
and $a^*$, we only do so when $H$ preserves monomorphisms.

\begin{example}
\label{E-gr3} On the category 
$\Gra$  consider the functor $H$ of Example \ref{E-graphh}. It has $1=H1$ as its final coalgebra, and this coalgebra is well-founded. However, for 
$\alpha$ as $\id: 1 \to H1$ (the final coalgebra), there is no ordinal $i$ 
such that $a^*_i = \id_1$.
This shows that Notation \ref{N-starr} is meaningful only if we assume
that $H$ preserves monomorphisms.
\end{example}

\begin{example}
\label{E-starr}
For every graph~$A$  considered as a coalgebra for~\PP, $A^*$~is the subgraph on
all vertices of~$A$ from which no infinite path starts. 
Since $m\mapsto \tilde{m}$ is not necessarily continuous, 
the
ordinal~$i_0$  
of (\ref{starsfso})
above can be arbitrarily large. Here is an example with
$i_0=\omega+1$:
\[\bfig
\morphism(400,350)<400,-300>[\bullet`\bullet;]
\morphism(400,350)<400,-100>[\bullet`\bullet;]
\morphism(400,350)<400,100>[\bullet`\bullet;]
\morphism(400,350)<400,300>[\bullet`\bullet;]
\morphism(800,450)<400,0>[\bullet`\bullet;]
\morphism(800,250)<400,0>[\bullet`\bullet;]
\morphism(800,50)<400,0>[\bullet`\bullet;]
\morphism(1200,50)<400,0>[\bullet`\bullet;]
\morphism(1200,250)<400,0>[\bullet`\bullet;]
\morphism(1600,50)<400,0>[\bullet`\bullet;]
\place(800,-70)[\vdots]
\efig\]
\end{example}

\begin{proposition}  
\label{P-starr}
 If $H$ preserves monomorphisms then well-founded coalgebras form a full coreflective subcategory of $\Coalg~H$: For every coalgebra~$(A,\alpha)$, the  smallest cartesian
subcoalgebra $(A^*,\alpha^*)$ is its coreflection.
\end{proposition}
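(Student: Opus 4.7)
The plan is to establish two facts: (a) that $(A^*, \alpha^*)$ is itself well-founded, so that it lies in the subcategory, and (b) that the homomorphism $a^* : (A^*, \alpha^*) \to (A, \alpha)$ is universal from the subcategory, i.e.\ every homomorphism $h : (B, \beta) \to (A, \alpha)$ with $(B, \beta)$ well-founded factors uniquely through $a^*$. Uniqueness of the factorization is immediate because $a^*$ is monic, and any such $h' : B \to A^*$ with $a^* \cdot h' = h$ is automatically a coalgebra homomorphism: since $Ha^*$ is monic (as $H$ preserves monomorphisms), it suffices to verify $Ha^* \cdot \alpha^* \cdot h' = Ha^* \cdot Hh' \cdot \beta$, which reduces to $\alpha \cdot h = Hh \cdot \beta$.

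Part (a) is short. Given a cartesian subcoalgebra $m : (C, \gamma) \hookrightarrow (A^*, \alpha^*)$, I would paste its defining pullback vertically below the pullback witnessing that $a^*$ is cartesian in $(A, \alpha)$; the resulting outer square exhibits $a^* \cdot m$ as a cartesian subcoalgebra of $(A, \alpha)$, hence a fixed point of $m' \mapsto \tilde{m'}$. By the least-fixed-point property of $a^*$ (Remark~\ref{R-least}) one gets $a^* \leq a^* \cdot m \leq a^*$ in $\Sub(A)$, forcing $m$ to be invertible.

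For (b), I would prove by transfinite induction that $b^*_i \leq h^{-1}(a^*_i)$ in $\Sub(B)$ for every ordinal $i$, where $b^*_i$ is the chain of Notation~\ref{N-starr} for $(B, \beta)$. The base case $i = 0$ is trivial since $b^*_0$ is the least subobject. For the successor step, $b^*_i \leq h^{-1}(a^*_i)$ means $h \cdot b^*_i$ factors through $a^*_i$, so applying $H$ yields $Hb^*_i \leq (Hh)^{-1}(Ha^*_i)$ in $\Sub(HB)$; taking inverse images along $\beta$ and using $\alpha \cdot h = Hh \cdot \beta$ then gives
\[
b^*_{i+1} = \beta^{-1}(Hb^*_i) \leq \beta^{-1}\bigl((Hh)^{-1}(Ha^*_i)\bigr) = h^{-1}\bigl(\alpha^{-1}(Ha^*_i)\bigr) = h^{-1}(a^*_{i+1}).
\]
For a limit ordinal $i$, both $a^*_i = \bigcup_{j<i} a^*_j$ and $b^*_i = \bigcup_{j<i} b^*_j$ are filtered colimits of monomorphisms (Remark~\ref{R-fact}(3)), and since finite limits commute with filtered colimits in an LFP category, pullback along $h$ preserves these unions, whence $b^*_i \leq h^{-1}(a^*_i)$. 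Choosing $i_0$ as in~(\ref{starsfso}) so that $b^* = b^*_{i_0}$ and $a^* = a^*_{i_0}$, and using $b^* = \id_B$ by well-foundedness of $(B, \beta)$, I conclude $h^{-1}(a^*) = \id_B$, so $h$ factors through $a^*$.

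The main obstacle is the limit step, which depends on the LFP hypothesis in two essential ways: monomorphisms must be closed under filtered colimits so that the unions $\bigcup_{j<i} a^*_j$ and $\bigcup_{j<i} b^*_j$ are honest subobjects, and the inverse image functor $h^{-1}$ must commute with these filtered unions. The successor step, by contrast, is a purely formal diagram chase that uses only preservation of monomorphisms by $H$ together with the coalgebra-homomorphism equation $\alpha \cdot h = Hh \cdot \beta$.
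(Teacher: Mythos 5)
Your proof is correct, and it follows the paper's skeleton---first showing that $(A^*,\alpha^*)$ is well-founded by pasting its cartesian pullback with the one for $a^*$ and invoking the least-fixed-point property, then a transfinite induction along the chains $(A_i^*)$ and $(B_i^*)$---but the execution of the universal-property part differs in a way worth noting. The paper constructs, by transfinite recursion, an explicit natural transformation $\bar f_i\colon B_i^*\to A_i^*$ satisfying $a_i^*\cdot\bar f_i=f\cdot b_i^*$ (the successor step uses the pullback defining $A_{i+1}^*$, and naturality and the compatibility squares must be checked at limit ordinals); at a stage where both chains have stabilized, $\bar f_i$ \emph{is} the desired homomorphism, its homomorphism property being read off from the construction. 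You instead work entirely in the subobject lattice, proving the inequalities $b_i^*\le h^{-1}(a_i^*)$, extracting the factorization $h'$ only at the end from $h^{-1}(a^*)=\id_B$, and verifying that $h'$ is a coalgebra homomorphism abstractly from monicity of $Ha^*$; this trades the bookkeeping of naturality for order-theoretic monotonicity and is, if anything, a bit leaner. One small remark: your limit step does not actually need commutation of pullbacks with filtered colimits. Since $b_i^*=\bigcup_{j<i}b_j^*$ is the join in $\Sub(B)$ and each $b_j^*\le h^{-1}(a_j^*)\le h^{-1}(a_i^*)$ by monotonicity of $h^{-1}$, the inequality $b_i^*\le h^{-1}(a_i^*)$ follows directly; the only LFP ingredient needed there is that the unions of the chains are genuine subobjects, which is Remark~\ref{R-fact} and is already built into Notation~\ref{N-starr}.
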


\begin{remb}
We thus prove that $(A^*,\alpha^*)$~is well-founded, and for every
homomorphism $f\colon(B,\beta)\to(A,\alpha)$ with
$(B,\beta)$~well-founded there exists a unique homomorphism
\[\bar{f}\colon(B,\beta)\to(A^*,\alpha^*)\qquad\text{with}\qquad
f=a^*\tec\bar{f}.\]
\end{remb}

\begin{proof}
(i) $(A^*,\alpha^*)$~is clearly well-founded: From Definition \ref{D-tilde} and Notation \ref{N-starr}, we know that $(A^*,a^*)$ is the least fixed point of $\tilde:\Sub(A)\to\Sub(A)$, that is, $(A^*,\alpha^*)$ is the smallest cartesian subcoalgebra of $(A,\alpha)$. Then $(A^*,\alpha^*)$ cannot have proper cartesian subcoalgebras since its cartesian subcoalgebras are   cartesian subcoalgebras of  $(A,\alpha)$.

(ii)
 Since $a^*$~is a
monomorphism there is
at most one coalgebra homomorphism  $\bar{f}\colon B \to A^*$ with
$a^* \tec \bar{f} =f$.
Thus, we are finished if we show that $\bar{f}$~exists.
To this end, for all ordinals $i\leq j$, let $a^*_{ij}:A_i^*\to A_j^*$ be the connecting maps
of the chain of  Remark \ref{R-connect}. Analogously, use
$b^*_{ij}\colon B^*_i \to B^*_j$ for
the chain of the subobjects $b_i^*:B_i^*\to B$, whose union is $B^* = B$. We define 
  the components of a
  natural transformation $\bar{f}_i\colon B_i^* \to A_i^*$, $i\in \Ord$,
  by transfinite recursion on 
ordinals $i$,   satisfying
\begin{equation}
\label{diag:sq}
\bfig
\square<700,400>[B^*_i`A^*_i`B`A;\bar{f}_i`b^*_i`a^*_i`f]
\efig
\end{equation}
\takeout{%
The naturality follows due to the commutativity from the  below
for $i \leq j$:
\begin{equation}
\label{diag:sq2}
\bfig
\square|arlb|<800,400>[B^*_j`A^*_j`B`A;%
\bar{f}_j`b^*_j`a^*_j`f]
\square(0,400)|arlb|<800,400>[B^*_i`A^*_i`B^*_j`A^*_j;%
\bar{f}_i`b^*_{i,j}`a^*_{i,j}`]
\morphism(0,800)|l|/{@{>}@/_2em/}/<0,-800>[B^*_i`B;%
b^*_i]
\morphism(800,800)|r|/{@{>}@/^2em/}/<0,-800>[A^*_i`A;%
a^*_i]
\efig
\end{equation}
}%
Let $\bar{f}_0 = \id\colon 0 \to 0$.
 For isolated
steps
consider the diagram below:
\begin{equation}
\label{diag:sq3}
\bfig
\square(600,400)<800,400>[A^*_{i+1}`HA^*_i`A`HA;%
{\alpha[a^*_i]}`a^*_{i+1}`Ha^*_i`\alpha]
\morphism(0,1200)<2000,0>[B^*_{i+1}`HB^*_i;{\beta[b^*_i]}]
\morphism(0,1200)/-->/<600,-400>[B^*_{i+1}`A^*_{i+1};\bar{f}_{i+1}]
\morphism(0,1200)|l|<0,-1200>[B^*_{i+1}`B;b^*_{i+1}]
\morphism(2000,1200)|l|<-600,-400>[HB^*_i`HA^*_i;H\bar{f}_i]
\morphism(2000,1200)|r|<0,-1200>[HB^*_i`HB;Hb^*_i]
\morphism(0,0)<600,400>[B`A;f]
\morphism(0,0)|b|<2000,0>[B`HB;\beta]
\morphism(2000,0)|a|<-600,400>[HB`HA;Hf]
\efig
\end{equation}
The inner and outside squares commute by the definition of $A^*_{i+1}$
and
$B^*_{i+1}$, respectively. For the lower square we use that $f$ is a
coalgebra homomorphism, and the right-hand one commutes by the
induction
hypothesis. The inner pullback induces the desired morphism
$\bar{f}_{i+1}$
and the commutativity of the left-hand square is that
of~\eqref{diag:sq} for~${i+1}$.
Finally, for a limit ordinal $j$ let $\bar{f}_j = \colim_{i<j}
\bar{f}_i$, in other words, $\bar{f}_j$ is the unique morphism such
that the squares
\begin{equation}\label{diag:nat}
\bfig
\square/->`<-`<-`->/<700,400>[B^*_j`A^*_j`B^*_i`A^*_i;%
\bar{f}_j`b^*_{i,j}`a^*_{i,j}`\bar{f}_i]
\efig
\end{equation}
commute for all $i < j$. It is easy to prove  by transfinite induction that $\bar{f}_j:B^*_j\rightarrow A^*_j$ is natural in $j$.

We need to verify that~\eqref{diag:sq}
commutes for
$\bar{f}_j$. This is clear for $j=0$ and for $j$ isolated this follows from the  definition of $\bar{f}_{i+1}$. Let $j$ be a limit ordinal. Then \eqref{diag:sq} commutes due to the following diagram for every $i<j$:
\begin{equation}
\label{diag:sq2}
\bfig
\square|arlb|<800,400>[B^*_j`A^*_j`B`A;%
\bar{f}_j`b^*_j`a^*_j`f]
\square(0,400)|arlb|<800,400>[B^*_i`A^*_i`B^*_j`A^*_j;%
\bar{f}_i`b^*_{i,j}`a^*_{i,j}`]
\morphism(0,800)|l|/{@{>}@/_2em/}/<0,-800>[B^*_i`B;%
b^*_i]
\morphism(800,800)|r|/{@{>}@/^2em/}/<0,-800>[A^*_i`A;%
a^*_i]
\efig
\end{equation}

To complete the proof consider any ordinal~$i$ such that $B^*_i =
B^* = B$
and $A^*_i = A^*$ hold. Then $\bar{f} = \bar{f}_i\colon B \to A^*$ is
a coalgebra
homomorphism with $a^*_i \tec \bar{f} = f$ by the commutativity of the
upper and left-hand parts of Diagram~\eqref{diag:sq3}. %
\takeout{%
(i)
We first observe that for all ordinals $i\leq j$ the connecting maps
\[\bfig
\Vtriangle[A^*_i`A^*_j`A;a^*_{ij}`a^*_i`a^*_j]
\efig\]
of the chain of Notation~\ref{N-starr} form the following commutative
diagram which can be used as a definition of~$a_{ij}^*$'s (via the
universal property of pullbacks):
\begin{equation}
\label{nrdjt}
\bfig
\square|brlb|<800,400>[A^*_{j+1}`HA^*_j`A`HA;%
\alpha{[a^*_j]}`a^*_{j+1}`Ha^*_j`\alpha]
\square(0,400)|arlb|<800,400>[A^*_{i+1}`HA^*_i`A^*_{j+1}`HA^*_j;%
{\alpha[a^*_i]}`a^*_{i+1,j+1}`Ha^*_{ij}`]
\morphism(0,800)|l|/{@{>}@/_2em/}/<0,-800>[A^*_{i+1}`A;%
a^*_{i+1}]
\morphism(800,800)|r|/{@{>}@/^2em/}/<0,-800>[HA^*_i`HA;%
Ha^*_i]
\efig
\end{equation}

(ii)~$A^*$ is a well-founded coalgebra: for every ordinal
number the outside square of the diagram
\[
\bfig
\square|arlb|<800,400>[A^*`HA^*`A`HA;%
\alpha{[a^*]}`a^*`Ha^*`\alpha]
\square(0,400)|arlb|<800,400>[A^*_{i+1}`HA^*_i`A^*`HA^*;%
{\alpha[a^*_i]}```]
\morphism(0,800)|l|/{@{>}@/_2em/}/<0,-800>[A^*_{i+1}`A;%
a^*_{i+1}]
\morphism(800,800)|r|/{@{>}@/^2em/}/<0,-800>[HA^*_i`HA;%
Ha^*_i]
\efig
\]
is a pullback, thus so is the upper square. This shows that
$(A^*)_i^* = A^*_i$.     Therefore $(A^*)^*=A^*$ 
by  (\ref{starsfso}).

(iii)
Suppose we are given a well-founded coalgebra $\beta: B \to
HB$ and
a coalgebra homomorphism $f\colon B \to A$. Since $a^*$~is a
monomorphism there is
at most one coalgebra homomorphism  $\bar{f}\colon B \to A^*$ with
$a^* \tec \bar{f} =f$.
Thus, we are finished if we show that $\bar{f}$~exists.
To this end denote by $b^*_{i,j}\colon B^*_i \to B^*_j$
the chain whose colimit is $B^* = B$ with the
colimit
  injections $b^*_i\colon B^*_i \to B^*$. We define 
  the components of a
  natural transformation $\bar{f}_i\colon B_i^* \to A_i^*$
  by transfinite recursion on 
ordinals $i$,    satisfying
\begin{equation}
\label{diag:sq}
\bfig
\square<700,400>[B^*_i`A^*_i`B`A;\bar{f}_i`b^*_i`a^*_i`f]
\efig
\end{equation}
The naturality follows
due to the commutativity from the diagram below
for $i \leq j$:
\begin{equation}
\label{diag:sq2}
\bfig
\square|arlb|<800,400>[B^*_j`A^*_j`B`A;%
\bar{f}_j`b^*_j`a^*_j`f]
\square(0,400)|arlb|<800,400>[B^*_i`A^*_i`B^*_j`A^*_j;%
\bar{f}_i`b^*_{i,j}`a^*_{i,j}`]
\morphism(0,800)|l|/{@{>}@/_2em/}/<0,-800>[B^*_i`B;%
b^*_i]
\morphism(800,800)|r|/{@{>}@/^2em/}/<0,-800>[A^*_i`A;%
a^*_i]
\efig
\end{equation}

The desired upper square commutes because all other parts and the
outside   square do and since $a^*_j$ is a monomorphism.

We define~$\bar{f}_i$ by transfinite recursion.
Let $\bar{f}_0 = \id\colon \emptyset \to
\emptyset$. Then~\eqref{diag:sq} clearly commutes for $i = 0$. For isolated
steps
consider the diagram below:
\begin{equation}
\label{diag:sq3}
\bfig
\square(600,400)<800,400>[A^*_{i+1}`HA^*_i`A`HA;%
{\alpha[a^*_i]}`a^*_{i+1}`Ha^*_i`\alpha]
\morphism(0,1200)<2000,0>[B^*_{i+1}`HB^*_i;{\beta[b^*_i]}]
\morphism(0,1200)/-->/<600,-400>[B^*_{i+1}`A^*_{i+1};\bar{f}_{i+1}]
\morphism(0,1200)|l|<0,-1200>[B^*_{i+1}`B;b^*_{i+1}]
\morphism(2000,1200)|l|<-600,-400>[HB^*_i`HA^*_i;H\bar{f}_i]
\morphism(2000,1200)|r|<0,-1200>[HB^*_i`HB;Hb^*_i]
\morphism(0,0)<600,400>[B`A;f]
\morphism(0,0)|b|<2000,0>[B`HB;\beta]
\morphism(2000,0)|a|<-600,400>[HB`HA;Hf]
\efig
\end{equation}
The inner and outside squares commute by the definition of $A^*_{i+1}$
and
$B^*_{i+1}$, respectively. For the lower square we use that $f$ is a
coalgebra homomorphism, and the right-hand one commutes by the
induction
hypothesis. The inner pullback induces the desired morphism
$\bar{f}_{i+1}$
and the commutativity of the left-hand square is that
of~\eqref{diag:sq} for~${i+1}$.
Finally, for a limit ordinal $j$ let $\bar{f}_j = \colim_{i<j}
\bar{f}_i$, in other words, $\bar{f}_j$ is the unique morphism such
that the squares
\[\bfig
\square/->`<-`<-`->/<700,400>[B^*_j`A^*_j`B^*_i`A^*_i;%
\bar{f}_j`b^*_{i,j}`a^*_{i,j}`\bar{f}_i]
\efig\]
commute for all $i < j$. We need to verify that~\eqref{diag:sq}
commutes for
$\bar{f}_j$. This is the commutativity of the lower square
in~\eqref{diag:sq2}, which follows since all other parts and the
outside square commute for all $i<j$.

To complete the proof consider any ordinal~$i$ such that $B^*_i =
B^* = B$
and $A^*_i = A^*$ hold. Then $\bar{f} = \bar{f}_i\colon B \to A^*$ is
a coalgebra
homomorphism with $a^*_i \tec \bar{f} = f$ by the commutativity of the
upper and left-hand part of Diagram~\eqref{diag:sq3}.
}%
\end{proof}

For endofunctors preserving inverse images the following corollary is
Exercise~VI.16 in~\cite{Ta2}:

\begin{corollary}
\label{L-colim}
Assuming that $H$ preserves monomorphisms, the subcategory of $\Coalg H$
consisting of the well-founded coalgebras is closed under quotients
and coproducts in $\Coalg H\!$.
\end{corollary}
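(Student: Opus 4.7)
The plan is to reduce both closure properties to a direct application of the coreflection established in Proposition~\ref{P-starr}: for every coalgebra $(A,\alpha)$, its smallest cartesian subcoalgebra $a^*\colon(A^*,\alpha^*)\to(A,\alpha)$ is the coreflection into the subcategory of well-founded coalgebras, and $(A,\alpha)$ is itself well-founded precisely when $a^*$ is invertible. With this in hand, the key observation is that to prove well-foundedness of a target coalgebra it suffices to split its coreflection.

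For closure under quotients, let $q\colon(A,\alpha)\to(B,\beta)$ be a quotient with $(A,\alpha)$ well-founded, and let $b^*\colon(B^*,\beta^*)\to(B,\beta)$ be the coreflection of $(B,\beta)$. Since $(A,\alpha)$ is well-founded, the coreflection property applied to $q$ yields a factorization $q=b^*\cdot\bar q$ in $\Coalg H$. By Remark~\ref{R-str} the homomorphisms carried by strong epimorphisms in $\A$ and those carried by monomorphisms in $\A$ form a factorization system in $\Coalg H$; applying its diagonal fill-in to the commutative square $q\cdot\id_A=b^*\cdot\bar q$ (with $q$ a strong epi and $b^*$ a mono in $\A$) produces a coalgebra homomorphism $d\colon(B,\beta)\to(B^*,\beta^*)$ with $b^*\cdot d=\id_B$. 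Hence $b^*$ is split epic and monic, and therefore invertible, so $(B,\beta)$ is well-founded.

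For closure under coproducts, let $\iota_i\colon(A_i,\alpha_i)\to(A,\alpha)$ be the injections of a coproduct in $\Coalg H$ of well-founded coalgebras, and let $a^*\colon(A^*,\alpha^*)\to(A,\alpha)$ be the coreflection of $(A,\alpha)$. Each $\iota_i$ factors uniquely as $\iota_i=a^*\cdot\bar\iota_i$ through $a^*$, and the universal property of the coproduct in $\Coalg H$ produces a unique coalgebra homomorphism $\phi\colon(A,\alpha)\to(A^*,\alpha^*)$ with $\phi\cdot\iota_i=\bar\iota_i$ for all $i$. Then $a^*\cdot\phi\cdot\iota_i=\iota_i=\id_A\cdot\iota_i$ for every $i$, which forces $a^*\cdot\phi=\id_A$ by uniqueness, so $a^*$ is split epic and monic, hence invertible, and $(A,\alpha)$ is well-founded. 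The only nontrivial ingredient beyond Proposition~\ref{P-starr} is the factorization system on $\Coalg H$ (Remark~\ref{R-str}) used in the quotient case; the coproduct case is a bare coreflectivity argument, so no real obstacle arises.
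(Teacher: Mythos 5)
Your proof is correct and takes essentially the same route as the paper: the paper's argument consists of citing the general fact that a coreflective subcategory with monomorphic coreflections, in a category with a (strong epi, mono)-factorization system, is closed under quotients and colimits, and your two arguments (splitting the coreflection $b^*$ via the diagonal fill-in for quotients, and via the coproduct's universal property for coproducts) are exactly the unfolding of that general fact from Proposition~\ref{P-starr} and Remark~\ref{R-str}. No gap.
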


This follows from a general result on coreflective subcategories: the category $\Coalg H$ has a
(strong epi, mono)-factorization system (see Remark \ref{R-fact}),
and its full subcategory of well-founded coalgebras is coreflective with monomorphic coreflections
(see Proposition~\ref{P-starr}).
Consequently, it is closed under quotients and colimits.

We also have the following fact which will be used in Section~3. 

\begin{lemma} \label{wfdsub}
  If $H$ preserves finite intersections, then every subcoalgebra  of a 
well-founded coalgebra  is well-founded.
\end{lemma}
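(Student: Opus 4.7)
The plan is to compare the chain $a^*_i:A^*_i\to A$ of Notation~\ref{N-starr} for the well-founded coalgebra $(A,\alpha)$ with the analogous chain $b^*_i:B^*_i\to B$ for the subcoalgebra $m:(B,\beta)\hookrightarrow (A,\alpha)$. Concretely, for each ordinal $i$ I set $C_i = A^*_i\cap_A B$, the intersection computed inside $A$ from the two monomorphisms $a^*_i$ and $m$. I intend to prove by transfinite induction that $C_i = B^*_i$ as subobjects of $B$. Once this is established, well-foundedness of $(A,\alpha)$ furnishes an ordinal $j$ with $A^*_j=A$ (Remark~\ref{R-least}), whence $B^*_j = C_j = A\cap B = B$, and $(B,\beta)$ is well-founded by Corollary~\ref{C-least}.

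For the base case $C_0=B^*_0=0$ follows since $0$ is simple and initial. For the successor step, the defining pullback $A^*_{i+1}=\alpha^{-1}(Ha^*_i)$ yields, upon intersecting with $B$, the pullback of the monomorphism $Ha^*_i:HA^*_i\to HA$ along the composite $\alpha\circ m = Hm\circ \beta$ (using that $m$ is a coalgebra homomorphism). Computing this in two stages, I first pull $HA^*_i$ back along $Hm$; since $H$ preserves finite intersections of monomorphisms, this intermediate pullback is $H(A^*_i\cap_A B) = HC_i$. Pulling back along $\beta$ then gives $\beta^{-1}(HC_i) = \widetilde{C_i}$ in $B$, so $C_{i+1} = \widetilde{C_i}$. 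The induction hypothesis $C_i = B^*_i$ then yields $C_{i+1} = \widetilde{B^*_i} = B^*_{i+1}$ by the very definition of the $\beta$-chain.

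At a limit ordinal $i$, we have $A^*_i = \bigcup_{j<i} A^*_j$, and I need this directed union of subobjects to distribute over intersection with the fixed subobject $m:B\hookrightarrow A$; granted that, $C_i = \bigcup_{j<i} C_j = B^*_i$ by the limit clause of Notation~\ref{N-starr}. The main obstacle is precisely this limit step: binary intersection (a finite limit) must commute with the transfinite filtered union defining $A^*_i$. This rests on the standard fact that filtered colimits commute with finite limits in any locally finitely presentable category, together with monomorphisms being closed under filtered colimits (Remark~\ref{R-fact}, point~3); the chain of monos $a^*_j$ for $j<i$ has as its filtered colimit the union $a^*_i$, and pulling back along $m$ commutes with this colimit, giving the required distributivity.
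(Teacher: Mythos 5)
Your proof is correct and follows essentially the same route as the paper: a transfinite induction showing that $B^*_i$ is the intersection $A^*_i\cap_A B$ (equivalently, that the comparison square between the two chains is a pullback), with preservation of finite intersections entering exactly as in the paper's two-stage pullback pasting at the successor step, and the conclusion drawn from invertibility of some $a^*_i$. The only difference is that you spell out the limit step via commutation of filtered colimits with finite limits in an LFP category, a point the paper's proof leaves tacit.
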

\begin{proof} 
  Given a subcoalgebra $f: (B, \beta) \to (A,\alpha)$ we prove that
  the natural transformation $\bar f_i: B_i^* \to A_i^*$
  of~\refeq{diag:nat} makes the squares in~\refeq{diag:sq} pullbacks
  for every ordinal number $i$. The base case $i = 0$ is clear. For
  the isolated step we use that $a_{i+1}^*: A_{i+1}^* \to A$ is the
  pullback of $Ha_i^*$ along $\alpha$. Thus, it suffices to show that
  $b_{i+1}^*\colon B_{i+1} \to B$ is a pullback of $Ha_i^*$ along $\alpha
  \tec f$. But, since $\alpha \tec f = Hf \tec \beta$ and since $H$
  preserves finite intersections (i.e., pullbacks of monos along
  monos), the latter pullback can be obtained by pasting two pullback
  squares as displayed below:
  \[
  \xymatrix{
    B_{i+1}^*
    \ar[d]_{b_{i+1}^*}
    \ar[r]^-{\beta[b_i^*]}
    \POS (130,-30) = "a"
    \POS (130,-100) = "b"
    \POS (60,-100)  = "c"
    \ar@{-} "a";"b"
    \ar@{-} "c";"b"
    &
    HB_i^*
    \ar[d]_{Hb_i^*}
    \ar[r]^-{H\bar f_i}
    \POS (620,-30) = "a1"
    \POS (620,-100) = "b1"
    \POS (550,-100)  = "c1"
    \ar@{-} "a1";"b1"
    \ar@{-} "c1";"b1"
    &
    HA_i^*
    \ar[d]^{Ha_i^*}
    \\
    B
    \ar[r]_-{\beta}
    &
    HB
    \ar[r]_-{Hf}
    &
    HA
    }
  \]

  Now assume that $(A, \alpha)$ is well-founded, i.\,e., some $a_i^*$
  is invertible. Then its pullback $b_i^*$ along $f$ is invertible,
  i.\,e., $(B, \beta)$ is well-founded.
\end{proof}

\begin{remark}
\label{R-Ta}
If $H$ is a set functor which also preserves inverse images, a much stronger result holds,
as proved in~\cite[Corollary 6.3.6]{Ta2}: every coalgebra from which a homomorphism into a
well-founded coalgebra exists is well-founded.
\end{remark}

\begin{example}
  Without the assumption that $H$ preserves finite intersections the
   lemma above can fail to be true. On the category $\Gra$ of graphs the
  functor $H$ of Example~\ref{E-graphh} has the well-founded coalgebra
  $1 = H1$ which has the subcoalgebra
  \[
  \xymatrix{
    *+[F-]{\bullet} 
    \ar@{x-y}[r]
    &
    *+[F-]{\bullet \quad \bullet t}
    }
  \]
  which is not well-founded: its subcoalgebra $\emptyset \to \{t\}$ is
  cartesian. 
\end{example}

\takeout{For more general results, see Theorem 16.8 and Corollary 13.20 of~\cite{AHS}.}
\takeout{%
\begin{example}
\label{E-init}
The initial coalgebra $0\to H0$ is well-founded.
\end{example}
}%

\takeout{
We present  a number of consequences of this, culminating in Lemma~\ref{L-colim} 
below.  This last fact will be used repeatedly in what follows.

 We again recall from~\cite{AR} that the assumption that $\A$ be LFP implies
 that all morphisms have (epi, strong mono)-factorizations.
}

\subsection{Recursive coalgebras}
\label{sec:recurs}\hfill

\par\smallskip\noindent
Here we recall the notion of recursive coalgebra in order to use it for our proof that initial algebras are the same as final well-founded coalgebras. ``Recursive" and ``well-founded" are closely related concepts. But whereas final recursive coalgebras are already known to be initial algebras, see \cite{CUV}, for well-founded coalgebras this is new (and a bit more involved).

\takeout{%
Osius~\cite{O} introduced the concept of a \emph{recursive coalgebra} in the special 
case where $H=\PP$.
Taylor~\cite{Ta2} studied the concept  in 
more generality, allowing  $H$ to be a functor preserving
inverse images (without using the name). 
Using different terminology, he  proved that under these
circumstances ``recursive'' is equivalent to ``well-founded''. The general 
concept of recursive coalgebra 
that we are using was first introduced by
Capretta et al.~\cite{CUV}.  }
\takeout{They hold
\lmnote{it's ok with me to drop this long quote}
 ``that recursive coalgebras
are a more useful tool in the study of structured recursion than initiality and that most results for
structured recursion and initial algebras can be recast in a clearer way in this more general framework.
In particular, the recursive coalgebras approach makes it plain that structured recursion is
not necessarily tied to initial algebras, essentially the same phenomenon can occur also in situations
when there is none around.''
}%

\begin{definition}
\label{D-rec}
A coalgebra $\alpha\colon A\to HA$ is  \textbf{recursive} if
for every algebra $\beta\colon HB\to B$ there exists a unique
coalgebra-to-algebra homomorphism
\[\bfig
\square/->`->`->`<-/<700,400>[A`HA`B`HB;\alpha`h`Hh`\beta]
\efig\]
\end{definition}

This concept was introduced by Taylor under the name ``coalgebra
obeying the recursion scheme'', the name recursive coalgebra stems
from~Capretta et al.~\cite{CUV}. 

\begin{examples}[see \cite{CUV}]
\label{E-rec}\hfill
\begin{enumerate}[(1)]
\item
  $0\to H0$ is a recursive coalgebra.

\item
  If $\alpha\colon A\to HA$ is recursive, then so is $H\alpha\colon
  HA\to HHA$.

\item
  A colimit of recursive coalgebras is recursive. Combining these
  results we see that in the initial chain (\ref{R-chain}) all
  the coalgebras
  \[w_{i,i+1}\colon H^i 0\to HH^i 0\]
  are recursive.
\end{enumerate}
\end{examples}

We are going to prove that for set functors, well-founded coalgebras
are recursive.  Before we do this, let us discuss the converse.  In
general, recursive coalgebras need not be well-founded, even for set
functors.  However for all set functors preserving inverse images
recursiveness is equivalent to well-foundedness, as shown by
Taylor~\cite{Ta1,Ta2}.

\begin{example}[see \cite{ALM}]
\label{E-Luecke}
A recursive coalgebra need not be well-founded. Let
$H\colon\Set\to\Set$ be defined on objects by
\[HX=(X\times X \setminus \Delta_X) + \{d\}\]
where  $\Delta_X$~denotes the diagonal of $X$.
  For
morphisms $f\colon X\to Y$ we take $Hf(d)=d$ and
\[
Hf(x_1,x_2)=\begin{cases}
d&\text{if $f(x_1)=f(x_2)$}\\
(fx_1,fx_2)&\text{else}
\end{cases}
\]
This functor $H$ preserves monomorphisms.
The coalgebra $A=\{0,1\}$ with the structure $\alpha$ constant
to $(0,1)$ is recursive: given an algebra $\beta\colon HB\to B$, the
unique coalgebra-to-algebra homomorphism $h\colon\{0,1\}\to B$ is
\[h(0)=h(1)=\beta(d).\]
But $A$~is not well-founded: $\emptyset$~is a cartesian subcoalgebra.
\takeout{%
\[\bfig
\square<800,400>[\emptyset`\{d\}`\{0,1\}`H\{0,1\};%
```\alpha]
\POS(100,300)
\ar@{-}+(-50,0)
\ar@{-}+(0,50)
\efig\]
}%
\end{example}

\begin{theorem}
\label{T-rec}
If $H$ preserves monomorphisms, then every well-founded coalgebra is recursive.
\end{theorem}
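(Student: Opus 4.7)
The plan is to construct, for any given algebra $\beta\colon HB\to B$, the unique coalgebra-to-algebra morphism $h\colon A\to B$ by transfinite recursion along the chain $a_i^*\colon A_i^*\to A$ of Notation~\ref{N-starr}. I define $h_0\colon 0\to B$ (unique), $h_{i+1} := \beta\cdot Hh_i\cdot \alpha[a_i^*]$ using the top edge of the pullback defining $A_{i+1}^*$, and at a limit ordinal $j$ I take $h_j$ to be the unique morphism out of the colimit $A_j^* = \colim_{i<j}A_i^*$ induced by the cocone $(h_i)_{i<j}$. Well-foundedness (in the form of Remark~\ref{R-least}) then supplies an ordinal $i_0$ with $a^* = a_{i_0}^*$ invertible, and I set $h := h_{i_0}\cdot (a_{i_0}^*)^{-1}$.

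The heart of the argument is a simultaneous transfinite induction on $j$ establishing the compatibility $h_j\cdot a_{ij}^* = h_i$ for all $i\leq j$, which legitimises the limit step and underpins every subsequent verification. The key ingredient is the identity
\[\alpha[a_j^*]\cdot a_{i+1,j+1}^* = Ha_{ij}^*\cdot \alpha[a_i^*]\]
from diagram~\refeq{nrdjt}, which combined with the inductive hypothesis yields $h_{j+1}\cdot a_{i+1,j+1}^* = h_{i+1}$ for every $i\leq j$. When $j$ is a successor ordinal $j = j''+1$, taking $i = j''$ supplies the crucial case $h_{j+1}\cdot a_{j,j+1}^* = h_j$, from which the full compatibility at level $j+1$ follows by decomposing $a_{i,j+1}^* = a_{j,j+1}^*\cdot a_{ij}^*$. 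When $j$ is a limit ordinal,~\refeq{nrdjt} has no predecessor to feed into, and I instead post-compose the target equation with the jointly epic family $(a_{k,j}^*)_{k<j}$, reducing to already-established compatibility instances together with the inductive $h_{k+1}\cdot a_{k,k+1}^* = h_k$.

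For the algebra homomorphism equation, the fixed-point property $\tilde{a^*} = a^*$ makes $a_{i_0+1}^*$ invertible as well (it is the pullback of an iso along $\alpha$). Combining the definition of $h_{i_0+1}$, the functoriality identity $Hh_{i_0} = Hh\cdot Ha_{i_0}^*$, and the pullback equation $Ha_{i_0}^*\cdot \alpha[a_{i_0}^*] = \alpha\cdot a_{i_0+1}^*$ then yields $h\cdot a_{i_0+1}^* = \beta\cdot Hh\cdot \alpha\cdot a_{i_0+1}^*$, and cancelling the iso gives $h = \beta\cdot Hh\cdot \alpha$. Uniqueness follows from a parallel transfinite induction: if $h'$ is a second coalgebra-to-algebra morphism, I prove $h\cdot a_i^* = h'\cdot a_i^*$ for all $i$, using the pullback at successors (both $h$ and $h'$ satisfy the homomorphism equation, reducing $h\cdot a_{i+1}^*$ to $\beta\cdot H(h\cdot a_i^*)\cdot \alpha[a_i^*]$) and the joint epi property at limits; the case $i = i_0$ forces $h = h'$.

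The main obstacle is the bookkeeping at the successor-compatibility step when $j$ is a limit ordinal: this is the only place where the plain pullback identity~\refeq{nrdjt} does not close the argument directly and a joint-epi detour is required. Every other step reduces to routine diagram chases anchored in the pullback property and the hypothesis that $H$ preserves monomorphisms (which ensures that $Ha_{i_0}^*$ is monic, hence that all cancellations above are legitimate).
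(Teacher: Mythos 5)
Your proof is correct, but it takes a different route from the paper's. The paper's existence argument factors through the initial chain: it constructs a natural transformation $f_i\colon A_i^*\to H^i0$ by transfinite recursion, invokes the known fact (Examples~\ref{E-rec}, due to Capretta et al.) that each $w_{i,i+1}\colon H^i0\to HH^i0$ is recursive to obtain coalgebra-to-algebra morphisms $r_i\colon H^i0\to X$, and takes $h=r_{i_0}f_{i_0}$. You instead build the composites directly, defining $h_{i+1}=\beta\cdot Hh_i\cdot\alpha[a_i^*]$ on the chain of Notation~\ref{N-starr} and gluing at limits. What the paper's detour buys is that the only new verification is the naturality square for the $f_i$; what your direct construction buys is self-containedness --- you never need the recursiveness of the $H^i0$ or indeed the initial chain at all --- at the price of establishing the cocone compatibility $h_j\cdot a_{ij}^*=h_i$ yourself. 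You handle the one delicate point correctly: when $j$ is a limit ordinal the identity from~\refeq{nrdjt} only yields $h_{j+1}\cdot a_{i+1,j+1}^*=h_{i+1}$ at successor indices, and the missing equation $h_{j+1}\cdot a_{j,j+1}^*=h_j$ must be tested against the jointly epic colimit injections $a_{kj}^*$ ($k<j$). The concluding verification of the homomorphism equation via the invertibility of $a_{i_0+1}^*$ (pullback of the iso $Ha_{i_0}^*$ along $\alpha$) and the uniqueness argument are both sound; the latter coincides with the paper's. One cosmetic remark: where you say you ``post-compose'' with the family $(a_{k,j}^*)_{k<j}$ you are in fact pre-composing, but the intended diagram chase is unambiguous.
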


For functors preserving inverse images this follows from~\cite[Theorem~6.3.13]{Ta1}.

\begin{proof}  
Let $\alpha\colon A\to HA$ be well-founded. For every algebra
$e\colon HX\to X$ we prove the existence and uniqueness of a
coalgebra-to-algebra homomorphism $A\to X$.  We use the 
initial chain $(H^i 0)$
of (\ref{R-chain})
and also the chain $(A_i^*)$ from Notation~\ref{N-starr}.

(1) Existence. We prove first that there is a unique natural
transformation
\[f_i\colon A_i^*\to H^i 0\qquad(i\in\Ord)\]
such that for all ordinals~$i$ we have
\begin{equation}
\label{nrdjd}
f_{i+1} = \left( A_{i+1}^*\nsi{\alpha[a_i^*]} HA_i^*\nsi{Hf_i}
H(H^i 0)=H^{i+1} 0\right).
\end{equation}
In fact, since both of the transfinite chains $(A_i^*)$ and~$(H^i 0)$
are defined by colimits on all limit ordinals~$i$, and $f_0$ must be $\id_{\emptyset}$, we only need to
check the commutativity of the square
\begin{equation}
\label{nrdjtc}
\bfig
\square<700,400>[A^*_i`H^i 0`A^*_{i+1}`H^{i+1} 0;%
f_i`a^*_{i,i+1}`w_{i,i+1}`f_{i+1}]
\efig
\end{equation}
for every successor ordinal~$i$.  For this, the diagram below commutes by the 
induction hypothesis~\eqref{nrdjtc} and by the
commutativity of the upper inner square of~\eqref{nrdjt} in Remark \ref{R-connect}:
\[
\xymatrix@C+2pc{
  A_{+i1}^*
  \ar[d]_{a_{i+1,i+2}^*}
  \ar[r]^-{\alpha[a_i^*]}
  &
  HA_i^*
  \ar[r]^-{Hf_i}
  \ar[d]^{Ha_{i,i+1}^*}
  &
  H(H^i0)
  \ar[d]^{Hw_{i,i+1}}
  \ar@{<-} `u[l] `[ll]_{f_{i+1}} [ll]
  \\
  A_{i+2}^*
  \ar[r]_{\alpha[a_{i+1}^*]}
  &
  HA_{i+1}^*
  \ar[r]_-{Hf_{i+1}}
  &
  H(H^{i+1} 0)
  \ar@{<-} `d[l] `[ll]^-{f_{i+2}} [ll]
}
\]
\takeout{
\[\bfig
\square<800,400>[A^*_{i+1}`HA^*_i`A^*_{i+2}`HA^*_{i+1};%
{\alpha[a^*_i]}`a^*_{i+1,i+2}`Ha^*_{i,i+1}`{\alpha[a^*_{i+1}]}]
\square(800,0)<800,400>[HA^*_i`H(H^i 0)`HA^*_{i+1}`H(H^{i+1} 0);%
Hf_i``Hw_{i,i+1}`Hf_{i+1}]
\efig\]
}

Next, since the $H^i 0$~are recursive coalgebras (see
Example~\ref{E-rec}) we have unique coalge\-bra-to-algebra homomorphisms
into $X$.  These form a natural transformation into the constant
functor with value~$X$:
\[r_i\colon H^i 0\to X\qquad(i\in\Ord).\]
Consequently, we obtain a natural transformation $r_if_i\colon
A_i^*\to X$ which, for~$i$ such that $A^{*}_i=A^*_{i+1}$ (thus,
$A=A_i^*$), yields
\[h=r_if_i\colon A\to X\enspace.\]
Now consider the diagram below:
\[\bfig
\square(0,0)|arlb|/->`->`->`<-/<900,350>[H^i 0`H(H^i 0)`X`HX;w_{i,i+1}`r_i`Hr_i`e]
\square(0,350)|arlb|<900,350>[A^*_i=A^*_{i+1}`HA^*_i`H^i 0`H(H^i 0);%
{\alpha[a^*_i]}`f_i`Hf_i`]
\morphism(0,700)<900,-350>[A^*_i=A^*_{i+1}`H(H^i 0);f_{i+1}]
\morphism(0,700)|l|/{@{>}@/_1.5em/}/<0,-700>[A^*_i=A^*_{i+1}`X;h]
\morphism(900,700)|r|/{@{>}@/^2em/}/<0,-700>[HA^*_i`HX;Hh]
\efig\]
  The morphism at the top  is $\alpha^{*}$, by  (\ref{starsfso2}).
The sides are the definition of $h$,  the bottom square is the definition of $r_i$,
and the upper right-hand triangle is the definition of $f_{i+1}$.
The upper left-hand triangle is (\ref{nrdjtc}) since $a^*_{i,i+1} = \id$.
The overall outside of the figure shows that $h$ is a
coalgebra-to-algebra homomorphism as desired.

(2) Uniqueness. If $h_1,h_2\colon A\to X$ are  coalgebra-to-algebra
  homomorphisms, then we
prove $h_1=h_2$ by showing that
\[h_1\tec a_i^*=h_2\tec a_i^*\qquad\text{for all $i\in\Ord$.}\]
The case $i=0$ is clear, in the isolated step use the commutative
diagrams (with $t=1,2$):\vspace{1mm}
\[\bfig
\square/->`->`->`<-/<800,400>[A`HA`X`HX;\alpha`h_t`Hh_t`e]
\square(0,400)<800,400>[A^*_{i+1}`HA^*_i`A`HA;
{\alpha[a^*_i]}`a^*_{i+1}`Ha^*_i`]
\efig\]
and the limit steps follow from $A_j^*=\colim_{i<j}A_i^*$ for limit
ordinals~$j$.
\end{proof}

\begin{example}\label{detect}  
There is a $\pow$-algebra   $(B,\beta)$
such that for all $\pow$-coalgebras $(A,\alpha)$,  if
$(A,\alpha)$ is not well-founded, then there are at least two
coalgebra-to-algebra homomorphisms $h: A\to B$.

We take $B = \set{0,1,2}$, with 
$\beta:\pow B \to B$ defined as follows:
$$\beta(x) =  \left\{ \begin{array}{l@{\quad}l}
0 & \mbox{if $x = \emptyset$ or $x = \set{0}$}  \\
1 & \mbox{else if $1\in x$} \\
2 & \mbox{if $2\in x$ and $1\notin x$}
\end{array}
\right.
$$
If $(A,\alpha)$ is any coalgebra which is not well-founded, we show that
there are at least two coalgebra-to-algebra homomorphisms $h: A\to B$.
We can take
$$h_1(x) = \left\{
\begin{array}{l@{\quad}l}
0  & \mbox{if there are no infinite sequences
$x = x_0 \rightarrow x_1 \rightarrow x_2 \cdots$} \\
1  & \mbox{if there is an infinite sequence
$x = x_0 \rightarrow x_1 \rightarrow x_2 \cdots$} \\
\end{array}
\right.
$$
and also $h_2$ defined the same way, but using $2$ as a value instead of $1$.
The verification that $h_1$ and $h_2$ are coalgebra-to-algebra homomorphisms
hinges on two facts: first,
$h(x) = 0$ iff
 there is no infinite sequence starting from $x$;
and second,   if $h_i(x) \neq 0$, then there is some $y\in \alpha(x)$ such that
$h_i(y) \neq 0$ as well.
\end{example}

For endofunctors preserving inverse images the following theorem is
Corollary 9.9 of~\cite{Ta1}. As we mentioned in the introduction, it
is non-trivial to relax the assumption on the endofunctor, and so our
proof is different from Taylor's. As a result we obtain in
Theorem~\ref{T-main-set} below that for a set endofunctor no assumptions are needed.

\begin{remark}
\label{R-CUV}
The concepts ``initial algebra'' and ``final recursive coalgebra''
coincide for all endofunctors, as proved by Capretta et
al.~\cite{CUV}. This is not true in general for well-foundedness in
lieu of recursiveness, see Example \ref{E-gr4} below. But it is true if $H$ preserves finite intersections:
\end{remark}

\begin{theorem}
\label{T-main}
If $H$~preserves finite intersections, then
\[\text{initial algebra}=\text{final well-founded coalgebra}\]
That is, an algebra $\varphi\colon HI\to I$ is initial iff
$\varphi^{-1}\colon I\to HI$ is the final well-founded coalgebra.
\end{theorem}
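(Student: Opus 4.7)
The plan is to derive both directions of the equivalence from results already established, rather than repeating any of the transfinite arguments. Observe first that preservation of finite intersections implies preservation of monomorphisms: a morphism $m$ is monic precisely when the self-intersection $m \cap m$ is represented by $m$, and $H$ preserves this intersection. Hence both Theorem~\ref{T-rec} and Proposition~\ref{only} are at our disposal.

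For the forward direction, start from an initial algebra $\varphi\colon HI \to I$. By Lambek's Lemma $\varphi$ is invertible, so $\varphi^{-1}\colon I \to HI$ is a legitimate coalgebra, and Proposition~\ref{P-init} shows it is well-founded. To establish finality in the category of well-founded coalgebras, pick any well-founded $(A,\alpha)$. By Theorem~\ref{T-rec} this coalgebra is recursive, so there is a unique coalgebra-to-algebra homomorphism $h\colon A \to I$ with respect to the algebra $\varphi$. The defining equation $h = \varphi \cdot Hh \cdot \alpha$ is equivalent to $\varphi^{-1}\cdot h = Hh \cdot \alpha$, so $h$ is precisely a coalgebra homomorphism $(A,\alpha) \to (I,\varphi^{-1})$, and its uniqueness is inherited from the uniqueness clause of recursiveness.

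For the reverse direction, suppose $(I,\varphi^{-1})$ is the final well-founded coalgebra. Then $\varphi$ is already invertible by hypothesis, so $\varphi^{-1}$ is a well-founded fixed point of $H$. Proposition~\ref{only} then identifies $(I,\varphi)$ as the initial algebra.

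The only delicate point, and where I would spend the most care in writing this up, is the dictionary between coalgebra-to-algebra homomorphisms into $(I,\varphi)$ and coalgebra homomorphisms into $(I,\varphi^{-1})$; once that identification is made explicit, Theorem~\ref{T-rec} delivers exactly what finality among well-founded coalgebras requires. I do not expect to need any new induction on either the initial chain $(H^i 0)$ or the chain $(A^*_i)$ of Notation~\ref{N-starr}: that work has already been absorbed into Theorem~\ref{T-rec} and Proposition~\ref{only}, and the present theorem is essentially their combination, together with Proposition~\ref{P-init} and Lambek's Lemma.
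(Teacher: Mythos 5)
Your forward direction is correct and is essentially the paper's own argument: the paper cites Remark~\ref{R-CUV} (initial algebra $=$ final recursive coalgebra) together with Proposition~\ref{P-init} and Theorem~\ref{T-rec}, and your unwinding of recursiveness with respect to the algebra $(I,\varphi)$, plus the observation that coalgebra-to-algebra homomorphisms into $(I,\varphi)$ are exactly coalgebra homomorphisms into $(I,\varphi^{-1})$, is the same argument made explicit. Your opening remark that preservation of finite intersections gives preservation of monomorphisms is also exactly how the paper's proof begins.

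The reverse direction, however, has a genuine gap. You read the hypothesis as ``$\varphi$ is invertible and $\varphi^{-1}$ is the final well-founded coalgebra,'' and then invertibility does all the work via Proposition~\ref{only}. But the substantive content of this direction --- and the form in which the theorem is used later, e.g.\ in the proofs of Theorem~\ref{T-main-set} and Theorem~\ref{T-init}, where one concludes from ``$\psi\colon I\to HI$ is a final well-founded coalgebra'' that $\psi$ \emph{is} invertible and $\psi^{-1}$ is the initial algebra --- is precisely to prove that the structure map of a final well-founded coalgebra is automatically an isomorphism and that an initial algebra exists at all. Your proposal never establishes this; Proposition~\ref{only} cannot be invoked until a fixed point is in hand. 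The paper's proof supplies exactly the missing steps: (i) $\psi$ is monic, by factorizing $\psi=m\cdot e$, noting the quotient is well-founded (Corollary~\ref{L-colim}) and using finality to split $e$; (ii) $(HI,H\psi)$ is well-founded, which is where preservation of \emph{finite intersections} (not merely monomorphisms) is genuinely used, by intersecting a cartesian subcoalgebra of $HI$ with $\psi$; (iii) finality then yields $h\colon HI\to I$ with $h\cdot\psi=\id$ and $\psi\cdot h=H(h\cdot\psi)=\id$, so $I\cong HI$; (iv) Proposition~\ref{P-chain} (or, at this point, your Proposition~\ref{only}) converts the well-founded fixed point into the initial algebra. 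The fact that your argument uses finite intersections only to obtain mono preservation is a symptom of this: with mono preservation alone one cannot carry out step (ii), and your version of the theorem is strictly weaker than the one the paper proves and subsequently relies on.
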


\begin{proof}
Recall that since $H$ preserves finite intersections, i.e., pullbacks
of monomorphisms, it preserves monomorphisms (since $m$ is monic iff
the pullback of $m$ along itself is formed by identity morphisms).

(a)
Let $I$~be an initial algebra. 
By  Remark~\ref{R-CUV}, $I$ is
 a final
recursive coalgebra.   Also, $I$~is well-founded by
Proposition~\ref{P-init}.
Thus by Theorem~\ref{T-rec},  it is a final well-founded coalgebra.

(b)
Let $\psi:I\to HI$ be a final well-founded coalgebra.

(b1)
Factorize $\psi=m\tec e$ where $e$~is  a strong epimorphism and $m$ a 
monomorphism (Remark~\ref{R-fact}). By diagonal fill-in
\[\bfig
\square/-->`->`->`->/<800,400>[I'`HI'`HI`HHI;\psi'`m`Hm`H\psi]
\square(0,400)/->`->`->`-->/<800,400>[I`HI`I'`HI';\psi`e`He`]
\efig\]
we obtain a quotient~$(I',\psi')$ which, by Corollary~\ref{L-colim},
is well-founded. Consequently, a coalgebra
homomorphism $f\colon(I',\psi')\to(I,\psi)$ exists. Then $fe$~is
an endomorphism of the final well-founded coalgebra, hence,
$fe=\id_I$. This proves that $e$~is an isomorphism,  in other words
\[\text{$\psi$ is a monomorphism.}\]

(b2)
The coalgebra~$(HI,H\psi)$ is well-founded. Indeed, consider a
cartesian subcoalgebra~$(A',a')$
\[\bfig
\square/-->`-->`->`-->/<800,400>[J`A'`I`HI;\psi'`m'``\psi]
\square(800,0)<800,400>[A'`HA'`HI`HHI;a'`m`Hm`H\psi]
\POS(100,300)
\ar@{-}+(-50,0)
\ar@{-}+(0,50)
\POS(900,300)
\ar@{-}+(-50,0)
\ar@{-}+(0,50)
\efig\]
Form the intersection~$J$ of $m$ and~$\psi$. Since $H$~preserves
this intersection, it follows that $m$ and~$Hm'$ represent the same
subobject of~$HI$, thus, we have
\[\quad u\colon A'\to HJ, \quad\text{ with}\quad m=Hm'\tec
u.\]
This yields a cartesian subcoalgebra
\[\bfig
\morphism|b|<1400,0>[I`HI;\psi]
\morphism(0,400)|l|<0,-400>[J`I;m']
\morphism(0,400)<700,0>[J`A';\psi']
\morphism(700,400)<700,0>[A'`HJ;u]
\morphism(1400,400)|r|<0,-400>[HJ`HI;Hm']
\POS(100,300)
\ar@{-}+(-50,0)
\ar@{-}+(0,50)
\efig\]
and since $(I,\psi)$~is well-founded, we conclude that $m'$~is
invertible. Consequently, $m=Hm'\tec u$ is invertible.

(b3)
$\psi$~is invertible. Indeed, we have, by~(b2), a homomorphism
$h\colon(HI,H\psi)\to(I,\psi)$:
\[\bfig
\square<800,400>[HI`HHI`I`HI;H\psi`h`Hh`\psi]
\square(0,400)<800,400>[I`HI`HI`HHI;\psi`\psi`H\psi`]
\efig\]
Then $h\tec\psi$~is an endomorphism of~$(I,\psi)$, thus,
$h\tec\psi=\id$. And the lower square yields $\psi\tec h=
H(h\tec\psi)=\id$, whence $I \cong HI$, 

(b4)~By
Proposition~\ref{P-chain},
 the initial chain converges, and
$w^{-1}_{i,i+1}\colon HH^i 0\to H^i 0$ is an initial algebra for some
ordinal $i$. Moreover, $w_{i,i+1}\colon H^i 0\to HH^i 0$ is by~(a) a
final well-founded coalgebra, thus, isomorphic to $\psi\colon I\to
HI$.  Therefore $(I,\psi^{-1})$~is isomorphic to the initial algebra
above.
\end{proof}

\begin{example}
\label{E-gr4}\hfill
\begin{enumerate}[(a)]
\item For the identity functor on the category of rings
  the initial algebra is $\mathbb{Z}$ and the terminal well-founded coalgebra is $1$.
  This shows the importance of our assumption that the base category have a simple initial object.
        
\item Also the assumption that $H$ preserves finite intersections is
  important: The endofunctor $H$ of~$\Gra$ in  Example~\ref{E-graphh} has $1$ as
  its final well-founded coalgebra, and its initial algebra is
  infinite.
\end{enumerate}
\end{example}


\begin{remark}
\label{R-classM}

Although we have previously worked  with monomorphisms only, the whole theory can be developed for a general class 
$\M$ of monomorphisms in the base category $\A$.
 We need to assume that
 \begin{enumerate}[(a)]
 \item $\A$ is $\M$-wellpowered,
 \item $\M$ is closed under inverse images, and
 \item $\M$ is constructive in the sense of~\cite{TAKR}.  
 \end{enumerate}
The last point means that
 $\M$ is closed under composition, and for every
chain of monomorphisms in $\M$, (i) a colimit exists and is formed by monomorphisms in $\M$,
and (ii) the factorization morphism of every cocone of monomorphisms in $\M$ is again
a monomorphism in $\M$. This in particular states  that the
initial object has the property that all morphisms $0\rightarrow X$ lie in $\M$.

We then can define \emph{$\M$-well-founded coalgebra}
as one that has no proper cartesian subcoalgebra carried by an $\M$-monomorphism.

All results above hold in this generality.      In Theorem~\ref{T-rec}
 we must assume that $H$ preserves $\M$, that is, if $m$ lies
in $\M$ then so does $Hm$.       In Theorem~\ref{T-main} we need to assume that
$H$ preserves $\M$ and finite intersections of $\M$-monomorphisms.
\end{remark}

\begin{example} All LFP categories with simple initial object satisfy all the assumptions of \ref{ASS} for
$$\M=\mbox{strong monomorphisms},$$
see \cite[Propositions 1.61 and 1.62]{AR}. 
\end{example}

\begin{example}\label{graphh2}
Here we compare well-foundedness w.r.t to monomorpisms to that w.r.t. strong monomorphisms. 
Take again the category
  $\Gra$ of graphs  and graph morphisms and $H$ be the following
  endofunctor: The nodes of $HA$ are all finite independent sets
    $a\subseteq A$ (i.e., no edge lies in $a$) plus a new node
    $t$. The  coalgebra structure is the constant map to $\{t\}$, i.e., the only edges of $HA$ connect every node to $t$ ($t$ is a loop).
For a graph morphism $f: A\to B$, we take $Hf: HA \to HB$ to be 
$$
Hf(a)  = \begin{cases}
f[a] & \mbox{if $f[a]$ is independent in $B$}  \\
t & \mbox{otherwise} \\
\end{cases}
$$
This functor clearly preserves strong monomorphisms (but not monomorphisms).

By Theorem~\ref{T-main}, the initial algebra for $H$ is the same as
its final $\M$-well-founded coalgebra.
This is 
$$ I = I_0\cup\set{t}$$ where $ I_0 = \pow_f  I_0$ is 
the initial algebra of the finite power set functor on $\Set$,
taken as a discrete graph, 
and the coalgebra structure is the constant to $\{t\}$.

In contrast, $I$ is not well-founded (w.~r.~t.~\emph{all} monomorphisms).
Here is the reason.   Let $J$ be the same as $I$, except that we drop
all edges between $t$ and the elements of $I_0$.  (We keep just the loop
at $t$.)    Then $HJ = HI = I$.     The inclusion $i: J \to I$ is of course monic,
and $Hi = \id_{HI}$.   It is easy to check that this inclusion is a 
coalgebra morphism, and indeed this subcoalgebra is clearly cartesian.   This verifies that
$I$ is not well-founded.%
\end{example}

\subsection{Initial algebras of set functors}\label{Ini-alg}
\hfill

\smallskip\noindent
The main result of this section is that for all endofunctors 
$H$ of~\Set\ the
equality
\begin{equation}
\text{initial algebra}=\text{final well-founded coalgebra}
\label{01}
\end{equation}   
holds, i.\,e., for the particular case of our given LFP category being
$\A = \Set$ one can lift the assumption that $H$ preserves finite intersections in Theorem~\ref{T-main}. 
   
 \takeout{ A set functor $H$ need not preserve monomorphisms.
    Although the definitions of ``initial algebra'' and 	``well-founded coalgebra''
    make sense for $H$, essentially none of the results 
    which we have seen up to this point apply to $H$.
 However, we can apply the following
result of Trnkov\'a in order to obtain our desired result (\ref{01}):
}

\begin{proposition}[Trnkov\'a~\cite{Tr}]
\label{R-Tr}
For every endofunctor~$H$ of~\Set\ there exists an
endofunctor~$\bar{H}$ preserving finite
intersections and identical with~$H$ on all nonempty sets (and
nonempty functions).
\end{proposition}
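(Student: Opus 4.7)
The plan is to leave $H$ unchanged on nonempty sets and nonempty functions, and to modify it only at $\emptyset$ and at the empty maps $!_X\colon\emptyset\to X$ for nonempty $X$. So I set $\bar H X = HX$ and $\bar H f = Hf$ whenever $X$ is nonempty and $f$ has nonempty domain, and the task becomes choosing a single set $\bar H\emptyset$ together with maps $\bar H(!_X)\colon \bar H\emptyset\to HX$ (one for each nonempty $X$) so that (i) functoriality holds on the enlarged data, (ii) the restriction of $\bar H$ to the subcategory of nonempty sets is literally $H$, and (iii) finite intersections are preserved.

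For the construction of $\bar H\emptyset$, my proposal is to take it as the limit of the restriction of $H$ to the full subcategory $\Set_{\neq\emptyset}$ of nonempty sets. Explicitly, an element is a compatible family $(x_X)_{X\neq\emptyset}$ with $x_X\in HX$ and $Hf(x_X)=x_Y$ for every $f\colon X\to Y$ between nonempty sets; the map $\bar H(!_X)\colon\bar H\emptyset\to HX$ is then projection to the $X$-th coordinate. With this definition, functoriality across the added empty maps is automatic: for a nonempty $g\colon X\to Y$ one has $\bar Hg\cdot\bar H(!_X)=Hg(x_X)=x_Y=\bar H(!_Y)$ by the naturality of the family, and $\bar H(\id_\emptyset)=\id$ is forced.

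The substantive step is preservation of binary intersections (together with preservation of the terminal object, which is a trivial check). Let $A,B\hookrightarrow X$ be monos with intersection $P$. If $P$, $A$, $B$ are all nonempty, then every inclusion between these has a retraction, so $HA,HB\hookrightarrow HX$ are injective and one recovers $HA\cap HB=HP$ inside $HX$ from the fact that $H$ preserves the corresponding retractions. The case $A=\emptyset$ or $B=\emptyset$ is trivial by construction. The delicate case is $P=\emptyset$ with $A,B$ nonempty: here I must prove $HA\cap HB$ (taken in $HX$) equals the image of $\bar H(!_X)$. The inclusion $\supseteq$ is immediate from the definition, since an element of $\bar H\emptyset$ projects through both $A$ and $B$ consistently. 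For $\subseteq$, given $y\in HA\cap HB$, I propose to define a family $(x_Z)_{Z\neq\emptyset}$ by choosing, for each nonempty $Z$, any map $h\colon X\to Z$ and setting $x_Z:=Hh(y)$.

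The hard part is verifying that this recipe is well-defined and natural in $Z$. Independence of the choice of $h$ is the crux: given $h,h'\colon X\to Z$, I need $Hh(y)=Hh'(y)$, and to get this I intend to exploit that $y$ factors simultaneously through $HA$ and through $HB$ where $A,B$ are disjoint, so any two maps $h,h'$ can be modified on $A$ and on $B$ separately without changing their action on $y$; choosing $h'$ that agrees with $h$ on $A\cup B$ (possible because $Z$ is nonempty) reduces the claim to the naturality of $H$ on nonempty maps. Once well-definedness and naturality are settled, the resulting family lies in $\bar H\emptyset$ and its $X$-component is $y$, giving the remaining inclusion. The minor remaining verification is that the assignment $H\mapsto\bar H$ is indeed a set functor, which follows because the limit defining $\bar H\emptyset$ is a legitimate set thanks to Trnkov\'a's classical bound that every set functor is determined, up to its behaviour at $\emptyset$, by its values on finite sets.
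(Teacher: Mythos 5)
Your construction is essentially Trnkov\'a's, i.e.\ the one the paper recalls in Remark~\ref{R-hull}: a compatible family $(x_X)_{X\neq\emptyset}$ is exactly a natural transformation $C_{01}\to H$, so your limit description of $\bar H\emptyset$ and your projections $\bar H(!_X)$ coincide with the paper's $\bar H\emptyset=\{\tau\colon C_{01}\to H\}$ and $\bar Hq_X(\tau)=\tau_X$; and your treatment of the crucial case ($A,B$ nonempty and disjoint, $y=Hm_A(a)=Hm_B(b)$) --- interpolating a map $g$ agreeing with $h$ on $A$ and with $h'$ on $B$, so that $Hh(y)=Hg(y)=Hh'(y)$ because $y$ factors through both $HA$ and $HB$ --- is precisely the standard verification, which the paper itself delegates to the citation of Trnkov\'a. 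Two points need repair, though neither is fatal. First, your justification that $\bar H\emptyset$ is a set is wrong as stated: set functors are \emph{not} determined by their values on finite sets ($\PP$ and $\PP_\omega$ agree on all finite sets and finite maps); the correct one-line argument, and the one the paper gives, is that a compatible family is determined by its component at a singleton (naturality at any $k\colon 1\to X$ forces $x_X=Hk(x_1)$), so $\bar H\emptyset$ embeds into $H1$. Second, in the case $P=A\cap B\neq\emptyset$ the appeal to ``$H$ preserves the corresponding retractions'' is not by itself sufficient: you must choose the retraction $r\colon X\to A$ so that it sends $X\setminus A$ to a fixed point of $P$; then $m_Ar$ fixes $y$ while its restriction to $B$ factors through $P$, which is the same interpolation trick as in your disjoint case. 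Finally, the aside about preserving the terminal object is unnecessary (and false in general, e.g.\ for constant functors); preservation of finite intersections concerns only pullbacks of pairs of monomorphisms, with the nullary case being trivial.
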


\begin{remark}\label{R-hull}  The functor  $\bar{H}$~is unique up to natural isomorphism. We call it  \textit{the Trnkov\'{a} closure of $H$}. Let us recall how Trnkov\'a
defined~$\bar{H}$:

Denote by~$C_{01}$ the set functor $\emptyset\to/|->/\emptyset$ and
$X\to/|->/1$ for all ${X\neq\emptyset}$. Define~$\bar{H}$
as~$H$ on all
nonempty sets, and put
\[\bar{H}\emptyset=\{\tau;\tau\colon C_{01}\to H\text{ a natural
transformation}\}.\]
(To check that we have a set here and not a proper class,
note that each $\tau : C_{0,1}\to H$ is determined by 
  $\tau_1: 1\to H1$.   For a nonempty set $A$, if $k: 1\to A$ is arbitrary,
  $\tau_A = Hk\circ \tau_1$.)
\takeout{$\tau_1: 1 \to H1$, since 
for non-empty $X$, $\tau_X: 1\to HX$ takes the value $Hk\circ \tau_1$, where
$k: 1\to X$ is arbitrary.)}
Given a nonempty set~$X$, $\bar{H}$~assigns to
the empty map $q_X\colon\emptyset\to X$
the map
\[\bar{H}q_X\colon\tau\to/|->/\tau_X\qquad\text{for every
\quad$\tau\colon C_{01}\to H$,}\]
where $\tau_X\colon 1\to HX$ is simply an element of~$HX$.

Observe that there exists a map
$u\colon H\emptyset\to\bar{H}\emptyset$ such that for every set
$A\neq\emptyset$ the triangle
\begin{equation}
\label{rdo}
\bfig
\Vtriangle<500,400>[H\emptyset`\bar{H}\emptyset`{HA=\bar{H}A};u`Hq_A`\bar{H}q_A]
\efig
\end{equation}
commutes. 
Indeed, for each element $x\in H\emptyset$,   let the
natural
transformation $u(x)\colon C_{01}\to H$  have components
$u(x)_A=Hq_A(x)$ for all $A\neq\emptyset$. Then
$$\bar{H}q_A(u(x))=  (u(x))_A  =     Hq_A(x).$$

\end{remark}

\begin{lemma}
\label{810}
Let $(A,a)$ be a well-founded $H$-coalgebra with $A\neq \emptyset$,
so that $(A,a)$ is also an $\bar{H}$-coalgebra.
Then $\emptyset$ is not the carrier of any cartesian $\bar{H}$-subcoalgebra
of $(A,a)$.
\end{lemma}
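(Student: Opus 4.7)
The plan is to leverage the $H$-well-foundedness of $(A,a)$ directly: the trivial subcoalgebra $q_A\colon\emptyset\hookrightarrow A$ is a subcoalgebra of $(A,a)$ for both $H$ and $\bar{H}$ (nothing needs to be checked at $\emptyset$), and since $A\neq\emptyset$ this subcoalgebra is proper. By well-foundedness of $(A,a)$ as an $H$-coalgebra, it cannot be cartesian with respect to $H$. Spelled out, this says that the pullback
\[
\bfig
\square<700,400>[P`H\emptyset`A`HA;`\pi`Hq_A`a]
\efig
\]
cannot be the empty set, so there exist $x\in A$ and $y\in H\emptyset$ with $a(x)=Hq_A(y)$.

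Now I would transfer this witness from $H$ to $\bar{H}$ using the map $u\colon H\emptyset\to\bar{H}\emptyset$ of Remark~\ref{R-hull}. The commutative triangle~\eqref{rdo} gives $\bar{H}q_A\circ u = Hq_A$, so setting $\tau = u(y)\in\bar{H}\emptyset$ we obtain
\[
\bar{H}q_A(\tau) \;=\; \bar{H}q_A(u(y)) \;=\; Hq_A(y) \;=\; a(x).
\]
Thus $(x,\tau)$ is an element of the pullback of $a$ along $\bar{H}q_A\colon\bar{H}\emptyset\to\bar{H}A=HA$ (recall $\bar{H}$ agrees with $H$ on the nonempty set $A$). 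In particular this pullback is nonempty, which means the subcoalgebra $\emptyset\hookrightarrow A$ fails to be cartesian as an $\bar{H}$-subcoalgebra. Since any $\bar{H}$-subcoalgebra with carrier $\emptyset$ is necessarily this one (up to isomorphism), the result follows.

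The argument is essentially a one-line translation once one notes the two facts above, so there is no real obstacle. The only thing worth being careful about is the distinction between $H$- and $\bar{H}$-subcoalgebras: on nonempty carriers the two notions coincide because $\bar{H}=H$ on nonempty sets and functions, while at the empty carrier the square changes shape (replacing $H\emptyset$ with $\bar{H}\emptyset$), and it is precisely the universal natural transformation $u\colon H\emptyset\to\bar{H}\emptyset$ built into the Trnková closure that makes the witness transport across this change.
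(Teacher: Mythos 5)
Your argument is correct and is essentially the paper's proof read in the contrapositive direction: the paper assumes $\emptyset$ were a cartesian $\bar{H}$-subcoalgebra and shows the $H$-square \eqref{811} would then be a pullback (any witness $a(y)=Hq_A(x)$ would transfer via $u$ and \eqref{rdo} to an $\bar{H}$-witness), contradicting well-foundedness, while you extract the $H$-witness from well-foundedness first and transfer it by the same triangle $\bar{H}q_A\cdot u=Hq_A$. The ingredients and the key step are identical, so there is nothing to add.
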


\begin{proof}
Assume towards a contradiction that $q_{\bar{H}\emptyset}: \emptyset
\to \bar H \emptyset$ were 
a cartesian subcoalgebra of $(A, a)$.
We claim that the square below is a pullback:
\begin{equation}\label{811}
\bfig
\square<700,400>[\emptyset`H\emptyset`A`HA;%
q_{H\emptyset}`q_A`Hq_{A}`a]
\POS(100,300)
\ar@{-}+(-50,0)
\ar@{-}+(0,50)
\efig
\end{equation}
We show that there are no
$y\in A$ and $x\in H\emptyset$ such that
that $a(y)  = Hq_A(x)$.  
For assume that $y$ and $x$ exist with these properties.
 Then by (\ref{rdo}), $\bar{H}q_A (u(x)) = a(y)$.
This contradicts our assumption that 
$(\emptyset, q_{\bar{H}\emptyset})$ is 
a cartesian subcoalgebra of $(A, a)$. Thus, $y$ and $x$ do not exist
as assumed, and hence, the square in (\ref{811}) is indeed a
pullback. Therefore $q_A$ is an isomorphism. But $A\neq\emptyset$, and this is a contradiction.
\end{proof}

\begin{theorem}
\label{T-main-set}
For every endofunctor on~\Set\ we have:
\[\text{initial algebra}=\text{final well-founded coalgebra.}\]
\end{theorem}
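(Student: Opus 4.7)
The plan is to split on whether $H\emptyset = \emptyset$ and, in each case, relate $H$ to its Trnkov\'a closure $\bar H$ from Proposition \ref{R-Tr}.

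When $H\emptyset = \emptyset$, I would argue directly, without invoking $\bar H$, that both the initial $H$-algebra and the final well-founded $H$-coalgebra are $\emptyset$. The empty set carries the unique algebra structure $H\emptyset = \emptyset \to \emptyset$ and maps uniquely into any $H$-algebra, so it is initial. On the coalgebra side, for any $H$-coalgebra $(A,a)$ with $A \neq \emptyset$ the subcoalgebra $\emptyset \hookrightarrow A$ is cartesian, because the pullback of $Hq_A\colon H\emptyset = \emptyset \to HA$ along $a$ is visibly empty. Hence no nonempty $H$-coalgebra is well-founded, and $(\emptyset,\id_\emptyset)$ is trivially final among well-founded coalgebras.

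When $H\emptyset \neq \emptyset$, I would apply Theorem \ref{T-main} to $\bar H$ to obtain $\varphi\colon \bar H I \to I$ that is simultaneously the initial $\bar H$-algebra and the final well-founded $\bar H$-coalgebra. The map $u\colon H\emptyset \to \bar H\emptyset$ of Remark \ref{R-hull} forces $\bar H\emptyset \neq \emptyset$, so stepping once along the initial $\bar H$-chain yields $I \neq \emptyset$; hence $\bar H I = HI$ and $\varphi\colon HI \to I$ is an $H$-algebra. To show its initiality for $H$, observe that $H\emptyset \neq \emptyset$ rules out any $H$-algebra with empty carrier, so every $H$-algebra $\beta\colon HB \to B$ has $B \neq \emptyset$ and is equally a $\bar H$-algebra (since $H$ and $\bar H$ agree on nonempty sets and maps); the unique $\bar H$-algebra homomorphism from $(I,\varphi)$ is then an $H$-algebra homomorphism. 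That $(I,\varphi^{-1})$ is well-founded as an $H$-coalgebra then follows at once from Proposition \ref{P-init}.

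For finality of $(I,\varphi^{-1})$ among well-founded $H$-coalgebras, given $(A,a)$ well-founded I would dispatch $A = \emptyset$ trivially, and for $A \neq \emptyset$ show that $(A,a)$, viewed now as a $\bar H$-coalgebra, is well-founded in $\Coalg \bar H$. The key observation is that proper cartesian $\bar H$-subcoalgebras of $(A,a)$ with nonempty carrier coincide with proper cartesian $H$-subcoalgebras by the agreement of $H$ and $\bar H$ on nonempty sets and maps, so these are excluded by the $H$-well-foundedness of $(A,a)$; the empty subcoalgebra is excluded by Lemma \ref{810}. The unique $\bar H$-coalgebra homomorphism into $(I,\varphi^{-1})$ then transfers back to a unique $H$-coalgebra homomorphism. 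The main obstacle throughout is precisely this asymmetry between $H$ and $\bar H$ on empty data, which Lemma \ref{810} was engineered to control; once the case split $H\emptyset = \emptyset$ versus $H\emptyset \neq \emptyset$ is in place, the rest is bookkeeping about what changes when a set or map becomes empty.
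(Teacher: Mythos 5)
Your case $H\emptyset=\emptyset$ is fine, and in the case $H\emptyset\neq\emptyset$ the half you actually prove is correct and even a bit slicker than the paper's: you get well-foundedness of $(I,\varphi^{-1})$ from Proposition~\ref{P-init} instead of a direct cartesian-subcoalgebra check, and your transfer of finality uses exactly the paper's Lemma~\ref{810} (a nonempty well-founded $H$-coalgebra is $\bar H$-well-founded, so it maps uniquely into the final well-founded $\bar H$-coalgebra, and the map is an $H$-homomorphism because everything in sight is nonempty). But note that Theorem~\ref{T-main} is an ``iff'' statement, not an existence statement: to ``obtain $\varphi$'' you must already know that the initial $\bar H$-algebra (equivalently, the final well-founded $\bar H$-coalgebra) exists. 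That hypothesis is available when you start from an initial $H$-algebra, since with $H\emptyset\neq\emptyset$ the categories of $H$- and $\bar H$-algebras coincide; so your argument covers the implication ``initial algebra $\Rightarrow$ final well-founded coalgebra''.

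The genuine gap is the converse, which is the substantial half of the theorem: if a final well-founded $H$-coalgebra $(I,\psi)$ exists, you must show it is (the inverse of) an initial algebra, and your proposal gives no way to convert this hypothesis into the existence of the initial $\bar H$-algebra that your case-2 argument presupposes. The tempting shortcut --- show $(I,\psi)$ is the final \emph{well-founded $\bar H$-coalgebra} by arguing that well-founded $\bar H$-coalgebras are well-founded for $H$ and hence already map uniquely into $(I,\psi)$ --- fails, because $\bar H$-well-foundedness does not imply $H$-well-foundedness (Example~\ref{ddo}: for $H=C_{01}+C_1$ the coalgebra $\{a\}\hookrightarrow\{a,b\}$ is $\bar H$-well-founded but has the empty cartesian $H$-subcoalgebra). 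This is exactly why the paper's part~(b) introduces the auxiliary coalgebra $b\colon\bar H\emptyset\to H\bar H\emptyset$, proves it is a nonempty well-founded $H$-coalgebra (forcing $I\neq\emptyset$ and, via Lemma~\ref{810}, $\bar H$-well-foundedness of $(I,\psi)$), and then, for an arbitrary nonempty well-founded $\bar H$-coalgebra $(A,a)$, passes to the coproduct $(A,a)+(\bar H\emptyset,b)$, shown to be $H$-well-founded by Corollary~\ref{L-colim} together with an explicit element chase excluding the empty cartesian subcoalgebra; only then does finality of $(I,\psi)$ among well-founded $\bar H$-coalgebras follow, and Theorem~\ref{T-main} for $\bar H$ yields invertibility of $\psi$ and initiality. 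None of this is ``bookkeeping about empty data''; it is the core of the paper's proof and is missing from your proposal.
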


\begin{proof}
Given~$H$, we know from Theorem~\ref{T-main} that the statement
holds for the Trnkov\'{a} closure~$\bar{H}$. From this we are
going to prove it for~$H$. 

(a)
If $\varphi\colon HI\to I$ is an initial algebra, we prove that
$\varphi^{-1}\colon I\to HI$ is a final well-founded coalgebra.

 This is clear when $H\emptyset=\emptyset$.
In this case $I=\emptyset$. And the only
(hence, the
final) well-founded coalgebra is the empty one: if
 $a\colon
A\to HA$ is well-founded, then the following cartesian subcoalgebra
\begin{equation}
\label{bb}
\bfig
\square<700,400>[\emptyset`\emptyset`A`HA;%
\id`q_A`Hq_A`a]
\POS(100,300)
\ar@{-}+(-50,0)
\ar@{-}+(0,50)
\efig\end{equation}
demonstrates that $q_A$ is an isomorphism, so $A = \emptyset$.

Thus we assume $H\emptyset\neq\emptyset$.
Then $\bar{H}\emptyset\neq\emptyset$
via $u$ in \eqref{rdo} above. The $\bar{H}$-algebra
$\varphi\colon\bar{H}I\to I$ is initial because every
$\bar{H}$-algebra is nonempty, hence, it also is an $H$-algebra. And
the unique homomorphism from~$I$ w.r.t.~$H$ is also a homomorphism
w.r.t.~$\bar{H}$. By Theorem~\ref{T-main},
$\varphi^{-1}\colon I\to\bar{H}I$ is a final well-founded
$\bar{H}$-coalgebra. Let us now verify that it is also well-founded
w.r.t.~$H$.   Consider a cartesian subcoalgebra
\begin{equation} 
\bfig
\square<700,400>[A'`HA'`I`HI;%
a'`m`Hm`\varphi^{-1}]
\POS(100,300)
\ar@{-}+(-50,0)
\ar@{-}+(0,50)
\efig
\label{cc}
\end{equation}
We claim that $A'$ cannot be empty.   For if it were, then since $HA' = H\emptyset\neq \emptyset$,
we take any $x\in HA'$ and consider $x$ and $(\varphi\cdot Hm)(x)$.
By the pullback property, there is some $y\in A'$  so that $a'(y) = x$.
This  contradicts $A'  = \emptyset$.

As a result, $HA'  = \bar{H}A'$, and $Hm = \bar{H}m$.   So 
(\ref{cc}) is a cartesian subcoalgebra for $\bar{H}$.  Thus $m$ is invertible, as desired.

At this point we know that $\varphi^{-1}\colon I\to HI$ is a well-founded
$H$-coalgebra;
we conclude with the verification that $\varphi^{-1}$ is \emph{final} 
among these.    This follows from the
observation that every nonempty well-founded $H$-coalgebra $a\colon
A\to HA$ is also well-founded w.r.t.~$\bar{H}$. Indeed, consider a
cartesian subcoalgebra
\begin{equation}\bfig
\square<700,400>[A'`\bar{H}A'`A`\bar{H}A;%
a'`m`\bar{H}m`a]
\POS(100,300)
\ar@{-}+(-50,0)
\ar@{-}+(0,50)
\efig\label{cdc}
\end{equation}
By Lemma~\ref{810},
 $A'\neq\emptyset$.   Thus $\bar{H}m=Hm$ and we conclude that $m$~is
invertible.
\takeout{
 If $A'=\emptyset$, then the square in (\ref{cdc})
is the same as the square in (\ref{bb}).   And so  
by well-foundedness of $(A,a)$, we  again 
have that $m = q_A$ is an isomorphism, 
contradicting the non-emptiness of $A$.
}
\takeout{OLD ARGUMENT BELOW
$m=q_A$, and since we have seen
in~\eqref{rdo} above that $\bar{H}q_A\tec u=Hq_A$, from the last
pullback we get the following one
\[\bfig
\square<700,400>[\emptyset`H\emptyset`A`HA;%
`q_A`Hq_A`a]
\POS(100,300)
\ar@{-}+(-50,0)
\ar@{-}+(0,50)
\efig\]
in contradiction to the well-foundedness of the $H$-algebra~$A$.
}

(b)
If $\psi\colon I\to HI$ is a final well-founded coalgebra, we prove that
$\psi$~is invertible and $\psi^{-1}\colon HI\to I$ is an initial
algebra. Unfortunately, we cannot use the converse implication of
what we have just proved (every nonempty well-founded
$\bar{H}$-coalgebra is also well-founded w.r.t.~$H$) since this is
false in general (see Example~\ref{ddo} below). We can assume
$H\emptyset\neq\emptyset$, since the case $H\emptyset=\emptyset$ is
trivial.

Consider first the coalgebra
\[b\colon\bar{H}\emptyset\to H\bar{H}\emptyset\]
defined by
\[b(\tau)=\tau_{\bar{H}\emptyset}\qquad\text{for all $\tau\colon
C_{01}\to H$.}\]
Let us show that
this coalgebra is well-founded for~$H$.  Consider a cartesian subcoalgebra
\begin{equation} \label{e}
\bfig
\square<700,400>[A'`HA'`\bar{H}\emptyset`H\bar{H}\emptyset;%
a'`m`Hm`b]
\POS(100,300)
\ar@{-}+(-50,0)
\ar@{-}+(0,50)
\efig
\end{equation}
It is our task to prove that $m$~is surjective (thus, invertible).
First, assume that $A'\neq \emptyset$.
Given $\tau\colon C_{01}\to H$ in~$\bar{H}\emptyset$, the
element~$\tau_{A'}$ of~$HA'$ 
\takeout{%
(recall that $H\emptyset\neq\emptyset$,
thus $HA'\neq\emptyset$ for all sets~$A'$)
}%
 fulfils
\[b(\tau)=\tau_{\bar{H}\emptyset}=Hm(\tau_{A'})\]
by the naturality of $\tau$ and the fact 
that $C_{01} m = \id_1$.
Thus, there exists an element of~$A'$ that $m$~maps to~$\tau$.
Our second case is when $A' = \emptyset$.
We show that this case leads to a contradiction. Observe that $m =
q_{\bar H \emptyset}\colon \emptyset \to \bar H\emptyset$, and let
 $x\in H\emptyset$, so that $u(x) \in \bar{H}\emptyset$, see (\ref{rdo}), and we have
$$b(u(x))  = (u(x))_{\bar{H}\emptyset}
=  H q_{\bar{H}\emptyset} (x).
$$
Thus $x$ and $u(x)$ are mapped to the same element of $H\bar{H}\emptyset$
by $Hm$ and $b$, respectively, contradicting the assumption that
$\emptyset$ is a pullback in (\ref{e}) above.

The first point of this coalgebra 
$(\bar{H}\emptyset,b)$ is that its well-foundedness and non-emptiness
implies that  the final well-founded $H$-coalgebra $(I,\psi)$ 
must also be  nonempty.
Thus $(I,\psi)$ is also a coalgebra for~$\bar{H}$. Let us prove that it is
well-founded w.r.t.~$\bar{H}$. Given a cartesian subcoalgebra
\[\bfig
\square<700,400>[A'`\bar{H}A'`I`\bar{H}I;%
a'`m`\bar{H}m`\psi]
\POS(100,300)
\ar@{-}+(-50,0)
\ar@{-}+(0,50)
\efig\]
by Lemma~\ref{810},
$A'\neq\emptyset$.   So $\bar{H}m=Hm$, hence $m$~is invertible.
\takeout{And the case $A'=\emptyset$ cannot occur since then $m=q_I$ and we
know from~\eqref{rdo} that $\bar{H}q_I\tec u=Hq_I$, so that we would
get a pullback for~$H$ in place of~$\bar{H}$, in contradiction to the
well-foundedness of~$I$.}

We next prove that $(I,\psi)$ is the final well-founded
$\bar{H}$-coalgebra. Let $a\colon A\to\bar{H}A$ be a nonempty
well-founded $\bar{H}$-coalgebra. We prove that the coproduct
\[(A,a)+(\bar{H}\emptyset,b)\qquad\text{in $\Coalg H$}\]
is a well-founded $H$-coalgebra. This will conclude the proof: we
have a unique homomorphism from that coproduct into~$(I,\psi)$
in~$\Coalg H$, hence, a unique homomorphism from~$(A,a)$
to~$(I,\psi)$. Now in oder to prove that the  coproduct above is a
well-founded $H$-coalgebra we first use that every nonempty well-founded coalgebra
for~$H$ is also well-founded for~$\bar{H}$, thus, both of the 
summands above are well-founded $\bar{H}$-coalgebras. Since coproducts of
coalgebras are formed on the level of sets, the two categories $\Coalg
H$ and~$\Coalg\bar{H}$ have the same formation of coproduct of
nonempty coalgebras. Let
\[(A,a)+(\bar{H}\emptyset,b)=(A+\bar{H}\emptyset,c)\]
be a coproduct in~$\Coalg\bar{H}$, then this coalgebra is well-founded
w.r.t.~$\bar{H}$ by Corollary~\ref{L-colim}. To prove that it is also
well-founded w.r.t.~$H$, we only need to consider the empty
subcoalgebra: we must prove that the square
\[\bfig
\square<1000,400>[\emptyset`H\emptyset`A+\bar{H}\emptyset`
H(A+\bar{H}\emptyset);a'`m`Hm`c]
\efig\]
is not a pullback. Indeed, choose an element $x\in H\emptyset$ and
put $\tau=u(x)$ (see~\eqref{rdo}). Then $m=q_{A+\bar{H}\emptyset}$
implies
\[Hm(x)=\tau_{A+\bar{H}\emptyset}.\]
We also have $\tau\in\bar{H}\emptyset$ and the coproduct injection
$v\colon\bar{H}\emptyset\to A+\bar{H}\emptyset$ fulfils $c\tec
v=Hv\tec b$ (due to the formation of coproducts in~$\Coalg\bar{H}$).
Therefore
\[c\bigl(v(\tau)\bigr)=Hv\bigl(b(\tau)\bigr)=
Hv(\tau_{\bar{H}\emptyset})=\tau_{A+\bar{H}\emptyset}=Hm(x).\]
Since we presented elements of $A+\bar{H}\emptyset$ and~$H\emptyset$
that are mapped to the same element by $c$ and~$Hm$, respectively,
the  square above is not a pullback. This finishes the proof that
$(I,\psi)$~is a final well-founded $\bar{H}$-coalgebra.

By Theorem~\ref{T-main} we conclude that $\psi$~is invertible and
$(I,\psi^{-1})$~is an initial $\bar{H}$-algebra. It is also an
initial $H$-algebra:   due to
$H\emptyset\neq\emptyset\neq\bar{H}\emptyset$, the two functors have
the same categories of algebras.
\end{proof}

\begin{example}
\label{ddo}
Let $H=C_{01}+C_1$ be the constant functor of value~2 except
$\emptyset\mapsto 1$. The functor~$\bar{H}$ in the  proof above is
the constant functor with value~${1+1}$, expressed, say as~$\{a,b\}$.
Here
\[H\emptyset=\{b\}\qquad\text{and}\qquad HA=\{a,b\}\quad\text{for
$A\neq\emptyset$.}\]
The coalgebra
\[\{a\}\to/^{ (}->/<300>\{a,b\}\]
is obviously well-founded w.r.t.~$\bar{H}$ but not w.r.t.~$H$ since we
have the pullback:
\[\bfig
\square/->`->`->`^{ (}->/<900,400>%
[\emptyset`H\emptyset=\{b\}`\{a\}`\{a,b\};%
`q_{\{a\}}`Hq_{\{a\}}`]
\POS(100,300)
\ar@{-}+(-50,0)
\ar@{-}+(0,50)
\efig\]
\end{example}

\subsection{The canonical graph and well-foundedness}

\begin{definition}
\label{D-can}   Let $H$~be a set functor preserving (wide) intersections. 
For
every coalgebra $a\colon A\to HA$ define the \textbf{canonical graph}
on~$A$: the neighbors of $x\in A$ are precisely those elements of~$A$ 
which lie in the least subset $m\colon M\to/^{ (}->/A$ with $a(x)\in
Hm[HM]$.
\end{definition}

\begin{remark}
\label{R-can}
(a)
Gumm observed in~\cite{Gu2} that if $H$~preserves intersections we
obtain a ``subnatural'' transformation from it to the power-set
functor~\PP\ by defining functions
\[\tau_A\colon HA\to<180>\PP A,\quad \tau_A(x)=\text{the least subset
$m\colon M\to/^{ (}->/<180>A$ with $x\in Hm[HM]$.}\]
The naturality squares do not commute in general, but for every
monomorphism $m\colon A'\to/^{ (}->/<200>A$ we have a commutative
square
\[\bfig
\square<800,400>%
[HA'`\PP A'`HA`\PP A;%
\tau_{A'}`Hm`\PP m`\tau_A]
\POS(100,300)
\ar@{-}+(-50,0)
\ar@{-}+(0,50)
\efig\]
which even is a pullback. The canonical graph of a coalgebra $a\colon
A\to HA$ is simply the graph $\tau_A\tec a\colon A\to\PP A$.

(b)
Recall that a graph is well-founded iff it has no infinite directed
paths. This also fully characterizes well-foundedness of $H$-coalgebras:
\end{remark}

\begin{proposition}
\label{P-can}
If a set functor preserves intersections, then a coalgebra is
well-founded iff its canonical graph is well-founded.
\end{proposition}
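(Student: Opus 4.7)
My plan is to reduce $H$-well-foundedness of $(A,a)$ to $\PP$-well-foundedness of the associated graph $\tau_A\cdot a\colon A\to\PP A$, and then invoke Example~\ref{E-wf}(1), which identifies well-foundedness of a graph (as $\PP$-coalgebra) with absence of infinite paths. Concretely, I will establish a bijection between cartesian $H$-subcoalgebras of $(A,a)$ and cartesian $\PP$-subcoalgebras of the canonical graph, in which corresponding subcoalgebras share the same carrier mono $m\colon A'\hookrightarrow A$. From this the claimed equivalence is immediate.

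The key tool is the pullback square from Remark~\ref{R-can}(a): for every monomorphism $m\colon A'\hookrightarrow A$, the square
\[
\bfig
\square<700,400>[HA'`\PP A'`HA`\PP A;\tau_{A'}`Hm`\PP m`\tau_A]
\POS(100,300)
\ar@{-}+(-50,0)
\ar@{-}+(0,50)
\efig
\]
is a pullback. For the forward direction, suppose $(A',\alpha')$ is a cartesian $H$-subcoalgebra of $(A,a)$ along $m$. Pasting the pullback square $\alpha'\colon A'\to HA'$, $m$, $Hm$, $a$ with Gumm's pullback above yields that
\[
\bfig
\square<900,400>[A'`\PP A'`A`\PP A;\tau_{A'}\alpha'`m`\PP m`\tau_A a]
\POS(100,300)
\ar@{-}+(-50,0)
\ar@{-}+(0,50)
\efig
\]
is a pullback, so $(A',\tau_{A'}\alpha')$ is a cartesian $\PP$-subcoalgebra of the canonical graph along the same $m$.

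Conversely, suppose $(A',\beta)$ is a cartesian $\PP$-subcoalgebra of $(A,\tau_A a)$ along $m$. Form the pullback $P$ of $Hm$ along $a$. By the pullback lemma applied to
\[
\bfig
\square<700,400>[P`HA'`A`HA;\bar\beta`m_P`Hm`a]
\square(700,0)<800,400>[HA'`\PP A'`HA`\PP A;\tau_{A'}`Hm`\PP m`\tau_A]
\POS(100,300)
\ar@{-}+(-50,0)
\ar@{-}+(0,50)
\POS(800,300)
\ar@{-}+(-50,0)
\ar@{-}+(0,50)
\efig
\]
the outer rectangle is a pullback; but so is the square defining $(A',\beta)$ (with the same outer cospan $\tau_A a$, $\PP m$), hence $m_P$ and $m$ represent the same subobject, and the induced $\alpha'\colon A'\to HA'$ makes $(A',\alpha')$ a cartesian $H$-subcoalgebra of $(A,a)$ along $m$. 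These constructions are mutually inverse on subobjects, and in particular one is proper iff the other is.

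The equivalence of well-foundedness now follows: $(A,a)$ has no proper cartesian $H$-subcoalgebra iff $(A,\tau_A a)$ has no proper cartesian $\PP$-subcoalgebra, i.e., iff the canonical graph is well-founded as a $\PP$-coalgebra, which by Example~\ref{E-wf}(1) is exactly well-foundedness in the graph-theoretic sense. The only conceptually delicate point is the converse direction of the bijection, where one must use the pasting lemma for pullbacks to reconstruct the $H$-coalgebra structure $\alpha'$ on $A'$ from the graph structure $\beta$; everything else is routine.
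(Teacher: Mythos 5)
Your proof is correct and follows essentially the same route as the paper: both arguments hinge on Gumm's pullback square from Remark~\ref{R-can} to transfer cartesian subcoalgebras between $(A,a)$ and its canonical graph over the same monomorphism $m$, using pasting of pullbacks in one direction and the universal property (uniqueness) of pullbacks in the other. Packaging this as a bijection of cartesian subcoalgebras, as you do, is only a cosmetic difference from the paper's two-implication presentation.
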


\begin{remb}
For functors $H$ preserving inverse images this fact is proved by
Taylor~\cite[Remark~6.3.4]{Ta2}.  Our proof is essentially
the same.
\end{remb}

\begin{proof}
Let $a\colon A\to HA$ be a well-founded coalgebra. Given a
subgraph~$(A',a')$ of the associated graph~$(A,\tau\tec a)$ forming a
pullback
\[\bfig
\morphism|b|<500,0>[A`HA;a]
\morphism(500,0)|b|<500,0>[HA`\PP A;\tau_A]
\morphism(0,400)<1000,0>[A'`\PP A';a']
\morphism(0,400)|l|<0,-400>[A'`A;m]
\morphism(1000,400)|r|<0,-400>[\PP A'`\PP A;\PP m]
\POS(100,300)
\ar@{-}+(-50,0)
\ar@{-}+(0,50)
\efig\]
we are to prove that $m$~is invertible. Use the pullback of Remark~\ref{R-can}:
 \begin{equation}\label{tau-pb}\bfig
\hSquares|aallrbb|<400>[A'`HA'`\PP A'`A`HA`\PP A;%
a''`\tau_{A'}`m`Hm`\PP m`a`\tau_A]
\POS(720,300)
\ar@{-}+(-50,0)
\ar@{-}+(0,50)
\efig\end{equation}
We get a unique $a''\colon A'\to HA'$ with $a' = \tau_{A'} \tec a''$, and
$(A', a'')$
 is a subcoalgebra of~$(A,a)$. Moreover, in the  diagram above the
outside square and the right-hand one are both pullbacks, thus, the
left-hand square is also a pullback. Consequently, $m$~is invertible
since $(A,a)$~is well-founded.

Conversely, if the graph~$(A,\tau_A\tec a)$ is well-founded, we are
prove that if the left-hand square of~\eqref{tau-pb} is a   pullback then $m$ is invertible. Indeed, in that case, by composition, the outside square is  a pullback for the subcoalgebra~$(A',\tau_{A'}\tec a'')$
of~$(A,\tau_A\tec a)$. Thus, since the last coalgebra is
well-founded, $m$~is invertible.
\end{proof}

\subsection{Initial algebras for functors on vector spaces}
\label{vectors}\hfill

\smallskip\noindent
For every field $K$,
 the category $\KVec$ of vector spaces over $K$
 also has the property that the equality (\ref{01}) holds for all
 endofunctors. This follows from the next lemma whose proof is a variation of Trnkov\'a's  proof
of Proposition~\ref{R-Tr} (cf.~\cite{Tr}):

\begin{lemma} \label{var}
 In $\KVec$, finite intersections of monomorphisms are absolute, i.e., preserved by every functor with
domain $\KVec$.
\end{lemma}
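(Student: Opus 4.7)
The plan is to show that, for two monomorphisms $m_1\colon A_1\hookrightarrow A$ and $m_2\colon A_2\hookrightarrow A$ in $\KVec$, the pullback square describing their intersection is absolute: its universal property will be witnessed entirely by equations between morphisms, and any functor automatically preserves such equations. Wide finite intersections will then follow by iterating the binary case.

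The key input that I plan to exploit is the splitting property of $\KVec$: every subspace is a direct summand. Using this, I will choose a decomposition $A = P\oplus X_1\oplus X_2\oplus Y$ with $P = A_1\cap A_2$, $A_1 = P\oplus X_1$, and $A_2 = P\oplus X_2$. Writing $n_i\colon P\hookrightarrow A_i$ for the intersection inclusions, $s_i\colon A\to A_i$ for the coordinate projection onto $A_i$ (so that $s_im_i = \id_{A_i}$), and $r_i\colon A_i\to P$ for the projection onto $P$ (so that $r_in_i = \id_P$), a direct check on the summands yields the ``cross'' identities
\[
s_1 m_2 = n_1 r_2, \qquad s_2 m_1 = n_2 r_1,
\]
together with the commutativity $m_1 n_1 = m_2 n_2$ of the intersection square.

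Granting these equations, the pullback property is purely formal. Given $f_i\colon X\to A_i$ with $m_1 f_1 = m_2 f_2$, applying $s_1$ gives $f_1 = s_1 m_1 f_1 = s_1 m_2 f_2 = n_1 r_2 f_2$, and symmetrically $f_2 = n_2 r_1 f_1$; hence $h := r_1 f_1 = r_2 f_2$ is well defined and satisfies $n_1 h = f_1$ and $n_2 h = f_2$. Uniqueness follows from $r_1 n_1 = \id_P$: any $h'$ with $n_1 h' = f_1$ obeys $h' = r_1 n_1 h' = r_1 f_1 = h$. Since every step of this verification uses only composition and identities, any functor $F$ with domain $\KVec$ sends the data to morphisms of its codomain satisfying exactly the same equations, and therefore makes $FP$ (with $Fn_1, Fn_2$) the pullback of $Fm_1, Fm_2$. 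I expect the main obstacle to be isolating the cross identities: they express precisely the compatibility between the two splittings that the direct-sum decomposition supplies, and their absence in $\Set$ is what forced the closure construction of Proposition~\ref{R-Tr}.
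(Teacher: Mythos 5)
Your proof is correct: the direct-sum decomposition supplies exactly the equational witnesses ($s_im_i=\id$, $r_in_i=\id$, the cross identities, and $m_1n_1=m_2n_2$) that make the binary intersection square an absolute (indeed split) pullback, and the reduction of wide finite intersections to iterated binary ones is a purely formal pasting argument, so that part is sound as well. The paper gives no detailed proof—it only says the lemma follows by "a variation of Trnková's proof" of Proposition~\ref{R-Tr}—and your splitting argument is precisely that variation: the linear projections coming from complements play the role of the case-defined retractions Trnková constructs in $\Set$ from an element of a nonempty intersection, which is exactly why in $\KVec$ no nonemptiness caveat (and no analogue of the closure $\bar{H}$) is needed.
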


\begin{corollary} For every endofunctor of $\KVec$ we have
  \[
  \text{initial algebra} = \text{final well-founded coalgebra}.
  \]
\end{corollary}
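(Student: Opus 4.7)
The plan is to show that the corollary follows essentially immediately from Lemma~\ref{var} combined with Theorem~\ref{T-main}. The key reduction is that finite intersections being absolute in $\KVec$ turns the hypothesis of Theorem~\ref{T-main} into a triviality: every endofunctor of $\KVec$ automatically preserves finite intersections of monomorphisms. As observed in the proof of Theorem~\ref{T-main}, preservation of finite intersections already forces preservation of monomorphisms (since $m$ is monic iff its pullback with itself is given by identities), so no extra hypothesis is needed.

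Before invoking Theorem~\ref{T-main}, I would verify that $\KVec$ meets Assumption~\ref{ASS}. The category $\KVec$ is a classical example of an LFP category (finitely presentable objects are the finite-dimensional spaces). It remains to check that the initial object $0$ (the zero vector space) is simple. Since every morphism with domain $0$ is a monomorphism in $\KVec$, and since $\KVec$ has (strong epi, mono)-factorizations, the parenthetical characterization in Definition~\ref{D-2.1}(2) gives simplicity of $0$ immediately.

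With these facts in hand the argument is then one line: let $H$ be any endofunctor of $\KVec$. By Lemma~\ref{var}, $H$ preserves finite intersections of monomorphisms; in particular $H$ preserves monomorphisms. All hypotheses of Theorem~\ref{T-main} are satisfied, so
\[
\text{initial algebra}=\text{final well-founded coalgebra}.
\]

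I do not anticipate any real obstacle: the entire content of the corollary is concentrated in Lemma~\ref{var}, which is the vector-space analogue of Trnkov\'a's set-functor result (Proposition~\ref{R-Tr}) but in fact gives something stronger (absoluteness of finite intersections, not merely the existence of a well-behaved modification). The only mildly delicate point worth mentioning in the write-up is justifying why $0 \in \KVec$ is simple, since at first glance ``simple'' might be confused with having no proper subobjects; but the characterization via strong epimorphisms makes this transparent.
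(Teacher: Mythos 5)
Your proposal is correct and matches the paper's intended argument: the corollary is exactly the combination of Lemma~\ref{var} (every endofunctor of $\KVec$ preserves finite intersections of monomorphisms) with Theorem~\ref{T-main}, after noting that $\KVec$ is LFP with simple initial object $0$. Your additional remarks (that preservation of finite intersections yields preservation of monomorphisms, and the verification that $0$ is simple) are exactly the routine checks the paper leaves implicit.
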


\begin{remark}
 The existence of an initial algebra is equivalent to the existence of
 a space $X \cong HX$, see Proposition~\ref{P-chain}.
\end{remark}

\section{Well-pointed coalgebras}
\label{treti}\setcounter{equation}{0}

\subsection{Simple coalgebras}\label{simp}
\hfill

\smallskip\noindent
We arrive at the centerpiece of this paper, characterizations of 
the initial algebra, final coalgebra, and initial iterative algebra for 
  endofunctors preserving intersections. Recall from Section
    \ref{druha} that subcoalgebras are represented by homomorphisms carried by monomorphisms in~\A, and quotient coalgebras by homomorphisms carried by strong epimorphisms in~\A.  

Here we prove that an endofunctor preserving
monomorphisms has a final coalgebra iff it has only a set of
simple coalgebras (up to isomorphism). For concrete categories and
endofunctors preserving intersections we prove a stronger result: the
final coalgebra consists of all
\textit{well-pointed coalgebras} which are those pointed coalgebras
with no proper quotient and no proper subobject. And a much sharper result is obtained if the base category is an equational class of algebras. Numerous examples of
this type of description of final coalgebras are presented in Section
\ref{ctvrta}.

\begin{assumption}\label{A-coc}
Throughout this section~\A~denotes a cocomplete, wellpowered and
cowellpowered category. And  $H: \A \to \A$ is an endofunctor.\end{assumption}

Additionally,  in a number of results below we assume that $H$ preserves (wide) intersections, i.e., multiple pullbacks of monomorphisms.

\begin{examples}\label{E-coc}
In the case where~\A$=$\Set~the assumption that $H$ preserves intersections is an extremely mild condition:
examples include
\begin{enumerate}[(a)]
\item
the power-set functor, all polynomial functors, the finite
distribution functor,

\item
products, coproducts, quotients, and subfunctors of functors
preserving intersections, and

\item
``almost'' all finitary functors: if $H$~is finitary then $\bar{H}$
in Theorem~\ref{T-main-set} preserves intersections, see Lemma \ref{810}.
\item  An example of an important set functor not preserving
    intersections is the continuation monad $X\mapsto R^{(R^X)}$ for a
  fixed set $R$.
\end{enumerate}
\end{examples}

\begin{remark}\label{R-(e,m)}\hfill
\begin{enumerate}[(a)]
\item We are working with factorizations of morphisms as  strong epimorphisms   followed by monomorphisms, see Remark \ref{R-fact}. Recall from
  \cite{AHS} that every cocomplete and cowellpowered category has such
  factorizations.

\item We use Terminology \ref{T-sub} and recall from Remark \ref{R-str} that quotients and subcoalgebras 
 form a factorization system in $\Coalg H$  whenever $H$ preserves monomorphisms.  In this case, a coalgebra $(A,a)$ is simple (see Definition \ref{D-2.1}) iff every homomorphism from it is a subcoalgebra.

\end{enumerate}
\end{remark}

\begin{notation}
  From now on we will write 
  \[
  \nu H \qquad\text{and}\qquad \mu H
  \]
  for the final coalgebra and initial algebra for $H$,
  respectively, whenever they exist. 
\end{notation}

\begin{examples}\label{E-simp}\hfill
\begin{enumerate}[(1)]
\item If $H$ has a final coalgebra $\nu H$, then $\nu H$ is simple. Indeed, the terminal object of every category is (clearly) simple.

\item If a set functor $H$ has an initial algebra, then the corresponding coalgebra
	is simple (see Theorem \ref{T-main-set}). More generally, let ~\A~be an LFP category with a simple initial object. If $H$ preserves finite intersections and has an inital algebra
$\mu H$, then $\mu H$ is (as a coalgebra) simple. Indeed, by Theorem
\ref{T-main}, $\mu H$ is a final well-founded coalgebra. Since
well-founded coalgebras are closed under quotients (see Lemma
\ref{L-colim}), it follows that $\mu H$ is simple (in $\Coalg H$).

\item A deterministic automaton considered as a coalgebra of
$$HX=X^I \times \{0,1\}\qquad \qquad (I = \text{the set of inputs})$$
  is simple iff it is \textit{observable}. That is, every pair of
  distinct states accept distinct languages.

\item A graph, considered as a coalgebra for $\PP$, is simple iff it has pairwise
  non-bisimilar vertices.
\end{enumerate}
\end{examples}

\begin{observation}\label{ord}
  Simple coalgebras form an ordered class (up to isomorphism), i.e.,
  between two simple coalgebras there exists at most one homomorphism.

  Indeed, given a parallel pair $h_1,h_2\colon (A,a)\to (B,b)$, their
  coequalizer is a quotient of $(B,b)$, hence it is invertible and we
  conclude $h_1=h_2$.
\end{observation}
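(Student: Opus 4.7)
The plan is to take the coequalizer of any parallel pair of homomorphisms between simple coalgebras and show it must be invertible, which collapses the two homomorphisms to a single one.

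First, I would recall the construction of coequalizers in $\Coalg H$. Since $\A$ is cocomplete by Assumption \ref{A-coc}, given homomorphisms $h_1, h_2 \colon (A,a) \to (B,b)$ we may form the coequalizer $q \colon B \to C$ of the underlying pair in $\A$. The composite $H q \cdot b \colon B \to HC$ coequalizes $h_1,h_2$ (since each $h_i$ is a coalgebra homomorphism), so by the universal property of $q$ there is a unique $c \colon C \to HC$ with $c \cdot q = H q \cdot b$. This makes $q \colon (B,b) \to (C,c)$ a coalgebra homomorphism and indeed the coequalizer in $\Coalg H$. In particular $q$ is a regular epimorphism in $\Coalg H$.

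Second, since $(B,b)$ is simple and every regular epimorphism is a strong epimorphism, the definition of simple (Definition \ref{D-2.1}) forces $q$ to be invertible. The coequalizer identity $q \cdot h_1 = q \cdot h_2$ then yields $h_1 = h_2$ after composing with $q^{-1}$.

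There is no real obstacle: the argument uses only the existence of coequalizers in $\Coalg H$ (guaranteed by cocompleteness of $\A$, via the standard lifting from $\A$ to $\Coalg H$) together with the abstract fact that regular epis are strong. Note that the argument does not even require simplicity of the domain $(A,a)$ — only that of the codomain $(B,b)$ — so we obtain the slightly stronger statement that any parallel pair of homomorphisms into a simple coalgebra must agree.
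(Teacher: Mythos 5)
Your argument is correct and is essentially the paper's own proof: the coequalizer of $h_1,h_2$ in $\Coalg H$, lifted from the coequalizer in $\A$, is a quotient (regular, hence strong, epi) of the simple coalgebra $(B,b)$, so it is invertible and $h_1=h_2$. Your added remark that only simplicity of the codomain is needed is a correct (minor) sharpening.
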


\begin{proposition}[Gumm \cite{Gu}] \label{P-simp}
Every coalgebra has a unique simple quotient represented by the wide pushout
$$e_{(A,a)}\colon (A,a)\to (\bar{A},\bar{a})$$
of all quotients. If $H$ preserves monomorphisms, this is the reflection of $(A,a)$ in the full subcategory of all simple coalgebras.
\end{proposition}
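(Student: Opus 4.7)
The plan is to follow Gumm's original argument via the universal property of wide pushouts. I proceed in three main steps: construct the wide pushout, verify it is a simple quotient, and establish the reflection property.

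First, I construct the wide pushout. Since $\A$ is cowellpowered (Assumption~\ref{A-coc}) and the forgetful functor $U\colon \Coalg H \to \A$ preserves epimorphisms, $(A,a)$ has, up to isomorphism, only a set of quotients $q_i\colon (A,a) \to (Q_i, b_i)$. Since $U$ creates colimits and $\A$ is cocomplete, the wide pushout of this family exists in $\Coalg H$; call it $(\bar A, \bar a)$, with cocone $p_i\colon (Q_i, b_i) \to (\bar A, \bar a)$ and canonical map $e := p_i \circ q_i$, independent of $i$. I then verify that $e$ is itself a strong epi in $\A$ (hence a quotient): this follows because a wide pushout of a family of strong epis sharing a common domain is again a strong epi, a standard consequence of the (strong epi, mono)-factorization and wellpoweredness of $\A$.

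Next, I show that $(\bar A,\bar a)$ is simple and is the unique simple quotient up to isomorphism. Given any quotient $h\colon (\bar A,\bar a) \to (C,c)$, the composite $h \circ e$ is a strong epi in $\A$, so it appears in the family as some $q_{i_0}$. The wide pushout property gives $p_{i_0}\colon (C,c) \to (\bar A,\bar a)$ with $p_{i_0} \circ q_{i_0} = e$, so $(p_{i_0}\circ h) \circ e = e$; since $e$ is epi, $p_{i_0}\circ h = \id$, so $h$ is a split mono. Being also a strong epi, $h$ is invertible. For uniqueness, any simple quotient $e'\colon (A,a) \to (\bar A',\bar a')$ lies in the family, yielding $p'\colon (\bar A',\bar a') \to (\bar A,\bar a)$ with $p' \circ e' = e$; factoring $p'$ in $\A$ and using that $e$ is strong epi forces $p'$ itself to be strong epi, hence a quotient, which simplicity of $(\bar A',\bar a')$ then forces to be invertible.

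Finally, assume $H$ preserves monomorphisms and let $f\colon (A,a) \to (S,s)$ with $(S,s)$ simple. Form the pushout of $e$ along $f$ in $\Coalg H$, giving $q\colon (\bar A,\bar a) \to (P,\pi)$ and $p\colon (S,s) \to (P,\pi)$. Since $U$ creates colimits, this projects to a pushout in $\A$, and since $e$ is a strong epi and these are stable under pushout, $p$ is a strong epi in $\A$, i.e., a quotient. Here the assumption that $H$ preserves monos enters decisively: by Remark~\ref{R-str}, strong epis in $\Coalg H$ coincide with quotients, so the simplicity of $(S,s)$ (Definition~\ref{D-2.1}) forces $p$ to be invertible. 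Setting $g := p^{-1}\circ q$ yields $g \circ e = f$, and uniqueness of $g$ follows from $e$ being epi.

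The main obstacle is bookkeeping around the two notions of ``simple''. Without $H$ preserving monomorphisms, the ambient notion (every strong epi in $\Coalg H$ from it is invertible) may differ from ``having no proper quotient''. Ensuring these agree, so that simplicity of $(S,s)$ actually trivializes the pushout map $p$, is precisely why $H$ preserving monomorphisms is imposed for the reflection assertion.
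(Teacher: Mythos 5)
Your proof is correct, and it fills in considerably more detail than the paper, whose argument for this proposition is essentially a citation of Gumm plus the one-line remark that the reflection morphisms $\bar{f}\colon(\bar{A},\bar{a})\to(\bar{B},\bar{b})$ come from the universal property of wide pushouts. The construction itself (cointersection of the set of all quotients, which exists by cowellpowerdness and creation of colimits) and the verification of simplicity via the pushout injections $p_i$ match what the paper intends. Where you genuinely diverge is in the reflection step: the paper's sketch suggests obtaining $\bar f$ by factorizing $e_{(B,b)}\tec f$ into a quotient followed by a subcoalgebra in $\Coalg H$ (which is exactly where Remark~\ref{R-str}, and hence preservation of monomorphisms, is used) and then recognizing the quotient part inside the wide pushout diagram; you instead push out $e$ along $f$ and invoke stability of strong epimorphisms under pushout in $\A$. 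Your route is cleaner and more self-contained -- it avoids any appeal to Corollary~\ref{C-sub}, which in the paper is a consequence of this proposition. One small caveat on your closing remark: in your own argument the map $p$ is carried by a strong epimorphism of $\A$, i.e.\ it is a quotient in the sense of Terminology~\ref{T-sub}, so simplicity of $(S,s)$ understood as ``no proper quotient'' kills it with no hypothesis on $H$ at all. The assumption that $H$ preserves monomorphisms is really needed only to identify that notion with the categorical simplicity of Definition~\ref{D-2.1} via the factorization system of Remark~\ref{R-str} (cf.\ Remark~\ref{R-(e,m)}), not to make the pushout step go through; attributing its ``decisive'' role to the invertibility of $p$ slightly misplaces where the hypothesis bites.
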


Gumm worked with $\A=\Set$, but his argument extends without problems: for every coalgebra homomorphism $f\colon (A,a)\to (B,b)$ there exists a unique coalgebra homomorphism $\bar{f}\colon (\bar{A},\bar{a})\to (\bar{B},\bar{b})$ with $\bar{f}\cdot e_{(A,a)}=e_{(B,b)}\cdot f$ by the universal property of wide pushouts.

\begin{corollary}\label{C-sub}
Every subcoalgebra of a simple coalgebra is simple.
\end{corollary}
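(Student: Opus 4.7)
The plan is to reduce every quotient of a subcoalgebra of $(A,a)$ to a quotient of $(A,a)$ itself via a pushout, and then exploit the simplicity of $(A,a)$.

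So suppose $m\colon (B,b)\hookrightarrow (A,a)$ is a subcoalgebra with $(A,a)$ simple, and let $e\colon (B,b)\to (C,c)$ be any quotient, that is, $e$ is carried by a strong epimorphism in~$\A$. I want to show that $e$ is invertible. First I would form the pushout of $m$ and $e$ in $\Coalg H$, which exists because $\A$ is cocomplete (Assumption~\ref{A-coc}) and colimits in $\Coalg H$ are created by the forgetful functor to~$\A$. Denote it by
\[
\bfig
\square<700,400>[(B,b)`(C,c)`(A,a)`(P,p_0);e`m`q`p]
\efig
\]
so that $p\cdot m = q\cdot e$.

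Next, I would observe that the underlying morphism $p$ in $\A$ is the pushout of the strong epimorphism $e$ along $m$, and strong epimorphisms are stable under pushout in any cocomplete category. Hence $p$ is a strong epimorphism in~$\A$, i.e.~a quotient of $(A,a)$ in the sense of Terminology~\ref{T-sub}. Simplicity of $(A,a)$ then forces $p$ to be an isomorphism.

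Finally, I would set $h := p^{-1}\cdot q\colon (C,c)\to (A,a)$ and compute
\[
h\cdot e \;=\; p^{-1}\cdot q\cdot e \;=\; p^{-1}\cdot p\cdot m \;=\; m.
\]
Since $m$ is a monomorphism in $\A$, so is $e$ (the left factor of a mono is mono). But $e$ is also a strong epimorphism in $\A$; a morphism that is both a strong epimorphism and a monomorphism is an isomorphism (by the diagonal fill-in property). Thus $e$ is invertible, which is exactly what simplicity of $(B,b)$ requires. I do not anticipate any serious obstacle: the only non-formal ingredient is stability of strong epimorphisms under pushout, which is a standard fact in any cocomplete category, and no assumption beyond Assumption~\ref{A-coc} (in particular, no preservation hypothesis on~$H$) is needed.
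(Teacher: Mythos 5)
Your proof is correct, but it follows a different route than the paper. The paper disposes of this corollary in one line: by Proposition~\ref{P-simp} the simple coalgebras form a full (strong epi)-reflective subcategory of $\Coalg H$, and any such subcategory is closed under subobjects (factor the subcoalgebra $m$ through the reflection $e_{(B,b)}$ of its domain; then $e_{(B,b)}$ is both a strong epimorphism and, being a first factor of the monomorphism $m$, a monomorphism, hence invertible). That argument leans on the reflection property, which in Proposition~\ref{P-simp} is established under the hypothesis that $H$ preserves monomorphisms. You instead push the given quotient $e$ out along the subcoalgebra inclusion $m$ in $\Coalg H$ (legitimate, since the forgetful functor creates colimits and $\A$ is cocomplete), invoke pushout-stability of strong epimorphisms in $\A$ to see that the pushout $p$ of $e$ is a quotient of the simple coalgebra $(A,a)$, hence invertible, and then conclude that $e$ is a monomorphism (from $p^{-1}q\cdot e=m$) as well as a strong epimorphism, hence invertible. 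All steps check out, including the stability claim, which holds whenever the pushout exists. What your argument buys is self-containedness and slightly weaker hypotheses: it uses only Assumption~\ref{A-coc} and no preservation property of $H$, whereas the paper's one-liner trades this for brevity by reusing the already established reflection. (A cosmetic remark: in $h\cdot e=m$ the morphism $e$ is the \emph{first}, i.e.\ right-hand, factor; the fact you use --- that $e$ is monic whenever $h\cdot e$ is --- is of course correct.)
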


Indeed, every full (strong epi)-reflective subcategory is closed under subobjects.

\begin{theorem}\label{T-set-s}
 For every  endofunctor $H$ the existence of $\nu H$ implies
   that $H$ has only a set of simple coalgebras (up to
   isomorphism). If $H$ preserves monomorphisms, the converse implication also holds.
\end{theorem}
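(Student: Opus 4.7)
The plan is to prove the two implications separately, exploiting simplicity via the coequalizer construction on the one hand and the existence of a simple reflection on the other.

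For the forward direction, assume $\nu H$ exists and let $(A,a)$ be a simple coalgebra. I would show that its unique homomorphism $h\colon(A,a)\to\nu H$ is a monomorphism in $\Coalg H$: given a parallel pair $f_1,f_2\colon(B,b)\to(A,a)$ with $h\cdot f_1=h\cdot f_2$, form their coequalizer $q\colon(A,a)\to(Q,q_0)$ in $\Coalg H$, which exists because the forgetful functor creates colimits from the cocomplete base category $\A$. Since $q$ is a strong epimorphism and $(A,a)$ is simple, $q$ is invertible; from $q\cdot f_1=q\cdot f_2$ we conclude $f_1=f_2$. Thus every simple coalgebra is represented by a monomorphic subobject of $\nu H$ in $\Coalg H$, and non-isomorphic simple coalgebras yield non-isomorphic such subobjects. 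Invoking wellpoweredness of $\Coalg H$ at $\nu H$ (which one derives from the standing assumptions on $\A$ in Assumption~\ref{A-coc}) bounds the simple coalgebras up to isomorphism to a set.

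For the converse direction, assume $H$ preserves monomorphisms and let $\{(A_i,a_i)\}_{i\in I}$ be a set of representatives of all simple coalgebras up to isomorphism. Form the coproduct $S=\coprod_{i\in I}(A_i,a_i)$ in $\Coalg H$ and apply Proposition~\ref{P-simp} to obtain the simple reflection $e_S\colon S\to\bar{S}$. I claim that $\bar{S}$ is the final coalgebra. For existence of a homomorphism from an arbitrary coalgebra $(C,c)$, take its simple reflection $e_{(C,c)}\colon(C,c)\to\bar{C}$; since $\bar{C}$ is simple, it is isomorphic to some summand $(A_i,a_i)$, providing a coproduct injection $\iota\colon\bar{C}\to S$, and composition yields $(C,c)\to\bar{C}\to S\to\bar{S}$. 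For uniqueness, any homomorphism $(C,c)\to\bar{S}$ factors through the reflection $\bar{C}$ by the universal property of Proposition~\ref{P-simp}, and any two homomorphisms between simple coalgebras agree by Observation~\ref{ord}.

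The main obstacle I anticipate is in the forward direction: justifying wellpoweredness of $\Coalg H$ at $\nu H$ without assuming $H$ preserves monomorphisms, since monomorphisms in $\Coalg H$ need not be carried by monomorphisms in $\A$. This requires either a careful direct argument using cocompleteness and cowellpoweredness of $\A$, or an appeal to general coalgebraic principles. In the converse direction, the assumption that $H$ preserves monomorphisms is essential precisely to invoke Proposition~\ref{P-simp}, which both supplies the simple quotient of every coalgebra and identifies it with the reflection in the subcategory of simple coalgebras.
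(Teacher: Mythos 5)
Your converse direction is essentially the paper's argument: form the coproduct of a set of representatives, observe via Proposition~\ref{P-simp} that every coalgebra maps into some summand through its simple quotient, and pass to the simple quotient of the coproduct; uniqueness of homomorphisms into a simple coalgebra is Observation~\ref{ord}. (A small remark: the paper obtains uniqueness directly by noting that the coequalizer of a parallel pair into $(\bar A,\bar a)$ is a quotient of a simple coalgebra and hence invertible, so it does not route through the reflection property of Proposition~\ref{P-simp}; only the ``simple quotient'' half of that proposition is needed there.)

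The forward direction is where your proposal has a genuine gap, and you have correctly located it. Showing that the unique homomorphism $h\colon (A,a)\to\nu H$ is a monomorphism of $\Coalg H$ does not suffice, because $\Coalg H$ is not known to be wellpowered and its monomorphisms need not be carried by monomorphisms of $\A$. The paper's proof instead asserts that $h$ is carried by a monomorphism of $\A$ and then invokes wellpoweredness of $\A$ itself. To get that, one factors $h=m\tec e$ in $\A$ with $e$ strongly epic and $m$ monic and uses the diagonal fill-in of Remark~\ref{R-str} to make $e$ a quotient homomorphism, which is then invertible by simplicity; but this fill-in requires $Hm$ to be monic, i.e., it uses preservation of monomorphisms by $H$. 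The obstacle you anticipate therefore cannot be overcome for arbitrary $H$: the paper's own Example~\ref{E-count} exhibits an endofunctor of $\Gra$ not preserving monomorphisms that has a final coalgebra $1=H1$ together with a proper class of pairwise non-isomorphic simple coalgebras. So the implication ``$\nu H$ exists $\Rightarrow$ only a set of simple coalgebras'' genuinely needs the mono-preservation hypothesis (the hypotheses in the statement are, in effect, attached to the wrong implications), and your proof of this direction is incomplete exactly at the point you flagged.
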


\begin{remb}  Moreover, if $(A_i,a_i)$, $i\in I$, is a set of representatives of all simple coalgebras, then $\nu H$ is the simple quotient of their coproduct:
$$\nu H=(\bar{A},\bar{a})\qquad \text{ where }\qquad (A,a)=\coprod_{i\in I} (A_i,a_i).$$
The theorem is a consequence of Freyd's Adjoint Functor Theorem. We
include a (short) proof for the convenience of the reader.
\end{remb}

\begin{proof} (1) Let $H$ have a set $(A_i,a_i)$, $i\in I$, of
  representative simple coalgebras. Proposition~\ref{P-simp} implies that this set is weakly final: for every coalgebra $(B,b)$ choose $i\in I$ with $(\bar{B},\bar{b})\simeq (A_i,a_i)$ and obtain a homomorphism $(B,b)\to (A_i,a_i)$. Consequently, the  coproduct $(A,a)$
  above is a weakly final object, hence, so is its quotient $(\bar{A},\bar{a})$. For every parallel pair of morphisms with codomain $(\bar{A},\bar{a})$ their coequalizer is invertible (since the codomain is simple, see Remark \ref{R-(e,m)}). Hence, $(\bar{A},\bar{a})$ is final.

  (2)  Let $\nu H$ exist. Then for every simple coalgebra the unique
  homomorphism into  $\nu H$ is  monic. Therefore, since~\A~is wellpowered by assumption,  $H$ has only a set of simple coalgebras up
  to isomorphism.  
\end{proof}

 \begin{example}\label{E-count} If $H$ does not preserve monomorphisms, then it can have both a final coalgebra and  a proper class of simple coalgebras which are pairwise non-isomorphic.

On the category $\Gra$ of graphs and graph morphisms define an endofunctor, based on the power-set functor $\PP$, as follows:
$$HX=\left\{  \begin{array}{ll}\PP X\; \mbox{(no edges)}& \mbox{if $X$ has no edges}\\
1&\mbox{else}
\end{array}\right.$$
For morphisms between graphs without edges put $Hf = \PP f$.
Then $1=H1$  is the final coalgebra.

Now $\PP$  as an endofunctor of $\Set$ has, since no final coalgebra exists, a proper class of simple, pairwise non-isomorphic coalgebras $a_i:A_i\to \PP A_i\, (i\in I)$. Consider $A_i$ as a graph without edges, then $(A_i,a_i)$ is a coalgebra for $H$. And this coalgebra is simple because if a coalgebra homomorphism $e:(A_i,a_i)\to (B,b)$ is carried by a strong epimorphism  $e:A_i\to B$ of $\Gra$,  then the fact that $A_i$  has no edge implies that neither has $B$. Then $e$ is a homomorphism in $\Coalg \PP$ which implies that it is invertible (in $\Set$, hence, in $\Gra$). Thus, we obtain a proper class of simple $H$-coalgebras $(A_i,a_i)$.
\end{example}

\subsection{Well-pointed coalgebras}\label{well-p}

\begin{remark}\label{R-concr}
In the rest of Section \ref{treti} we assume that the base category~\A~is \textit{concrete}, i.e., a faithful functor
$$U\colon \A\to \Set$$
is given. We require that $U$
\begin{enumerate}[(a)]
\item preserves intersections,
\item is \textit{fibre-small}, i.e., for every set $X$ there exists
  up to isomorphism only a set of objects $A\in \A$ with $UA=X$, and
\item is \emph{uniquely transportable}, i.e., for every object $A$ and every
  bijection $b\colon UA\to X$ in $\Set$ there exists a unique object
  $\bar{A}$ with $U\bar{A}=X$ and $A\simeq \bar{A}$ where the
  isomorphism is carried by $b$.
\end{enumerate}
Condition (c) is harmless: every concrete category is equivalent to a
uniquely transportable one, see \cite[Proposition 5.36]{AHS}. Also (a) and (b) are
conditions fulfilled by all ``everyday'' concrete categories: usually
$U$ is the hom-functor of an object $G$ which is a generator, and then
(a) and (b) hold. More generally:
\end{remark}

\begin{example}\label{E-gener}\hfill
  \begin{enumerate}[(1)]
  \item  Let $G_i\; (i\in I)$ be a generating set of~\A, i.e.,
    for every parallel pair $f_1,\, f_2\colon A\to B$ of distinct
    morphisms there exists $i\in I$ and $h\colon G_i\to A$ with
    $f_1\cdot h\neq f_2\cdot h$. Then the functor
    \[
    U=\coprod_{i\in I}\A(G_i,-)\colon \A\to \Set
    \]
    is faithful, fibre-small, and preserves intersections. Indeed,
    faithfulness is equivalent to $\{G_i: i \in I\}$ forming a generating set. Each
    $\A(G_i,-)$ preserves limits, and connected limits commute with
    coprodutcs in $\Set$, thus, $U$ preserves connected
    limits. Fibre-smallness follows from~\A~being cocomplete and
    cowellpowered: for every object $A$ the canonical morphism
    $$e\colon \coprod_{i\in I}\A(G_i,A)\bullet G_i\to A$$
    where $- \bullet G_i$ denotes copowers of $G_i$ (and the
    $f$-component of $e$ is $f$ for every $f\in \A(G_i,A)$) is an
    epimorphism. This is also equivalent to $\{G_i : i \in I\}$ forming a
    generating set. For every set $X$ all objects $A$ with $UA=X$ are
    thus quotients of $\coprod_{i\in I}X_i\bullet G_i$ where
    $X_i\subseteq X$. Since~\A~is cowellpowered, all these quotients
    form a set of objects up to isomorphism.
  \item Every LFP category~$\A$ is concrete as described in the previous
    point when one chooses as generating set any set of
    representatives of all finitely presentable objects up to
    isomorphism. 
  \end{enumerate}
\end{example}

\begin{definition}\label{D-point}
By a \textbf{pointed coalgebra} is meant a triple $(A,a,x)$ consisting of a coalgebra $a: A\to HA$ and an element $x$ of $UA$. The category
$$\Coalg_p H$$
of pointed coalgebras has as morphisms from $(A,a,x)$ to $(B,b,y)$ those coalgebra homomorphisms $f\colon (A,a)\to (B,b)$ which preserve the point:
\[\bfig
\Atriangle<500,400>[1`UA`UB;x`y`Uf]
\efig
\]
\end{definition}

\begin{remark}
  As for $\Coalg H$, the
  quotients of a pointed coalgebra $(A,a,x)$ are precisely the
  morphisms with this domain carried by strong epimorphisms of~\A. And
  subcoalgebras are precisely the morphisms with codomain $(A,a,x)$
  carried by monomorphisms of~\A. Moreover, Remark~\ref{R-str}
  immediately extends to $\Coalg_p H$.
\end{remark}

\begin{definition}\label{D-well}
A \textbf{well-pointed coalgebra} is a pointed coalgebra with no proper quotient and no proper subobject.
\end{definition}

\begin{remb}\hfill
  \begin{enumerate}[(a)]
  \item To say that a pointed coalgebra $(A,a,x)$ has no proper subobject means precisely that $x$ generates the coalgebra $(A,a)$: whenever a  subcoalgebra $m\colon (B,b)\to (A,a)$ contains $x$ (in the image of $Um$) then $m$ is invertible. We call such coalgebras \textit{reachable}. Thus:
 \[
  \text{well-pointed} = \text{simple} + \text{reachable}.
  \]
  
\item It is easy to see that if $f: (A,a,x)\to (B,b,y)$ is a morphism of pointed coalgebras, and if $(A,a)$ is simple
and $(B,b,y)$ is reachable, then $f$ is an isomorphism.

\item In the case where $\A=\Set$ and $H=\PP$, reachability of a pointed graph means that every vertex can be reached from the chosen one. Suppose $H$ is an arbitrary set functor preserving intersections. Then reachability of coalgebras can be translated to
reachability of its canonical graph, see Definition~\ref{D-can}:
\end{enumerate}
\end{remb}

\begin{lemma}
  \label{L-wp} Let $H$ be a set functor preserving intersections. Then
  a pointed coalgebra~$(A,a,x)$ is reachable iff its pointed
  canonical graph is, i.e., every vertex can be reached from~$x$ by a
  directed path.
\end{lemma}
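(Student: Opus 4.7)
The plan is to reduce both directions of the equivalence to the observation, afforded by the pullback square of Remark~\ref{R-can}, that a subset $m\colon B\hookrightarrow A$ carries a subcoalgebra of $(A,a)$ if and only if $B$ is closed under the edges of the canonical graph, i.e., $\tau_A(a(y))\subseteq B$ for every $y\in B$. More precisely, since the square
\[\bfig
\square<800,400>[HB`\PP B`HA`\PP A;\tau_B`Hm`\PP m`\tau_A]
\efig\]
is a pullback, we have for any $z\in HA$ the equivalence $z\in Hm[HB]\iff\tau_A(z)\subseteq B$. Applying this pointwise to the values of $a$ shows that the restriction $b\colon B\to HB$ with $a\cdot m=Hm\cdot b$ exists exactly when $B$ contains all canonical neighbors of each of its elements.

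For the forward direction, assume $(A,a,x)$ is reachable as a pointed coalgebra. Let $B\subseteq A$ be the set of all vertices reachable from $x$ by a directed path in the canonical graph. Then $x\in B$ and $B$ is closed under canonical edges by construction. By the observation above, $B$ inherits a unique coalgebra structure $b\colon B\to HB$ making the inclusion $m\colon B\hookrightarrow A$ a coalgebra homomorphism. Thus $m$ represents a subcoalgebra whose image contains $x$, and reachability of $(A,a,x)$ forces $m$ to be invertible, i.e., $B=A$.

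For the converse, assume every vertex of $A$ is reachable from $x$ in the canonical graph, and let $m\colon(B,b,y)\hookrightarrow(A,a,x)$ be a pointed subcoalgebra (so $x$ lies in the image of $Um$). From $a\cdot m=Hm\cdot b$ and the pullback criterion, every canonical neighbor of a point of $m[B]$ lies again in $m[B]$. Since $x\in m[B]$ and $m[B]$ is closed under the canonical edges, the reachability hypothesis gives $m[B]=A$, so $m$ is invertible.

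The only delicate point is the use of the pullback property from Remark~\ref{R-can}: one must verify, in the forward direction, that the subset $B$ of reachable vertices really does carry a coalgebra structure, and in the backward direction, that the commutativity of $a\cdot m=Hm\cdot b$ translates into closure of $m[B]$ under canonical edges. Both steps are immediate consequences of the equivalence $z\in Hm[HB]\iff\tau_A(z)\subseteq B$, which is exactly the content of the pullback; no further properties of $H$ are needed beyond preservation of intersections.
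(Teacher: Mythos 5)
Your proposal is correct and follows essentially the same route as the paper: both arguments hinge on the pullback square of Remark~\ref{R-can}, which identifies pointed $H$-subcoalgebras of $(A,a,x)$ with subsets closed under the canonical-graph edges (i.e., pointed $\PP$-subcoalgebras of the canonical graph), and then transfer reachability across this correspondence. You merely make explicit the graph-theoretic step the paper leaves implicit, namely that the set of vertices reachable from $x$ is itself such a closed subset.
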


\begin{proof}
Recall $\tau_A\colon HA\to\PP A$ from Remark~\ref{R-can}. Take a
subcoalgebra~$(A',a')$ containing~$x$:
\[\bfig
\dtriangle|llb|/<--`>`>/<600,400>[A'`1`A;x'`m`x]
\hSquares(600,0)|aallrbb|<400>[A'`HA'`\PP A'`A`HA`\PP A;%
a'`\tau_{A'}``Hm`\PP m`a`\tau_A]
\efig\]
Then $A'$~is a subcoalgebra of the canonical graph~$(A,\tau_A\tec a)$ (as
a pointed coalgebra of~\PP). And vice versa: if $m\colon A'\to A$ is
a subobject of the pointed canonical graph then, since the square in
Remark~\ref{R-can} is a pullback, we have a unique structure
$a'\colon A'\to HA'$ of a subobject of~$(A,a,x)$. Therefore,
$(A,a,x)$~is reachable w.r.t.~$H$ iff $(A,\tau_A\tec a,x)$~is
reachable w.r.t.~\PP. 
\takeout{%
It is easy to see that the latter means that
every element can be reached from~$x$ by a directed path.
}%
\end{proof}

\begin{examples}
\label{E-wp}\hfill
\begin{enumerate}[(a)]
\item A deterministic automaton with a given initial state is a pointed
  coalgebra for $HX=X^I\times\{0,1\}$. Reachability means that every
  state can be reached (in finitely many steps) from the initial state. 
  The usual terminology is that reachability and observability (see
    Example~\ref{E-simp} (3)) together are called \emph{minimality.
    Thus, well-pointed coalgebras are precisely the minimal automata.}

\item For the power-set functor the pointed coalgebras are the pointed
  graphs. Well-pointed means reachable and  simple  (Example \ref{E-simp} (4)). See
  Subsection~\ref{subsF} for more details.
\end{enumerate}
\end{examples}

\begin{notation}
\label{N-wp}
If $H$~preserves intersections, then there is a canonical process of
turning an arbitrary  pointed  coalgebra~$(A,a,x)$ into a well-pointed one:
form the  simple quotient (see Proposition~\ref{P-simp}) pointed by
$Ue_{(A,a)}\tec x\colon1\to U\bar{A}$, then form the least subcoalgebra
containing that point: \[\bfig
\morphism|b|<400,0>[1`UA;x]
\morphism(400,0)|b|<500,0>[UA`U\bar{A};Ue_{(A,a)}]
\morphism|l|<900,400>[1`U\bar{A}_0;x_0]
\morphism(900,400)|b|<0,-400>[U\bar{A}_0`U\bar{A};Um]
\square(1200,0)<700,400>[\bar{A}_0`H\bar{A}_0`\bar{A}`H\bar{A};%
\bar{a}_0`m`Hm`\bar{a}]
\efig\]
Then $(\bar{A}_0,\bar{a}_0,x_0)$~is well-pointed by
Corollary~\ref{C-sub}. We denote the well-pointed coalgebra $(\bar{A}_0,\bar{a}_0,x_0)$~(unique up to isomorphism
) by 
$$\wellp(A,a,x)$$
and call it the \textit{well-pointed modification of $(A,a,x)$}.
\end{notation}

\begin{example}
\label{E-stat}
For deterministic automata
our process $A\to/|->/\bar{A}_0$ above means that we first merge the
states that are observably equivalent and then discard the states
that are not reachable. A more efficient way is first discarding the
unreachable states and then merging observably equivalent pairs.
Both ways are possible if our functor preserves inverse images:
\end{example}

\begin{remark}
\label{R-inv}
Let $H$ and $U$ preserve inverse images. Then a quotient of a reachable
pointed coalgebra is reachable. Indeed, given such a quotient~$e$ and
its subcoalgebra~$m$ containing the given point~$x$, form the inverse
image~$m'$ of~$m$ along~$e$, and apply $U$ to this pullback:
\[\bfig
\square<1000,600>[UA'`UA`U\bar{A}_0`U\bar{A};Um'`Ue'`Ue`Um]
\morphism(500,300)|b|<500,300>[1`UA;x]
\morphism(500,300)/-->/<-500,300>[1`UA';]
\morphism(500,300)|r|<-500,-300>[1`U\bar{A}_0;x_0]
\efig\]
Since $H$~preserves inverse images, $m'\colon A'\to A$ is a 
subcoalgebra of~$A$, and, since $U$ preserves inverse images too,  the universal property of pullbacks implies
that $A'$~contains the given point~$x$. Consequently, $m'$~is
invertible, thus, $m\tec e'$~is strongly epic, therefore $m$~ is invertible.

Thus, we have an alternative procedure of forming
well-pointed coalgebras from pointed ones, $(A,a,x)$: first form the
least pointed subcoalgebra~$(A_0,a_0,x)$. Then form the simple
quotient of~$(A_0,a_0)$.
\end{remark}

\begin{notation}
\label{N-nu}
The collection of all well-pointed coalgebras up to isomorphism is
denoted by $T$.
  For every coalgebra $a\colon A\to HA$ we have a function
\[
a^+\colon UA\to T
\qquad
\text{defined by}
\qquad
a^+(x)=\wellp(A,a,x).
\]
(Notice that the well-pointed modification $\text{wp}(A,a,x)$ is unique up to
isomorphism. Thus we have precisely one choice in $T$.)
\end{notation}

\begin{lemma}\label{L-plus} Let $H$ preserve monomorphisms. For every coalgebra homomorphism $h:(A,a)\to(B,b)$ the triangle 
\[\bfig
\Vtriangle[UA`UB`T;Uh`a^+`b^+]
\efig\]
commutes.
\end{lemma}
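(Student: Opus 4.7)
The plan is to show directly that $\wellp(A,a,x)$ and $\wellp(B,b,Uh(x))$ are isomorphic as pointed coalgebras, by exhibiting a mono between them that must be an isomorphism by reachability. Throughout I use Remark~\ref{R-str} (the factorization system on $\Coalg H$) and Proposition~\ref{P-simp}, both of which apply since $H$ preserves monomorphisms. Write $(W_A,w_A,x_A)=\wellp(A,a,x)$ as a subcoalgebra $m_A\colon(W_A,w_A)\hookrightarrow(\bar A,\bar a)$ with $Um_A(x_A)=Ue_{(A,a)}(x)$, and analogously $(W_B,w_B,x_B)=\wellp(B,b,Uh(x))$ via $m_B\colon(W_B,w_B)\hookrightarrow(\bar B,\bar b)$ with $Um_B(x_B)=Ue_{(B,b)}(Uh(x))$.

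First, Proposition~\ref{P-simp} applied to $h$ yields a unique $\bar h\colon(\bar A,\bar a)\to(\bar B,\bar b)$ with $\bar h\tec e_{(A,a)}=e_{(B,b)}\tec h$. Consider the composite $\bar h\tec m_A\colon W_A\to\bar B$ and factor it in $\Coalg H$ as $\bar h\tec m_A=m''\tec e''$ with $e''$ a strong epi and $m''$ a monomorphism. Since $W_A$ is well-pointed, it is simple, so $e''$ is invertible; hence $m''\colon(W_A',w_A')\hookrightarrow(\bar B,\bar b)$ represents a subcoalgebra isomorphic to $(W_A,w_A)$, and it is canonically pointed by $Ue''(x_A)$. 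Chasing the definitions we compute
\[
Um''\bigl(Ue''(x_A)\bigr)=U(\bar h\tec m_A)(x_A)=U\bar h\bigl(Ue_{(A,a)}(x)\bigr)=Ue_{(B,b)}\bigl(Uh(x)\bigr),
\]
so $(W_A',w_A',Ue''(x_A))$ is a pointed subcoalgebra of $(\bar B,\bar b)$ carrying the same point as $W_B$.

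Now $W_B$ is, by construction, the intersection (least subcoalgebra) of all subcoalgebras of $(\bar B,\bar b)$ containing $Ue_{(B,b)}(Uh(x))$. Hence there is a monomorphism $\iota\colon W_B\hookrightarrow W_A'$ of pointed coalgebras, i.e.\ with $U\iota(x_B)=Ue''(x_A)$. Composing with $(e'')^{-1}$ gives a monomorphism of pointed coalgebras $W_B\hookrightarrow W_A$ sending $x_B$ to $x_A$. But $W_A$ is well-pointed, in particular reachable from $x_A$, so any subcoalgebra of $W_A$ whose image contains $x_A$ is all of $W_A$; therefore this mono is invertible, proving $W_A\cong W_B$ as pointed coalgebras, which is exactly $a^+(x)=b^+(Uh(x))$.

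The only slightly delicate point is verifying that the simple-quotient construction is functorial (which is Proposition~\ref{P-simp}) and that $e''$ is invertible because $W_A$ is simple; the rest is a direct manipulation of the universal properties and the minimality built into $\wellp$. The argument does not require $H$ to preserve inverse images or intersections beyond what is already assumed.
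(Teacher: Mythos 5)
Your proof is correct and follows essentially the same route as the paper: pass to the simple quotients via $\bar h$ from Proposition~\ref{P-simp}, transport the well-pointed subcoalgebra along $\bar h$ (where simplicity makes the image a subcoalgebra isomorphic to it), and use minimality of the least pointed subcoalgebra together with reachability to conclude the comparison mono is invertible. The only difference is organizational: the paper splits into the case of two simple coalgebras plus a reduction step, whereas you handle both at once on $(\bar A,\bar a)$ and $(\bar B,\bar b)$.
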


\begin{proof}
(a) Assume that both coalgebras above are simple. In particular, $h$ is a
monomorphism by   simplicity  of $(A,a)$. 
For every element $x\in UA$ we know that $a^+(x)$ is the subcoalgebra $m:(A_0,a_0)\to (A,a)$ generated by $x$. Therefore $h\cdot m:(A_0,a_0)\to (B,b)$ is a subcoalgebra of $(B,b,Uh(x))$, and since $(A_0, a_0, x_0)$, with $Um(x_0)=x$, is well-pointed, we conclude that it is  isomorphic to $b^+(Uh(x))$. Now $T$ contains just one representative of every well-pointed coalgebra up to isomorphism, consequently, $b^+(Uh(x))=a^+(x)$.

(b)
If the two coalgebras are arbitrary, form the reflection $\bar
{h}$ of $h$ (see Proposition \ref{P-simp}):
\[\bfig
\square<700,400>[(A,a)`(B,b)`(\bar
{A}, \bar{a})`(\bar{B}, \bar
{b});h`e_{(A,a)}`e_{(B,b)}`\bar
{h}]
\efig\]
Then for every element $x\in UA$ we have that $a^+(x)$ is the subcoalgebra of $(\bar
{A}, \bar
{a})$ generated by $\bar
{x}= Ue_{(A,a)}(x)$, thus $a^+(x)=\bar
{a}^+(\bar
{x} )$; analogously for $b^+(Uh(x))$. By applying (a) to $\bar
{h}$ in lieu of $h$ we conclude $a^+(x)=\bar
{a}^+(\bar
{x})=\bar
{b}^+(U\bar
{h}(\bar
{x}))=b^+(Uh(x))$.
\end{proof}

\begin{lemma} 
\label{plusinj}
If $(A,a)$ is a simple coalgebra, then $a^+ :UA \to T$ is injective.
\end{lemma}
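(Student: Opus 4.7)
The plan is to unravel the definition of $a^+$ for a simple $(A,a)$, reducing the claim to the uniqueness of morphisms between simple coalgebras (Observation~\ref{ord}).

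Since $(A,a)$ is already simple, the construction of $\mathsf{wp}(A,a,x)$ in Notation~\ref{N-wp} collapses: the simple quotient $e_{(A,a)}$ is an isomorphism, and so $a^+(x)$ is represented by the least subcoalgebra of $(A,a)$ containing $x$. Write this subcoalgebra as $m_x\colon(A_x,a_x,\bar x_x)\hookrightarrow(A,a,x)$, with $Um_x(\bar x_x)=x$. By Corollary~\ref{C-sub}, each $(A_x,a_x)$ is simple since it sits inside the simple $(A,a)$.

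Now suppose $a^+(x_1)=a^+(x_2)$. Since $T$ contains only one representative of each well-pointed coalgebra up to isomorphism, we get an isomorphism of pointed coalgebras
\[
\varphi\colon(A_{x_1},a_{x_1},\bar x_{x_1})\;\xrightarrow{\ \cong\ }\;(A_{x_2},a_{x_2},\bar x_{x_2}).
\]
Then $m_{x_1}$ and $m_{x_2}\cdot\varphi$ are two coalgebra homomorphisms from the simple coalgebra $(A_{x_1},a_{x_1})$ into the simple coalgebra $(A,a)$. By Observation~\ref{ord}, simple coalgebras form an ordered class, so $m_{x_1}=m_{x_2}\cdot\varphi$. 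Applying $U$ and evaluating at $\bar x_{x_1}$ yields
\[
x_1 \;=\; Um_{x_1}(\bar x_{x_1})\;=\;Um_{x_2}(U\varphi(\bar x_{x_1}))\;=\;Um_{x_2}(\bar x_{x_2})\;=\;x_2,
\]
where the middle equality uses that $\varphi$ preserves the distinguished point. Hence $a^+$ is injective.

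I do not anticipate any genuine obstacle: the argument is short once one recognises that for simple $(A,a)$ the well-pointed modification at $x$ is merely the generated subcoalgebra of $(A,a)$. The only subtle point is the appeal to Corollary~\ref{C-sub} to make Observation~\ref{ord} applicable to the generated subcoalgebras $(A_{x_i},a_{x_i})$.
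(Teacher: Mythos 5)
Your proof is correct and follows essentially the same route as the paper's: you identify $\wellp(A,a,x)$ with the subcoalgebra of $(A,a)$ generated by $x$ (possible since $(A,a)$ is simple), compare $m_{x_1}$ with $m_{x_2}\cdot\varphi$ via Observation~\ref{ord}, and evaluate at the distinguished point to conclude $x_1=x_2$. The explicit appeals to the invertibility of $e_{(A,a)}$ and to Corollary~\ref{C-sub} only spell out what the paper leaves implicit.
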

\proof
Suppose that $\text{wp}(A,a,x^1) = \text{wp}(A,a,x^2)$.
Let $m_i:(A_i,a_i, x^i_0)\to (A,a, x^i)$ denote the smallest  subcoalgebra containing $x^i$ ($i=1,2$) which is isomorphic to  $\text{wp}(A,a,x^i)$. Let $$u\colon(A_1,a_1,x_0^1)\to (A_2, a_2, x_0^2)$$ be an isomorphism. Then since   $(A,a)$ is
 simple, we have $m_1=m_2\cdot u$ due to Observation \ref{ord}. From
 $Uu(x_0^1)=x_0^2$ we
 get 
 \[
 x^2=Um_2(x_0^2)=U(m_2u)(x_0^1)=Um_1(x_0^1)=x^1. \eqno{\qEd}
 \]

\subsection{Final coalgebras}\label{final-coal}
\hfill

\smallskip\noindent
We remind the reader that in this section, we assume that
the endofunctor $H$ preserves intersections.

\begin{theorem}
\label{T-final}
$H$ has a final coalgebra iff it has only a set of well-pointed
coalgebras up to isomorphism. Moreover, a set $T$ of representatives of well-pointed coalgebras carries the final coalgebra:
$$U(\nu H)=T.$$
\end{theorem}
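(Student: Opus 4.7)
The statement splits into a biconditional together with the identification $U(\nu H) = T$. I plan to dispatch the forward implication cheaply using Theorem~\ref{T-set-s}, and then build the backward implication in two stages: first bounding the simple coalgebras by a set (to apply Theorem~\ref{T-set-s} and obtain $\nu H$), then verifying that the carrier of $\nu H$ is canonically in bijection with $T$.

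For the forward direction, every well-pointed coalgebra is in particular simple, so the existence of $\nu H$ together with Theorem~\ref{T-set-s} bounds the simple coalgebras by a set, hence the well-pointed ones as well. For the backward direction, assume $T$ is a set. The key input is Lemma~\ref{plusinj}: for a simple coalgebra $(A,a)$ the map $a^+\colon UA \to T$ is injective, whence $|UA| \le |T|$. Choose, for each cardinal $\kappa \le |T|$, a fixed representative set $X_\kappa$ of that cardinality. Unique transportability of $U$ guarantees that any $A$ with $|UA| = \kappa$ is isomorphic to some object $B$ with $UB = X_\kappa$; fibre-smallness then provides only a set of such $B$'s, and on each such $B$ only the set $\A(B, HB)$ of possible coalgebra structures. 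Since the cardinals $\kappa \le |T|$ form themselves a set, I conclude that $H$ has, up to isomorphism, only a set of simple coalgebras, so Theorem~\ref{T-set-s} produces $\nu H$.

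To identify $U(\nu H)$ with $T$, I would examine the canonical map $(\nu H)^+ \colon U(\nu H) \to T$. Injectivity follows from Lemma~\ref{plusinj}, since $\nu H$ is simple by Example~\ref{E-simp}(1). For surjectivity, pick any $t \in T$ with representative well-pointed coalgebra $(A,a,x)$; since $(A,a,x)$ is already well-pointed we have $a^+(x) = t$, and the unique homomorphism $h\colon (A,a) \to \nu H$ combined with Lemma~\ref{L-plus} yields $(\nu H)^+(Uh(x)) = a^+(x) = t$. Thus $(\nu H)^+$ is a bijection, and unique transportability lets me transport the coalgebra structure along this bijection, producing an isomorphic final coalgebra whose underlying set is literally $T$.

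The main obstacle I expect is the cardinality bookkeeping in the ``only a set of simple coalgebras'' step: one has to juggle the injection $a^+$, fibre-smallness and unique transportability simultaneously to ensure that the simple coalgebras really form a set rather than a proper class. Once this is in hand the rest is essentially a clean application of Lemmas~\ref{L-plus} and~\ref{plusinj} together with Theorem~\ref{T-set-s}.
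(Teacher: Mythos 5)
Your proposal is correct and follows essentially the same route as the paper's proof: forward direction via Theorem~\ref{T-set-s}, backward direction by using Lemma~\ref{plusinj} plus fibre-smallness and unique transportability to bound the simple coalgebras by a set, and then the same injectivity/surjectivity argument (Lemmas~\ref{plusinj} and~\ref{L-plus}) showing $\tau^+\colon U(\nu H)\to T$ is a bijection along which the final coalgebra structure is transported. No essential difference from the paper's argument.
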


\begin{remark}
  \label{R-final}
The final coalgebra for $H$ is, as we will also prove, characterized up to isomorphism as a coalgebra
$\bar \tau: \bar{T} \to H \bar T$ 
 with two properties:  $U\bar{T} = T$, and
for every coalgebra $(A,a)$ the function $a^+\colon UA\to T$ carries a
coalgebra homomorphism from $(A,a)$ to $(\bar T, \bar \tau)$.
\end{remark}

\begin{proof} 
The necessity follows from Theorem \ref{T-set-s}.
 For the sufficiency,  fix a set $T$ of representative well-pointed coalgebras.
We  also use Theorem~\ref{T-set-s}  to show that
 $H$ has a final coalgebra. Indeed, if $(A,a)$ is a simple coalgebra, then by Lemma~\ref{plusinj}, 
  $UA$ has cardinality at most  $\card T$. Since $\A$ is small-fibred
  and uniquely transportable, it has up to isomorphism of $\A$  only a
  set of objects whose underlying sets have cardinality at most $\card
  T$. Consequently, $H$ has up to isomorphism of $\Coalg H$ only a set
  of simple coalgebras: given an object $A$ with $\A(A,HA)$ of
  cardinality $\alpha$, there are at most $\alpha$ pairwise
  non-isomorphic coalgebras $b\colon B \to HB$ with $A\cong B$ in $\A$.

\takeout{
(1)  We first prove that every simple coalgebra $(A,a)$ has the property that given elements $x^1,\, x^2\in UA$ such that the pointed subcoalgebras generated (in the sense of  Remark \ref{D-well}) by $x^1$ and $x^2$ are isomorphic in $\Coalg_p H$, then $x^1=x^2$. Let $m_i:(A_i,a_i, x^i_0)\to (A,a, x^i)$ denote the smallest subcoalgebra containing $x^i$ ($i=1,2$). Let $$u\colon(A_1,a_1,x_0^1)\to (A_2, a_2, x_0^2)$$ be an isomorphism. Then since $(A_1,a_1)$ and $(A,a)$ are simple, we have $m_1=m_2\cdot u$ due to Observation \ref{ord}. From $Uu(x_0^1)=x_0^2$ we get $$x^2=Um_2(x_0^2)=U(m_2u)(x_0^1)=Um_1(x_0^1)=x^1.$$

$H$ has a final coalgebra. To prove this we apply Theorem
  \ref{T-set-s}. In order to see that there is (up to isomorphism)
only a set of simple coalgebras for $H$, observe that for every simple coalgebra $(A,a)$ we have  that the function 
$$a^{+}:UA\to T\, ,$$
assigning to every element $x\in UA$ the well-pointed modification of $(A,a,x)$ (see Notation \ref{N-wp}), is monic, by (1). Thus, $UA$ has cardinality at
most  $\card T$. Since $\A$ is small-fibred and uniquely
transportable, it has up to isomorphism of $\A$  only a set of objects
whose underlying sets have cardinality at most $\card
T$. Consequently, $H$ has up to isomorphism of $\Coalg H$ only a set
of simple coalgebras: given an object $A$ with $\A(A,HA)$ of
cardinality $\alpha$, there are at most $\alpha$ pairwise
non-isomorphic coalgebras $b\colon B \to HB$ with $A\cong B$ in $\A$.
}

Given the coalgebra structure
$$\tau:\nu H\to H(\nu H)$$
of the final coalgebra, we now prove that the map $\tau^+:U(\nu H)\to
T$ is a bijection. 
Indeed, $\tau^+$ is monic due to 
the simplicity of   $\nu H$  (see Example~\ref{E-simp} (1)) and
Lemma~\ref{plusinj}. 
To  check the surjectivity, let $a^+(x) \in T$, where $(A,a)$ is a coalgebra and $x\in UA$.
Then by Lemma~\ref{L-plus}, $a^+(x) = \tau^+(Uh(x))$, where $h: A\to \nu H$ is the coalgebra homomorphism.
This shows that the image of $\tau^+$ contains $a^+(x)$.

 \takeout{
Firstly, $\tau^+$ is monic due to (1): recall that
$\nu H$ is simple. And secondly, $\tau^+$ is epic because for every
member $(A,a,x)$ of $T$ the unique homomorphism $f:(A,a)\to (\nu H,
\tau)$ is a  monomorphism by Corollary \ref{C-simp}. Consequently, for
$$y=Uf(x)\qquad \text{in}\qquad U(\nu H)$$
we have
$$\tau^+(y)=(A,a,x).$$
Indeed, $\tau^+(y)$ is, by definition, the subcoalgebra of $(\nu H, \tau, y)$ generated by $y$. Now $f:(A,a,x)\to (\nu H, \tau, y)$ is a subcoalgebra containing $y$ ($=Uf(x)$), and since $(A,a,x)$ is reachable, we conclude that it is isomorphic, in $\Coalg_p H$, to $\tau^+(y)$. However, $T$ contains no distinct members isomorphic in $\Coalg_p H$, thus, $\tau^+(y)=(A,a,x)$. 
}

Since $\A$ is uniquely transportable, there exists a unique object $\bar{T}$ of $\A$ and a unique isomorphism $i
:\nu H\to \bar{T}$ with $\tau^+=Ui
$. Define a coalgebra $\bar{\tau}: \bar{T}\to H \bar{T}$ so that $i
$ is a coalgebra isomorphism:
$\bar{\tau}=Hi
\cdot \tau\cdot i^{-1}.$ The coalgebra $(\bar{T},\bar{\tau})$ is final because 
for every  coalgebra $(A,a)$ we have a unique coalgebra homomorphism 
$a^*\colon(A,a)\to (\nu H, \tau)$, hence a unique coalgebra
homomorphism $i\cdot a^*\colon (A,a)\to (\bar{T}, \bar{\tau})$:
\[\bfig
\square/->`->`->`->/<800,400>[\nu H`H(\nu H)`\bar{T}`H(\bar{T});`i
`Hi
`\bar{\tau}]
\square(0,400)/->`->`->`->/<800,400>[A`HA`\nu H`H(\nu H);
a`a^*`Ha^*`\tau]
\morphism(50,80)|r|/->/<0,310>[`;i^{-1}]
\efig\]

We conclude with the verification of Remark~\ref{R-final}.
 First
 we show that $\bar{\tau}^+=\mbox{id}$. To see this apply
  Lemma~\ref{L-plus} to $i$ in order to get $\tau^+=\bar{\tau}^+\cdot
  Ui$. But since $\tau^+=Ui$, we get $Ui=\bar{\tau}^+\cdot Ui$, hence
  $\bar{\tau}^+=\mbox{id}$ because $Ui$ is an isomorphism. 

  To see that $a^+  = U(i\cdot a^*)$, we use Lemma \ref{L-plus} again: 
\[
a^+=\bar{\tau}^+\cdot U(i\cdot a^*)=\mbox{id}\cdot U(i\cdot
a^*)=U(i\cdot a^*).
\]
For the uniqueness, suppose that $(T',\tau': T'\to HT')$  also fulfils 
$UT' = T$ and 
for all coalgebras $(A,a)$,  the map $a^+: UA\to T$
is  $U(b)$ for some coalgebra homomorphism $b: (A,a)\to (T',\tau')$.
We apply this with $(A, a) = (\bar{T},\bar{\tau})$, and so $\id = \bar{\tau}^+:  T \to T$ is $Uf$ for some 
coalgebra morphism $f: (\bar{T},\bar{\tau}) \to (T',\tau')$.
However, by unique transportability, there is some isomorphism $g: \bar{T}\to T'$ such that $Ug = \id$.
And by faithfulness, $f = g$.  Thus the coalgebras 
 $(\bar{T}, \bar{\tau})$ and $(T', \tau')$ are isomorphic.
\end{proof}

\begin{example}\label{E-sets} Let $H$ be a set functor preserving intersections. If $T$ is a set of representatives of all well-pointed coalgebras, then $T$ is a final coalgebra. Its coalgebra structure assigns to every member $(A,a,x)$ of $T$ the following member of $HT$:
\begin{equation}
  \label{coalg}
  1\to^x A\to^a HA\to^{Ha^+}HT.
\end{equation}
See Section \ref{ctvrta} for numerous concrete examples. 
\end{example}

\begin{example}\label{E-fail}  
  If $H$ does not preserve intersections the theorem can fail: the functor in Example \ref{E-count}
 has a proper class of well-pointed coalgebras.
 \end{example}

\begin{example}\label{E-seq-T} For the set functor
$$HX=X^I\times\{0,1\}$$
presenting deterministic automata the well-pointed coalgebras are precisely the minimal (i.e., reachable and observable $=$ simple) automata. Since every language $L\subseteq I^{*}$ is accepted by a minimal automaton, unique up to isomorphism, we get the more usual description of the final coalgebra
$$\begin{array}{rl}\nu \, X\, .\,  X^I\times \{0,1\}&= \mbox{all minimal automata} \\
&\cong \PP I^{*}\;\mbox{(all languages).}
\end{array}$$
\end{example}

\begin{remark}\label{R-class}
Actually \textit{every} set functor $H$ has a final coalgebra, but this can be a proper class. More precisely, $H$  has an extension $H^{*}$ to the category of classes and functions unique up to a natural isomorphism, and $\nu H^{*}$ exists, see \cite{AMV0}.
\end{remark}

\begin{corollary}\label{C-class} 
For every set functor $H$ preserving intersections a class of representative well-pointed $H$-coalgebras with the coalgebra structure
given by the formula \eqref{coalg} is a final coalgebra for $H^{*}$.
\end{corollary}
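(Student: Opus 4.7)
The plan is to adapt the proof of Theorem \ref{T-final} to the category of classes, relying on the fact that $H^*$ agrees with $H$ on set carriers. First, the formula \eqref{coalg} defines an $H^*$-coalgebra structure $\bar\tau\colon T \to H^*T$: for each $(A,a,x) \in T$ the domain $A$ of $a^+\colon A \to T$ is a set, so $H^*a^+$ sends $a(x) \in HA = H^*A$ to an element of $H^*T$, as required. It remains to produce, for every $H^*$-coalgebra $(B,\beta)$, a unique coalgebra homomorphism into $(T,\bar\tau)$.

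The crucial ingredient is that every element $x \in B$ lies in a set-carrier subcoalgebra of $(B,\beta)$. Since $H$ preserves intersections, Gumm's subnatural transformation $\tau$ from Remark \ref{R-can} extends to $\tau_B\colon H^*B \to \PP^*B$, and $\tau_B(y)$ is always a set because each $y \in H^*B$ comes from $HY$ for some set $Y \subseteq B$ and $\tau_B(y)$ is the least such $Y$. Iteratively closing $\{x\}$ under the set-valued operation $y \mapsto \tau_B(\beta(y))$ produces an $\omega$-chain of sets whose union $B_x$ is again a set closed under this operation. The pullback property of $\tau$ then equips $B_x$ with a subcoalgebra structure $\beta_x$ giving the reachable subcoalgebra $(B_x, \beta_x) \hookrightarrow (B,\beta)$ generated by $x$. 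We set
\[
\beta^+\colon B \to T, \qquad \beta^+(x) = \beta_x^+(x) = \wellp(B_x, \beta_x, x).
\]
By Lemma \ref{L-plus} applied to the inclusion of $(B_x, \beta_x)$ into any other set-carrier subcoalgebra of $(B,\beta)$ containing $x$, this value is independent of the chosen witness.

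To verify that $\beta^+$ is an $H^*$-coalgebra homomorphism, the equation $\bar\tau \cdot \beta^+ = H^*\beta^+ \cdot \beta$ is checked pointwise: at $x \in B$ both sides factor through the set-carrier subcoalgebra $(B_x, \beta_x)$, reducing the claim to the fact from Theorem \ref{T-final} that $\beta_x^+\colon (B_x, \beta_x) \to (T,\bar\tau)$ is an $H$-coalgebra homomorphism. Uniqueness follows symmetrically: any $H^*$-coalgebra homomorphism $g\colon (B,\beta) \to (T,\bar\tau)$ restricts to an $H$-coalgebra homomorphism $(B_x, \beta_x) \to (T, \bar\tau)$, which by Theorem \ref{T-final} coincides with $\beta_x^+$, forcing $g(x) = \beta^+(x)$. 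The main obstacle is the key observation that reachable subcoalgebras of class-carrier coalgebras are set-sized; without the hypothesis that $H$ preserves intersections, and hence without the existence of Gumm's transformation, this step fails and the corollary is not available.
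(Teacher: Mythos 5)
Your proof is correct, and it follows the route the paper intends: the paper's entire proof of Corollary~\ref{C-class} is the sentence ``completely analogous to that of Theorem~\ref{T-final}'', and your argument is a careful working-out of that analogy. The one place where the analogy is \emph{not} automatic is exactly the point you isolate: Theorem~\ref{T-final} gets existence of $\nu H$ from Theorem~\ref{T-set-s} (wellpoweredness plus a \emph{set} of simple coalgebras), and that argument does not transfer verbatim to the category of classes. Your replacement --- every element of a class-carried $H^{*}$-coalgebra lies in a set-carried subcoalgebra, proved by iterating Gumm's least-support map $y\mapsto\tau_B(\beta(y))$ from Remark~\ref{R-can} through $\omega$ stages --- is precisely the Small Subcoalgebra Lemma of Aczel and Mendler, which the paper itself invokes (with citation rather than proof) for the companion statement Corollary~\ref{C-mu}. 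So you have supplied, rather than merely cited, the one genuinely new ingredient; the rest (well-definedness of $\beta^{+}$ via Lemma~\ref{L-plus}, the pointwise verification of the homomorphism square by factoring through $(B_x,\beta_x)$, and uniqueness by restriction) is the same mechanism as in the proof of Theorem~\ref{T-final} and Remark~\ref{R-final}. Your closing remark that intersection preservation is what makes the least-support map, and hence the set-sized reachable part, available is also accurate and matches Example~\ref{E-fail}.
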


The proof is completely analogous to that of Theorem \ref{T-final}.

\begin{example}\label{E-class}  The final coalgebra of the power set functor is the class of all well-pointed graphs (up to isomorphism).
\end{example}

\begin{construction}\label{C-mon}  Now let $\A$ be a variety
  of algebras determined by a set $E$ of equations, for a fixed
  signature $\Sigma$. Given a set
  $T$ representing all well-pointed coalgebras up to isomorphism, we
  turn it into a final coalgebra of $H$ as follows.

  \begin{enumerate}[(a)]
  \item $T$ as a $\Sigma$-algebra. For every $n$-ary symbol $\sigma \in \Sigma$  define $\sigma^T:T^n \to
    T$ as follows: Given an $n$-tuple of elements $(A_i, a_i,x_i)$ of $T$ form a coproduct 
    $$A=\coprod_{i<n}A_i\; \; \quad \text{in}\; \A$$
    and obtain a coproduct $(A,a)=\coprod_{i<n}(A_i,a_i)$ in $\Coalg H$
    together with elements $\hat{x}_i \in A$ corresponding to $x_i\in A_i$. 
      For the element
      \[
      y=\sigma^A (\hat{x}_i)_{i<n}\quad \text{of $A$}
      \]
      we define the result of $\sigma^T$ as the well-pointed modification of $(A,a,y)$:
      \begin{equation}\label{form}\sigma^T(A_i,a_i, x_i)_{i<n}
        = \wellp(A,a,y).
      \end{equation}

  \item $T$ will be proved to satisfy all equations in $E$, i.e., $T$
    is an object of $\A$. And all $a^+$ in Notation \ref{N-nu} are
    $\Sigma$-homomorphisms. (See Lemma \ref{C_lem1}.)

  \item $T$ as a coalgebra. We have a function 
    \[
    \tau\colon T \to  HT
    \]
    defined precisely as in Example \ref{E-sets}:
    \[
    \tau(A,a,x)=Ha^+(a(x))
    \]
    We prove that $\tau$ is a $\Sigma$-homomorphism. (See Proposition \ref{P-mon}.)

  \item We derive that $(T,\tau)$ is a final coalgebra for $H$.
  \end{enumerate}
\end{construction}

\begin{lemma}\label{C_lem1}
  The $\Sigma$-algebra $T$ lies in $\A$ and for every coalgebra $(A,a)$ we have a $\Sigma$-homomorphism
  $a^+:A \to T$.
\end{lemma}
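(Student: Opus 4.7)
My plan is to prove two things in turn: first, that for every coalgebra $(A,a)$ the map $a^+:UA\to T$ preserves each basic operation $\sigma\in\Sigma$ (without yet knowing $T\in\A$); second, to use this together with coproducts in $\Coalg H$ to transport the equations of $E$ from objects of $\A$ to $T$. The second item immediately gives $T\in\A$ and promotes the first item to the assertion that each $a^+$ is a $\Sigma$-homomorphism.

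For the first step, fix $\sigma\in\Sigma$ of arity $n$, a coalgebra $(A,a)$, and elements $x_1,\ldots,x_n\in UA$. Let $e:(A,a)\to(\bar A,\bar a)$ be the simple reflection of Proposition~\ref{P-simp}, and for each $i$ let $m_i:(B_i,b_i)\hookrightarrow(\bar A,\bar a)$ be the least subcoalgebra containing $Ue(x_i)$, so $a^+(x_i)=\wellp(A,a,x_i)=(B_i,b_i,z_i)$ with $Um_i(z_i)=Ue(x_i)$. Form $(B,b)=\coprod_i(B_i,b_i)$ in $\Coalg H$, with injections $\iota_i$ and $\hat z_i=U\iota_i(z_i)$. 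By the definition of $\sigma^T$ in Construction~\ref{C-mon},
\[
\sigma^T(a^+(x_1),\ldots,a^+(x_n))=\wellp(B,b,\sigma^B(\hat z_1,\ldots,\hat z_n))=b^+(\sigma^B(\hat z_1,\ldots,\hat z_n)).
\]
The copairing $[m_i]:(B,b)\to(\bar A,\bar a)$ is a coalgebra homomorphism (because the $\iota_i$ and $m_i$ are) and it is a $\Sigma$-homomorphism (being the copairing of $\Sigma$-homomorphisms in $\A$); likewise $e$ is a $\Sigma$-homomorphism. Hence $U[m_i](\sigma^B(\hat z_i))=\sigma^{\bar A}(Um_i(z_i))_i=\sigma^{\bar A}(Ue(x_i))_i=Ue(\sigma^A(x_1,\ldots,x_n))$. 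Applying Lemma~\ref{L-plus} twice, first to $[m_i]$ and then to $e$, yields
\[
b^+(\sigma^B(\hat z_i))=\bar a^+(Ue(\sigma^A(x_1,\ldots,x_n)))=a^+(\sigma^A(x_1,\ldots,x_n)),
\]
which is the desired identity.

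For the second step, let $s(v_1,\ldots,v_n)=t(v_1,\ldots,v_n)$ be an equation in $E$ and let $(A_i,a_i,x_i)\in T$ for $i<n$. Set $(A,a)=\coprod_i(A_i,a_i)$ in $\Coalg H$ with injections $\iota_i$; the carrier $A$ lies in $\A$ by cocompleteness of $\A$. Put $\hat x_i=U\iota_i(x_i)$. Since $(A_i,a_i,x_i)$ is already well-pointed, Lemma~\ref{L-plus} applied to $\iota_i$ gives $a^+(\hat x_i)=a_i^+(x_i)=(A_i,a_i,x_i)$. Iterating the identity of the first step along the term-structure of $s$ and $t$ produces
\[
s^T((A_i,a_i,x_i)_i)=s^T(a^+(\hat x_1),\ldots,a^+(\hat x_n))=a^+(s^A(\hat x_1,\ldots,\hat x_n)),
\]
and the analogous equality for $t$. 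Because $A\in\A$, we have $s^A=t^A$ as $n$-ary operations on $A$, so $s^T$ and $t^T$ agree on this tuple. Thus $T$ satisfies $E$ and lies in $\A$, after which the first step says precisely that each $a^+:A\to T$ is a $\Sigma$-homomorphism.

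The main obstacle is the first step: one must correctly identify the well-pointed modifications $\wellp(A,a,x_i)$ with subcoalgebras of the common simple reflection $\bar A$ (rather than of $A$ itself), and then exhibit the copairing $[m_i]:B\to\bar A$ and the reflection $e:A\to\bar A$ as simultaneously coalgebra and $\Sigma$-homomorphisms, so that the double application of Lemma~\ref{L-plus} collapses $b^+(\sigma^B(\hat z_i))$ down to $a^+(\sigma^A(x_i))$.
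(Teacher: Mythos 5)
Your proof is correct, but it takes a genuinely different route from the paper's. The paper's proof is a short verification: it recalls from Theorem~\ref{T-final} that the set $T$ already underlies a final coalgebra $(\bar T,\bar\tau)$ living in the variety $\A$, so $T$ automatically carries a $\Sigma$-algebra structure satisfying $E$ and every $a^+$ is automatically (the underlying map of) an $\A$-morphism by Remark~\ref{R-final}; the only thing left to check is that the abstract operations $\sigma^{\bar T}$ coincide with the explicit formula~\eqref{form}, which is a one-line computation with Lemma~\ref{L-plus} applied to coproduct injections. You instead never invoke $\bar T$: you prove directly that each $a^+$ preserves the explicitly defined operations $\sigma^T$ (by realizing the $\wellp(A,a,x_i)$ as subcoalgebras of the common simple reflection $\bar A$, observing that the copairing $[m_i]$ and the reflection $e$ are simultaneously coalgebra and $\Sigma$-homomorphisms, and applying Lemma~\ref{L-plus} twice), and you then transport the equations of $E$ to $T$ along $a^+$ from the coproduct algebra $A\in\A$, using the standard fact that a map preserving basic operations preserves term operations. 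Your argument is longer but more self-contained: it establishes the lemma from Construction~\ref{C-mon}, Proposition~\ref{P-simp} and Lemma~\ref{L-plus} alone, without presupposing the finality result on $T$, which makes the logical order of Construction~\ref{C-mon}(a)--(d) cleaner (finality could then be derived afterwards rather than assumed beforehand). The paper's argument buys brevity by reusing the machinery of Theorem~\ref{T-final}.
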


\begin{proof}
Recall the final coalgebra $(\bar{T}, \bar{\tau})$ from the proof of Theorem~\ref{T-final} whose underlying set is $T$. All we need to prove is that the operations $\sigma^{\bar{T}}$ of the $\Sigma$-algebra $\bar{T}$ are given by the formula 
\eqref{form} above. Indeed, given a well-pointed coalgebra $(A_i, a_i,x_i)$ we have $a^+_i(x_i)=(A_i,a_i,x_i)$. Let us apply Lemma  \ref{L-plus} to the coproduct injection $v_i:(A_i,a_i) \to (A,a)$: since $a^+_i=a^+\cdot v_i$ and $\hat{x}_i=v_i(x_i)$, we conclude 

$$(A_i, a_i, x_i)=a^+_i(x_i)=a^+(\hat{x}_i).$$
Since $a^+:A \to \bar{T}$ is (by Remark~\ref{R-final}) a $\Sigma$-homomorphism, we obtain 
\[
\sigma^{\bar{T}}\left((A_i,a_i,x_i)_{i<n}\right)
=
\sigma^{\bar{T}}\left((a^+(\hat{x}_i))_{i<n}\right)
=
a^+(\sigma^A((\hat{x}_i)_{i<n}))
=
a^+(y) 
=
\wellp(A,a,y)
\]
as required. 
\end{proof}
\begin{proposition}\label{P-mon}
The function $\tau(A,a,x)=Ha^+(a(x))$ is a $\Sigma$-homomorphism from $T$ to $HT$, and the coalgebra
 $(T,\tau)$ is final.
\end{proposition}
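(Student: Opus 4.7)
The plan is to identify the function $\tau$ with the underlying function of the final coalgebra structure $\bar\tau\colon\bar T\to H\bar T$ constructed in the proof of Theorem~\ref{T-final}. By Lemma~\ref{C_lem1}, the $\Sigma$-algebra $T$ defined in Construction~\ref{C-mon}(a) is precisely $\bar T$; in particular, $HT$ coincides with $H\bar T$. Once the equality $\tau = U\bar\tau$ is verified, $\tau$ is automatically a $\Sigma$-homomorphism (in the variety $\A$, a function between underlying sets arises from an $\A$-morphism iff it is a $\Sigma$-homomorphism), and $(T,\tau) = (\bar T,\bar\tau)$ is a final coalgebra.

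To check $\tau = U\bar\tau$ pointwise, observe that every element of $T$ has the form $a^+(x)$: a representative $(A,a,x)\in T$ is itself well-pointed, so $a^+(x) = \wellp(A,a,x) = (A,a,x)$. Let $h\colon(A,a)\to(\bar T,\bar\tau)$ be the unique coalgebra homomorphism. By Lemma~\ref{L-plus} applied to $h$ together with $\bar\tau^+ = \id$ (Remark~\ref{R-final}), we obtain $a^+ = \bar\tau^+\cdot Uh = Uh$. Applying $U$ to the homomorphism equation $Hh\cdot a = \bar\tau\cdot h$ and evaluating at~$x$ therefore yields
\[
\tau(a^+(x)) = Ha^+(a(x)) = U\bar\tau(Uh(x)) = U\bar\tau(a^+(x)),
\]
which is the desired identity.

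The main conceptual obstacle would have been a direct verification that $\tau$ preserves each $\Sigma$-operation as defined by formula~\eqref{form}: this would require comparing $\tau$ applied to the well-pointed modification of a coproduct coalgebra $(A,a) = \coprod_i(A_i,a_i)$ at a $\Sigma$-combination of points, against $\sigma^{HT}$ applied to the $\tau$-values of the inputs, and hinges on commuting $H$ with coproducts of coalgebras. The strategy above sidesteps this entirely: by recognizing $\tau$ as the underlying function of the pre-existing $\A$-morphism $\bar\tau$, being a $\Sigma$-homomorphism is automatic, and finality of $(T,\tau)$ is inherited from that of $(\bar T,\bar\tau)$.
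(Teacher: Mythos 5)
Your proof is correct and follows essentially the same route as the paper: both identify $\tau$ pointwise with the structure map $\bar\tau$ of the final coalgebra from Theorem~\ref{T-final}, using that $a^+(x)=(A,a,x)$ for elements of $T$ and that $a^+$ is the (underlying function of the) unique coalgebra homomorphism into $(\bar T,\bar\tau)$, so that the $\Sigma$-homomorphism property and finality are inherited for free. Your extra step deriving $a^+=Uh$ from Lemma~\ref{L-plus} and $\bar\tau^+=\id$ is just an unpacking of Remark~\ref{R-final}, which the paper cites directly.
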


\proof
  For the final coalgebra$\bar{\tau}\colon \bar{T}\to H\bar{T}$ of the
  proof of Theorem \ref{T-final} we already know that $T=\bar{T}$. It
  remains to prove that $\bar{\tau}=\tau$. For every element $(A,a,x)$
  of $T$ we have $a^+(x)=(A,a,x)$ and, since $a^+:(A,a) \to
  (T,\bar{\tau})$ is by Remark \ref{R-final} a coalgebra homomorphism,
  \[
  \bar{\tau}(A,a,x)
  =
  \bar{\tau}(a^+(x))
  =
  Ha^+(a(x))
  =
  \tau(A,a,x).\eqno{\qEd}
  \]
\medskip

\begin{remark}\label{R-new-minim}\hfill
 \begin{enumerate}[(a)]
\item Generalizing deterministic automata, see Example \ref{E-seq-T},
every pointed coalgebra $(A,a,x)$ can be viewed as a realization of the
corresponding element $t = a^+(x)$ of the terminal coalgebra of $H$.
The well-pointed coalgebras are the minimal realizations of $t$.
       Then every element of $\nu H$ has a minimal realization,
unique up-to isomorphism.
       \item If the algebra $A$ above is finite, then minimality is
equivalent to state-minimality:
 \begin{enumerate}[(b1)]
       \item Every realization of $t$ has cardinality at least that of $A$,
and
       \item if it has the same cardinality as $A$, it is isomorphic
            to $(A,a,x)$.
            \end{enumerate}
This follows from the fact that every well-pointed coalgebra $(A,a,x)$ is,
as we have seen above, isomorphic to the subcoalgebra of $\nu H$ generated
by $a^+(x)$.
       \item  For non-deterministic automata we obtain minimization w.r.t. bisimilarity (i.e. w.r.t. to the branching behavior) but this is not minimization in the classical sense. The reason is that the terminal
coalgebra of the corresponding functor $(P_f X)^I \times \{0,1\}$ is not the
set of all languages over $I$.
\end{enumerate}
\end{remark}

\begin{example}\label{E-new-Bool} In the variety $\Bool$ of boolean algebras consider the functor
$HX = X^I \times \mathbbm{2}$ where $\mathbbm{2}$ is the $2$-element boolean algebra. Its coalgebras are deterministic automata with a boolean algebra structure
on the states such that (1) final states form an ultrafilter, and
(2) transitions preserve the boolean operations. The terminal coalgebra is
the boolean algebra of all well-pointed coalgebras, and this is
isomorphic to the boolean algebra of all languages over $I$.

	For every regular language $L$ the unique minimal realization $A$ (i.e. the corresponding well-pointed H-coalgebra) is finite, but possibly larger than the minimal automaton in $\Set$. However,
by restricting ourselves to the atoms of the boolean algebra $A$, one
obtains a nondeterministic automaton which Brzozowski and Tamm \cite{BT}
call the \'atomaton for $L$ and which in some cases is the state-minimal
nondeterministic realization of $L$ (in the classical sense in $\Set$).
\end{example}
\subsection{Initial algebras}
\hfill

\begin{assumption}
  In the rest of this section $\A$ denotes an LFP category with a
  simple initial object, and $H$ an endofunctor preserving
  intersections.
\end{assumption}

 Just as the final coalgebra $\nu H$ for a set
functor $H$ consists of all well-pointed coalgebras (up to
isomorphism), we now prove that the initial algebra $\mu H$ consists
of all well-founded, well-pointed coalgebras.  In more detail: the
well-founded coalgebras in $\nu H$ form a subcoalgebra, and we prove
that this is a final well-founded coalgebra which by Theorem
\ref{T-main} is $\mu H$. In Section 4 we then present numerous
examples of initial algebras described in this manner.

\begin{notation}\label{N-mu} Recall the concept of well-founded
    coalgebra from Section \ref{druha}.   The collection of all well-founded, well-pointed coalgebras (up to
  isomorphism) is denoted by $I$. For every well-founded coalgebra
  $a:A\to HA$ we have a function
  \[
  a^{+}: \bigcup A\to I
  \]
  assigning to every element $x\colon1\to \bigcup A$ the well-founded,
  well-pointed coalgebra of Notation~\ref{N-wp}:
  \[a^+(x)=(\bar{A}_0,\bar{a}_0,x_0).\]
  Indeed, $(\bar{A}_0,\bar{a}_0)$~is well-founded due to
  Corollary~\ref{L-colim} and Lemma~\ref{wfdsub}. 
\end{notation}

\begin{remark}  \label{proh}
Observe that for a pointed coalgebra to be
well-founded and well-pointed two types of proper
subcoalgebras are prohibited: the cartesian ones, and those
containing the chosen point.
\end{remark}

\begin{theorem}
\label{T-init}  $H$
has an initial algebra iff it has only a set of well-founded,
well-pointed coalgebras up to isomorphism. Moreover a set $I$ of
representatives of well-founded, well-pointed coalgebras carries the
initial algebra: $\bigcup(\mu H)=I$.
\end{theorem}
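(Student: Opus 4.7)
The plan is to mirror the proof of Theorem~\ref{T-final}, combined with Theorem~\ref{T-main} which identifies the initial algebra with the final well-founded coalgebra (permissible here because wide intersections entail finite intersections).

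\textbf{Necessity.} If $\mu H$ exists then by Theorem~\ref{T-main} it is the final well-founded coalgebra, and every $(A,a,x)\in I$ admits a unique homomorphism into $\mu H$ which is monic by simplicity of $(A,a)$ (Observation~\ref{ord} together with Remark~\ref{R-(e,m)}(b)). Wellpoweredness of $\A$ (Remark~\ref{R-fact}.2) then bounds $I$ by the subobjects of $\mu H$, so $I$ is a set.

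\textbf{Sufficiency and the underlying set.} Assume $I$ is a set. The key step is to adapt Lemmas~\ref{L-plus} and~\ref{plusinj} to well-founded coalgebras: for every well-founded $(A,a)$ and every $x\in \bigcup A$, I claim $\wellp(A,a,x)\in I$. Indeed, $\wellp(A,a,x)$ is a subcoalgebra of the simple quotient of $(A,a)$; this quotient is well-founded by Corollary~\ref{L-colim}, and then Lemma~\ref{wfdsub} (applicable since $H$ preserves finite intersections) transfers well-foundedness to the subcoalgebra. Hence $a^+\colon \bigcup A\to I$ is well-defined; the arguments of Lemmas~\ref{L-plus} and~\ref{plusinj} go through verbatim. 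In particular for simple well-founded $(A,a)$ we obtain $|\bigcup A|\leq |I|$, so together with fibre-smallness and unique transportability (Example~\ref{E-gener}(2)) there are only a set of well-founded simple coalgebras up to isomorphism. Well-founded coalgebras are closed under coproducts and quotients (Corollary~\ref{L-colim}), so the sufficiency half of Theorem~\ref{T-set-s} adapts directly inside this subcategory: the coproduct of representative well-founded simple coalgebras is weakly final among well-founded coalgebras, and its simple quotient is the final well-founded coalgebra, which by Theorem~\ref{T-main} is $\mu H$. For the identification $\bigcup(\mu H)=I$, let $\psi\colon \mu H\to H\mu H$ be the inverse of the initial algebra structure; then $\psi^+\colon \bigcup(\mu H)\to I$ is injective by the adapted Lemma~\ref{plusinj} (since $\mu H$ is simple by Example~\ref{E-simp}(2)) and surjective by the adapted Lemma~\ref{L-plus}: given $a^+(x)\in I$ with $(A,a)$ well-founded, the unique homomorphism $h\colon(A,a)\to\mu H$ satisfies $\psi^+(\bigcup h\,(x))=a^+(x)$. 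Unique transportability transports the coalgebra structure to $I$ exactly as in the last paragraphs of the proof of Theorem~\ref{T-final}.

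The main obstacle is verifying that $\wellp(A,a,x)$ lies in $I$ whenever $(A,a)$ is merely well-founded but not simple; this is precisely where preservation of finite intersections is essential, via Lemma~\ref{wfdsub}. All other steps are essentially parallel to the corresponding parts of the proofs of Theorems~\ref{T-set-s} and~\ref{T-final}, once we have restricted attention to the coreflective subcategory of well-founded coalgebras.
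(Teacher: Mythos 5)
Your proof is correct and follows essentially the same route as the paper's: both reduce the statement to producing a final well-founded coalgebra and then invoke Theorem~\ref{T-main}, and both rely on Corollary~\ref{L-colim} and Lemma~\ref{wfdsub} to ensure $\wellp(A,a,x)$ remains well-founded, so that the $a^+$ machinery of Theorems~\ref{T-set-s} and~\ref{T-final} can be rerun inside the coreflective subcategory of well-founded coalgebras. The only cosmetic difference is one of ordering: the paper equips $I$ directly with the coalgebra structure $\bar\psi$ and checks its well-foundedness by exhibiting it as a quotient of a coproduct of its members, whereas you first obtain the final well-founded coalgebra abstractly (as the simple quotient of the coproduct of simple well-founded representatives) and only afterwards identify its underlying set with $I$ via the bijection $\psi^+$.
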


\takeout{ 
\begin{remb} The initial algebra of $H$ is, as we will also prove, characterized as the only coalgebra on the set $I$ such that for every well-founded coalgebra $(A,a)$ the function $a^{+}:UA\to I$ carries a coalgebra homomorphism from $(A,a)$.
\end{remb}
}

\begin{proof}
(1)~If $H$~has an initial algebra~$\mu H$, then by
Theorem~\ref{T-main} this is a final well-founded coalgebra. 
Every well-founded, well-pointed coalgebra, being in particular
simple, is a subcoalgebra of~$\mu H$, since the unique homomorphism
into $\mu H$ is carried by a monomorphism. Consequently, $I$~is a set.

(2)
Let $H$~have a set~$I$ of representatives of well-founded,
well-pointed coalgebras. $I$ carries a canonical coalgebra structure
\[\bar{\psi}\colon I \to HI.\]
As in Theorem~\ref{T-final}, this structure
assigns to every member~$(A,a,x)$ of~$I$ the following
element of~$\bigcup HI$:
\[1\to^x \bigcup A\to^{\bigcup a}\bigcup HA\to^{\bigcup Ha^+}\bigcup HI.\]
We prove below that this is a final well-founded coalgebra. Thus, by
Theorem~\ref{T-main}, $I$~is an initial algebra
w.r.t. the inverse of~$\bar{\psi}$.

The proof that for every well-founded coalgebra $(A,a)$ the map $a^{+}:\bigcup A\to I$ carries a unique coalgebra homomorphism into $\bar{\psi}\colon I\to
H(I)$ is  completely analogous
to the proof of Theorem~\ref{T-final}.
 Just recall that subcoalgebras and quotients
of a well-founded coalgebra are all well-founded (by Corollary
\ref{L-colim} and Lemma \ref{wfdsub}).

It remains to prove that $(I, \bar{\psi})$ is a well-founded coalgebra. To this end notice that for every well-pointed, well-founded coalgebra $(A,a,x)$ in $I$ we have that $$a^{+}(x)=(A,a,x).$$
Now take the coproduct (in $\Coalg H$) of all $(A,a)$ for which there is an $x\in A$ such that $(A,a,x)$ lies in $I$. This coproduct is a well-founded coalgebra by Corollary \ref{L-colim}, and, as we have just  seen,  the unique induced homomorphism from this coproduct into $(I,\bar{\psi})$ is epimorphic, whence $I$ is a quotient coalgebra of the coproduct. Thus, another application of Corollary \ref{L-colim} shows that $(I, \bar{\psi})$ is a well-founded coalgebra as desired.
\end{proof}

\begin{example}\label{E-seq-I}
The initial algebra for $HX=X^I\times \{0,1\}$, and more generally, for any set functor $H$ with $H\emptyset=\emptyset$, is empty. No non-empty coalgebra is well-founded (due to the cartesian subcoalgebra $\emptyset$) and thus no pointed coalgebra is well-founded.
\end{example}

\begin{remark}
\label{R-large}
Analogously to Corollary~\ref{C-class}, every set functor~$H$ has a,
possibly large, initial algebra. That is,  the extension~${H}^{*}$
of~$H$ to classes always has an initial algebra: 
\[\mu H^{*}=\text{all well-founded, well-pointed algebras}\]
(up to isomorphism). This is  a
subcoalgebra of~$\nu H$ of Remark~\ref{R-class}. And as an algebra
for~${H}^{*}$ it is initial:
\end{remark}

\begin{corollary}
\label{C-mu}
For every intersection preserving set functor~$H$ the large
coalgebra~$\mu H^{*}$ is the final well-founded coalgebra for~${H}^{*}$. Thus,
the large initial algebra is~$\mu H^{*}$ w.r.t. the inverse of~$\bar{\psi}$.
\end{corollary}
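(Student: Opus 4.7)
The plan is to follow the proof of Theorem~\ref{T-init} verbatim, working with the class extension $H^{*}$ in place of $H$ and classes in place of sets. Let $I = \mu H^{*}$ denote the class of all well-founded, well-pointed $H$-coalgebras up to isomorphism, and equip $I$ with the coalgebra structure
\[
\bar{\psi}\colon I \to H^{*}I,\qquad \bar{\psi}(A,a,x) = Ha^{+}(a(x)),
\]
defined exactly as in Theorem~\ref{T-init}; equivalently, $\bar{\psi}$ is the restriction to $I$ of the final coalgebra structure on $\nu H^{*}$ supplied by Corollary~\ref{C-class}. Since every member of $I$ is in particular well-pointed (hence simple), $\bar{\psi}$ is well-defined on representatives.

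The first substantive step is to show that $(I, \bar{\psi})$ is itself a well-founded $H^{*}$-coalgebra. This repeats the last paragraph of the proof of Theorem~\ref{T-init}: form the (large) coproduct $(C,c) = \coprod (A_i, a_i)$ in $\Coalg H^{*}$ over all carriers appearing among the members of $I$. Each summand is well-founded, hence so is $(C,c)$ by the class analogue of Corollary~\ref{L-colim}, and the induced map $c^{+}\colon C \to I$ is a surjective coalgebra homomorphism (by Lemma~\ref{L-plus}), so $(I, \bar{\psi})$ is a quotient of a well-founded coalgebra and therefore itself well-founded by Corollary~\ref{L-colim} once more.

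The second step is finality among well-founded coalgebras. For any well-founded $H^{*}$-coalgebra $(A,a)$ the function $a^{+}\colon A \to I$ is a coalgebra homomorphism by Lemma~\ref{L-plus}, where Lemma~\ref{wfdsub} guarantees that each $a^{+}(x) = \wellp(A,a,x)$ is well-founded and so genuinely lies in $I$; uniqueness follows from Observation~\ref{ord}, since every member of $I$ is simple. Hence $(I, \bar{\psi})$ is a final well-founded $H^{*}$-coalgebra, and applying the class version of Theorem~\ref{T-main} then yields that $\bar{\psi}$ is invertible and $(I, \bar{\psi}^{-1})$ is a (large) initial $H^{*}$-algebra. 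The main point to verify is that Theorem~\ref{T-main}, Corollary~\ref{L-colim}, and Lemma~\ref{wfdsub} all survive the passage to the category of classes with $H^{*}$; this is routine, as their proofs invoke only preservation of monomorphisms, preservation of finite intersections, and simplicity of the initial object, all of which remain in force and none of whose uses is sensitive to the set/class distinction.
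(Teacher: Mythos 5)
Your overall architecture (show that $(\mu H^{*},\bar\psi)$ is well-founded, show that every well-founded $H^{*}$-coalgebra maps uniquely into it via $a^{+}$, then invoke Theorem~\ref{T-main}) has the right shape, but the closing paragraph, where you declare the passage to classes ``routine,'' is exactly where the gap lies. The set-level results you import do not rest only on preservation of monomorphisms, finite intersections and simplicity of $0$; they also rest on wellpoweredness, i.e.\ on $\Sub(A)$ being a \emph{set}, and that is precisely what fails for $H^{*}$-coalgebras carried by proper classes. Concretely, Remark~\ref{R-least} (the stabilization $a^{*}=a^{*}_{i_0}$), Proposition~\ref{P-starr} and hence Corollary~\ref{L-colim}, and Proposition~\ref{P-chain} all argue that an ordinal-indexed chain of subobjects must stop growing. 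For $H=\PP$ the large coalgebra $\mu\PP^{*}$ is the proper class of all sets, the initial chain is the cumulative hierarchy $V_i$, which never converges, and the chain $A^{*}_i$ of a large coalgebra can likewise increase through all ordinals; so step (b4) of the proof of Theorem~\ref{T-main} and the coreflection argument behind Corollary~\ref{L-colim} do not survive verbatim. A second, related problem: for a large well-founded coalgebra $(A,a)$ you set $a^{+}(x)=\wellp(A,a,x)$, but the well-pointed modification of Notation~\ref{N-wp} is built from the simple quotient (a wide pushout over \emph{all} quotients of $A$) and from the least subcoalgebra containing the point (an intersection over \emph{all} subcoalgebras containing it); for a proper class $A$ neither construction is legitimate as it stands.

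The missing idea---and the one the paper actually uses---is the Small Subcoalgebra Lemma of Aczel and Mendler: every element of an $H^{*}$-coalgebra lies in a small (set-sized) subcoalgebra. Given a well-founded $H^{*}$-coalgebra $(A,a)$ and $x\in A$, choose a small subcoalgebra containing $x$; it is well-founded by Lemma~\ref{wfdsub}, so its well-pointed modification at $x$ is a well-defined element of $\mu H^{*}$, and this (after checking independence of the choice) defines $a^{+}$ without ever forming quotients or intersections over illegitimate collections. Finality among well-founded $H^{*}$-coalgebras then reduces to the already-proved set-level theory, while invertibility of $\bar\psi$ follows from steps (b1)--(b3) of the proof of Theorem~\ref{T-main} (which avoid the initial chain) and initiality from recursiveness (Theorem~\ref{T-rec}) rather than from convergence of the initial chain. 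If you prefer to keep your purely class-level argument, you must at least reprove closure of well-founded $H^{*}$-coalgebras under quotients and class-indexed coproducts directly from the definition of cartesian subcoalgebra instead of citing Corollary~\ref{L-colim}, and you must replace $\wellp(A,a,x)$ by its small-subcoalgebra surrogate.
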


The first statement follows from the Small Subcoalgebra Lemma
of~\cite{AM} and the fact that subcoalgebras of well-founded
coalgebras are well-founded (Corollary~\ref{C-sub}). The second
statement is proved precisely as Theorem~\ref{T-main}.

\begin{example}
\label{E-init-j}
The initial algebra for~\PP\ consists of all well-founded,
well-pointed graphs.
\end{example}

\begin{remark}\label{R-new-variet} The above theorem generalizes to endofunctors of finitary varieties having a simple initial object (and thus satisfying Assumptions \ref{ASS}). Let H be an endofunctor
preserving intersections. Given a set $I$ representing
well-founded, well-pointed coalgebras, we turn $I$ into a coalgebra
of $H$ as in Construction \ref{C-mon}, using the fact that (by Proposition \ref{P-starr})
well-founded coalgebras are closed under coproducts. The rest of the proof
is, due to Theorem \ref{T-main}, completely analogous to Proposition \ref{P-mon}.
\end{remark}

\subsection{Initial iterative algebras}
\hfill

\smallskip\noindent In this subsection, we study another subcoalgebra
of the final coalgebra for a set functor: all finite well-pointed
coalgebras. We prove that this is the initial iterative algebra (also
known as the rational fixed point). Before doing so we recall what
completely iterative and iterative algebras are. Once again, there is
no problem in generalizing the results below to locally finitely
presentable base categories with a simple initial object and which are
concrete via a given $U: \A \to \Set$.

\begin{remark}
\label{R-cit}
We know, from Theorem~\ref{T-main}  and \ref{T-init}, that $\mu H$~has a double
role: an initial algebra and a final well-founded coalgebra. Also
$\nu H$~has a double role. Recall from~\cite{M1} that an algebra
$a\colon HA\to A$ is \textit{completely iterative} if for every
(equation) morphism $e\colon X\to HX+A$ there exists a unique
\emph{solution}, i.e., a unique morphism $e^\dag\colon X\to A$ such that the
square below commutes:
\[\bfig
\Square/>`>`<-`>/<400>[X`A`HX+A`HA+A;e^\dag`e`{[a,A]}`He^\dag+A]
\efig\]
\end{remark}

\begin{theorem}[see \cite{M1}]
\label{T-cit}
For every endofunctor
\[\text{final coalgebra}=\text{initial completely iterative
algebra.}\]
\end{theorem}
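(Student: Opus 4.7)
The plan is to verify both that $\nu H$ carries a canonical completely iterative algebra structure, and that any other completely iterative algebra receives a unique homomorphism from $\nu H$. Write $\tau\colon\nu H\to H(\nu H)$ for the final coalgebra; by Lambek's lemma $\tau$ is an isomorphism, and we take $\tau^{-1}\colon H(\nu H)\to\nu H$ as the algebra structure on $\nu H$.

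\textbf{Solutions exist and are unique in $(\nu H,\tau^{-1})$.} Given an equation morphism $e\colon X\to HX+\nu H$, equip $X+\nu H$ with the coalgebra structure
\[
c \;=\; \bigl[\,(H\,\mathsf{inl}+\mathsf{inr})\circ e,\; H\,\mathsf{inr}\circ\tau\,\bigr]\colon X+\nu H\longrightarrow H(X+\nu H),
\]
where on the summand $X$ we first apply $e$, then $H\,\mathsf{inl}\colon HX\to H(X+\nu H)$ on the left component and $H\,\mathsf{inr}\circ\tau$ on the right. By finality of $(\nu H,\tau)$ there is a unique coalgebra homomorphism $h\colon(X+\nu H,c)\to(\nu H,\tau)$. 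One verifies immediately that $h\circ\mathsf{inr}=\id_{\nu H}$ (since $\id_{\nu H}$ itself is a coalgebra homomorphism and by uniqueness coincides with $h\circ\mathsf{inr}$), and then setting $e^\dag:=h\circ\mathsf{inl}$ and computing both sides against the definition of $c$ shows that $e^\dag=\tau^{-1}\circ[He^\dag,\nu H]\circ e$, which is exactly the required equation. Conversely, if $e^\dag$ is any solution, then $[e^\dag,\id_{\nu H}]\colon X+\nu H\to\nu H$ is easily checked to be a coalgebra homomorphism to $(\nu H,\tau)$, hence equals $h$ by finality, so $e^\dag=h\circ\mathsf{inl}$ is uniquely determined.

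\textbf{Initiality among completely iterative algebras.} Let $\beta\colon HB\to B$ be completely iterative. Consider the equation morphism
\[
e \;=\; \mathsf{inl}\circ\tau\colon \nu H\longrightarrow H(\nu H)+B.
\]
Let $e^\dag\colon\nu H\to B$ be its unique solution in $(B,\beta)$. A direct calculation gives $e^\dag=\beta\circ He^\dag\circ\tau$, i.e., $e^\dag$ is an $H$-algebra homomorphism from $(\nu H,\tau^{-1})$ to $(B,\beta)$. Conversely, any algebra homomorphism $f\colon(\nu H,\tau^{-1})\to(B,\beta)$ satisfies $f=\beta\circ Hf\circ\tau=[\beta,B]\circ(Hf+B)\circ e$, so it is a solution of $e$; hence $f=e^\dag$ by uniqueness of solutions in $(B,\beta)$.

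\textbf{Expected difficulty.} The only real content lies in the construction of $c$ on $X+\nu H$ and the two routine diagram chases showing (i) that $h\circ\mathsf{inl}$ solves $e$ and (ii) that any solution induces a coalgebra homomorphism back to $\nu H$. Neither is deep, but one must be careful with the bookkeeping of coproduct injections versus $\tau$ when translating between equation morphisms $X\to HX+A$ and coalgebras $X+A\to H(X+A)$. No assumptions on the base category or on $H$ beyond the existence of $\nu H$ and of binary coproducts are needed, which is why the theorem holds for \emph{every} endofunctor as stated.
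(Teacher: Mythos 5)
The paper itself gives no proof of this theorem; it is quoted from Milius~\cite{M1}. For the direction you address --- that a final coalgebra $(\nu H,\tau)$, when it exists, carries an initial completely iterative algebra structure via $\tau^{-1}$ --- your argument is correct and is essentially the proof in the cited source: extend an equation morphism $e\colon X\to HX+\nu H$ to a coalgebra on $X+\nu H$, use finality to obtain $h$, check $h\circ\mathsf{inr}=\id_{\nu H}$, and read off existence and uniqueness of solutions from $h\circ\mathsf{inl}$; initiality among completely iterative algebras then follows from the ``flat'' equation morphism $\mathsf{inl}\circ\tau$. Two notational slips do not affect the substance: the coalgebra structure on the summand $X$ should be $[H\,\mathsf{inl},\,H\,\mathsf{inr}\circ\tau]\circ e$ rather than $(H\,\mathsf{inl}+\mathsf{inr})\circ e$, which does not typecheck into $H(X+\nu H)$, and the solution equation is $e^\dag=[\tau^{-1},\id]\circ(He^\dag+\id)\circ e$; your verbal descriptions make clear this is what you intend.

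What is missing is the other half of the equality. As in Theorem~\ref{T-main}, the ``$=$'' is an ``iff'': one must also show that if an initial completely iterative algebra $(A,a)$ exists, then $a$ is invertible and $a^{-1}\colon A\to HA$ is a final coalgebra. This does not follow from what you wrote, and it is where the extra work lies. The standard route is: first prove that $(HA,Ha)$ is completely iterative whenever $(A,a)$ is (one must build solutions in $HA$ out of solutions in $A$); since $a\colon(HA,Ha)\to(A,a)$ is an algebra homomorphism, initiality then yields $j\colon(A,a)\to(HA,Ha)$ with $a\circ j=\id_A$ and $j\circ a=H(a\circ j)=\id_{HA}$, so $a$ is an isomorphism; finally, $h\colon X\to A$ is a coalgebra homomorphism $(X,x)\to(A,a^{-1})$ iff it is a solution of $\mathsf{inl}\circ x\colon X\to HX+A$, so unique solvability gives finality of $(A,a^{-1})$. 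If you intend only the direction actually used in this paper (that $\nu H$ is the initial completely iterative algebra), your proof is complete as it stands.
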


\begin{remark}
\label{R-it}
(a)
Let $H$~be a \textit{finitary} set functor, i.e., every element $x\in
HA$ lies, for some finite subset $m\colon A'\to A$,  in the image
of~$Hm$. Then an algebra $a\colon HA\to A$ is called
\textit{iterative} provided that for every equation morphism $e\colon
X\to HX+A$ with
$X$~finite, there exists a unique solution $e^\dag\colon
X\to A$.

This concept was studied for classical \Sig-algebras by
Nelson~\cite{N} and Tiurin~\cite{Ti}, and for $H$-algebras in general
in~\cite{AMV}.

(b)
Form the colimit~$C$, in~\Set, of the diagram of all finite
coalgebras $a\colon A\to HA$ with the colimit cocone $a^+\colon A\to
C$. Then there exists a unique morphism $c\colon C\to HC$ with $c\tec
a^+=H a^+\tec a$. It was proved in~\cite{AMV} that $c$~is invertible
and the resulting algebra is the initial iterative algebra for~$H$.
\end{remark}

\begin{example}[see \cite{AMV}]
\label{E-iter}
(a)
The initial iterative algebra of $HX=X^I\times\{0,1\}$ consists of
all finite minimal automata. This is isomorphic to its description as
all regular languages.

(b)
The initial iterative algebra of the finite power-set functor
consists of all finite well-pointed graphs. See Section~\ref{ctvrta}
for a description using rational trees.
\end{example}

\begin{definition}[see \cite{M2}]
\label{D-locfin}
A coalgebra is called \textbf{locally finite} if every element lies
in a finite subcoalgebra.
\end{definition}

\begin{theorem}[see \cite{M2}]
\label{T-locfin}
Let $H$~be a finitary set functor. Then
\[\text{initial iterative algebra}=\text{final locally finite
coalgebra.}\]
Moreover, the final locally finite coalgebra is the colimit of all
finite coalgebras in~$\Coalg H$.
\end{theorem}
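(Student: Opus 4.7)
The plan is to prove that the colimit $(C,c)$ of all finite $H$-coalgebras, as constructed in Remark~\ref{R-it}(b), is the final locally finite coalgebra. Since Remark~\ref{R-it}(b) already identifies $(C,c^{-1})$ as the initial iterative algebra, both assertions of the theorem will follow at once.

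First I would verify that the diagram $\mathcal D$ of all finite $H$-coalgebras is filtered: binary coproducts and coequalizers in $\Coalg H$ are formed in $\Set$ (the forgetful functor creates colimits) and hence stay finite. Because $H$ is finitary, it preserves filtered colimits, so the $\Set$-colimit $C$ with cocone $a^+\colon A\to C$ coincides with the colimit of $\mathcal D$ in $\Coalg H$, and $(C,c)$ is a bona fide coalgebra. To see that $(C,c)$ is locally finite, pick $x\in C$; the filtered-colimit description gives $(A,a)\in\mathcal D$ and $x_0\in A$ with $x=a^+(x_0)$, and the (strong epi, mono)-factorization of the coalgebra homomorphism $a^+$ exhibits a finite subcoalgebra of $C$ containing~$x$.

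For existence of a morphism from a locally finite coalgebra $(B,b)$ into $(C,c)$: the finite subcoalgebras of $B$ form a directed subdiagram of $\mathcal D$ whose colimit in $\Coalg H$ is $B$. Restricting the colimit cocone of $C$ to this subdiagram yields a cocone, inducing a coalgebra homomorphism $h\colon B\to C$.

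The main obstacle is uniqueness, which I would handle by comparing $(C,c)$ with the final coalgebra $\nu H$ of Theorem~\ref{T-final} via the unique homomorphism $u\colon C\to\nu H$. The key claim is that $u$ is a monomorphism. By Remark~\ref{R-final}, the underlying function of $u$ equals $c^+\colon UC\to T$; applying Lemma~\ref{L-plus} to the coalgebra homomorphism $a^+\colon(A,a)\to(C,c)$ gives $c^+\tec U(a^+)=a^+$, so $u(a^+(x))=\wellp(A,a,x)$. Hence $u(a^+(x))=u(a'^+(x'))$ forces $\wellp(A,a,x)\cong\wellp(A',a',x')$, and composing the span $A\twoheadrightarrow\bar A\hookleftarrow\wellp(A,a,x)$ with the analogous one for $A'$, linked by this isomorphism of modifications, produces a zigzag of finite coalgebras in $\mathcal D$ connecting $x$ to $x'$; hence $a^+(x)=a'^+(x')$ already in $C$. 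With $u$ monic, any two homomorphisms $h_1,h_2\colon B\to C$ satisfy $u\tec h_1=u\tec h_2$ (both equal the unique map $B\to\nu H$), whence $h_1=h_2$, completing the proof.
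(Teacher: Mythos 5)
The paper gives no proof of this statement---it is imported from \cite{M2}---so your argument can only be compared with the neighbouring material, chiefly the proof of Theorem~\ref{T-rho}. Your architecture is sensible: Remark~\ref{R-it}(b) already supplies the ``initial iterative algebra'' half, so everything reduces to showing that the filtered colimit $(C,c)$ of all finite coalgebras is the final locally finite coalgebra. The filteredness, local finiteness and existence arguments are fine, and your injectivity claim for $u\colon C\to\nu H$ is in substance the same zigzag $(A,a,x)\to(\bar A,\bar a)\leftarrow\wellp(A,a,x)$ that the paper uses in part~(ii) of the proof of Theorem~\ref{T-rho}.

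There is, however, a genuine gap in the uniqueness step. Theorem~\ref{T-final}, Remark~\ref{R-final} and Lemma~\ref{L-plus} are proved under the standing assumption that $H$ preserves (wide) intersections (respectively monomorphisms); indeed the well-pointed modification $\wellp(A,a,x)$ is only defined under that hypothesis, since one must form the least subcoalgebra through a point. A finitary set functor need not preserve intersections (the functor of Example~\ref{ddo} does not), so as written your appeals to $Uu=c^+$ and to Lemma~\ref{L-plus} are not justified for the functor in the statement. The paper flags exactly this issue in Remark~\ref{R-inter-fin} and repairs it with Lemma~\ref{E-inter}: the Trnkov\'a closure $\bar H$ of a finitary $H$ preserves all intersections, and since $H$ and $\bar H$ agree on all nonempty sets and functions they have the same categories of coalgebras, hence the same diagram of finite coalgebras, the same colimit $(C,c)$ and the same locally finite coalgebras; running your argument for $\bar H$ and transporting the conclusion back closes the gap, but this reduction has to be said. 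A smaller inaccuracy: the fact that the $\Set$-colimit of the finite coalgebras carries a coalgebra structure and is the colimit in $\Coalg H$ has nothing to do with finitarity---the forgetful functor creates all colimits for every $H$; finitarity is what makes $c$ invertible in Remark~\ref{R-it}(b), which is where it is genuinely needed.
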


\begin{remark}
\label{R-inter-fin}
We prove below that given a finitary set functor, the set of all
finite well-pointed coalgebras forms the initial iterative algebra.
For this result we do not need to assume (unlike the rest of this
section) that the functor preserves intersections. This can be
deduced from the following
\end{remark}

\begin{lemma}
\label{E-inter}
For every finitary set functor~$H$ the Trnkov\'a closure~$\bar{H}$ (see Remark \ref{R-hull}) preserves (wide) intersections.
\end{lemma}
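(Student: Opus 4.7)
The plan is to combine finitariness of $\bar{H}$ with its preservation of finite intersections (established in Proposition~\ref{R-Tr}) to develop a well-defined finite \emph{support} for every element of $\bar{H}X$, from which preservation of arbitrary intersections follows immediately.

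First, I would show that $\bar{H}$ is itself finitary. Since $\bar{H}$ agrees with $H$ on non-empty sets and non-empty functions and $H$ is finitary, a filtered colimit $X = \colim X_i$ with $X \neq \emptyset$ is preserved: by filteredness the indices $i$ with $X_i \neq \emptyset$ form a cofinal subsystem on which $\bar{H} = H$ preserves the colimit, while the terms with $X_i = \emptyset$ embed into this cofinal part via the injective maps $\bar{H}q_{X_k}\colon \bar{H}\emptyset \to \bar{H}X_k$ (injectivity for non-empty $X_k$ was used in Lemma~\ref{810}). The case $X = \emptyset$ is trivial.

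Next, for any $x \in \bar{H}X$ consider the collection $\mathcal{F}_x = \{\,F \subseteq X : F \text{ finite and } x \in \mathrm{image}(\bar{H}(F \hookrightarrow X))\,\}$. Finitariness of $\bar{H}$ gives $\mathcal{F}_x \neq \emptyset$, and preservation of finite intersections makes $\mathcal{F}_x$ closed under binary meets. Fixing any $F_0 \in \mathcal{F}_x$, the members of $\mathcal{F}_x$ contained in $F_0$ form a finite meet-subsemilattice of $\mathcal{P}(F_0)$, so the full intersection $\mathrm{supp}(x) := \bigcap \mathcal{F}_x$ is realised as a finite meet and therefore belongs to $\mathcal{F}_x$. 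Moreover, for any mono $m\colon M \hookrightarrow X$ (with $M$ arbitrary) such that $x = \bar{H}m(y)$, applying finitariness of $\bar{H}$ to $y \in \bar{H}M$ yields a finite $F' \subseteq M$ with $F' \in \mathcal{F}_x$, and hence $\mathrm{supp}(x) \subseteq M$.

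Finally, given a wide intersection $(m_i\colon M_i \hookrightarrow X)_{i \in I}$ with intersection $m\colon M \hookrightarrow X$, the inclusion $\bar{H}M \subseteq \bigcap_i \bar{H}M_i$ inside $\bar{H}X$ is immediate from functoriality; conversely, if $x \in \bigcap_i \bar{H}M_i$ then the preceding step gives $\mathrm{supp}(x) \subseteq M_i$ for every $i$, hence $\mathrm{supp}(x) \subseteq M$, whence $x \in \mathrm{image}(\bar{H}(M \hookrightarrow X)) = \bar{H}M$. The principal obstacle is the construction of supports: the definition must accommodate the distinguished behaviour of $\bar{H}$ at $\emptyset$, and it is precisely the injectivity of $\bar{H}q_X$ (Lemma~\ref{810}) that makes $\bar{H}\emptyset$ a genuine subobject of $\bar{H}X$ and legitimises $F = \emptyset$ as a member of $\mathcal{F}_x$.
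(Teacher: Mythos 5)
Your proof is correct and follows essentially the same route as the paper: both exploit finitariness of $\bar{H}$ together with its preservation of finite intersections to produce a least finite subset through which $x$ factors (the paper chooses one of minimal cardinality, you intersect all of them) and then observe that this subset must be contained in every $M_i$, hence in their intersection. The only blemish is your appeal to Lemma~\ref{810} for the injectivity of $\bar{H}q_X$ --- that lemma concerns cartesian subcoalgebras; the injectivity you need follows simply from the fact that $\bar{H}$ preserves finite intersections and therefore monomorphisms.
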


\begin{proof}
The functor~$\bar{H}$ of Proposition~\ref{R-Tr} is obviously also
finitary. It preserves finite intersections, and we deduce that it
preserves all intersections. Given subobjects $m_i\colon A_i\to B$
($i\in I$) with an intersection $m\colon A\to B$, let $x\in\bar{H}B$
lie in the image of each~$Hm_i$; we are to prove that $x$~lies in the
image of~$\bar{H}m$. Choose a subset $n\colon C\to B$ of the smallest
(finite) cardinality with $x$~lying in the image of~$\bar{H}n$. Since
$\bar{H}$~preserves the intersection of $n$ and~$m_i$, the minimality
of~$C$ guarantees that $n\subseteq m_i$ (for every $i\in I$). Thus,
$n\subseteq m$, proving that $x$~lies in the image of~$\bar{H}m$.
\end{proof}

\begin{notation}
\label{N-rho}
For every finitary set functor denote by
\[\varrho H\]
the set of all finite well-pointed coalgebras up to isomorphism.

Given a finite coalgebra $a\colon A\to HA$ we again define a function
\[a^+\colon A\to\varrho H\]
by assigning to every element $x\colon1\to A$ the well-pointed
coalgebra of Notation~\ref{N-wp}:
\[a^+(x)=\wellp(A, a, x).\]
This is well-defined due to Corollary
\ref{L-colim} and Lemma \ref{wfdsub}
since
$H$ and~$\bar{H}$
have the same pointed coalgebras.
\end{notation}

\begin{theorem}
\label{T-rho}
Every finitary set functor~$H$ has the initial iterative
algebra~$\varrho H$.
\end{theorem}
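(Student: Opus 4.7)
The plan is to prove that $\varrho H$ carries the structure of a final locally finite coalgebra for $H$; Theorem~\ref{T-locfin} then identifies this with the initial iterative algebra for $H$.

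Since $H$ is not assumed to preserve intersections, I work through its Trnkov\'a closure $\bar H$, which preserves wide intersections by Lemma~\ref{E-inter} and agrees with $H$ on nonempty sets and maps. In particular, the nonempty coalgebras of $H$ and $\bar H$ coincide, so $\varrho H = \varrho \bar H$, and the intersection-preserving machinery of Subsections~\ref{well-p} and~\ref{final-coal} applied to $\bar H$ transfers back to nonempty $H$-coalgebras. Crucially, the $\bar H$-version of Lemma~\ref{L-plus} supplies the naturality $c^+ \cdot Uf = a^+$ for every coalgebra homomorphism $f : (A,a) \to (C,c)$.

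I equip $\varrho H$ with the coalgebra structure $\tau$ of Example~\ref{E-sets}, namely $\tau(A,a,x) = H a^+(a(x))$. This is well-typed because the well-pointed modification of a finite pointed coalgebra is again finite: the simple quotient of a finite coalgebra is finite, and the subsequently generated subcoalgebra stays inside a finite underlying set. Hence $a^+ : A \to \varrho H$ is defined for every finite coalgebra $(A,a)$, and since every well-pointed coalgebra is its own well-pointed modification, the induced map $\tau^+$ is the identity on $\varrho H$. Local finiteness of $(\varrho H, \tau)$ follows immediately, since each element $(A,a,x)$ lies in the finite subcoalgebra $a^+[A] \subseteq \varrho H$, the image of the coalgebra homomorphism $a^+$.

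For finality among locally finite coalgebras $(B,b)$, I claim that $b^+ : B \to \varrho H$, defined pointwise by $b^+(y) = \wellp(B,b,y)$, is the unique coalgebra homomorphism $(B,b) \to (\varrho H, \tau)$. For each $y \in B$ choose a finite subcoalgebra $m_y : (B_y, b_y, y) \hookrightarrow (B,b)$ containing $y$; Lemma~\ref{L-plus} applied to $m_y$ gives $b^+(y) = b_y^+(y) \in \varrho H$, so the map really lands in $\varrho H$. Checked locally on each $B_y$, where $b^+$ agrees with the coalgebra homomorphism $b_y^+$, the map $b^+$ is itself a coalgebra homomorphism. Uniqueness follows from a further application of Lemma~\ref{L-plus}: any coalgebra homomorphism $h : (B,b) \to (\varrho H, \tau)$ satisfies $b^+ = \tau^+ \cdot h = h$ since $\tau^+ = \id$. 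The main obstacle is the careful transfer of this intersection-preserving calculus from $\bar H$ back to $H$; the coincidence of the two functors on nonempty data is precisely what makes the transfer sound.
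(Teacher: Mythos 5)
Your proof is correct, but it verifies a different universal property than the paper does. The paper shows that the cocone $a^+\colon (A,a)\to(\varrho H,\psi^*)$, ranging over all \emph{finite} coalgebras, is a colimit cocone in $\Coalg H$ (joint surjectivity plus a zig-zag connecting any two elements with $a^+(x)=b^+(y)$), and then invokes Remark~\ref{R-it}(b) to identify that colimit with the initial iterative algebra. You instead prove directly that $(\varrho H,\tau)$ is the \emph{final locally finite coalgebra} — local finiteness via the finite subcoalgebras $a^+[A]$, existence of the homomorphism from an arbitrary locally finite $(B,b)$ by gluing the maps $b_y^+$ over finite subcoalgebras $B_y\ni y$ using Lemma~\ref{L-plus}, and uniqueness from $\tau^+=\id$ — and then apply the first part of Theorem~\ref{T-locfin}. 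Both routes lean on the same imported background (Theorem~\ref{T-locfin} and Remark~\ref{R-it} record equivalent facts), and both handle the lack of intersection preservation by passing to the Trnkov\'a closure $\bar H$, which agrees with $H$ on all (necessarily nonempty) pointed coalgebras. Your route has the advantage of never needing the zig-zag argument, at the cost of having to treat infinite locally finite coalgebras explicitly. One step you should tighten: the claim $\tau^+=\id$ does not follow merely from ``every well-pointed coalgebra is its own well-pointed modification''; you need to first check that each $a^+\colon(A,a)\to(\varrho H,\tau)$ is a coalgebra homomorphism (the same routine computation the paper defers to with ``analogously to the proof of Theorem~\ref{T-final}'') and then apply Lemma~\ref{L-plus} to get $\tau^+\tec a^+=a^+$, whence $\tau^+(A,a,x)=\tau^+(a^+(x))=a^+(x)=(A,a,x)$. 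Since you implicitly use that $a^+$ is a homomorphism anyway (for local finiteness), this is a presentational lacuna rather than a genuine gap.
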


\begin{remb}
$\varrho H$~has the canonical coalgebra structure
\[{\psi}^{*}\colon\varrho H\to H(\varrho H).\]
It assigns, analogously to \eqref{coalg},  to every element~$(A,a,x)$ of~$\varrho H$ the following
element of~$H(\varrho H)$:
\[1\to^x A\to^a HA\to^{Ha^+} H(\varrho H).\]
We prove below that this is the final locally finite coalgebra. Thus,
$\varrho H$~is the initial iterative algebra
w.r.t. the inverse of~${\psi}^{*}$, by Theorem~\ref{T-locfin}.
\end{remb}

\proof
Analogously to the proof of Theorem~\ref{T-final} one verifies that
the morphisms
\[ a^+\colon(A,a)\to (\varrho H,{\psi}^{*})\qquad\text{($A$
finite)}\]
are coalgebra homomorphisms forming a cocone. By Remark~\ref{R-it}(b)
it remains
to prove that this is a
colimit in~$\Coalg H$.  We verify that  all $a^+$'s form
a colimit cocone in~\Set. That is:

\begin{enumerate}[(i)]
\item Every element of~$\varrho H$ has the form~$a^+(x)$ for some finite
  coalgebra~$(A,a)$ and some $x\in A$. Indeed, for every
  element~$(A,a,x)$ of~$\varrho H$ we have $a^+(x)=(A,a,x)$.

\item Whenever
  \[a^+(x)=b^+(y)\]
  holds for two finite coalgebras $(A,a)$ and~$(B,b)$ and for  elements
  $x\in A$, $y\in B$ (turning them into pointed coalgebras), there
  exists a zig-zag of homomorphisms of finite pointed coalgebras
  connecting~$(A,a,x)$ with~$(B,b,y)$. For that recall
  $a^+(x)=(\bar{A}_0,\bar{a}_0,x_0)$ in the notation~\ref{N-wp}. Here
  is the desired zig-zag:
  \[\bfig
  \Vtriangle/`>`>/<600,400>[(A,a,x)`(\bar{A}_0,\bar{a}_0,x_0)`%
  (\bar{A},\bar{a},e_{(A,a)}\tec x);`e_{(A,a)}`m_{(A,a)}]
  \Vtriangle(1200,0)/`>`>/<600,400>[(\bar{A}_0,\bar{a}_0,x_0)`(B,b,y)`%
  (\bar{B},\bar{b},e_{(B,b)}\tec y);`e_{(B,b)}`m_{(B,b)}]
  \efig
  \]\vspace{-2\baselineskip}

\qed
\end{enumerate}

\takeout{
\begin{example}
\label{E-seq-R}
The functor $HX=X^I\times \{0,1\}$ representing deterministic automata has the initial iterative algebra 
\begin{align*}
\varrho \, X .\, XÎ\times \{0,1\} & = \text{all finite minimal
  automata}
\\
&\cong\text{all regular languages in $\PP I^{*}$.}
\end{align*}
\end{example}
}

\begin{remark}
\label{R-fini}
For non-finitary set functors~$H$ the set~$\varrho H$ also carries
the  structure above of a coalgebra. But this is in general not a
fixed point of~$H$. For example, the functor $HX=X^{\nn}+1$ has the
final coalgebra consisting of all countably branching trees. And $\varrho H$~is the set of all rational
trees, i.e., those having only finitely many subtrees (up to
isomorphism), see Example~\ref{E-rat-2}. This is a subcoalgebra of
the final coalgebra, but not a fixed point of~$H$.
\end{remark}

\begin{remark}\label{R-new-loc-fin}
Theorem \ref{T-rho} generalizes immediately to varieties of algebras that are locally finite, i.e. free algebras on finitely many variables are finite (for example, boolean algebras or semilattices).

	Let $H$ be a finitary endofunctor preserving intersections. The set
	$$\rho H = \mbox{ all finite  well-pointed coalgebras}$$
forms a subcoalgebra of the coalgebra of Construction \ref{C-mon}. This is the initial iterative algebra for $H$. (The proof is entirely analogous to that
of Proposition \ref{P-mon}, based on Theorem \ref{T-locfin} and
the fact that a finite coproduct of finite coalgebras is finite.)
\end{remark}



\section{Examples of well-pointed coalgebras}
\label{ctvrta}\setcounter{equation}{0}
For a number of important set functors~$H$ we are going to apply the
results of Section~\ref{treti} and compare them to the well-known
description of the three fixed points of interest: the final
coalgebra, the initial algebra, and the initial iterative algebra
($=$~final locally finite coalgebra).
The last one is also called the rational fixed point of $H$.
 Throughout this section pointed
coalgebras are considered up to (point-preserving) isomorphism.
Recall that
\begin{align*}
\nu H&=\text{all well-pointed coalgebras}\\
\mu H&=\text{all well-founded, well-pointed coalgebras}\\
\intertext{and if $H$ is a finitary functor}
\varrho H&=\text{all finite well-pointed coalgebras.}
\end{align*}
We are using various types of labeled trees throughout this section.
Trees, too, are considered up to (label-preserving) isomorphism.
Unless explicitly stated, trees are ordered, i.e., a linear ordering
on the children of every node is always given.

In all our examples the endofunctors~$H$ used preserves intersections
and weak pullbacks. Recall from Rutten~\cite{R1} that this implies
that
\begin{enumerate}[(a)]
\item congruences on a coalgebra~$A$ are precisely the kernel equivalences
 of homomorphisms $f\colon A\to B$, and 
\item for every coalgebra the largest congruence is precisely the
  bisimilarity equivalence. 
\end{enumerate}

Also recall from Remark~\ref{R-inv} that, for these functors, every pointed coalgebra yields
a well-pointed one by first forming the ``reachable part'' and then the simple coreflection.

In pictures of pointed coalgebras the choice of the point~$q_0$ is
depicted by
\[\bfig
\morphism/^{(}->/<100,0>[`;]
\POS(160,0)
*+{q_0}*\cir{}
\efig\]

\subsection{Moore automata}
\label{subsA}
\hfill

\smallskip\noindent
Given a set~$I$ of inputs and a set~$J$ of outputs, a \textit{Moore
automaton} on a set~$Q$ (of states) is given by a next-state function
$\delta\colon Q\times I\to Q$ curried as
\[\curry\delta\colon Q\to Q^I\]
an output function
\[\out\colon Q\to J\]
and an initial state $q_0\in Q$. The first two items form a coalgebra for
\[HX=X^I\times J,\]
thus we work with pointed coalgebras for this functor, with $q_0$ as the chosen point.  The \textit{behavior} of
an automaton is the function
\[\beta\colon I^*\to J\]
which to every input word $w\in I^*$ assigns the output of the state
reached from~$q_0$ by applying the inputs in~$w$. A function
$\beta\colon I^*\to J$ is called \textit{regular} if the set of all
functions $\beta(w{-})\colon I^*\to J$ for $w\in I^*$ is finite.

\begin{lemma}
\label{L-Moore}
The largest congruence on a Moore automaton merges states $q$
and~$q'$ iff by applying an arbitrary finite sequence of inputs to
each of them, we obtain states with the same output.
\end{lemma}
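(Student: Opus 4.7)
The plan is to use the two facts recalled just before the lemma: since the functor $HX = X^I \times J$ preserves weak pullbacks, (a)~congruences on an $H$-coalgebra are precisely kernel equivalences of homomorphisms, and (b)~the largest congruence on any coalgebra coincides with bisimilarity. Writing $\delta^{\star}\colon Q\times I^{*}\to Q$ for the iterated next-state function, the lemma reduces to showing that bisimilarity on $(Q,\curry\delta,\out)$ coincides with the relation $R$ defined by
\[
q\mathrel{R} q'\quad\text{iff}\quad \out(\delta^{\star}(q,w))=\out(\delta^{\star}(q',w))\text{ for all }w\in I^{*}.
\]

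First I would prove $R\subseteq {\sim}$ (with $\sim$ bisimilarity) by checking that $R$ is a congruence and invoking maximality of $\sim$. Visibly $R$ is an equivalence; to see that the coalgebra structure descends to $Q/R$, two points suffice. Taking $w=\varepsilon$ in the definition of $R$ shows that $\out$ factors through $Q/R$. And for each $i\in I$ the identity $\delta^{\star}(\delta(q,i),w)=\delta^{\star}(q,iw)$ gives $q\mathrel{R} q'\Rightarrow \delta(q,i)\mathrel{R}\delta(q',i)$, so $\curry\delta$ also factors through $Q/R$. By~(a), $R$ is a congruence and hence contained in $\sim$.

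For the converse ${\sim}\subseteq R$ I would argue by induction on $|w|$. Suppose $q\sim q'$. By~(a) the quotient map $Q\to Q/{\sim}$ is a coalgebra homomorphism, and in particular it preserves the output and the next-state components; this handles the base case $w=\varepsilon$ and yields $\delta(q,i)\sim\delta(q',i)$ for every $i\in I$. The inductive hypothesis applied to $\delta(q,i)$ and $\delta(q',i)$ with the word $w$ then gives
\[
\out(\delta^{\star}(q,iw))=\out(\delta^{\star}(\delta(q,i),w))=\out(\delta^{\star}(\delta(q',i),w))=\out(\delta^{\star}(q',iw)).
\]

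No step of the argument looks truly subtle; the only place requiring attention is the verification that $R$ is genuinely a coalgebra congruence rather than merely a convenient equivalence relation, and this is the elementary calculation in the second paragraph.
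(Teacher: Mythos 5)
Your proof is correct. The paper gives no argument for this lemma (it only remarks that it is ``well-known and easy to prove''), and your two-sided verification --- showing that the relation $R$ is a congruence by descending the structure map to $Q/R$, and then showing ${\sim}\subseteq R$ by induction on $|w|$ using that the quotient by $\sim$ is a coalgebra homomorphism --- is precisely the standard argument the authors have in mind, relying on the facts (a) and (b) recalled at the start of Section~4.
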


This is well-known and easy to prove. Automata satisfying this
condition are called \textit{simple}. Another well-known fact is
the following

\begin{theorem}
\label{T-Moore}
For every function $\beta\colon I^*\to J$ there exists a reachable
and simple Moore automaton with the behavior~$\beta$. This
automaton is unique up to isomorphism. It is finite iff $\beta$~is
regular.
\end{theorem}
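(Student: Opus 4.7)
The plan is to deduce this theorem directly from the general framework of Section \ref{treti}, by recognising that reachable-and-simple Moore automata are precisely the well-pointed $H$-coalgebras for $HX = X^I \times J$. Since $H$ is polynomial, it preserves (wide) intersections, so Theorem \ref{T-final} applies. I would first identify the final coalgebra as $\nu H = J^{I^*}$ with structure
\[
\tau(\beta) = \bigl(\, i \mapsto \beta(i \cdot {-}),\ \beta(\varepsilon)\,\bigr),
\]
so that the unique homomorphism from a pointed Moore automaton $(Q, \curry\delta, \out, q_0)$ to $(\nu H, \tau)$ carries $q_0$ to its behavior.

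Next I would invoke Theorem \ref{T-final} together with Example \ref{E-sets}: the elements of $U\nu H$ correspond bijectively, via $\tau^+$, to the isomorphism classes of well-pointed $H$-coalgebras. Since the final coalgebra is already simple (Example \ref{E-simp}(1)), $\tau^+(\beta)$ is just the subcoalgebra of $\nu H$ generated by $\beta$, namely
\[
Q_\beta = \{\, \beta(w \cdot {-}) : w \in I^* \,\} \;\subseteq\; J^{I^*},
\]
pointed by $\beta$ itself, with next-state map $\gamma \mapsto \gamma(i \cdot{-})$ and output $\gamma \mapsto \gamma(\varepsilon)$. By construction this $Q_\beta$ is reachable; it is simple as a subcoalgebra of a simple coalgebra (Corollary \ref{C-sub}); and its behavior is $\beta$. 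By Examples \ref{E-simp}(3) and \ref{E-wp}(a), categorical simplicity agrees here with the automata-theoretic observability of Lemma \ref{L-Moore}, so $Q_\beta$ is the required reachable-and-simple realization. Uniqueness up to isomorphism is immediate from Observation \ref{ord}: any two such realizations both represent the same element of the representative set $T = U\nu H$ chosen in Theorem \ref{T-final}.

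For the finiteness clause, I would simply observe that the carrier $Q_\beta$ is finite iff the family of ``shifts'' $\{\beta(w \cdot {-}) : w \in I^*\}$ is finite, which is precisely the definition of regularity of $\beta$. The main subtlety, such as it is, lies in confirming that the categorical notions (well-pointed, simple, reachable) coincide with the classical automata-theoretic ones; but this reduces to Examples \ref{E-simp}(3) and \ref{E-wp}(a), already established in the paper. The rest is a direct instantiation of the general correspondence between elements of $\nu H$ and well-pointed coalgebras, and there is no genuine obstacle.
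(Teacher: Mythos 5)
Your argument is correct, but it takes the opposite route from the paper: the paper offers no proof of Theorem~\ref{T-Moore} at all, presenting it as a well-known fact of classical automata theory (essentially the Nerode/minimal-realization theorem), and then \emph{uses} it in Corollary~\ref{C-Moore} to identify the abstract final coalgebra of well-pointed coalgebras with $J^{I^*}$. You instead derive the theorem \emph{from} the general machinery: you first establish by a direct check that $J^{I^*}$ with $\tau(\beta)=\bigl(i\mapsto\beta(i\cdot{-}),\,\beta(\varepsilon)\bigr)$ is final (the unique homomorphism being the behavior map), and then read off the reachable-and-simple realization as the subcoalgebra $Q_\beta=\{\beta(w\cdot{-}):w\in I^*\}$ generated by $\beta$, with simplicity from Corollary~\ref{C-sub}, uniqueness from the fact that a homomorphism from a simple coalgebra to a reachable pointed one is an isomorphism (the remark after Definition~\ref{D-well}, which is a sharper tool here than Observation~\ref{ord} alone), and finiteness matching the paper's definition of regularity verbatim. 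Two caveats: (i) you must really prove finality of $J^{I^*}$ independently rather than borrow it from Corollary~\ref{C-Moore}, since that corollary is itself deduced from Theorem~\ref{T-Moore} — you indicate you would do this, and it is an elementary verification, so there is no circularity, but it should be made explicit; (ii) the identification of the automata-theoretic ``simple'' of Lemma~\ref{L-Moore} with categorical simplicity is exactly what Lemma~\ref{L-Moore} provides, so that step is covered. What each approach buys: the paper's treatment keeps the classical result as an independent benchmark against which the abstract description is checked, whereas yours shows the classical result is subsumed by the general theory; note that once unwound, your $Q_\beta$ is precisely the classical derivative construction, so the two proofs coincide in substance.
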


\begin{corollary}
\label{C-Moore}
For Moore automata, $HX=X^I\times J$, we have
$$\begin{array}{ll}
\nu H\, \cong J^{I^*},&\text{all functions $\beta\colon I^*\to
J$;}\\
\varrho H\, \cong&\text{all regular functions $\beta\colon I^*\to J$;}\\
\mu H\, =\emptyset.&
\end{array}$$
The coalgebra structure of~$\nu H$ (and~$\varrho H$) assigns to
every~$\beta\colon I^*\to J$
the pair in $(\nu H)^I\times J$ consisting of the
function $i\to/|->/\beta(i{-})$ for $i\in I$ and the
element~$\beta(\varepsilon)$ of~$J$.
\end{corollary}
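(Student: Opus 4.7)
The plan is to deduce the three fixed points directly from the general machinery of Section~\ref{treti} by identifying well-pointed coalgebras for $HX=X^I\times J$ with a well-known class of automata.

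First, observe that a pointed $H$-coalgebra is precisely a Moore automaton with a designated initial state. By Example~\ref{E-simp}(3) adapted from outputs in $\{0,1\}$ to $J$, simplicity of the coalgebra is exactly the observability condition described by Lemma~\ref{L-Moore}, and reachability as a pointed coalgebra is exactly the usual reachability of automata from $q_0$. Hence a well-pointed coalgebra in the sense of Definition~\ref{D-well} is the same thing as a reachable, simple Moore automaton. Theorem~\ref{T-Moore} then provides a bijection between isomorphism classes of well-pointed coalgebras and functions $\beta\colon I^*\to J$, restricting to a bijection between \emph{finite} well-pointed coalgebras and \emph{regular} behaviors.

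The three displayed identifications now follow mechanically. For $\nu H$, apply Theorem~\ref{T-final}: its underlying set is the collection of well-pointed coalgebras, which by the above bijection is $J^{I^*}$. For $\varrho H$, apply Theorem~\ref{T-rho}: the underlying set is the collection of finite well-pointed coalgebras, i.e.\ the set of regular functions. For $\mu H$, note that $H\emptyset=\emptyset^I\times J=\emptyset$ (for the intended case $I\neq\emptyset$), so Example~\ref{E-seq-I} applies and gives $\mu H=\emptyset$; equivalently, any non-empty Moore automaton admits an infinite computation, so no pointed coalgebra is well-founded.

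Finally, the coalgebra structure claim is obtained by transporting the canonical structure of Example~\ref{E-sets}, which sends $(A,a,x)$ to $Ha^+(a(x))$, along the bijection $\nu H\cong J^{I^*}$. Given $\beta\in J^{I^*}$ represented by its minimal realization $(A,a,q_0)$ with $a(q_0)=(\curry\delta(q_0),\out(q_0))$, the component in $J$ is $\out(q_0)=\beta(\varepsilon)$, and for each $i\in I$ the $i$-th component is the behavior of the successor state $\delta(q_0,i)$, namely $w\mapsto\beta(iw)$. This is the only non-formal step, and it reduces to a direct unfolding via Lemma~\ref{L-plus}; I expect no real obstacle, since once Theorem~\ref{T-Moore} supplies the bijection, everything else is routine bookkeeping.
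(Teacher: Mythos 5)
Your proposal is correct and follows essentially the same route as the paper: identify well-pointed coalgebras with reachable simple Moore automata, invoke Theorem~\ref{T-Moore} for the bijection with $J^{I^*}$ (restricting to regular behaviors for the finite ones), apply Theorems~\ref{T-final} and~\ref{T-rho}, observe $H\emptyset=\emptyset$ for $\mu H$, and unfold the structure map of Example~\ref{E-sets}. Your explicit aside that $I\neq\emptyset$ is needed for $H\emptyset=\emptyset$ is a small point of care the paper leaves implicit.
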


Indeed, the isomorphism between~$\nu H$, the set of all reachable
and simple automata, and~$J^{I^*}$ is given by the  theorem above.
And the structure map of Example \ref{E-sets} is easily seen
to correspond to the   map above, taking~$\beta$ to
$(i\to/|->/\beta(i{-}),\beta(\varepsilon))$. Also the isomorphism
of~$\varrho H$ and all regular functions follows from the  theorem above;
 from Theorem~\ref{T-rho} we know that $\varrho H$~is a
subcoalgebra of~$\nu H$.

Finally, $\mu H=\emptyset$ since
no well-pointed coalgebra~$(A,a)$ is well-founded due to the
cartesian subcoalgebra
\[\bfig
\Square<400>[\emptyset`H\emptyset=\emptyset`A`A^I\times J;%
\id`m`Hm`a]
\efig\]

\begin{example}
\label{E-Moore}
If $J=\{0,1\}$ we get $\nu H=\PP I^*$ and $\varrho
H=\mbox{}$regular languages, see Examples \ref{E-seq-T}
and~\ref{E-iter} (a).
\end{example}

\subsection{Mealy automata}
\label{subsB}\hfill

\smallskip\noindent
For \textit{Mealy automata} the next-state function has the form
$\delta\colon Q\times I\to Q\times J$ and in curried form this is a
coalgebra for
\[HX=(X\times J)^I.\]
Given a state~$q$ of a Mealy automaton~$Q$, its \textit{response
function~$f_q$} is the function $f_q\colon I^\omega\to
J^\omega$ assigning to  an infinite word of input symbols the
infinite word of output symbols (delayed by one time unit) consisting of the
outputs given by the transitions (as the computations of the inputs are
performed, starting in~$q$). Observe that $f_q$~is a \textit{causal
function}, i.\,e., for every infinite word~$w$ the \mbox{$n$-th} component
of~$f_q(w)$ depends only on the first  $n$~components of~$w$.

\begin{remark}
\label{R-Mealy}
Given a causal function $f\colon I^\omega\to J^\omega$ the 
property above with $n=0$ tells us that the component~0 of~$f(w)$ only
depends on~$w_0$. We thus obtain a derived
function
\[f^0\colon I\to J\]
with $f(iw)=f^0(i)w'$ (for convenient $w'$) for all $w\in I^\omega$.
\end{remark}

\begin{lemma}
\label{L-Mealy}
For every Mealy automaton the largest congruence merges precisely the
pairs of states with the same response function.
\end{lemma}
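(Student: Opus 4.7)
The plan is to invoke the two facts listed at the start of Section~\ref{ctvrta}: since $HX=(X\times J)^I$ is a polynomial functor it preserves intersections and weak pullbacks, and hence the largest congruence on any Mealy coalgebra $(Q,\delta)$ is precisely bisimilarity. Thus the lemma reduces to proving that states $q,q'\in Q$ are bisimilar iff $f_q=f_{q'}$.

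For the ``only if'' direction I would argue coinductively: assume $q\sim q'$ and show by induction on $n$ that $f_q(w)_n=f_{q'}(w)_n$ for every $w\in I^\omega$. Writing $w=iw'$, unfold $\delta(q,i)=(q_1,j)$; by the definition of bisimilarity (using that $H$ preserves weak pullbacks, so a bisimulation $R$ carries a coalgebra structure compatible with the two projections) there is $q_1'$ with $\delta(q',i)=(q_1',j)$ and $q_1\sim q_1'$. The $0$-th component of $f_q(w)$ is $j$, matching that of $f_{q'}(w)$, and the tail of $f_q(w)$ is $f_{q_1}(w')$, analogously for $q_1'$; the induction hypothesis closes the argument.

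For the converse I would define the relation
\[ R\;=\;\{(q,q')\in Q\times Q \mid f_q=f_{q'}\} \]
and show $R$ is a bisimulation. Given $(q,q')\in R$, Remark~\ref{R-Mealy} applied to $f_q=f_{q'}$ yields $f_q^0=f_{q'}^0$, so for every input $i\in I$ the transitions $\delta(q,i)=(q_1,j)$ and $\delta(q',i)=(q_1',j')$ agree in the output component, $j=j'$. Moreover, by causality of response functions one has $f_{q_1}(w')=\mathrm{tail}(f_q(iw'))$ and $f_{q_1'}(w')=\mathrm{tail}(f_{q'}(iw'))$ for every $w'\in I^\omega$, so $f_q=f_{q'}$ forces $f_{q_1}=f_{q_1'}$, i.e.\ $(q_1,q_1')\in R$. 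This produces the required coalgebra structure on $R$ exhibiting it as a bisimulation, whence $q\sim q'$.

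No step presents a genuine obstacle; the only care needed is the careful bookkeeping of the causality shift (the one-time-unit delay alluded to in Remark~\ref{R-Mealy}) so that the tail of $f_q$ on input $iw'$ is exactly the response function of the successor state $q_1$, which is what lets $R$ be closed under one-step transitions.
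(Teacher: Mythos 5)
Your proof is correct and in substance the same as the paper's: the paper defines $\sim$ by $f_q=f_{q'}$, shows it is a congruence by constructing the quotient automaton (which is exactly your check that $R$ is closed under one-step transitions), and shows every congruence is contained in $\sim$ because coalgebra homomorphisms preserve response functions (your ``bisimilar implies equal responses'' direction). The only difference is cosmetic: you route through the relational notion of bisimulation via the weak-pullback fact quoted at the start of Section~\ref{ctvrta}, whereas the paper works directly with congruences and quotients; the causality bookkeeping ($\tail f_q(iw')=f_{q_1}(w')$) is the same in both.
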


\begin{proof}
Let $Q$~be a Mealy automaton, then the equivalence $q\sim q'$
iff $f_q=f_{q'}$ is obviously a congruence. We have a structure of a
Mealy automaton~$\bar{\delta}$ on~$Q/{\sim}$ derived from that
of~$Q$: Given a state $[q]\in Q/{\sim}$ and an input $i\in I$, the
pair $\delta(q,i)=(q',j)$ yields $\bar{\delta}([q],i)=([q'],j)$. It is
easy to verify that the canonical map $c\colon Q\to Q/{\sim}$ is a
coalgebra homomorphism
$c\colon(Q,\delta,q_0)\to(Q/{\sim},\bar{\delta},[q_0])$. Conversely,
every congruence is contained in~$\sim$ because given a coalgebra
homomorphism $h\colon Q\to\bar{Q}$ then for every state $q\in Q$ we
have $f_q=f_{h(q)}$. Thus, the kernel congruence of~$h$ is contained
in~$\sim$.
\end{proof}

\begin{corollary}
\label{C-Mealy}
The well-pointed Mealy automata are precisely those with an initial
state~$q_0$ such that the automaton is
\begin{enumerate}[\rm (a)]
\item reachable: every state can be reached from~$q_0$, and 
\item simple: different states have different response functions.
\end{enumerate}
\end{corollary}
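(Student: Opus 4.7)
The plan is to invoke the decomposition \emph{well-pointed $=$ simple $+$ reachable} recorded after Definition~\ref{D-well}, and check each condition separately using the preceding lemma and the concrete description of subcoalgebras for $HX = (X\times J)^I$.

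For simplicity, I would argue as follows. Since $HX = (X \times J)^I$ preserves weak pullbacks (it is a composite of a product with a constant and an exponential), congruences on a coalgebra correspond to kernel equivalences of coalgebra homomorphisms, as recalled at the start of Section~\ref{ctvrta}. Thus a Mealy automaton is simple (i.e.\ has no proper quotient in $\Coalg H$) iff its largest congruence is the diagonal. By Lemma~\ref{L-Mealy} that largest congruence identifies precisely the pairs of states with equal response function, so simplicity is equivalent to condition (b): distinct states have distinct response functions.

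For reachability, I would unfold Definition~\ref{D-well} and its remark: $(Q,\delta,q_0)$ is reachable iff every subcoalgebra $m\colon (Q',\delta')\hookrightarrow (Q,\delta)$ whose image contains $q_0$ must be invertible. A monic coalgebra homomorphism into $(Q,\delta)$ is carried by a subset $Q'\subseteq Q$ closed under the next-state map, in the sense that $\delta(q,i) = (q'',j)$ with $q\in Q'$ forces $q''\in Q'$; conversely every such subset inherits a unique coalgebra structure making the inclusion a homomorphism. Hence the smallest subcoalgebra containing $q_0$ is the set of all states reachable from $q_0$ by finite input sequences, and reachability in the sense of Definition~\ref{D-well} is exactly condition (a).

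Combining the two equivalences gives the corollary. The only mildly delicate point is the bijection between subcoalgebras of $(Q,\delta)$ and transition-closed subsets of $Q$, since pointed homomorphisms are stronger than mere graph inclusions; this is routine, though, because $HX = (X \times J)^I$ preserves monomorphisms and so subcoalgebras are carried by injections in $\Set$, which can then be identified with their images. I do not expect any serious obstacle.
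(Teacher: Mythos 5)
Your proposal is correct and follows the same route the paper intends: the corollary is an immediate consequence of the decomposition well-pointed $=$ simple $+$ reachable from the remark after Definition~\ref{D-well}, with simplicity identified via Lemma~\ref{L-Mealy} (the largest congruence merges states with equal response functions) and reachability identified with the concrete description of subcoalgebras of $HX=(X\times J)^I$ as transition-closed subsets. Your fleshing-out of the two equivalences, including the identification of the least subcoalgebra containing $q_0$ with the set of states reachable by finite input words, is exactly the intended (and in the paper, unwritten) argument.
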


The automata satisfying (a) and~(b) together are
called ``minimal''. The following theorem can be found in
Eilenberg \cite[Theorem XII.4.1]{E}:

\begin{theorem}
\label{T-Mealy}
For every causal function~$f$ there exists a unique well-pointed
coalgebra whose initial state has the response function~$f$.
\end{theorem}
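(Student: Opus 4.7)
The plan is to construct a canonical Mealy coalgebra on the set of all causal functions, restrict to the reachable subcoalgebra from $f$, and verify that this gives a well-pointed realisation of $f$; uniqueness will then fall out of the simple-plus-reachable machinery of Section~\ref{treti}.

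For existence, let $C$ be the set of all causal functions $g\colon I^\omega\to J^\omega$. Causality of $g$ implies that for each $i\in I$ there is a unique causal function $g_i$ with $g(iw)=g^0(i)\cdot g_i(w)$ for every $w\in I^\omega$, where $g^0$ is as in Remark~\ref{R-Mealy}. I would then define a Mealy structure $\delta\colon C\to (C\times J)^I$ by $\delta(g)(i)=(g_i,g^0(i))$, and check by a one-line induction on the position that the response function $f_g$ of the state $g\in(C,\delta)$ satisfies $f_g=g$: position $0$ is $g^0(i_0)=g(i_0w)_0$, and the remaining tail is produced by $g_{i_0}$, whose response is $g_{i_0}$ by the induction hypothesis. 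In particular distinct states of $C$ have distinct response functions, so by Lemma~\ref{L-Mealy} the coalgebra $(C,\delta)$ is simple. Letting $(A,\alpha)$ be the least subcoalgebra of $(C,\delta)$ containing~$f$, Corollary~\ref{C-sub} keeps $(A,\alpha)$ simple, and it is reachable from~$f$ by construction; hence $(A,\alpha,f)$ is a well-pointed Mealy automaton (Corollary~\ref{C-Mealy}) whose initial state has response function~$f$.

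For uniqueness, let $(Q,\delta',q_0)$ be any well-pointed Mealy automaton with $f_{q_0}=f$. The assignment $q\mapsto f_q$ is readily checked to be a coalgebra homomorphism $h\colon (Q,\delta')\to (C,\delta)$: if $\delta'(q,i)=(q',j)$ then $(f_q)^0(i)=j$ and $(f_q)_i=f_{q'}$, so $h(q')=(h(q))_i$ and $j$ agrees with the output component. Since $h(q_0)=f$, the image of $h$ is a subcoalgebra of $C$ containing~$f$, and hence contains $A$; thus $h$ factors through $(A,\alpha)$ as a pointed homomorphism $\bar h\colon (Q,\delta',q_0)\to (A,\alpha,f)$. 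Reachability of the codomain forces the image of $\bar h$ to be all of $A$, so $\bar h$ is carried by a surjection, while simplicity of $(Q,\delta')$ (cf.~Observation~\ref{ord}) forces $\bar h$ to be carried by an injection; therefore $\bar h$ is an isomorphism of pointed coalgebras.

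The main obstacle is the existence step, specifically the self-describing property of the Mealy coalgebra $(C,\delta)$: that the response function of $g\in C$, viewed as a state, is $g$ itself. Once this is in hand, the rest is a routine application of the reachable-plus-simple equals well-pointed principle developed in Section~\ref{treti}.
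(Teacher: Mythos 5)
Your proof is correct, but note that the paper does not actually prove this theorem: it is quoted from Eilenberg \cite[Theorem~XII.4.1]{E}, whose argument is the classical automata-theoretic minimal-realization construction. What you give instead is the coalgebraic proof: you build the canonical Mealy structure on the set $C$ of all causal functions via input derivatives $g\mapsto (g_i,g^0(i))$ --- this is exactly Rutten's description of the final coalgebra for $(-\times J)^I$, which the paper only invokes \emph{after} Theorem~\ref{T-Mealy} in Corollary~\ref{C-Mealy-2} --- verify the self-describing property $f_g=g$ by induction on output position, conclude simplicity of $(C,\delta)$ from Lemma~\ref{L-Mealy}, and take the reachable subcoalgebra generated by $f$. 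Uniqueness then follows from the ``simple domain plus reachable codomain implies isomorphism'' principle (the remark following Definition~\ref{D-well}). This route is self-contained, uses only material already established in Sections~\ref{treti}--\ref{ctvrta}, and is arguably more in the spirit of the paper than the cited classical proof; its only real cost is having to verify that the response function of the state $g\in C$ is $g$ itself, which you correctly identify as the crux.

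One small slip in the uniqueness step: from ``the image of $h$ is a subcoalgebra of $C$ containing $f$, and hence contains $A$'' you cannot conclude that $h$ factors through $(A,\alpha)$ --- for that you need the image to be \emph{contained in} $A$, not to contain it. This reverse inclusion does hold, but for a different reason: every state of $Q$ is reachable from $q_0$, coalgebra homomorphisms preserve transitions, and $A$ contains $f=h(q_0)$ and is closed under transitions, so $h[Q]\subseteq A$ (and in fact $h[Q]=A$). With that justification inserted, the factorization exists and the rest of your argument (surjectivity from reachability of the codomain, injectivity from simplicity of the domain) goes through.
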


\begin{remark}
\label{R-Mealy-2}
Eilenberg also proves that a minimal Mealy automaton is finite iff
$f$~has the property that the set of all functions~$f(w{-})$ where
$w\in I^*$ is finite. Let us call such causal functions
\textit{regular}.
\end{remark}

\begin{corollary}
\label{C-Mealy-2}
For Mealy automata, $HX=(X\times J)^I$, we have
\begin{align*}
\nu H&\cong\text{all causal functions from $I^\omega$ to $J^\omega$}\\
\varrho H&\cong\text{all regular causal functions}\\
\mu H&=\emptyset.
\end{align*}
The coalgebra structure of~$\nu H$ (and that of~$\varrho H$) assigns
to every causal function $f\colon I^\omega\to J^\omega$ the map
\[I\to\nu H\times J,\qquad i\to/|->/\bigl(f(i{-}),f^0(i)\bigr)\]
for $f^0\colon I\to J$ in Remark~\ref{R-Mealy}.
\end{corollary}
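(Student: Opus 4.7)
The plan is to assemble Corollary \ref{C-Mealy-2} from the general description of $\nu H$, $\varrho H$, $\mu H$ in Section \ref{treti} together with the characterization of well-pointed Mealy automata already given in Corollary \ref{C-Mealy}. First I would note that $H$ preserves intersections (it is polynomial), so Theorem \ref{T-final} applies: $\nu H$ is (carried by) the set of all well-pointed coalgebras. By Corollary \ref{C-Mealy} these are precisely the minimal Mealy automata, and Theorem \ref{T-Mealy} identifies them, up to isomorphism, with the causal functions $I^\omega\to J^\omega$ via the response-function assignment $(Q,\delta,q_0)\longmapsto f_{q_0}$. This bijection gives the underlying set of $\nu H$.

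Next, $\varrho H$ is the subcoalgebra of $\nu H$ consisting of the finite well-pointed coalgebras (Theorem \ref{T-rho}). Under the bijection above, finiteness of the minimal realization of a causal function $f$ is exactly the regularity of $f$ in the sense of Remark \ref{R-Mealy-2}; so $\varrho H$ corresponds to the regular causal functions.

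For $\mu H=\emptyset$ I would argue as in Example \ref{E-seq-I}: provided $I\neq\emptyset$ we have $H\emptyset=(\emptyset\times J)^I=\emptyset$, so for any coalgebra $(A,a)$ the empty subcoalgebra
\[
\bfig
\Square<400>[\emptyset`H\emptyset=\emptyset`A`(A\times J)^I;\id`m`Hm`a]
\efig
\]
is cartesian. Hence no nonempty coalgebra, and in particular no well-pointed one, is well-founded; by Theorem \ref{T-init} we get $\mu H=\emptyset$. (The degenerate case $I=\emptyset$ makes $HX=1$, for which all three fixed points collapse to the trivial one and the statement holds vacuously; I would dispense with it in one line.)

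Finally, to identify the coalgebra structure I would apply the formula from Example \ref{E-sets}: if $(A,a,x)$ is well-pointed with response function $f=f_x$, then $\tau(A,a,x)=Ha^+(a(x))\in(\nu H\times J)^I$. Writing $\delta(x,i)=(q_i,j_i)$, one has $a(x)(i)=(q_i,j_i)$, so $Ha^+(a(x))(i)=(a^+(q_i),j_i)$. The response function of $(A,a,q_i)$ is precisely $f(i\,{-})$, because $f(iw)=j_i\cdot f_{q_i}(w)$ for all $w\in I^\omega$; this same identity gives $f^0(i)=j_i$ (Remark \ref{R-Mealy}). Hence under the bijection $\nu H\cong\{\text{causal }f\}$, $\tau$ becomes exactly $f\longmapsto\bigl(i\longmapsto(f(i\,{-}),f^0(i))\bigr)$, and the same formula restricts to $\varrho H$. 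The main (very mild) obstacle is this last bookkeeping: one must verify that the subcoalgebra of $(A,a)$ generated by $q_i$, once simplified, really realizes $f(i\,{-})$, but this is immediate from Lemma \ref{L-Moore}'s Mealy analogue (Lemma \ref{L-Mealy}) together with the uniqueness clause of Theorem \ref{T-Mealy}.
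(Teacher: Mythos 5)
Your proposal is correct and follows essentially the same route as the paper, which disposes of the corollary in three lines by citing Theorem~\ref{T-Mealy} (with Remark~\ref{R-Mealy-2}) for the first two isomorphisms and $H\emptyset=\emptyset$ for $\mu H=\emptyset$; you merely spell out the bookkeeping for the coalgebra structure via Example~\ref{E-sets}, which the paper leaves implicit. The only slip is your parenthetical on $I=\emptyset$: there $HX=1$, so $\mu H\cong 1\neq\emptyset$ and the statement is actually false rather than vacuous, but the paper (like all of automata theory) tacitly assumes $I\neq\emptyset$, so this is immaterial.
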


Indeed, the first two statements follow from the  theorem above,
and
the last one follows again from $H\emptyset=\emptyset$. The 
description of the final coalgebra is due to Rutten~\cite{R2}. 
Eilenberg works with functions $f\colon I^*\to J^*$ preserving length
and prefixes, but it is immediate that these are just another way of
coding all causal functions between infinite streams.

\begin{remark}
\label{R-Mealy-3}
An alternative description of the final coalgebra for $HX=(X\times
J)^I$ is:
\[\nu H\cong J^{I^+},\qquad\text{all functions $\beta\colon I^+\to
J$.}\]
Here and below, $I^+$ is the set of finite non-empty words on the set $I$.
The coalgebra structure assigns to every~$\beta$ the mapping from~$I$
to $\nu H\times J$ given by
\[i\to/|->/\bigl(\beta(i{-}),\beta(i)\bigr)\qquad\text{for $i\in
I$.}\]
\end{remark}

Indeed, this coalgebra is isomorphic to that of all causal functions
$f\colon I^\omega\to J^\omega$: to every function $\beta\colon I^+\to
J$ assign the causal function
\[
f(i_0i_1i_2\dots)=(\beta(i_0),\beta(i_0i_1),\beta(i_0i_1i_2),\ldots).
\]

\subsection{Streams}
\label{subsC}\hfill

\smallskip\noindent
Consider the coalgebras for
\[HX=X\times I+1.\]
Rutten~\cite{R1} interprets them as dynamical systems with outputs in~$I$
and with terminating states (where no next state is given). Every state~$q$
yields a stream, finite or infinite, over~$I$ by starting in~$q$ and
traversing the dynamical system as long as possible. We call it the
\textit{response} of~$q$. It is an element of~$I^\omega+I^*$.

\begin{lemma}
\label{L-stream}
For a dynamical system the largest congruence merges two states iff
they yield the same response.
\end{lemma}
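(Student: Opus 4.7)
The plan is to follow exactly the template of the proof of Lemma \ref{L-Mealy}: introduce the relation $q \sim q'$ iff the responses of $q$ and $q'$ agree, show that $\sim$ is itself a congruence by equipping the quotient with a coalgebra structure making the canonical map a homomorphism, and then show that every congruence is contained in $\sim$ by arguing that coalgebra homomorphisms preserve responses.

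For the first direction, given a dynamical system $\delta\colon Q \to Q \times I + 1$, I will define $\bar{\delta}\colon Q/{\sim} \to (Q/{\sim}) \times I + 1$ case by case on a chosen representative $q$ of $[q]$. If $\delta(q)$ lies in the $1$-summand (so $q$ is a terminating state), then the response of $q$ is the empty word $\varepsilon$; any $q' \sim q$ must therefore also have empty response, forcing $\delta(q')$ to lie in the $1$-summand as well, and I set $\bar{\delta}([q]) = *$. If instead $\delta(q) = (q_1, i)$, the response of $q$ begins with $i$ and continues with the response of $q_1$; any $q' \sim q$ must then satisfy $\delta(q') = (q_1', i)$ with $q_1$ and $q_1'$ having equal responses, i.e., $q_1 \sim q_1'$, so I set $\bar{\delta}([q]) = ([q_1], i)$. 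This makes $\bar{\delta}$ well-defined and turns the canonical quotient $c\colon Q \to Q/{\sim}$ into a coalgebra homomorphism; since it preserves the chosen initial state, $\sim$ is indeed a congruence on the pointed coalgebra.

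For the reverse inclusion, let $h\colon (Q,\delta) \to (\bar Q, \bar \delta)$ be an arbitrary coalgebra homomorphism. A straightforward induction on the length of the response (with the coinductive variant for infinite responses) shows that the response of $q$ coincides with the response of $h(q)$: the commutativity of $\bar{\delta} \cdot h = (h \times I + 1) \cdot \delta$ guarantees at each step that either both sides terminate or both emit the same output and pass to $h$-related successors. Consequently, if $h(q) = h(q')$ then $q$ and $q'$ have the same response, so the kernel congruence of $h$ is contained in $\sim$.

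The only real issue to watch is the well-definedness of $\bar{\delta}$, which depends entirely on the observation that equality of responses forces the first-step behaviors to match up to $\sim$; this is immediate from the recursive description of the response but is the step where one must be careful, since the functor $HX = X \times I + 1$ is only finitary and preserves weak pullbacks (used implicitly via $(a)$ and $(b)$ of the introductory remark to Section~\ref{ctvrta}), which is what justifies the coincidence of largest congruence with bisimilarity that we are essentially re-deriving here in concrete terms.
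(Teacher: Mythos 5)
Your proof is correct and follows essentially the same route as the paper's: define $\sim$ by equality of responses, equip $Q/{\sim}$ with the induced dynamical-system structure (which the paper calls ``obvious'' and you spell out case by case), and note that coalgebra homomorphisms preserve responses, so every congruence is contained in $\sim$. The extra remarks about weak pullbacks are not needed for the argument, but they do no harm.
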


\begin{proof}
Let $\sim$~be the equivalence from the statement of the lemma. Then we have an obvious dynamic
system on~$Q/{\sim}$, thus, $\sim$~is a congruence. Every coalgebra
homomorphism $h\colon Q\to\bar{Q}$ fulfils: the response of $q$
and~$h(q)$ is always the same. Therefore, $\sim$~is the largest
congruence.
\end{proof}

\begin{corollary}
\label{C-stream}
A well-pointed coalgebra is a dynamical system with an initial
state~$q_0$ such that the system is
\begin{enumerate}[\rm (a)]
\item reachable: every state can be reached from~$q_0$, and
\item simple: different states yield different responses.
\end{enumerate}
\end{corollary}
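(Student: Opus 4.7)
The plan is to read off both conditions directly from the decomposition \textbf{well-pointed} = \textbf{simple} + \textbf{reachable} noted just after Definition~\ref{D-well}, together with Lemma~\ref{L-stream}. Since $HX = X\times I + 1$ is polynomial it preserves intersections (and weak pullbacks), so everything developed in Section~\ref{treti} applies.

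First I would unpack what subcoalgebras of a dynamical system $\delta\colon Q \to Q\times I+1$ look like: a subset $Q'\subseteq Q$ carries a subcoalgebra iff for every $q\in Q'$, whenever $\delta(q)=(q',i)$ lies in $Q\times I$ then already $q'\in Q'$. In other words, subcoalgebras are exactly the subsets of $Q$ closed under one-step transitions. The smallest subcoalgebra containing $q_0$ is therefore the forward orbit of $q_0$, i.e.\ the set of all states reachable from $q_0$ by iterating $\delta$. Thus, the pointed coalgebra $(Q,\delta,q_0)$ is reachable in the sense of Definition~\ref{D-well} iff it is reachable in the classical dynamical sense, which yields condition~(a).

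For condition~(b) I would invoke Lemma~\ref{L-stream}: the largest congruence on $(Q,\delta)$ identifies two states exactly when they produce the same response in $I^{\omega}+I^{*}$. Since $H$ preserves weak pullbacks, quotients in $\Coalg H$ are in bijection with congruences on the coalgebra, so $(Q,\delta)$ is simple (no proper quotient) iff the largest congruence is the diagonal, iff distinct states have distinct responses.

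Combining these two observations with the characterization \emph{well-pointed = simple + reachable} completes the proof; the coalgebras satisfying (a) and (b) are exactly the well-pointed coalgebras for $H$. There is no serious obstacle here because both Lemma~\ref{L-stream} and the description of reachability in terms of closure under $\delta$ are transparent; the only care needed is the observation that subcoalgebras for $HX=X\times I+1$ are simply the forward-closed subsets, which makes the identification of the pointed-coalgebraic notion of reachability with its classical counterpart immediate.
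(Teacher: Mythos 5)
Your proposal is correct and follows essentially the same route as the paper, which treats this corollary as an immediate consequence of Lemma~\ref{L-stream} (simplicity $=$ trivial largest congruence $=$ distinct states have distinct responses) together with the decomposition well-pointed $=$ simple $+$ reachable. Your explicit identification of subcoalgebras with forward-closed subsets, so that the least subcoalgebra containing $q_0$ is the forward orbit, is exactly the intended reading of reachability here (the paper would phrase it via the canonical graph, Lemma~\ref{L-wp}, but the content is the same).
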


\begin{example}
\label{E-stream}\hfill
\begin{enumerate}[(a)]
\item For every word $s_1\dots s_n$ in~$I^*$ we have a well-founded dynamic
system
\[\bfig
\POS(0,0) *+{q_0}*\cir<8pt>{}
\ar@{->}+(600,0)*\cir<8pt>{}^{s_1}
\morphism(-160,0)/^{(}->/<100,0>[`;]
\POS(600,0) *\cir<8pt>{}
\ar@{->}+(620,0)*+{}^{s_2}
\place(1400,20)[\dots]
\POS(1600,0)
\ar@{->}+(600,0)*\cir<8pt>{}^{s_n}
\efig\]

\item
For every \textit{eventually periodic} stream in~$I^\omega$,
\[w=uv^\omega\qquad\text{for $u,v\in I^*$},\]
we have a pointed dynamical system
\[\bfig
\POS(0,300) *+{q_0}*\cir<8pt>{}
\ar@{->}+(400,0)
\morphism(400,300)<400,0>[\rule{3pt}{0pt}`\rule{3pt}{0pt};]
\place(900,320)[\dots]
\morphism(1000,300)<400,0>[\rule{1pt}{0pt}`\rule{1pt}{0pt};]
\morphism(1500,100)/{@{>}@/^1pt/}/<-100,200>%
[\rule{3pt}{0pt}`\rule{3pt}{0pt};]
\morphism(1700,0)/{@{>}@/^1pt/}/<-200,100>%
[\rule{3pt}{0pt}`\rule{3pt}{0pt};]
\morphism(1980,100)/{@{>}@/^3pt/}/<-280,-100>%
[\rule{3pt}{0pt}`\rule{3pt}{0pt};]
\morphism(1403,300)/{@{>}@/^5pt/}/<300,200>%
[\rule{0pt}{3pt}`\rule{3pt}{0pt};]
\morphism(1670,500)/{@{>}@/^2.5pt/}/<300,-30>%
[\rule{1pt}{0pt}`\rule{3pt}{0pt};]
\POS (1980,300) *+{\vdots}
\morphism(-160,300)/^{(}->/<100,0>[`;]
\place(720,250)[\underbrace{\rule{12.5em}{0em}}_u]
\place(2130,250)[\left.\rule{0em}{2.7em}\right\}v]
\efig\]
If we choose, for the given stream~$w$, the words $u$~and~$v$ of
minimum lengths, then this system is well-pointed.
\end{enumerate}
\end{example}

The following was already proved by Arbib and Manes~\cite[Theorem 10.2.5]{MA}.

\begin{corollary}
\label{C-stream-2}
For $HX=X\times I+1$ we have
$$\begin{array}{ll}
\nu H \, \cong I^*+I^\omega,&\text{all finite and infinite
streams,}\\
\varrho H \, \cong
&\text{all finite and eventually periodic streams,}\\
\mu H\, \cong I^*,&\text{all finite streams.}
\end{array}$$
The coalgebra structure assigns to every nonempty stream~$w$ the
pair
\[(\tail w,\head w)\qquad\text{in}\quad \nu H\times I\]
and to the
empty stream
the right-hand summand of $H(\nu H)=\nu H\times I+1$.
\end{corollary}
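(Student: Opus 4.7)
The plan is to invoke Theorems \ref{T-final}, \ref{T-init}, and \ref{T-rho}, and translate each into the concrete language of streams using Corollary \ref{C-stream}. Since $HX=X\times I+1$ preserves intersections and weak pullbacks, all three theorems apply, and a coalgebra is well-pointed iff it is a reachable simple dynamical system, i.e.\ every state is reachable from $q_0$ and distinct states have distinct responses (Lemma \ref{L-stream}).

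First I would set up a bijection between well-pointed coalgebras and $I^*+I^\omega$. For every stream $w\in I^*+I^\omega$, Example \ref{E-stream} exhibits a pointed dynamical system whose response at $q_0$ is $w$ (a finite linear chain ending in a deadlock for $w\in I^*$, and a ``lasso'' of minimal pre-period and period for eventually periodic $w$; for a general infinite non-eventually-periodic $w$, take the set of all suffixes of $w$ as states, with the obvious transitions and $q_0=w$). In each case the construction is reachable by design, and distinct states yield distinct responses (they are distinct suffixes of $w$), so the coalgebra is well-pointed. Conversely, from any well-pointed coalgebra $(A,a,q_0)$ the response of $q_0$ gives such a stream, and two well-pointed coalgebras with the same response from the chosen state are isomorphic by Lemma \ref{L-stream} together with reachability (every state, being reachable from $q_0$, is uniquely determined by the input word used to reach it, and that determines its response).

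Next, for the coalgebra structure: by Theorem \ref{T-final} / Example \ref{E-sets}, the structure map sends $(A,a,q_0)$ to $Ha^+(a(q_0))\in H(\nu H)$. Unwinding under the bijection above: if the response $w$ of $q_0$ is the empty stream, then $a(q_0)$ is the point of $1$, corresponding to the right-hand summand; otherwise $a(q_0)=(q_1,i)$ with $i=\head w$, and the response of $q_1$ is $\tail w$, so the image in $\nu H\times I$ is $(\tail w,\head w)$. This verifies the stated coalgebra structure.

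Finally, I would identify $\mu H$ and $\varrho H$ as subcoalgebras. By Theorem \ref{T-init}, $\mu H$ consists of the well-founded well-pointed coalgebras; and a dynamical system is well-founded iff it has no infinite computation (Example \ref{E-wf}(3)), which under our bijection corresponds precisely to finite streams, giving $\mu H\cong I^*$. By Theorem \ref{T-rho}, $\varrho H$ consists of the finite well-pointed coalgebras; a well-pointed coalgebra from the classification above is finite iff the response $w$ is either a finite stream or an eventually periodic one (since a lasso of pre-period $|u|$ and period $|v|$ has exactly $|u|+|v|$ states, while the suffix construction for a non-eventually-periodic infinite $w$ yields infinitely many distinct suffixes). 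The main obstacle is verifying that no smaller reachable simple realization exists for an eventually periodic stream — this is handled by choosing $u,v$ of minimal length, which forces all suffixes to be pairwise distinct and hence the states to be pairwise non-bisimilar.
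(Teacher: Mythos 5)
Your proposal is correct and follows essentially the same route as the paper: it identifies well-pointed coalgebras with streams via the response map (Corollary \ref{C-stream}), characterizes $\mu H$ through well-foundedness as absence of infinite runs, and characterizes $\varrho H$ through finiteness of the state set. You merely spell out details the paper leaves implicit (the suffix coalgebra for non-eventually-periodic streams, the minimality of the lasso, and the unwinding of the structure map), so no substantive difference.
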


Indeed, the description of $\nu H$ follows from Corollary~\ref{C-stream}
since
by forming the response of~$q_0$ we get a bijection between
well-pointed coalgebras and streams
in~$I^*+I^\omega$.
For the description of $\varrho H$ observe that a well-pointed system yields a
finite
or eventually periodic response iff it has finitely many states. The point in our statement is that 
  $\mu H$
follows from the observation that a dynamical system is
well-founded iff every run of a state is finite. Indeed, given a
coalgebra $a\colon A\to A\times I+1$, form the subset $m\colon A'\to
A$ of all states with finite runs. We obtain a cartesian subcoalgebra
\[\bfig
\square/>```>/<800,400>[A'`A'\times I+1`A`A\times I+1;%
a'`m`m\times I+1`a]
\morphism(0,400)/{_(->}/<0,-400>[\strut`A;]
\morphism(800,400)/{_(->}/<0,-400>[\strut`A\times I+1;]
\efig\]
Thus,
well-founded, well-pointed coalgebras are precisely those of
Example~\ref{E-stream}(a).

\subsection{Binary trees}
\label{subsD}\hfill

\smallskip\noindent
Coalgebras for the functor
\[HX=X\times X+1\]
are given, as observed by Rutten~\cite{R1}, by a set~$Q$ of states
which are either terminating or have precisely two next states according to
a binary input, say~$\{l,r\}$. Every state $q\in Q$ yields an ordered
binary tree~$T_q$ (i.e, nodes that are not leaves have a left-hand
child and
a right-hand one) by \textit{tree expansion}: the root is~$q$
and a node is either a leaf, if it is a terminating state, or has the two
next states as children (left-hand for input~$l$, right-hand for
input~$r$). Binary trees are considered up to isomorphism.

\begin{lemma}
\label{L-tree}
For every system the largest congruence merges precisely the pairs of
states having the same tree expansion.
\end{lemma}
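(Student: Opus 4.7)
The plan is to mimic the pattern of the previous three lemmas (\ref{L-Moore}, \ref{L-Mealy}, \ref{L-stream}): define the candidate equivalence by the invariant in question, check that it is a congruence by exhibiting a coalgebra structure on the quotient, and then show that any coalgebra homomorphism respects the invariant, forcing every congruence to be contained in the candidate.

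First I would set $q \sim q'$ in a coalgebra $\alpha \colon Q \to Q\times Q + 1$ iff $T_q = T_{q'}$ as (ordered) binary trees. To see this is a congruence, define $\bar{\alpha} \colon Q/{\sim} \to Q/{\sim} \times Q/{\sim} + 1$ by cases: if $\alpha(q)$ lies in the right-hand summand, set $\bar{\alpha}([q])$ to be the terminating value; otherwise $\alpha(q) = (q_l, q_r)$, and set $\bar{\alpha}([q]) = ([q_l], [q_r])$. This is well-defined because equal tree expansions have equal roots (terminating vs.\ non-terminating) and, in the non-terminating case, left and right subtrees $T_{q_l}, T_{q_r}$ that agree up to $\sim$. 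By construction the canonical projection $c\colon Q \to Q/{\sim}$ commutes with the coalgebra structures, so $c$ is a homomorphism and $\sim$ is a congruence.

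Next I would show that $\sim$ is the largest congruence. For any coalgebra homomorphism $h \colon (Q,\alpha) \to (\bar{Q},\bar{\alpha})$, an easy induction on the depth of nodes shows that $T_q = T_{h(q)}$ for every $q \in Q$: the root matches (homomorphisms preserve the terminating/non-terminating distinction) and the two children of the root of $T_q$ map under $h$ to the children of the root of $T_{h(q)}$, so the induction step goes through. Hence the kernel of $h$ is contained in $\sim$. Since every congruence arises as the kernel of the quotient homomorphism it induces (weak-pullback preservation of $H$, noted at the beginning of Section~\ref{ctvrta}, guarantees this), every congruence is contained in $\sim$, and $\sim$ is the largest.

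No serious obstacle is expected; the only subtle point is keeping the \emph{ordering} of children of internal nodes straight in the argument (left child paired with left child, right child with right child), which is automatic since the coalgebra structure records an ordered pair rather than an unordered set of successors.
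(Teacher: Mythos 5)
Your proof is correct and follows essentially the same route as the paper's: define $\sim$ by equality of tree expansions, exhibit the quotient coalgebra structure to see it is a congruence, and use the fact that homomorphisms preserve tree expansions to conclude maximality. The paper merely states these steps more tersely, so your additional detail (well-definedness of the quotient structure, the induction on depth) is a faithful expansion of the same argument.
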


\begin{proof}
Let $\sim$~be the equivalence with $q\sim q'$ iff $T_q=T_{q'}$.
There is an obvious structure of a coalgebra on~$Q/{\sim}$ showing
that $\sim$~is a congruence. For every coalgebra homomorphism
$h\colon Q\to\bar{Q}$ the tree expansion of $q\in Q$ is always the
same as the tree expansion of~$h(q)$ in~$\bar{Q}$. Thus $\sim$~is the
largest congruence.
\end{proof}

\begin{corollary}
\label{C-tree}
A well-pointed system is a system with an initial state~$q_0$ which
is
\begin{enumerate}[\rm (a)]
\item
reachable: every state can be reached from~$q_0$, and
\item simple: different states have different tree expansions.
\end{enumerate}
\end{corollary}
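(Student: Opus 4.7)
The plan is to derive this corollary directly by combining Lemma \ref{L-tree} with the general definition of well-pointed coalgebra (simple plus reachable) recorded in Section~\ref{well-p}. Recall that a pointed coalgebra is well-pointed iff it has no proper subobject and no proper quotient, equivalently iff it is reachable and simple.

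First I would unpack condition (b): a coalgebra $(Q,\delta)$ for $HX = X\times X + 1$ is simple precisely when the identity is its largest congruence. Since $H$ preserves weak pullbacks, the largest congruence coincides with bisimilarity, and by Lemma~\ref{L-tree} two states are bisimilar iff their tree expansions agree. Hence simplicity of $(Q,\delta)$ is equivalent to the statement that distinct states have distinct tree expansions, which is exactly condition (b).

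Next I would unpack condition (a): the pointed coalgebra $(Q,\delta,q_0)$ is reachable iff the least subcoalgebra of $(Q,\delta)$ containing $q_0$ is all of $Q$. A subcoalgebra of $(Q,\delta)$ is a subset $Q' \subseteq Q$ closed under the next-state structure (i.e., if $q\in Q'$ is non-terminating, then both its children lie in $Q'$). The least such subset containing $q_0$ is precisely the set of states reachable from $q_0$ by finitely many transitions. So reachability in the coalgebraic sense translates to condition (a).

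Combining these two observations, $(Q,\delta,q_0)$ is well-pointed iff both (a) and (b) hold, which is the statement of the corollary. No step should present a real obstacle; the only thing to be careful about is confirming that subcoalgebras of a coalgebra for $HX = X\times X + 1$ are indeed exactly the subsets closed under the next-state operation, which is immediate from the fact that $H$ preserves monomorphisms and the inclusion of $Q'$ in $Q$ must lift to a coalgebra homomorphism.
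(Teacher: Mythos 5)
Your proposal is correct and matches the paper's intent exactly: the corollary is an immediate consequence of Lemma~\ref{L-tree} together with the general decomposition well-pointed $=$ simple $+$ reachable from Section~\ref{well-p}, which is precisely how the paper treats it (no separate proof is given there). Your unpacking of reachability as ``least subcoalgebra containing $q_0$ is all of $Q$'' and of simplicity via the largest congruence is the intended argument.
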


Moreover, tree expansion is a bijection between well-pointed
coalgebras and binary trees (see Proposition~\ref{P-Sig} below). For instance, the
dynamical system
\[\bfig
\Loop(0,0){\bullet}(l,ul)^{l,r}
\morphism(400,300)|l|<-400,-300>[\bullet`\bullet;l]
\morphism(400,300)|r|<400,-300>[\bullet`\bullet;r]
\morphism(800,0)<400,0>[\bullet`\bullet;l]
\Loop(800,0){\bullet}(u,ur)^r
\morphism(200,300)/^{(}->/<100,0>[`;]
\efig\]
defines the tree
\[\bfig
\POS(50,150)
\ar@{-}+(-50,-150)
\ar@{-}+(50,-150)
\POS(200,150)
\ar@{-}+(-50,-150)
\ar@{-}+(50,-150)
\POS(350,150)
\ar@{-}+(-50,-150)
\ar@{-}+(50,-150)
\POS(500,150)
\ar@{-}+(-50,-150)
\ar@{-}+(50,-150)
\POS(125,300)
\ar@{-}+(-75,-150)
\ar@{-}+(75,-150)
\POS(425,300)
\ar@{-}+(-75,-150)
\ar@{-}+(75,-150)
\POS(275,450)
\ar@{-}+(-150,-150)
\ar@{-}+(150,-150)
\POS(512.5,600)
\ar@{-}+(-237.5,-150)
\ar@{-}+(237.5,-150)
\POS(750,450)
\ar@{-}+(-75,-150)
\ar@{-}+(450,-450)
\POS(900,300)
\ar@{-}+(-75,-150)
\POS(1050,150)
\ar@{-}+(-75,-150)
\place(275,-100)[\vdots]
\place(1280,-100)[\ddots]
\efig\]
Observe that this tree has only 4 subtrees (up to isomorphism):
this
follows from the fact that the dynamical systems has 4~states. In
general, the finite dynamical systems correspond to the
\textit{rational trees}, i.e., trees having (up to isomorphism) only
finitely many subtrees. This description is due to Ginali~\cite{Gi}.

\begin{corollary}
\label{C-tree-2}
For the functor $HX = X \times X + 1$ we have
\begin{align*}
\nu H & \cong\text{all binary trees,}\\
\varrho H & \cong\text{all rational binary trees,}\\
\mu H & \cong\text{all finite binary trees.}
\end{align*}
The coalgebra structure is, in each case, the inverse of tree tupling: it assigns to the
root-only tree the right-hand summand of $\nu H\times\nu H+1$ and to
any other tree the pair of its maximum subtrees.
\end{corollary}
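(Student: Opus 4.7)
The plan is to invoke the three main classification theorems of Section~\ref{treti}---Theorems~\ref{T-final}, \ref{T-init}, and~\ref{T-rho}---together with Corollary~\ref{C-tree} identifying well-pointed coalgebras with binary trees via tree expansion. Since $HX = X\times X + 1$ is a finitary polynomial functor, it preserves (wide) intersections and weak pullbacks, so all three theorems apply.

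First I would establish a bijection between well-pointed coalgebras (up to isomorphism) and binary trees. By Corollary~\ref{C-tree} (together with Lemma~\ref{L-tree}), tree expansion $q_0 \mapsto T_{q_0}$ is well-defined on well-pointed coalgebras and distinct well-pointed coalgebras yield distinct trees. For surjectivity, given a binary tree $T$ let $Q_T$ be the set of its subtrees (up to isomorphism); equip it with the coalgebra structure sending a root-only subtree into the right-hand summand of $Q_T\times Q_T + 1$ and any other subtree to the pair of its two children. With $T$ itself as the chosen point, this pointed coalgebra is plainly reachable (every subtree is reached by a finite sequence of left/right steps) and simple (distinct subtrees have distinct expansions, namely themselves), hence well-pointed, and expands back to $T$. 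Theorem~\ref{T-final} then yields $\nu H \cong$ all binary trees, and the canonical structure of Example~\ref{E-sets}, formula~\eqref{coalg}, unwinds under this bijection to ``the root-only tree is sent to the right-hand summand, every other tree to the pair of its two immediate subtrees,'' i.e.\ the inverse of tree tupling.

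For $\varrho H$, Theorem~\ref{T-rho} gives $\varrho H$ as the finite well-pointed coalgebras. Under the bijection above a well-pointed coalgebra is finite iff its tree expansion has only finitely many subtrees up to isomorphism, i.e.\ iff it is rational; this recovers Ginali's description. For $\mu H$, Theorem~\ref{T-init} says $\mu H$ consists of the well-founded well-pointed coalgebras, so the task reduces to showing: a coalgebra $a\colon Q\to Q\times Q + 1$ is well-founded iff every state has a finite tree expansion. The subset $Q'\subseteq Q$ of states with finite expansion carries a cartesian subcoalgebra (if $a(q)$ lies in the right-hand summand then $q\in Q'$, and if $a(q)=(q_1,q_2)$ with $q_i\in Q'$ then $q\in Q'$), so well-foundedness forces $Q=Q'$. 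Conversely, if every state has a finite expansion, then induction on the height of the expansion shows that every cartesian subcoalgebra must contain every state. Combining this with the bijection yields $\mu H \cong$ all finite binary trees.

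The one step that is not purely formal is verifying that the abstract coalgebra structure coming from Theorem~\ref{T-final} (via the $a^+$ map of Notation~\ref{N-nu} and formula~\eqref{coalg}) really coincides with the concrete ``take the two maximal proper subtrees'' map on trees. This is routine once one observes that the child states of $q_0$ in $(Q,a,q_0)$ expand to exactly the immediate subtrees of $T_{q_0}$, but it is the one calculation that pins the abstract theorems to the explicit description; this is also where one sees that the inclusions $\mu H \hookrightarrow \varrho H \hookrightarrow \nu H$ are subcoalgebra inclusions, as the structure map on each layer is just the restriction of the one above.
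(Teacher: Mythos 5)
Your proposal is correct and follows essentially the same route as the paper: the classification theorems plus the tree-expansion bijection for $\nu H$ and $\varrho H$, and the cartesian subcoalgebra of states with finite expansion for $\mu H$. The only cosmetic differences are that the paper defers the bijection between well-pointed coalgebras and binary trees to the general $\Sigma$-coalgebra result (Proposition~\ref{P-Sig}) rather than constructing the quotient-by-isomorphic-subtrees coalgebra on the spot, and it handles the converse direction of the well-foundedness characterization via the chain $A^*_i$ (states of expansion depth at most $i$) instead of your induction on height over an arbitrary cartesian subcoalgebra.
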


Indeed, we only need to explain the last item $\mu H$. Given a coalgebra
$a\colon A\to A\times A+1$, let $m\colon A'\to/^{ (}->/A$ be the set
of all states defining a finite subtree. This is a
cartesian subcoalgebra
\[\bfig
\square/>```>/<800,400>[A'`A'\times A'+1`A`A\times A+1;%
a'`m`m\times m+\id`a]
\morphism(0,400)/{_(->}/<0,-400>[\strut`A;]
\morphism(800,400)/{_(->}/<0,-400>[\strut`A\times A+1;]
\POS(100,300)
\ar@{-}+(-50,0)
\ar@{-}+(0,50)
\efig\]
i.e., this square is a pullback: whenever a state $q\in A$ has both
next states in~$A'$ or whenever $q$~is final, then $q\in A'$. Thus,
if $A$~is well-founded, then $A=A'$. The converse implication is
easy: recall the subsets~$A^*_i$ of Notation~\ref{N-starr}. Here
$A^*_i$~is the set of all states whose binary tree has depth at
most~$i$. Thus, if $A=A_i$ for some~$i$, the initial state defines a
tree of depth at most~$i$.

\subsection{\texorpdfstring{\Sig}{Sigma}-Algebras and \texorpdfstring{\Sig}{Sigma}-coalgebras}
\label{subsE}\hfill

\smallskip\noindent
All the examples above (and a number of other interesting cases) are
subsumed in the following general case. Let \Sig~be a signature,
i.e., a set of operation symbols with given arities~$\arit(s)$ of symbols $s\in\Sig$; the
arity is a natural number. The classical \Sig-algebras
are the algebras for the corresponding \textit{polynomial functor}
\[H_{\Sig}X=\coprod_{\sigma \in \Sigma} X^{\arit(\sigma)}.\]
Coalgebras for~$H_{\Sig}$ are called \textit{\Sig-coalgebras}.

\begin{example}
\label{E-Sig} Let $I$ be a set of cardinality $n$.
Deterministic automata $HX=X^I\times\{0,1\}=X^I+X^I$ are given by two
$n$-ary operations. Streams, $HX=X\times I+1$, are
given by~$n$ unary operations and a
constant. Binary trees $HX=X\times X+1$ are given by one binary
operation and one constant.
\end{example}

\begin{definition}
\label{D-Sig}
A \textbf{\Sig-tree} is an ordered tree with nodes labeled in~\Sig\
so that every node with $n$~children has a label of arity~$n$. We
consider \Sig-trees up to isomorphism.
\end{definition}

Observe that every \Sig-tree~$T$ is a coalgebra: the function
$a\colon T\to H_{\Sig}T$ takes every node~$x$ labelled by a symbol
$\sigma\in\Sig$ (of arity~$n$) to the $n$-tuple~$(x_i)_{i<n}$ of its
children, an element of the $\sigma$-summand~$T^n$ of~$H_{\Sig}T$.

In general a \Sig-coalgebra $a\colon Q\to H_{\Sig}Q$ can be viewed
as a system with a state set~$Q$ labeled in~\Sig:
\[\bar{a}\colon Q\to\Sig\]
and such that every state $q\in Q$ with $n$-ary label has ``next
states'' forming an $n$-tuple
\[ a^{*}(q)\in Q^n.\]
Indeed, to give a function $a\colon Q\to H_{\Sig}Q$ means precisely
to given a pair~$(\bar{a}, a^{*})$ of functions as above.

\begin{definition}
\label{D-exp}
Let $a\colon Q\to H_{\Sig}Q$ be a $\Sigma$-coalgebra.
\begin{enumerate}[\rm (a)]
\item A \textbf{computation} of length~$n$ is a word $i_0\cdots i_{n-1}$
  in $\nn^*$ for which there are states~$q_0,\cdots, q_n$ in~$Q$ with
  \[q_{k+1}=\text{the $i_k$-component of $ a^{*}(q_k)$}\qquad
  (k=0,\dots,n-1).\]
\item The \textbf{tree expansion} of a state~$q$ is the \Sig-tree
  \[T_q\]
  of all computations with initial state~$q$. The label of a
  computation is~$\bar{a}(q_n)$, where $q_n$~is its last state. And the
  children are all one-step extensions of that computation, i.e., all
  words~$i_0\dots i_{n-1}j$ with  $j<\arit(\bar{a}(q_n))$.
\end{enumerate}
\end{definition}

\begin{lemma}
\label{L-Sig}
The greatest congruence on a \Sig-coalgebra merges precisely the
pairs of states with the same tree expansion.
\end{lemma}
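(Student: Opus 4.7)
The plan is to follow exactly the template used in Lemmas \ref{L-Moore}, \ref{L-Mealy}, \ref{L-stream}, and \ref{L-tree}: define the equivalence $q \sim q'$ iff $T_q = T_{q'}$, show that this equivalence is a congruence on $(Q,a)$, and then show it is the largest one. The payoff is that the three required properties (equivalence, compatibility with $a$, maximality) all follow from a single robust fact: a coalgebra homomorphism $h\colon (Q,a) \to (\bar Q,\bar a)$ preserves tree expansion, i.e.\ $T_q = T_{h(q)}$ for every $q\in Q$.

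First, I would verify that $\sim$ is a congruence. Given $q\sim q'$, the roots of $T_q$ and $T_{q'}$ carry the same label, so $\bar a(q)=\bar a(q')$; moreover, isomorphism of $T_q$ and $T_{q'}$ forces, for each $i<\arit(\bar a(q))$, the $i$-th maximal subtree of $T_q$ to be equal to the $i$-th maximal subtree of $T_{q'}$, i.e.\ $T_{a^*(q)_i} = T_{a^*(q')_i}$, hence $a^*(q)_i \sim a^*(q')_i$. This is precisely what is required to define a $\Sigma$-coalgebra structure $\bar a\colon Q/{\sim}\to H_\Sigma(Q/{\sim})$ turning the canonical quotient $c\colon Q\to Q/{\sim}$ into a coalgebra homomorphism; thus $\sim$ is a congruence.

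Second, I would verify maximality. For this it suffices to show that any coalgebra homomorphism $h\colon (Q,a) \to (\bar Q,\bar a)$ satisfies $T_q = T_{h(q)}$, since then every congruence (being the kernel of such an $h$, by the preservation of weak pullbacks noted at the start of Section \ref{ctvrta}) is contained in $\sim$. The equality $T_q = T_{h(q)}$ is proved by straightforward induction on depth: at the root, the labels agree because $h$ commutes with the label maps $\bar a$ and $\overline{\bar a}$; and since $h$ commutes with the next-state components $a^*$ and $\bar a^*$, the $i$-th child of the root of $T_q$ is mapped to the $i$-th child of the root of $T_{h(q)}$, after which the induction hypothesis applies to the subtrees. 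Combining these two steps yields the lemma.

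No step here is really an obstacle, since the arguments are entirely routine analogues of those already given in Lemmas \ref{L-Moore}--\ref{L-tree}; the only mild subtlety is making explicit that tree expansion is compatible with homomorphisms both at the label level and at the next-state level, which is exactly the decomposition $a = (\bar a, a^*)$ recorded after Definition~\ref{D-Sig}.
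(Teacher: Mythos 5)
Your proposal is correct and follows essentially the same route as the paper's proof: the same equivalence $q\sim q'$ iff $T_q=T_{q'}$, the same construction of the quotient coalgebra structure on $Q/{\sim}$ to see that $\sim$ is a congruence, and the same maximality argument via the fact that any coalgebra homomorphism $h$ satisfies $T_q=T_{h(q)}$, established by induction on depth. The extra detail you supply about compatibility with both the label map $\bar a$ and the next-state map $a^*$ is exactly what the paper leaves implicit.
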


\begin{proof}
Let $(Q,\bar{a}, a^{*})$~be a \Sig-coalgebra and put $q\sim q'$
iff $T_q=T_{q'}$. Then we have a coalgebra structure on~$Q/{\sim}$:
the label of~$[q]$ is~$\bar{a}(q)$, independent of the
representative. The next-state $n$-tuple is~$([q_i])_{i<n}$ where
$ a^{*}(q)=(q_i)$. It is easy to see that this is
independent of the choice of representatives. And the quotient map is
a coalgebra homomorphism from~$Q$ to~$Q/{\sim}$. Thus, $\sim$~is a
congruence.

To prove that $\Sig$~is the largest congruence, observe that given a
coalgebra homomorphism $h\colon Q\to Q'$, then for every state $q\in
Q$ we have $T_q=T_{h(q)}$. Indeed, an isomorphism $i\colon T_q\to
T_{h(q)}$ is easy to define by induction on the depth of nodes
of~$T_q$.
\end{proof}

\begin{corollary}
\label{C-Sig}
Well-pointed \Sig-coalgebras are the \Sig-coalgebras with an initial
state~$q_0$ which are
\begin{enumerate}[\rm (a)]
\item reachable: every state can be reached from~$q_0$ by a
  computation, and
\item simple: different states have different tree expansions.
\end{enumerate}
\end{corollary}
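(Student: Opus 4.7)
The plan is to derive Corollary \ref{C-Sig} directly from two facts: the general equivalence ``well-pointed = simple + reachable'' noted after Definition~\ref{D-well}, and the explicit description of the largest congruence on a $\Sigma$-coalgebra given in Lemma~\ref{L-Sig}. So the work splits cleanly into (i) translating simplicity and (ii) translating reachability into the combinatorial conditions (a) and~(b).

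For (i), recall from Remark~\ref{R-(e,m)} and the discussion before Observation~\ref{ord} that a coalgebra is simple iff it has no proper quotient in $\Coalg H_\Sigma$. Since $H_\Sigma$ preserves weak pullbacks, quotients correspond to congruences, and Lemma~\ref{L-Sig} tells us that the largest congruence is exactly the kernel of the map $q \mapsto T_q$. Hence the largest congruence is trivial iff the tree-expansion assignment is injective, which is precisely condition~(b).

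For (ii), I first observe that a subset $Q' \subseteq Q$ carries a subcoalgebra of $(Q,\bar a, a^*)$ iff it is closed under the next-state structure: whenever $q \in Q'$, every component of $a^*(q)$ lies in $Q'$. The least such subset containing~$q_0$ is plainly the set of states appearing as the final state of some computation starting in $q_0$, in the sense of Definition~\ref{D-exp}(a). A pointed coalgebra is reachable iff no proper subcoalgebra contains~$q_0$, i.e., iff this least reachable subset equals all of $Q$, which is exactly condition~(a).

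Combining these two translations with the fact that well-pointedness is the conjunction of simplicity and reachability yields the corollary. The only minor obstacle is checking (ii) rigorously: one must verify that the subsets of $Q$ closed under $a^*$ really are exactly the carriers of subcoalgebras of $(Q, a)$ (with the restricted structure), but this is immediate from the form of $H_\Sigma$ and the fact that $a^*$ is a subobject in $\A = \Set$. There are no set-theoretic subtleties because everything is happening in $\Set$ with a polynomial functor, so Assumption~\ref{A-coc} and the hypotheses of Section~\ref{well-p} are trivially satisfied.
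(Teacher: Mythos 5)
Your proposal is correct and matches the paper's (implicit) argument: the paper states this corollary without proof as an immediate consequence of Lemma~\ref{L-Sig} together with the decomposition ``well-pointed = simple + reachable'' from the remark after Definition~\ref{D-well}, which is exactly the two translations you carry out. Your direct verification that subcoalgebras of a $\Sigma$-coalgebra are the successor-closed subsets is the polynomial-functor instance of what the paper would obtain via the canonical graph (Lemma~\ref{L-wp}), so the routes coincide.
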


\begin{example}
\label{E-Sig2}
For every \Sig-tree~$T$ the equivalence on the nodes of~$T$ given by
\begin{equation}
\label{r-cj}
x\sim y\qquad\text{iff}\qquad T_x\cong T_y,
\end{equation}
where $T_x$~is the subtree of $T$ rooted at node~$x$, is a congruence. And
$T/{\sim}$~carries an obvious structure of a \Sig-coalgebra.
Let
$[r]$~be the congruence class of the root, then the pointed
\Sig-coalgebra $(T/{\sim},[r])$ is well-pointed.

Indeed, this pointed coalgebra is reachable: given a node~$q$ of~$T$
let $i_0\cdots i_{n-1}$ be the unique path from~$r$ to~$q$, then
$i_0\cdots i_{n-1}$~is a computation in~$T/{\sim}$ with initial
state~$[r]$ and terminal state~$[q_n]$.

The simplicity of~$T/{\sim}$ follows from Lemma~\ref{L-Sig} and the
observation that the tree expansion of a state~$[q]$ of~$T/{\sim}$ is
the subtree~$T_q$ of~$T$.

These are all well-pointed \Sig-coalgebras:
\end{example}

\begin{proposition}
\label{P-Sig}
Every well-pointed coalgebra is isomorphic to~$(T/{\sim},[r])$ for a
unique \Sig-tree~$T$.
\end{proposition}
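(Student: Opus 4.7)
The plan is to construct, for each well-pointed coalgebra $(Q,\bar a, a^{*},q_0)$, the $\Sigma$-tree $T=T_{q_0}$ of Definition~\ref{D-exp} and verify that $(T/{\sim},[r])\cong(Q,\bar a,a^{*},q_0)$, where $r$ is the root (the empty computation). For existence, define
\[
h\colon T\to Q,\qquad h(i_0\cdots i_{n-1}) \;=\; \text{the state reached after the computation $i_0\cdots i_{n-1}$ from $q_0$.}
\]
Directly from Definition~\ref{D-exp} one checks that $h$ preserves the $\Sigma$-labels and the child-structure, i.e.\ $h$ is a $\Sigma$-coalgebra homomorphism, and $h(r)=q_0$.

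Next I would exploit the central observation that for every node $x$ of $T$, the subtree $T_x$ coincides (by construction of $T_{q_0}$) with the tree expansion $T_{h(x)}$ of the reached state in $Q$. Hence $x\sim y$ in $T$ means $T_{h(x)}\cong T_{h(y)}$, so by Lemma~\ref{L-Sig} the states $h(x)$ and $h(y)$ are merged by the largest congruence on $Q$; since $(Q,\bar a,a^{*})$ is simple this largest congruence is the identity, forcing $h(x)=h(y)$. Thus $h$ factors uniquely as $\bar h\colon T/{\sim}\to Q$, again a $\Sigma$-coalgebra homomorphism preserving the point. Conversely, $h(x)=h(y)$ trivially yields $T_x=T_{h(x)}=T_{h(y)}=T_y$, hence $[x]=[y]$, so $\bar h$ is injective. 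Surjectivity of $\bar h$ follows because its image is a subcoalgebra of $Q$ containing $q_0$, and $(Q,\bar a,a^{*},q_0)$ is reachable. Hence $\bar h$ is an isomorphism of pointed $\Sigma$-coalgebras.

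For uniqueness, suppose some $\Sigma$-tree $T'$ satisfies $(T'/{\sim}',[r'])\cong(Q,\bar a,a^{*},q_0)$. I would show that the tree expansion, in the coalgebra $T'/{\sim}'$, of its initial state $[r']$ is canonically isomorphic to $T'$ itself: nodes of $T'$ at depth $n$ are in bijective correspondence with computations of length $n$ in $T'/{\sim}'$ starting at $[r']$ via the unique path from $r'$, and this bijection preserves labels and children. Consequently $T'\cong T_{[r']}\cong T_{q_0}=T$, so $T$ is determined up to isomorphism of $\Sigma$-trees by the given well-pointed coalgebra.

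The one point that requires care is the identification of ``subtree of $T=T_{q_0}$ at a node $x$'' with ``tree expansion of the reached state $h(x)$'', and the dual identification used in uniqueness; both are direct unwindings of Definition~\ref{D-exp}, but the rest of the argument rests on them, so I would spell them out explicitly. Everything else is routine once simplicity (for well-definedness/injectivity) and reachability (for surjectivity) have been invoked.
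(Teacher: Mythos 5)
Your proof is correct, but it takes a genuinely different (and more elementary) route than the paper. The paper's argument is indirect: it observes that the coalgebra $B$ of all $\Sigma$-trees, with structure given by root-label and maximal subtrees, is a final $H_\Sigma$-coalgebra via tree expansion, and then invokes Theorem~\ref{T-final}, by which the set of well-pointed coalgebras also carries the final coalgebra; the tree-expansion map $\nu H_\Sigma\to B$ between two final coalgebras is therefore an isomorphism, and its inverse is checked to be $T\mapsto (T/{\sim},[r])$. Existence and uniqueness then both fall out of finality at once. You instead argue directly: you build the homomorphism $h\colon T_{q_0}\to Q$ sending a computation to its terminal state, use Lemma~\ref{L-Sig} plus simplicity to show that $h(x)=h(y)$ iff $T_x\cong T_y$ (so $h$ descends to an injection on $T_{q_0}/{\sim}$), use reachability for surjectivity, and handle uniqueness by showing that $T$ is recovered as the tree expansion of $[r]$ in $T/{\sim}$. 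The one step you rightly flag --- the identification of the subtree $T_x$ of $T_{q_0}$ with the tree expansion $T_{h(x)}$, and dually the identification of $T'$ with the expansion of $[r']$ --- is exactly the content the paper delegates to Example~\ref{E-Sig2} and the proof of Lemma~\ref{L-Sig} (tree expansion is invariant under coalgebra homomorphisms), so it is legitimate to cite rather than re-derive. What your approach buys is independence from Theorem~\ref{T-final} and from the finality of $B$; what the paper's buys is brevity, given machinery already in place. Both are sound.
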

\begin{proof}
  It is well-known that the coalgebra $\tau\colon B \to H_\Sigma B$ of
  all $\Sigma$-labeled trees where $\tau$ is given by
  \begin{align*}
    \bar \tau(T) &= \text{label of the root of $T$, and} \\
      \tau^*(T) &= (T_i)_{i<n},
  \end{align*}
  where $T_i$ is the $i$-th maximum subtree is final. Indeed, for
  every coalgebra $(Q, a)$ the unique coalgebra homomorphism $h: Q \to
  B$ is given by tree expansion (see Definition~\ref{D-exp}): $h(q) =
  T_q$. 

  Now from Theorem~\ref{T-final} we have the final coalgebra $\nu
  H_\Sigma$ of all well-pointed coalgebras. The tree expansion map
  $h: \nu H_\Sigma \to B$ is then an isomorphism. The inverse $h^{-1}$
  takes a tree $T$ to the coalgebra $(T/\mathord{\sim}, [r])$ above:
  this is immediate from the fact that the tree expansion of $[r]$ in
  $T/\mathord{\sim}$ is $T$. 
\end{proof}

\takeout{ 
\begin{proof}
Let $(Q,a,q_0)$~be a well-pointed \Sig-coalgebra.

(a)
Existence.
Let $T$~denote the tree expansion of~$q_0$. For the 
equivalence~\eqref{r-cj} above, we prove that two equivalent computations
always terminate in the same state. This follows from the simplicity
of~$(Q,a)$: Denote by~$\approx$ the equivalence with $q\approx q'$ iff
there exist two equivalent (under~$\sim$) computations with terminal
states $q$ and~$q'$ (respectively). This is clearly a congruence
on~$(Q,a)$, so $q=q'$. We thus obtain a function
\[t\colon T/{\sim}\to Q,\qquad [i_0\cdots
i_{n-1}]\to/|->/\text{terminal state of $i_0\cdots i_{n-1}$.}\]
It is easy to see that this is a coalgebra homomorphism. Since it
takes~$[q_0]$ to~$q_0$, it is surjective (use the reachability). And
it is an isomorphism since two computations $p$ and~$p'$ with the
same last state~$q_n$ fulfil $T_p=T_{p'}$. (Indeed, the subtree
of~$T$ at the node~$p$ is precisely the tree~$T_{q_n}$). Hence, if
$t$~merges the equivalence classes $[p]$ and~$[p']$, then $p\sim p'$.

(b)
Uniqueness.
This follows from the observation that two \Sig-trees $T$ and~$T'$
are isomorphic whenever the coalgebras of Example~\ref{E-Sig2} are.
Indeed, given an isomorphism $f\colon (T/{\sim},[r])\to
(T'/{\sim},[r'])$, define an isomorphism $g\colon T\to T'$ from top
down. Since $f$~preserves labels, $r$ and~$r'$ are labeled by the
same $n$-ary label. We put $g(r)=r'$. Let $ a^{*}(r)=(x_i)_{i<n}$ and
$\widetilde{a'}(r')=(x'_i)_{i<n}$.
Since $f$~preserves~$ a^{*}$ we have $f[x_i]=[x'_i]$ for
all~$i$. We define~$g$ on level~1 by $g(x_i)=x'_i$, $i<n$, and
proceed recursively.
\end{proof}
}

\begin{proposition}
\label{P-tree}
A \Sig-coalgebra is well-founded iff all its tree-expansions are
\textbf{well-foun\-ded \Sig-trees}, i.e., \Sig-trees with no infinite
path.
\end{proposition}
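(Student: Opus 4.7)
The plan is to invoke Proposition~\ref{P-can} together with a concrete identification of the canonical graph of a $\Sigma$-coalgebra. First I would verify that the polynomial functor $H_\Sigma X = \coprod_{\sigma \in \Sigma} X^{\arit(\sigma)}$ preserves (wide) intersections: finite products and coproducts in $\Set$ preserve intersections, so this is routine. Hence Proposition~\ref{P-can} applies, and $(Q,a)$ is well-founded if and only if its canonical graph on $Q$ is well-founded (in the graph-theoretic sense, i.e., has no infinite directed path).

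Next I would compute the canonical graph explicitly. Given a state $q \in Q$ with $\bar a(q) = \sigma$ of arity~$n$ and $a^*(q) = (q_0,\dots,q_{n-1})$, the element $a(q)$ lies in the $\sigma$-summand of $H_\Sigma Q$. For any subset $m: M \hookrightarrow Q$, we have $a(q) \in H_\Sigma m[H_\Sigma M]$ precisely when $\{q_0,\dots,q_{n-1}\} \subseteq M$. Hence the least such $M$ is exactly $\{q_0,\dots,q_{n-1}\}$, so the canonical graph has edges $q \to q_i$ for $i<n$, i.e., from each state to its immediate children under~$a$.

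It remains to match infinite paths in this canonical graph with infinite branches of tree expansions. An infinite path $q = p_0 \to p_1 \to p_2 \to \cdots$ in the canonical graph is precisely an infinite computation starting at~$q$ in the sense of Definition~\ref{D-exp}, and the associated sequence of one-step extensions gives an infinite branch of the tree expansion $T_q$. Conversely, every infinite branch of $T_q$ is a sequence of computations whose terminal states form an infinite directed path in the canonical graph beginning at~$q$. Thus the canonical graph has no infinite path if and only if no tree expansion $T_q$ has an infinite branch, i.e., every $T_q$ is a well-founded $\Sigma$-tree. Combining with Proposition~\ref{P-can} gives the proposition.

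There is no real obstacle here; the only point requiring a moment's care is the identification of the canonical graph, which reduces to the observation that for a polynomial functor the minimal subset ``supporting'' an element of $H_\Sigma Q$ is simply the (finite) set of its component entries.
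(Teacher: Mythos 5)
Your proof is correct, but it follows a genuinely different route from the paper's. The paper argues directly: it takes $A'\subseteq A$ to be the set of all states $q$ whose tree expansion $T_q$ is well-founded, observes that $A'$ is a cartesian subcoalgebra (a state all of whose children lie in $A'$ has well-founded tree expansion), and concludes that $A$ is well-founded iff $A=A'$. You instead reduce to Proposition~\ref{P-can} by first checking that $H_\Sigma$ preserves wide intersections (which the paper already records in Example~\ref{E-coc}), then computing the canonical graph of a $\Sigma$-coalgebra explicitly --- the least $M$ supporting $a(q)=(q_0,\dots,q_{n-1})$ in the $\sigma$-summand is exactly $\{q_0,\dots,q_{n-1}\}$, so the canonical graph is the ``child'' graph --- and finally matching infinite paths in that graph with infinite branches of the tree expansions. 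Your identification of the canonical graph is correct and is the only point needing care; it is a worthwhile observation in its own right, since it ties Definition~\ref{D-exp} to Definition~\ref{D-can}. What the detour buys you is that Proposition~\ref{P-can} delivers both implications at once; in the paper's direct argument the implication ``all $T_q$ well-founded $\Rightarrow$ $(A,a)$ well-founded'' is left somewhat implicit (it needs a small induction on the rank of the well-founded trees, or equivalently the observation that $A'$ is contained in every cartesian subcoalgebra). The paper's proof is shorter and self-contained; yours is a clean application of the general machinery already developed.
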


\begin{proof}
Given a \Sig-coalgebra~$A$
let $m\colon A'\to/^{ (}->/A$ be the subset of all states $q\in A$
with $T_q$~well-founded. This is, obviously, a subcoalgebra. And it
is cartesian
\[\bfig
\square/>```>/<800,400>[A'`H_{\Sig}A'`A`H_{\Sig}A;%
a'`m`H_{\Sig}m`a]
\morphism(0,400)/{_(->}/<0,-400>[\strut`A;]
\morphism(800,400)/{_(->}/<0,-400>[\strut`H_{\Sig}A;]
\POS(100,300)
\ar@{-}+(-50,0)
\ar@{-}+(0,50)
\efig\]
Indeed, if a state~$q$ has the property that all components
of~$ a^{*}$ lies in~$A'$, the $q$~lies in~$A'$. Thus $A$~is
well-founded iff $A=A'$.
\end{proof}

\begin{definition}[see \cite{Gi}]
\label{D-rat-2}
A \Sig-tree is called \textbf{rational} if it has up to isomorphism
only finitely many subtrees.
\end{definition}

\begin{example}
\label{E-rat-2}
Given a finite \Sig-coalgebra, all tree expansions of its states are
rational.

Indeed, if $Q=\{q_1,\dots,q_n\}$ is the state set, then every subtree
of~$T_{q_i}$ (given by a computation with initial state~$q_i$) has
the form~$T_{q_j}$: take~$q_j$ to be the terminal state of the
computation.
\end{example}

\begin{corollary}
\label{C-Sig-2}
For every finitary signature~\Sig\ we have
\begin{align*}
\nu H_{\Sig}&\cong\text{all \Sig-trees},\\
\varrho H_{\Sig}&\cong\text{all rational \Sig-trees},\\
\mu H_{\Sig}&\cong\text{all finite \Sig-trees.}
\end{align*}
The coalgebra structure is in each case inverse to tree-tupling.
\end{corollary}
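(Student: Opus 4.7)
The plan is to read off all three descriptions directly from the general theorems in Section~\ref{treti}, using Proposition~\ref{P-Sig} to translate between well-pointed $\Sigma$-coalgebras and $\Sigma$-trees. First I would verify that $H_\Sigma$ satisfies the hypotheses needed: it preserves intersections (each summand $X^{\arit(\sigma)}$ does, and coproducts of intersection-preserving functors preserve intersections) and is finitary since each arity is finite. So Theorems~\ref{T-final}, \ref{T-init} and~\ref{T-rho} all apply, giving $\nu H_\Sigma$, $\mu H_\Sigma$ and $\varrho H_\Sigma$ as the sets of all well-pointed, all well-founded well-pointed, and all finite well-pointed $\Sigma$-coalgebras, respectively.

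For $\nu H_\Sigma$, Proposition~\ref{P-Sig} provides the bijection $T \mapsto (T/\mathord{\sim},[r])$ between $\Sigma$-trees and well-pointed $\Sigma$-coalgebras. To upgrade this to a coalgebra isomorphism, I would compute the coalgebra structure from Example~\ref{E-sets}: applied to $(T/\mathord{\sim},[r])$ it returns the label of $r$ together with the tuple of congruence classes of the children of $r$, which under the bijection is precisely $(\sigma,(T_0,\dots,T_{n-1}))$ where $\sigma$ is the root label and the $T_i$ are the maximum subtrees. Hence the structure map on $\nu H_\Sigma$ is the inverse of tree tupling.

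For $\mu H_\Sigma$, Proposition~\ref{P-tree} says that a $\Sigma$-coalgebra is well-founded iff all its tree expansions are well-founded $\Sigma$-trees. Under the bijection of Proposition~\ref{P-Sig} the tree expansion of the initial state $[r]$ of $(T/\mathord{\sim},[r])$ is $T$ itself, so well-founded well-pointed coalgebras correspond to well-founded $\Sigma$-trees. The key observation to close the gap to \emph{finite} $\Sigma$-trees is König's Lemma: since $\Sigma$ is finitary, every $\Sigma$-tree is finitely branching, so well-foundedness (no infinite path) is equivalent to finiteness. This bijection again respects the coalgebra structure for the same reason as in the previous paragraph.

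For $\varrho H_\Sigma$, I use Theorem~\ref{T-rho} together with Example~\ref{E-rat-2}: the bijection of Proposition~\ref{P-Sig} restricts to one between finite well-pointed $\Sigma$-coalgebras and rational $\Sigma$-trees. One direction is Example~\ref{E-rat-2} (every subtree of $T_{q}$ has the form $T_{q'}$ for a state $q'$, hence only finitely many up to isomorphism). For the converse, if $T$ is rational then the congruence $\sim$ of~\eqref{r-cj} has only finitely many classes, so $T/\mathord{\sim}$ is finite. The main (minor) obstacle is checking that the bijection is coalgebra-preserving in each of the three cases, but this is immediate once the coalgebra structure on $T/\mathord{\sim}$ is unwound, since the labels and successor tuples at $[r]$ are read off directly from the root of $T$.
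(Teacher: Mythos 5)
Your proposal is correct and follows essentially the same route as the paper: Proposition~\ref{P-Sig} for the bijection with $\Sigma$-trees, the structure map of Example~\ref{E-sets}/Remark~\ref{R-final} identified as the inverse of tree-tupling, Example~\ref{E-rat-2} for the rational case, and K\"onig's Lemma (via Proposition~\ref{P-tree}) to pass from well-founded to finite $\Sigma$-trees. Your explicit check that a rational tree $T$ yields a finite quotient $T/\mathord{\sim}$ fills in a converse the paper leaves implicit, but this is a detail, not a different argument.
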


Indeed, the isomorphism between $\nu H_{\Sig}$ and all \Sig-trees is
given by Proposition~\ref{P-Sig}. And the coalgebra structure of
Remark~\ref{R-final}
corresponds to the inverse of tree-tupling,
i.e., it assigns to a \Sig-tree~$T$ with $ a^{*}(r)=(x_1,\ldots,
x_n)$ the $n$-tuple $(T_{x_1},\ldots,T_{x_n})$ in the $\sigma$-summand
of~$H_{\Sig}(\nu H_{\Sig})$ where $\sigma$~is the label of the
root.

Finally, the isomorphism between $\varrho H_{\Sig}$ (all finite
well-pointed coalgebras) and rational \Sig-trees follows from
Proposition~\ref{P-Sig} and Example~\ref{E-rat-2}. The last item
follows from K\"onig's Lemma: every well-founded finitely branching
tree is finite.

\begin{example}
For the functor $HX=X^*$ we can use nonlabeled trees: we have
\begin{align*}
\nu H&\cong\text{all finitely branching trees}\\
\varrho H&\cong\text{all rational finitely branching trees}\\
\mu H&\cong\text{all finite trees.}
\end{align*}
\end{example}

Indeed, let \Sig~be the signature with one $n$-ary operation for
every $n\in\nn$. Then $H_{\Sig}X\cong X^*$. And \Sig-trees need no
labeling, since operations already differ by arities.

\takeout{%
\begin{example}
All previous examples \mbox{\ref{subsA}--\ref{subsD}} are
special cases of
Corollary~\ref{C-Sig-2}:

\begin{enumerate}[(a)]
\item
Moore automata. The functor
\[ HX=X^I\times J\]
corresponds to a signature~\Sig\ of~$|J|$~operations, all of arity~$|I|$.
Since no nullary operation is given, every \Sig-tree is a complete
$I$-ary tree labeled by~$J$. Now the usual representation of the
complete $I$-ary tree is by~$I^*$: the root is the empty word
and the children
of~$i_1\cdots i_n$ are all~$i_1\cdots i_nj$ for $i\in I$. Thus, a
complete binary tree labeled by~$J$ is nothing else than a function
from~$I^*$ to~$J$:
\[\nu H=J^{I^*}\qquad\text{(see Corollary \ref{C-Moore}).}\]
The subcoalgebra~$\varrho H$ is then given by the regular functions.
And since we have no leaves,
\[\mu H=\emptyset\]
because a well-founded tree always has leaves.

\item
Mealy automata. The functor
\[HX=(X\times J)^I=X^I\times J^I\]
corresponds to $J^I$~operations, all of arity~$I$. Thus, a \Sig-tree
is a function from~$I^*$ to~$J^I$. Or, by uncurrying, a function from
$I^*\times I=I^+$ to~$J$:
\[\nu H=J^{I^+}\qquad\text{(see Remark \ref{R-Mealy-3}).}\]

\item
Streams. The functor
\[HX=X\times I+1\]
corresponds to one constant and $I$~unary operations. A \Sig-tree is
then a labeled tree consisting either of a path of length
$n\in\nn$, or an infinite path. The nodes are labeled by~$I$. Thus,
\[\nu H\cong1+I+I^2+\dots+I^\omega\qquad\text{(see Corollary
\ref{C-stream-2}).}\]

\item
Binary trees. Here
\[HX=X\times X+1\]
is given by a binary operation and a constant. Thus
\[\nu H=\text{binary trees\qquad(see Corollary \ref{C-tree-2}).}\]
\end{enumerate}
\end{example}
}%

\subsection{Graphs}
\label{subsF}\hfill

\smallskip\noindent
Here we investigate coalgebras for the power-set functor~\PP. In the
rest of Section~\ref{ctvrta} all trees are understood to be
non-ordered. That is, a tree is a directed graph with a node (root)
from which every node can be reached by a unique path.

Recall the concept of a \textit{bisimulation} between graphs $X$
and~$Y$: it is a relation $R\subseteq X\times Y$ such that whenever
$x\mathrel{R}y$ then every child of~$x$ is related to a child of~$y$,
and vice versa. Two nodes of a graph~$X$ are called
\textit{bisimilar} if they are related by a bisimulation $R\subseteq
X\times X$.

\begin{lemma}
\label{L-graph}
The greatest congruence on a graph merges precisely the bisimilar
pairs of states.
\end{lemma}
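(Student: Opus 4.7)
The plan is to define $\sim$ on the vertices of a graph $(X,\alpha)$, viewed as a $\PP$-coalgebra, by $x\sim x'$ iff $x$ and $x'$ are bisimilar, and then to verify two things: first that $\sim$ is a congruence, and second that every congruence is contained in $\sim$. The lemma can in principle be quoted from the two general facts stated at the start of Section~\ref{ctvrta} (since $\PP$ preserves weak pullbacks), but I would give a direct verification for $\PP$ since the proof is quite concrete and illustrative for the examples that follow.

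First I would check that $\sim$ is an equivalence relation. Reflexivity follows because the diagonal $\Delta_X$ is a bisimulation; symmetry because $R^{\mathrm{op}}$ is a bisimulation whenever $R$ is; and transitivity because the relational composition of two bisimulations is again a bisimulation. In fact the union of all bisimulations on $X$ is itself a bisimulation, and this union is $\sim$.

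Next I would show $\sim$ is a congruence by defining a coalgebra structure on $X/{\sim}$, namely
\[
\bar\alpha([x]) = \{[y] : y \in \alpha(x)\},
\]
and verifying that the quotient map $c\colon X \to X/{\sim}$ is a $\PP$-coalgebra homomorphism, i.e., $\PP c \cdot \alpha = \bar\alpha \cdot c$. The central point here, and the main thing to check, is well-definedness: if $x \sim x'$ then $\{[y] : y \in \alpha(x)\} = \{[y'] : y' \in \alpha(x')\}$. This follows from the defining property of bisimulation: every $y \in \alpha(x)$ is bisimilar to some $y' \in \alpha(x')$, and vice versa, so the sets of $\sim$-classes agree.

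Finally, to see that $\sim$ is the \emph{largest} congruence, I would show that for any coalgebra homomorphism $h\colon (X,\alpha)\to (Y,\beta)$, its kernel $\ker h = \{(x_1,x_2)\in X\times X : h(x_1)=h(x_2)\}$ is a bisimulation. Indeed, if $h(x_1)=h(x_2)$ and $y_1 \in \alpha(x_1)$, then $h(y_1) \in \PP h\,\alpha(x_1) = \beta\, h(x_1) = \beta\, h(x_2) = \PP h\,\alpha(x_2)$, so some $y_2\in\alpha(x_2)$ has $h(y_2)=h(y_1)$, and symmetrically. Since every congruence on $(X,\alpha)$ is, by the general fact recalled in the preamble of this section, the kernel of some homomorphism, it is contained in $\sim$, completing the proof. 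The only genuine obstacle is the well-definedness check for $\bar\alpha$, which is exactly where weak pullback preservation by $\PP$ is implicitly used.
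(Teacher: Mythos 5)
Your proof is correct, but it takes a different route from the paper: the paper's entire proof is the one-line observation that, since $\PP$ preserves weak pullbacks, the lemma is an instance of the general results of Rutten recalled in the preamble of Section~\ref{ctvrta} (congruences are exactly kernel equivalences of homomorphisms, and the largest one is bisimilarity). Your direct verification is the specialization of that general machinery to $\PP$, and it is in the same style as the paper's hands-on proofs of the analogous lemmas for Mealy automata, streams, binary trees and $\Sig$-coalgebras: exhibit the quotient coalgebra structure to see that $\sim$ is a congruence, and show that the kernel of an arbitrary homomorphism is a bisimulation to see that it is the largest. What your version buys is self-containedness and concreteness; what the citation buys is brevity and the reminder that this is a completely general phenomenon for weak-pullback-preserving functors. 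One small correction to your closing remark: in the general theory, weak pullback preservation is used to show that the composite of two bisimulations is a bisimulation (hence that bisimilarity is transitive) and that kernels of homomorphisms are bisimulations --- not really in the well-definedness of $\bar\alpha$, which only needs that the union of all bisimulations is itself a bisimulation. Since you verify all of these facts directly for $\PP$, nothing in your argument is affected.
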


This follows, since \PP~preserves weak pullbacks, from general
results of Rutten~\cite{R1}.

\begin{corollary}
\label{C-graph}
A pointed graph~$(G,q_0)$ is well-pointed iff it is
\begin{enumerate}[\rm (a)]
\item reachable: every vertex can be reached from~$q_0$ by a directed
  path, and
\item simple: all distinct pairs of states are non-bisimilar.
\end{enumerate}
\end{corollary}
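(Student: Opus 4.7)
The plan is to decompose the equivalence \emph{well-pointed $=$ simple $+$ reachable} (from the remark following Definition~\ref{D-well}) and then match each half with the corresponding graph-theoretic condition using the tools already developed for the power-set functor.

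First I would handle simplicity. A pointed coalgebra $(G, q_0)$ is simple iff it has no proper quotient, which by Remark~\ref{R-(e,m)}(b) is equivalent to saying that every coalgebra homomorphism out of $G$ is a monomorphism, equivalently that the only congruence on $G$ is the identity. Applying Lemma~\ref{L-graph}, which identifies the largest congruence on a $\PP$-coalgebra with bisimilarity, simplicity becomes precisely the statement that bisimilarity on $G$ is the identity, i.e.\ all distinct pairs of vertices are non-bisimilar. This gives condition~(b).

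Next I would handle reachability. By definition, $(G, q_0)$ is reachable iff the smallest subcoalgebra of $(G, q_0)$ whose carrier contains $q_0$ is all of $G$. From the description of $\PP$-subcoalgebras in Example~\ref{E-wf}(1), a subcoalgebra of a graph $G$ is an induced subgraph $G' \subseteq G$ closed under taking neighbors, i.e.\ such that for every $x \in G'$ every child of $x$ in $G$ already lies in~$G'$. The intersection of all such $G'$ containing $q_0$ is exactly the set of vertices reachable from $q_0$ by a directed path. Hence coalgebraic reachability coincides with graph-theoretic reachability, giving condition~(a). (Alternatively, this is an instance of Lemma~\ref{L-wp}, since for $H = \PP$ the canonical graph of a coalgebra is the coalgebra itself.)

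Combining the two equivalences yields the corollary. There is no serious obstacle here; the only point requiring a moment's care is the identification of $\PP$-subcoalgebras with neighbour-closed induced subgraphs, but this is already recorded in Example~\ref{E-wf}(1), so the proof is essentially an assembly of previously established facts.
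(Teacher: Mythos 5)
Your proof is correct and follows essentially the same route as the paper: the corollary is obtained by splitting well-pointedness into simplicity plus reachability, identifying simplicity with triviality of the largest congruence (which is bisimilarity by Lemma~\ref{L-graph}), and identifying coalgebraic reachability with graph reachability via the description of $\PP$-subcoalgebras as neighbour-closed subgraphs (Example~\ref{E-wf}(1), or Lemma~\ref{L-wp}). Nothing is missing.
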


\begin{example}
\label{E-graph}
Aczel~\cite{Ac} introduced the \textit{canonical picture} of a
(well-founded) set~$X$. It is the graph with vertices all sets~$Y$
such that a sequence
\[Y=Y_0\in Y_1\in\dots\in Y_n=X\]
of sets exists. The neighbors of a vertex~$Y$ are all of its
elements. When pointed by~$X$, this is a well-pointed graph. Indeed,
reachability is clear. And suppose $R$~is a bisimulation and
$Y\mathrel{R}Y'$, then we prove $Y=Y'$. Assuming the contrary, there
exists $Z_0\in Y$ with $Z_0\notin Y'$, or vice versa. Since $R$~is a
bisimulation, from $Z_0\in Y$ we deduce that $Z'_0\in Y'$ exists with
$Z_0\mathrel{R}Z'_0$. Clearly $Z_0\neq Z'_0$. Thus, we
substitute~$(Y,Y')$ by~$(Z_0,Z'_0)$ and obtain $Z_1\in Z_0$ and
$Z'_1\in Z'_0$ with $Z_1\mathrel{R}Z'_1$ but $Z_1\neq Z'_1$ etc. This
is a contradiction to the well-foundedness of $X$: we get an infinite sequence~$Z_n$ with
\[\cdots\, Z_2\in Z_1\in Z_0\in Y.\]
\end{example}

Here are some concrete examples of canonical pictures and their
corresponding tree expansions (cf.~Remark~\ref{R-exp} below):

\[\begin{array}{lcl}
\text{Set:}&\text{Canonical picture:}&\text{Tree expansion}\\ \\
0=\emptyset&\bfig
\place(60,0)[\scriptstyle\bullet]
\place(0,0)[0]
\morphism(-150,0)/^{(}->/<100,0>[`;]
\efig&\bfig
\place(60,0)[\scriptstyle\bullet]
\place(0,0)[0]
\efig\\ \\
1=\{0\}&\bfig
\place(500,0)[1]
\morphism(350,0)/^{(}->/<100,0>[`;]
\morphism(560,0)<250,0>[\scriptstyle\bullet`\scriptstyle\bullet;]
\efig
&
\bfig
\place(0,100)[1]
\POS(60,100)
\ar@{-}+(0,-100)
\place(60,100)[\scriptstyle\bullet]
\place(60,0)[\scriptstyle\bullet]
\efig\\ \\
2=\{0,1\}&\bfig
\place(250,0)[2]
\morphism(100,0)/^{(}->/<100,0>[`;]
\morphism(310,0)<250,0>[\scriptstyle\bullet`\scriptstyle\bullet;]
\morphism(560,0)<250,0>[\scriptstyle\bullet`\scriptstyle\bullet;]
\morphism(310,0)/{@{->}@/^7pt/}/<500,0>%
[\scriptstyle\bullet`\scriptstyle\bullet;]
\efig&\bfig
\place(0,100)[\scriptstyle\bullet]
\place(200,0)[\scriptstyle\bullet]
\place(100,200)[\scriptstyle\bullet]
\place(200,100)[\scriptstyle\bullet]
\POS(100,200)
\ar@{-}+(-100,-100)
\ar@{-}+(100,-100)
\POS(200,100)
\ar@{-}+(0,-100)
\place(30,200)[2]
\efig\\ \\
3=\{0,1,2\}&\rule{.6em}{0em}\bfig
\place(0,0)[3]
\morphism(-150,0)/^{(}->/<100,0>[`;]
\morphism(60,0)<250,0>[\scriptstyle\bullet`\scriptstyle\bullet;]
\morphism(310,0)<250,0>[\scriptstyle\bullet`\scriptstyle\bullet;]
\morphism(560,0)<250,0>[\scriptstyle\bullet`\scriptstyle\bullet;]
\morphism(310,0)/{@{->}@/^7pt/}/<500,0>%
[\scriptstyle\bullet`\scriptstyle\bullet;]
\morphism(60,0)/{@{->}@/^-6pt/}/<500,0>%
[\scriptstyle\bullet`\scriptstyle\bullet;]
\morphism(60,0)/{@{->}@/^-12pt/}/<750,0>%
[\scriptstyle\bullet`\scriptstyle\bullet;]
\efig&\bfig
\place(700,100)[\scriptstyle\bullet]
\place(900,0)[\scriptstyle\bullet]
\place(800,200)[\scriptstyle\bullet]
\place(900,100)[\scriptstyle\bullet]
\POS(800,200)
\ar@{-}+(-100,-100)
\ar@{-}+(100,-100)
\POS(900,100)
\ar@{-}+(0,-100)
\place(630,300)[3]
\place(700,300)[\scriptstyle\bullet]
\place(600,200)[\scriptstyle\bullet]
\POS(700,300)
\ar@{-}+(-100,-100)
\ar@{-}+(100,-100)
\efig\\ \\
\omega&\bfig
\place(-50,10)[\dots]
\morphism(60,0)<250,0>[\scriptstyle\bullet`\scriptstyle\bullet;]
\morphism(310,0)<250,0>[\scriptstyle\bullet`\scriptstyle\bullet;]
\morphism(560,0)<250,0>[\scriptstyle\bullet`\scriptstyle\bullet;]
\morphism(310,0)/{@{->}@/^12pt/}/<500,0>%
[\scriptstyle\bullet`\scriptstyle\bullet;]
\morphism(60,0)/{@{->}@/^-6pt/}/<500,0>%
[\scriptstyle\bullet`\scriptstyle\bullet;]
\morphism(60,0)/{@{->}@/^-12pt/}/<750,0>%
[\scriptstyle\bullet`\scriptstyle\bullet;]
\morphism(560,300)/-/<0,-180>[\scriptstyle\bullet`;]
\morphism(560,0)/<-/<0,150>[\scriptstyle\bullet`;]
\morphism(560,300)/{@{->}@/^4pt/}/<250,-300>%
[\scriptstyle\bullet`\scriptstyle\bullet;]
\morphism(560,300)/{@{->}@/^-2pt/}/<-250,-300>%
[\scriptstyle\bullet`\scriptstyle\bullet;]
\morphism(560,300)/{@{->}@/^-2pt/}/<-500,-300>%
[\scriptstyle\bullet`\scriptstyle\bullet;]
\place(500,340)[\omega]
\morphism(350,340)/^{(}->/<100,0>[`;]
\efig&\bfig
\place(700,100)[\scriptstyle\bullet]
\place(900,0)[\scriptstyle\bullet]
\place(800,200)[\scriptstyle\bullet]
\place(900,100)[\scriptstyle\bullet]
\place(500,100)[\scriptstyle\bullet]
\place(300,200)[\scriptstyle\bullet]
\place(500,200)[\scriptstyle\bullet]
\place(400,300)[\scriptstyle\bullet]
\place(300,400)[\scriptstyle\bullet]
\place(200,300)[\scriptstyle\bullet]
\place(200,200)[\scriptstyle\bullet]
\place(50,300)[\scriptstyle\bullet]
\POS(800,200)
\ar@{-}+(-100,-100)
\ar@{-}+(100,-100)
\POS(900,100)
\ar@{-}+(0,-100)
\place(810,320)[\dots]
\place(260,440)[\omega]
\place(700,300)[\scriptstyle\bullet]
\place(600,200)[\scriptstyle\bullet]
\POS(400,300)
\ar@{-}+(-100,-100)
\ar@{-}+(100,-100)
\POS(700,300)
\ar@{-}+(-100,-100)
\ar@{-}+(100,-100)
\POS(500,200)
\ar@{-}+(0,-100)
\POS(300,400)
\ar@{-}+(-100,-100)
\ar@{-}+(100,-100)
\ar@{-}+(-250,-100)
\ar@{-}+(400,-100)
\POS(200,300)
\ar@{-}+(0,-100)
\efig
\end{array}\]

\begin{remark}
\label{R-exp}
Given a vertex~$q$ of a graph, its \textit{tree expansion} is
(similarly to the ordered case, see Definition~\ref{D-exp}) the
non-ordered tree
\[T_q\]
whose nodes are all finite directed paths from~$q$.

The children of a node~$p$ are all one-step extensions of the
path~$p$. The root is~$q$ (considered as the path of length~0).

For every pointed graph~$(G,x)$ the \textit{tree expansion}
is the
tree~$T_x$. In the previous example we saw tree expansions of the
given pointed graphs.
\end{remark}

\begin{definition}[Worrell~\cite{W}]
\label{D-bisim}
By a \textbf{tree-bisimulation} between trees $T_1$ and~$T_2$ is
meant a graph bisimulation $R\subseteq T_1\times T_2$ which
\begin{enumerate}[\rm (a)]
\item relates the roots, 
\item $x_1\mathrel{R}x_2$ implies that $x_1$ and $x_2$ are the roots
  or have related  parents, and
\item $x_1\mathrel{R}x_2$ implies that the depths of $x_1$ and $x_2$
  are equal.
\end{enumerate}
A tree~$T$ is called \textbf{strongly extensional} iff every tree
bisimulation $R\subseteq T\times T$ is trivial: $R\subseteq\Delta_T$.
\end{definition}

\begin{example}
\label{T-bisim}
The tree expansion of a well-pointed graph~$(G,q_0)$ is strongly
extensional. Indeed, given a tree bisimulation $R\subseteq
T_{q_0}\times
T_{q_0}$, we obtain a graph bisimulation $\bar{R}\subseteq G\times G$
consisting of all pairs~$(q_1,q_2)$ of vertices for which paths~$p_i$
from~$q_0$ to~$q_i$ exist, $i=1,2$, with $p_1\mathrel{R}p_2$. Since
$G$~is simple, $\bar{R}\subseteq\Delta$. Thus, for all pairs $(p_1,p_2)$ of
paths:
\[\text{if $p_1\mathrel{R}p_2$ then the last vertices of $p_2$ and
$p_1$ are equal.}\]
We prove $p_1\mathrel{R}p_2$ implies $p_1=p_2$ by induction on the
maximum~$k$ of the lengths of $p_1$ and~$p_2$. For $k=0$ we have
$p_1=q_0=p_2$. For ${k+1}$ we have $p'_1\mathrel{R}p'_2$ where
$p'_i$~is the trimming of~$p_i$ by one edge (since $R$~is a tree
bisimulation). Then $p'_1=p'_2$ implies $p_1=p_2$ because the last
vertices are equal.
\end{example}

Furthermore, there are no other extensional trees:

\begin{proposition}
\label{P-bisim}
Every strongly extensional tree is the tree expansion of a unique  (up to isomorphism)
well-pointed graph.
\end{proposition}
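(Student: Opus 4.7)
The plan is to construct the well-pointed graph $(G,q_0)$ directly from the strongly extensional tree $T$ by quotienting out bisimilarity, and then to show that the tree expansion of this graph is $T$ itself, using strong extensionality in precisely the injectivity step.

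More concretely, view $T$ as a $\pow$-coalgebra via its children function. By Lemma~\ref{L-graph} (and the fact that $\pow$ preserves weak pullbacks) the largest bisimulation $\approx$ on $T$ is an equivalence; let $G = T/{\approx}$ carry the induced quotient $\pow$-coalgebra structure, pointed by $[r]$ where $r$ is the root. Reachability of $(G,[r])$ is immediate because every node of $T$ is reachable from $r$ in the tree. Simplicity holds by construction, since distinct classes are non-bisimilar. Thus $(G,[r])$ is well-pointed. A key fact I would record and then reuse is that, because $\approx$ is a bisimulation, the children of any class $[v]$ in $G$ are exactly the classes $[w]$ as $w$ ranges over the children of $v$ in $T$.

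The main step is to produce an isomorphism $\varphi\colon T\to T_{[r]}$ to the tree expansion. Define $\varphi$ by sending a node $x$ of $T$, with unique path $r = x_0\to x_1\to\cdots\to x_n = x$ in $T$, to the path $[x_0]\to[x_1]\to\cdots\to[x_n]$ in $G$, which is a node of $T_{[r]}$ of depth $n$. By construction $\varphi$ preserves root, depth, and parent-child relation, and it is surjective because every path in $G$ from $[r]$ lifts step by step to a path in $T$ from $r$ by the key fact just recorded. For injectivity I would define $R\subseteq T\times T$ by $u\mathrel{R}v$ iff $\varphi(u)=\varphi(v)$ and verify that $R$ is a \emph{tree}-bisimulation in the sense of Definition~\ref{D-bisim}: condition (c) on equal depths and condition (b) on related parents both follow from the coordinatewise equality of the paths defining $\varphi(u)$ and $\varphi(v)$, while the graph-bisimulation clause uses the same key fact to match children on both sides, since $[u]=[v]$ entails that any child of $u$ in $T$ has the same class as some child of $v$ in $T$, and the resulting extended paths agree. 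Strong extensionality of $T$ then forces $R\subseteq\Delta_T$, so $\varphi$ is injective. This is the step I expect to be the main obstacle, since it is where strong extensionality is essential and where the three conditions of Definition~\ref{D-bisim} must be checked against the combinatorics of the lifted paths.

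For uniqueness, suppose $(G',q')$ is another well-pointed graph whose tree expansion is (isomorphic to) $T$. The ``last vertex'' function $\ell\colon T\to G'$ that sends a path to its terminal vertex is a $\pow$-coalgebra homomorphism sending $r$ to $q'$. Its kernel is a bisimulation on $T$ and hence contained in $\approx$, so $\ell$ factors as $\bar\ell\colon G\to G'$ with $\bar\ell([r]) = q'$. Surjectivity of $\bar\ell$ follows from reachability of $(G',q')$, and injectivity from simplicity of $G'$: if $\bar\ell([x])=\bar\ell([y])$ then the kernel of $\bar\ell$ is a bisimulation on the simple graph $G'$ pulled back to $G$, forcing $[x]=[y]$. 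Hence $\bar\ell$ is an isomorphism of pointed coalgebras, giving uniqueness up to isomorphism.
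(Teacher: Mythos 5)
Your existence argument is sound and is essentially the paper's: both form the quotient $T/{\approx}$ by bisimilarity and identify $T$ with the tree expansion of $[r]$ via the path map. The one methodological difference is in how injectivity is obtained: you verify by hand that the kernel of your map $\varphi$ is a tree bisimulation and then apply strong extensionality of $T$, whereas the paper observes that the graph of $\varphi$ is a tree bisimulation $R\subseteq T\times T'$ and uses closure of tree bisimulations under composition (a consequence of $\pow$ preserving weak pullbacks) together with strong extensionality of \emph{both} $T$ and $T'$ (the latter via Example~\ref{T-bisim}) to get $R\circ R^{-1}\subseteq\Delta$ and $R^{-1}\circ R\subseteq\Delta$. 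Your direct verification is slightly more self-contained, at the cost of checking the three clauses of Definition~\ref{D-bisim} explicitly.

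In the uniqueness part there is a direction-of-inclusion error. To factor $\ell\colon T\to G'$ through the quotient $G=T/{\approx}$ you need ${\approx}\subseteq\ker\ell$ (i.e., $\ell$ constant on $\approx$-classes), not $\ker\ell\subseteq{\approx}$ as you assert. The inclusion you actually prove, $\ker\ell\subseteq{\approx}$, is the one that later yields injectivity of $\bar\ell$; the inclusion needed for $\bar\ell$ to exist follows instead from simplicity of $G'$: since $\pow$ preserves weak pullbacks, the direct image $(\ell\times\ell)[{\approx}]$ is a bisimulation on $G'$, hence contained in $\Delta_{G'}$, so $u\approx v$ implies $\ell(u)=\ell(v)$. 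With that repair (and with the injectivity of $\bar\ell$ credited to $\ker\ell\subseteq{\approx}$, equivalently to simplicity of $G$ rather than of $G'$), your uniqueness argument goes through and matches the paper's, which likewise uses the ``last vertex of a path'' bisimulation together with simplicity of both graphs and reachability.
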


\begin{proof}
Let $T$~be a strongly extensional tree with root~$r$, considered as a coalgebra for \PP.

(a) Existence.  The coalgebra~$(T/{\sim},[r])$ where $\sim$~merges
bisimilar vertices of~$T$ is well-pointed by Lemma~\ref{L-graph}. Its
tree expansion $T'=(T/{\sim})_{[r]}$ is (isomorphic to) the given
tree~$T$. Indeed, the relation $R\subseteq T\times T'$ of all
pairs~$(x,p)$ where $x$~is a node of~$T$ and $p$~is the equivalence
class of the unique path from~$r$ to~$x$ is clearly a tree
bisimulation. Since \PP~preserves weak pullbacks, it follows that the
composite~${R\circ R^{-1}}$ of $R$ and~$R^{-1}$ is also a tree
bisimulation, see \cite{R1}. But $T$~is strongly extensional, thus
$R\circ R^{-1}\subseteq\Delta$. Also $T'$~is strongly extensional, see
Example~\ref{T-bisim}, thus $R^{-1}\circ R\subseteq\Delta$. Since for
every $x$ there is a pair $(x,p)$ in $R$, we conclude that $R$~is (the
graph of) an isomorphism from~$T$ to~$T'$.

(b)
Uniqueness:
If well-pointed graphs $(G,q_0)$ and~$(G',q'_0)$ have isomorphic tree
expansions, then they are isomorphic. Arguing analogously to~(a) we
only need to find a graph bisimulation $R\subseteq G\times G'$ and
use the simplicity of $G$ and~$G'$. For that, we just observe that
there is a graph bisimulation between $(G,q_0)$ and~$T_{q_0}$: the
relation $R\subseteq G\times T_{q_0}$ of all pairs~$(q,p)$ where
$q\in G$ is the last vertex of the path~$p$ from~$q_0$ to~$q$.
\end{proof}

\begin{corollary}
\label{C-bisim}
$\nu\PP=\mbox{}$all strongly extensional trees.
\end{corollary}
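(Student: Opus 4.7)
The plan is that this corollary follows almost immediately by composing two bijections that are already in place. First, by Corollary \ref{C-class} together with Example \ref{E-class}, the final coalgebra $\nu \PP$ (understood in the enlarged sense of Remark \ref{R-class}, since $\PP$ has no small final coalgebra) is carried by the class of all well-pointed $\PP$-coalgebras, i.e.\ all well-pointed graphs, up to isomorphism. Second, Proposition \ref{P-bisim} gives a bijection between strongly extensional trees and well-pointed graphs, sending a well-pointed graph $(G,q_0)$ to its tree expansion $T_{q_0}$ (which is strongly extensional by Example \ref{T-bisim}) and, in the inverse direction, sending a strongly extensional tree $T$ with root $r$ to the quotient $(T/{\sim},[r])$, where $\sim$ identifies bisimilar vertices. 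Composing these yields the desired identification $\nu \PP \cong$ all strongly extensional trees.

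I would present the proof in one short sentence: cite Example \ref{E-class} (or the statement of Theorem \ref{T-final} extended via Corollary \ref{C-class}) to identify $\nu \PP$ with well-pointed graphs up to isomorphism, and then invoke Proposition \ref{P-bisim} to replace ``well-pointed graph'' by ``strongly extensional tree''. It may be worth noting explicitly, for the reader, that the coalgebra structure transports correctly: the \PP-coalgebra structure on the class of strongly extensional trees assigns to a tree $T$ the set of subtrees rooted at the children of the root of $T$, which are themselves strongly extensional (this matches the structure from Example \ref{E-sets} applied to $\wellp(T,r)$ after tree-tupling).

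There is really no obstacle here, since the substantive work has already been done in Proposition \ref{P-bisim} (both the existence half, which uses that $\PP$ preserves weak pullbacks, and the uniqueness half). The only minor point to be careful about is that $\nu \PP$ is a proper class, so one must work inside the framework of Remark \ref{R-class} and Corollary \ref{C-class} rather than Theorem \ref{T-final} itself; once that is granted, the corollary is a one-line consequence.
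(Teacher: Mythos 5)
Your proposal is correct and matches the paper's own argument: the corollary is obtained by identifying $\nu\PP$ (in the class-based sense of Remark~\ref{R-class} and Corollary~\ref{C-class}, via Example~\ref{E-class}) with the well-pointed graphs, and then applying Proposition~\ref{P-bisim} together with Example~\ref{T-bisim} to pass to strongly extensional trees. The paper likewise flags the set-theoretic caveat that $\PP$ has no small fixed point, exactly as you do.
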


We must be careful here: \PP~has no fixed points. But recall the
extension of set functors to classes in Remark~\ref{R-class}. For~\PP\
this is the functor ${\PP}^{*}X=\{A;A\text{ is a set with }A\subseteq
X\}$.
Its (large) final coalgebra is the coalgebra of all (small) strongly
extensional trees.

\begin{notation}
\label{N-lambda}
Let $\PP_\lambda$~be the subfunctor of all subsets of cardinality
less than~$\lambda$. (Thus $\PP_\omega$~is the finite power-set
functor.) Then by precisely the same argument as above one proves
\end{notation}

\begin{corollary}
\label{C-lambda}
For every cardinal~$\lambda$,
\[
\nu\PP_\lambda=\text{all $\lambda$-branching strongly extensional
trees.}
\]
\end{corollary}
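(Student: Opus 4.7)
The plan is to adapt the argument of Proposition~\ref{P-bisim} and Corollary~\ref{C-bisim} to the subfunctor $\PP_\lambda$, taking advantage of the sentence ``by precisely the same argument as above one proves''. First I would verify that $\PP_\lambda$ preserves wide intersections and weak pullbacks (both are direct computations from the definition, using only that if every $A_i$ has fewer than $\lambda$ elements then so does $\bigcap_i A_i$, and that the weak pullback of $\PP$ restricts). This places $\PP_\lambda$ inside the framework of Section~\ref{treti}, so Theorem~\ref{T-final} applies (in its class-sized form from Remark~\ref{R-class}, since $\PP_\lambda$ has no fixed point in $\Set$ for $\lambda$ large enough). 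Coalgebras for $\PP_\lambda$ are precisely \emph{$\lambda$-branching graphs}, i.e.\ graphs in which every vertex has fewer than $\lambda$ neighbours; and since $\PP_\lambda \hookrightarrow \PP$ is a subfunctor, Lemma~\ref{L-graph} yields that the greatest congruence still merges bisimilar vertices, so the well-pointed $\PP_\lambda$-coalgebras are exactly the reachable simple pointed $\lambda$-branching graphs.

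Next I would establish the bijection between well-pointed $\PP_\lambda$-coalgebras and $\lambda$-branching strongly extensional trees by tree expansion. For the forward direction, given a well-pointed $(G,q_0)$, its tree expansion $T_{q_0}$ is clearly $\lambda$-branching (each node representing a path $p$ has as children the one-step extensions of $p$, in bijection with the neighbours of the last vertex of $p$ in $G$), and it is strongly extensional by exactly the argument of Example~\ref{T-bisim}, which uses only simplicity of $G$. For the converse, given a $\lambda$-branching strongly extensional tree $T$ with root $r$, the $\lambda$-branching condition ensures that the coalgebra structure on $T$ lands in $\PP_\lambda$; then $(T/\mathord{\sim},[r])$, where $\sim$ identifies bisimilar vertices, is a well-pointed $\PP_\lambda$-coalgebra, and the same tree-bisimulation argument as in part~(a) of the proof of Proposition~\ref{P-bisim} (invoking that $\PP_\lambda$ preserves weak pullbacks, so that compositions of tree bisimulations are tree bisimulations) shows that its tree expansion is isomorphic to $T$. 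Uniqueness follows from part~(b) of the same proof, using simplicity of both graphs.

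Combining the bijection with Theorem~\ref{T-final} then yields the desired description of $\nu \PP_\lambda$. The only real point to check, beyond reproducing the proof of Proposition~\ref{P-bisim}, is that the $\lambda$-branching condition is correctly transported back and forth between coalgebras and trees; this is the main obstacle, but it is essentially bookkeeping, since tree expansion preserves the number of children at each node and the quotient $T/\mathord{\sim}$ can only decrease cardinalities of neighbourhoods.
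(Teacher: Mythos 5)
Your proposal is correct and follows exactly the route the paper intends: the paper's own ``proof'' is literally the sentence ``by precisely the same argument as above one proves,'' referring to the tree-expansion bijection of Example~\ref{T-bisim} and Proposition~\ref{P-bisim} together with Theorem~\ref{T-final}, which is what you reproduce with the $\lambda$-branching bookkeeping made explicit. The only small inaccuracy is the aside that $\PP_\lambda$ has no fixed point in $\Set$ for large $\lambda$ --- for a fixed cardinal $\lambda$ it does have a small final coalgebra (a $\lambda$-branching tree has at most $\lambda+\aleph_0$ nodes, so there is only a set of such trees up to isomorphism) --- but invoking the class-sized version from Remark~\ref{R-class} is harmless and does not affect the argument.
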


This was proved for $\lambda=\omega$ by Worrell~\cite{W} and for
general~$\lambda$ by Schwencke~\cite{S}. Our proof is entirely
different.\medskip

We know from Example~\ref{E-wf}(1) that the well-founded graphs
are precisely the graphs without an infinite directed path.
Now strong extensionality can, in the case of well-founded trees, be
simplified to \textit{extensionality} which says that for every node
different children define non-isomorphic subtrees. Thus we get

\begin{corollary}
\label{C-PP}
$
\begin{array}[t]{r@{\ }c@{\ }l}
  \mu\PP & = & \text{all well-founded, extensional trees;} \\
  \mu\PP_\lambda & = & \text{all $\lambda$-branching, well-founded, extensional trees.}
\end{array}
$
\end{corollary}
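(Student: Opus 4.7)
\medskip

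The plan is to combine three ingredients already developed in the paper. First, by Theorem~\ref{T-init} (in its large-coalgebra form, Corollary~\ref{C-mu} and Remark~\ref{R-large}), the initial algebra for the class extension of $\PP$ consists precisely of the well-founded, well-pointed coalgebras (up to isomorphism), and similarly for $\PP_\lambda$ by restricting to $\lambda$-branching coalgebras. Second, by Proposition~\ref{P-bisim} and Corollary~\ref{C-bisim}, tree expansion gives a bijection between well-pointed coalgebras for $\PP$ and strongly extensional trees. So to prove the corollary, it suffices to identify, among all strongly extensional trees, those that correspond to well-founded graphs, and to simplify the notion of strong extensionality in that subclass to ordinary extensionality.

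For the first identification I would invoke Example~\ref{E-wf}(1): a graph is well-founded (as a $\PP$-coalgebra) iff it has no infinite directed path. From this it is immediate that a pointed graph $(G,q_0)$ is well-founded iff its tree expansion $T_{q_0}$ has no infinite path: an infinite path $q_0\to q_1\to\cdots$ in $G$ yields the branch $q_0,\,q_0q_1,\,q_0q_1q_2,\ldots$ in $T_{q_0}$, and conversely a branch in $T_{q_0}$ projects onto an infinite walk in $G$. Hence the bijection of Proposition~\ref{P-bisim} restricts to a bijection between well-founded, well-pointed graphs and well-founded strongly extensional trees.

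The remaining task, and the main technical point, is to show that for a well-founded tree $T$, strong extensionality is equivalent to the local condition ``distinct children of any node define non-isomorphic subtrees''. The easy direction: if some node has two distinct children $x_1,x_2$ with $T_{x_1}\cong T_{x_2}$, then pairing corresponding nodes and extending by the identity above their common parent yields a nontrivial tree-bisimulation, contradicting strong extensionality. For the converse, suppose $T$ is well-founded and extensional, and let $R\subseteq T\times T$ be a tree-bisimulation. I would prove $R\subseteq\Delta_T$ by induction on the depth of a node: the root only relates to itself by condition~(c) of Definition~\ref{D-bisim}; for a node $x$ of depth $d{+}1$ related to $y$, condition~(b) and the induction hypothesis force $x,y$ to share the same parent $p$. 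A subsidiary well-founded induction on subtrees then shows that any $R$-related pair of vertices spans isomorphic subtrees (use that every child of $x$ is $R$-related to some child of $y$ and vice versa, apply the inductive isomorphism to the subtrees, and combine with extensionality at both endpoints to obtain a bijection on children). Thus $T_x\cong T_y$, and then extensionality at $p$ forces $x=y$. So $R\subseteq\Delta_T$, as desired.

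Finally, the $\PP_\lambda$ version is obtained by observing that every step of the argument is preserved under the $\lambda$-branching restriction: a well-pointed $\PP$-coalgebra carried by a $\lambda$-branching graph corresponds under tree expansion to a $\lambda$-branching tree, and vice versa. The main obstacle is the induction in the previous paragraph; once strong extensionality has been reduced to local extensionality on well-founded trees, the corollary follows from Theorem~\ref{T-init} together with Proposition~\ref{P-bisim}.
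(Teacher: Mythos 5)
Your proof is correct and follows essentially the same route as the paper: identify $\mu\PP$ with the well-founded, well-pointed graphs, transport this along the tree-expansion bijection of Proposition~\ref{P-bisim}, observe that well-foundedness of the graph corresponds to well-foundedness of the expansion tree, and reduce strong extensionality to extensionality on well-founded trees. The only difference is that the paper merely asserts this last reduction, whereas you supply a correct proof of it (the depth induction pinning $R$-related nodes to a common parent, plus the subsidiary well-founded induction showing $R$-related nodes span isomorphic subtrees).
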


Analogously to Example~\ref{E-rat-2} the rational  fixed
point of the finite-powerset functor~$\PP_\omega$ consists of all rational strongly
extensional trees, i.e., those with finitely many subtrees up to
isomorphism:

\begin{corollary}
\label{C-finite}
For the finite power-set functor~$\PP_\omega$ we have
\begin{align*}
\nu\PP_\omega&=\text{all finitely branching, strongly extensional
trees,}\\
\varrho\PP_\omega&=\text{all finitely branching, rational, strongly
extensional trees, and} \\
\mu\PP_\omega&=\text{all finite extensional trees.}
\end{align*}
\end{corollary}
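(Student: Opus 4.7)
The plan is to derive all three equalities by specialising the results we have already established for $\PP$ and $\PP_\lambda$, and then applying the bijection between well-pointed graphs and strongly extensional trees from Proposition~\ref{P-bisim}.

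For $\nu\PP_\omega$, I would just invoke Corollary~\ref{C-lambda} with $\lambda=\omega$: the $\omega$-branching subsets are precisely the finite subsets, so $\PP_\omega=\PP_\omega$ and the final coalgebra consists of all finitely branching strongly extensional trees.

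For $\mu\PP_\omega$, I would invoke Corollary~\ref{C-PP} with $\lambda=\omega$ to get all finitely branching, well-founded, extensional trees. The last step is to invoke König's Lemma: a finitely branching well-founded tree has only finitely many nodes (otherwise an infinite path would appear), so ``finitely branching + well-founded'' collapses to ``finite''. Conversely, every finite tree is trivially finitely branching and well-founded. This gives $\mu\PP_\omega=$ all finite extensional trees.

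For $\varrho\PP_\omega$, I would combine Theorem~\ref{T-rho} (which identifies $\varrho H$ with all finite well-pointed coalgebras, up to isomorphism) with Proposition~\ref{P-bisim} (the bijection $(G,q_0)\mapsto T_{q_0}$ between well-pointed graphs and strongly extensional trees). The task then reduces to showing that, under this bijection, finite well-pointed graphs correspond precisely to rational finitely branching strongly extensional trees. This is the main obstacle. In one direction, if $(G,q_0)$ is a finite well-pointed $\PP_\omega$-coalgebra with state set $\{q_0,\dots,q_n\}$, then every subtree of $T_{q_0}$ arising from a directed path terminating at $q_i$ is isomorphic to $T_{q_i}$ (using that the tree expansion of a state depends only on the state, just as in Example~\ref{E-rat-2} for $\Sigma$-coalgebras), so $T_{q_0}$ has at most $n+1$ subtrees up to isomorphism and is therefore rational; it is obviously finitely branching since $G$ is finite. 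Conversely, if $T$ is a rational finitely branching strongly extensional tree, then the coalgebra $(T/\mathord{\sim},[r])$ constructed in the proof of Proposition~\ref{P-bisim} (with $\sim$ identifying nodes whose subtrees are isomorphic) has one state per isomorphism class of subtrees of $T$, hence finitely many states; being the unique well-pointed graph whose tree expansion is $T$, this yields the desired correspondence.

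Putting these three items together gives the corollary. The only non-routine step is the ``rational $\Leftrightarrow$ finite well-pointed'' equivalence for $\PP_\omega$, which I expect to be straightforward once one translates Example~\ref{E-rat-2} from $\Sigma$-trees to strongly extensional graph-trees via the canonical construction $T\mapsto T/\mathord{\sim}$.
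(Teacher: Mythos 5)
Your proposal is correct and takes essentially the same route as the paper's (largely implicit) argument: $\nu\PP_\omega$ from Corollary~\ref{C-lambda} at $\lambda=\omega$, $\mu\PP_\omega$ from Corollary~\ref{C-PP} plus K\"onig's Lemma, and $\varrho\PP_\omega$ from Theorem~\ref{T-rho} combined with the rationality argument ``analogous to Example~\ref{E-rat-2}'' that the paper itself invokes. The only cosmetic slip is that in Proposition~\ref{P-bisim} the quotient $T/\mathord{\sim}$ is taken modulo bisimilarity rather than subtree-isomorphism; since bisimilarity is the coarser relation, rationality of $T$ still bounds the number of states, so your converse direction goes through unchanged.
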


\subsection{Sets and non-well-founded sets}
\label{subsG}\hfill

\smallskip\noindent
We revisit $\mu\PP$ and~$\nu\PP$ here from a set-theoretic
perspective. Before coming to the non-well-founded sets, let us
observe that Example~\ref{E-graph} has the following strengthening:

\begin{lemma}
\label{L-non}
Well-founded, well-pointed graphs are precisely the canonical
pictures of well-founded sets.
\end{lemma}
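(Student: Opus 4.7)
The plan is to prove both inclusions. For the forward direction, let $X$ be a well-founded set; Example~\ref{E-graph} already shows that its canonical picture $(G_X, X)$ is a well-pointed graph, so what remains is to check that $G_X$ is well-founded as a $\PP$-coalgebra. By Example~\ref{E-wf}~(1), this amounts to saying $G_X$ contains no infinite directed path. But an edge in $G_X$ is precisely a membership $Y' \in Y$, so an infinite path out of $X$ would produce a descending $\in$-chain $\cdots \in Y_2 \in Y_1 \in X$, contradicting the Axiom of Foundation (which is implicit in ``$X$ is a well-founded set'').

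For the converse, let $(G, q_0)$ be a well-founded, well-pointed graph, and write $q \to q'$ for the edge relation on $G$. Using the fact that $G$ has no infinite directed path (Example~\ref{E-wf}~(1)), define $f\colon G \to V$ by well-founded recursion on the converse of the edge relation:
\[
f(q) \;=\; \{\, f(q') : q \to q' \text{ in } G\,\}.
\]
Set $X = f(q_0)$; this is automatically a well-founded set. By construction, $Y \to Z$ in the image $f[G]$ corresponds exactly to $Z \in Y$, so $f$ is a graph homomorphism into the canonical picture $(G_X, X)$ of $X$.

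The heart of the argument is to show that $f$ is an isomorphism of pointed graphs. For injectivity, consider the relation $R = \{(q_1, q_2) : f(q_1) = f(q_2)\}$ on $G$. If $q_1 \mathrel{R} q_2$ and $q_1 \to q_1'$, then $f(q_1') \in f(q_1) = f(q_2)$, so by the recursive definition of $f(q_2)$ there is some $q_2'$ with $q_2 \to q_2'$ and $f(q_1') = f(q_2')$, i.e.\ $q_1' \mathrel{R} q_2'$; the symmetric clause holds for the same reason. Thus $R$ is a graph bisimulation, and by Lemma~\ref{L-graph} together with the simplicity of $(G, q_0)$ from Corollary~\ref{C-graph}, $R$ is the diagonal. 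For the image, reachability of $G$ from $q_0$ (Corollary~\ref{C-graph}) says every $q \in G$ admits a directed path $q_0 \to q_1 \to \cdots \to q_n = q$, and then $f(q) \in f(q_{n-1}) \in \cdots \in f(q_0) = X$ exhibits $f(q)$ as a vertex of $G_X$. Conversely, every vertex $Y$ of $G_X$ comes with a chain $Y = Y_0 \in Y_1 \in \cdots \in Y_n = X$; applying the definition of $f$ at $q_0, q_1, \ldots$ successively produces vertices $q_k \in G$ with $q_{k-1} \to q_k$ and $f(q_k) = Y_{n-k}$, whence $Y = f(q_n) \in f[G]$. Finally, $f$ preserves the chosen point by definition.

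The main obstacle is merely justifying the recursive definition of $f$: well-founded recursion on the converse of $\to$ is legitimate precisely because no infinite directed path starts at any vertex of $G$, which is the concrete content of the coalgebraic well-foundedness hypothesis via Example~\ref{E-wf}~(1). Everything else is a routine combination of the bisimulation/simplicity machinery in Lemma~\ref{L-graph}, Corollary~\ref{C-graph}, and Example~\ref{E-graph}.
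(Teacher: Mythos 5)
Your proof is correct and follows essentially the same route as the paper: your recursively defined $f$ is exactly the Mostowski collapse (decoration) that the paper invokes as a standard fact, your bisimulation argument for injectivity is the paper's observation that the kernel of the collapse is a congruence and hence trivial by simplicity, and your use of reachability to identify the image with the canonical picture fills in what the paper dispatches with ``from that it follows.'' The only difference is one of detail, not of method.
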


This follows from the standard fact from set theory
that every well-pointed graph~$G$ has a unique
\textit{Mostowski collapse}, also called a
 \textit{decoration} in Aczel \cite[see Introduction]{Ac}, i.e.,
coalgebra homomorphism~$d$ to the class~\Set\ of sets considered as a graph with $\in$~as
the neighborhood relation. That is, $d$~assigns to every vertex~$x$
a set~$d(x)$ as follows:
\[d(x)=\bigl\{d(y);y\in G\text{ a neighbor of }x\bigr\}.\]
Observe that the kernel of~$d$ is clearly a congruence on~$G$. Thus,
given a well-pointed, well-founded graph~$(G,q_0)$, we know that
$d$~is monic. From that it follows that the canonical picture of the
set~$d(q_0)$ is isomorphic to~$(G,q_0)$.

\begin{corollary}
\label{C-non}
$\mu\PP=\mbox{}$ the class of all sets.
\end{corollary}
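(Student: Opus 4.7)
The plan is to combine Theorem~\ref{T-init} (extended to the class-valued setting as in Remark~\ref{R-large}) with Lemma~\ref{L-non} and the fact that, under the usual set-theoretic assumptions we are working with, every set is well-founded.

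First, by Remark~\ref{R-large} applied to $H=\PP$, the (large) initial algebra $\mu\PP^{*}$ has as its underlying class the class of all well-founded, well-pointed graphs up to isomorphism, equipped with the algebra structure inherited from inverting the coalgebra structure $\bar{\psi}$ of Theorem~\ref{T-init}. Next, I would invoke Lemma~\ref{L-non}: the assignment sending a set $X$ to its canonical picture (Example~\ref{E-graph}) pointed at $X$ is a bijection between the class of all sets and the class of isomorphism types of well-founded, well-pointed graphs. Combining these two statements already identifies the underlying class of $\mu\PP^{*}$ with the class of all sets.

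It remains to check that the algebra structure transported along this bijection is exactly the canonical $\PP^{*}$-algebra structure on sets, namely the identity map $\PP^{*}(\text{Sets})\to\text{Sets}$, $A\mapsto A$. By the description of $\bar\psi$ in the proof of Theorem~\ref{T-init}, the coalgebra structure on a well-pointed, well-founded graph $(G,\alpha,q_0)$ sends $q_0$ to the set $\{\alpha^{+}(q);\, q\in\alpha(q_0)\}\in\PP^{*}(\mu\PP^{*})$. Under the canonical-picture identification, $\alpha^{+}(q)$ corresponds to the decoration $d(q)$, and the neighbors of $q_0$ in $G$ are precisely the elements of $d(q_0)$; hence the coalgebra structure corresponds to $d(q_0)\mapsto \{d(q):q\in\alpha(q_0)\}=d(q_0)$. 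Its inverse, which is the algebra structure, is therefore the identity $A\mapsto A$, i.e.\ the familiar ``union'' (or membership) algebra on sets.

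The only delicate point is the compatibility of the structure maps under the canonical-picture bijection; the rest is a direct citation of results already established in the paper together with the Foundation axiom (which guarantees that every set equals its canonical picture's decoration at the base point and hence that all sets are accounted for). No further verification is needed, since both classes are ordered classes of simple coalgebras and the bijection above automatically lifts to an isomorphism of (class-sized) coalgebras.
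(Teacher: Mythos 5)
Your proposal is correct and follows essentially the same route as the paper: identify $\mu\PP^{*}$ with the class of well-founded, well-pointed graphs (Remark~\ref{R-large}) and then use the canonical-picture/decoration bijection of Lemma~\ref{L-non}. The paper leaves the compatibility of the algebra structure with the membership structure on sets implicit (deferring to Rutten--Turi), so your explicit check of that point is a harmless addition rather than a deviation.
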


This was proved by Rutten and Turi in~\cite{RT}. The bijection
between well-founded, well-pointed graphs and sets (given by the
canonical picture) takes the finite graphs to the \textit{hereditarily
finite sets}, i.e., finite sets with finite elements which also
have finite elements, etc. More precisely: a set $X$ is hereditarily finite
if
all sets in the canonical
picture of~$X$ are finite:

\begin{corollary}
\label{C-her}
$\mu\PP_\omega=\mbox{}$all hereditarily finite sets.
\end{corollary}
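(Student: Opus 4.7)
The plan is to reduce this to the preceding Corollary \ref{C-non} by restricting the bijection ``canonical picture'' between sets and well-founded, well-pointed graphs to the subclass of finitely branching graphs. By Theorem \ref{T-init}, $\mu\PP_\omega$ consists of the well-founded, well-pointed $\PP_\omega$-coalgebras (up to isomorphism); and a $\PP_\omega$-coalgebra is exactly a finitely branching graph (every vertex has only finitely many neighbors). So the task reduces to identifying those sets $X$ whose canonical picture is finitely branching with precisely the hereditarily finite sets.

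First I would recall that the bijection behind Corollary \ref{C-non} is given in one direction by the canonical picture and in the other by the Mostowski collapse $d(x) = \{d(y) : x \to y\}$ (see Lemma \ref{L-non}). The key observation is that in the canonical picture of a set $X$, the neighbors of a vertex $Y$ are precisely the elements of $Y$. Hence the canonical picture is finitely branching iff every set $Y$ appearing in it is finite; equivalently, iff $X$ itself is finite, every element of $X$ is finite, every element of an element of $X$ is finite, and so on; that is, iff $X$ is hereditarily finite.

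Conversely, given a well-founded, well-pointed, finitely branching graph $(G,q_0)$, its Mostowski collapse $d: G \to \mathbf{Set}$ is monic (as in the proof of Lemma \ref{L-non}), and the image $d[G]$ is the canonical picture of $d(q_0)$; since $G$ is finitely branching, each $d(x) = \{d(y) : x \to y\}$ is finite, and transfinite induction along the well-founded structure shows $d(q_0)$ is hereditarily finite. Combining both directions yields a bijection between $\mu\PP_\omega$ and the class of hereditarily finite sets, as claimed.

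I do not expect any serious obstacle: the hard content (simplicity of the initial-object assumption, the identification of $\mu H$ with well-founded well-pointed coalgebras, and the uniqueness of the Mostowski collapse) is already packaged in Theorem \ref{T-init}, Lemma \ref{L-non}, and Corollary \ref{C-non}. The only thing that needs to be spelled out carefully is the translation ``finitely branching canonical picture $\Longleftrightarrow$ hereditarily finite set,'' which is immediate from the definition of the canonical picture once one recalls that neighbors equal elements.
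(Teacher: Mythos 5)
Your proposal is correct and follows essentially the same route as the paper, which likewise obtains the corollary by restricting the canonical-picture bijection of Corollary~\ref{C-non} (via Lemma~\ref{L-non}) to the $\PP_\omega$-coalgebras. The only cosmetic difference is that you characterize these directly as the \emph{finitely branching} well-founded, well-pointed graphs --- which matches the paper's definition of hereditarily finite verbatim --- whereas the paper speaks of \emph{finite} graphs, the two classes coinciding here by K\"onig's Lemma.
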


In order to describe the final coalgebra for~\PP\ in a similar
set-theoretic manner, we must move from the classical theory to the
non-well-founded set theory of Aczel~\cite{Ac}. 
\takeout{%
Recall that a {\bf decoration} of a graph is a coalgebra homomorphism
from this graph into the large coalgebra $({\Set},{\in})$.
}%
Non-well-founded set theory is obtained by swapping the axiom of
foundation, telling us that $({\Set},{\in})$~is well-founded, with the following

\medskip
\noindent
{\bf Anti-foundation axiom.} Every graph has a unique decoration.

\begin{example}
\label{E-non}
The decoration of a single loop is a set~$\Omega$ such that
$\Omega=\{\Omega\}$.

\vspace{1mm}

The coalgebra~$({\Set},{\in})$ where now \Set~is the class of all
non-well-founded sets, is of course final for \PP: the
decoration of any graph~$G$ is the unique homomorphism $d\colon G\to\Set$.
\end{example}

\begin{corollary}
\label{C-all}
In the non-well-founded set theory
\[\nu\PP=\text{all sets.}\]
\end{corollary}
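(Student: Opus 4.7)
The plan is to show directly that $({\Set},{\in})$ is a final coalgebra for $\PP^*$ using the anti-foundation axiom, and then combine this with Corollary~\ref{C-bisim} (or its class version for $\PP^*$ as in Remark~\ref{R-class}) to obtain the claimed identification.

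First I would unfold what a coalgebra homomorphism $d\colon (G,a)\to ({\Set},{\in})$ means when we regard $({\Set},{\in})$ as a $\PP^*$-coalgebra via $x\mapsto\{y : y\in x\}$. Chasing the commuting square, $d$ is a coalgebra homomorphism iff $d(x)=\{d(y) : y\in a(x)\}$ for every $x\in G$, which is exactly the definition of a \emph{decoration} in Aczel's sense recalled above the statement of the corollary. Thus the notions ``coalgebra homomorphism into $({\Set},{\in})$'' and ``decoration of the graph $G$'' coincide.

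Next I would invoke the anti-foundation axiom in the form given in the excerpt: every graph admits a unique decoration. Translated through the previous observation, this is precisely the statement that for every $\PP^*$-coalgebra $(G,a)$ there exists a unique coalgebra homomorphism into $({\Set},{\in})$. Hence $({\Set},{\in})$ has the universal property of a final $\PP^*$-coalgebra, so $\nu\PP^*\cong ({\Set},{\in})$ in $\Coalg \PP^*$.

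Finally, by Corollary~\ref{C-bisim} (applied to the class extension, as in Remark~\ref{R-class}) we know independently that $\nu\PP^*$ is carried by the class of all strongly extensional trees, and final coalgebras are unique up to isomorphism. Comparing the two realizations gives the canonical bijection between non-well-founded sets and strongly extensional trees (one direction is the tree expansion of the canonical picture; the other is the Mostowski-style decoration provided by the anti-foundation axiom), and in particular $\nu\PP$ is the class of all sets, as claimed. There is no real obstacle here: the entire content of the corollary is the recognition that the anti-foundation axiom is nothing other than the universal property making $({\Set},{\in})$ a final $\PP^*$-coalgebra.
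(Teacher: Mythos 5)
Your proposal is correct and matches the paper's own argument: the paper (in Example~\ref{E-non}) likewise observes that a decoration of a graph $G$ is exactly a coalgebra homomorphism $G\to({\Set},{\in})$, so the anti-foundation axiom is precisely the statement that $({\Set},{\in})$ is the final $\PP$-coalgebra (in the class-based sense of Remark~\ref{R-class}). The additional comparison with Corollary~\ref{C-bisim} is a harmless extra that the paper does not need for this corollary.
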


Let us turn to the finite power-set functor~$\PP_\omega$. Its final
coalgebra consists of all sets whose canonical picture is finitely
branching. They are called $1$-\textit{hereditarily finite},
notation~$HF^1[\emptyset]$, in the monograph of Barwise and
Moss~\cite{BM}. The
rational fixed point of~$\PP_\omega$ consists of all sets whose
canonical picture is finite, they are called \textit{$1/2$-hereditary} in \cite{BM}.
The collection of these sets is denoted by  
$HF^{1/2}[\emptyset]$.
For well-founded
sets (with canonical picture
well-founded) the two collections coincide.

\begin{corollary}
\label{C-her2}
In the non-well-founded set theory
\[
\begin{array}{r@{\ }c@{\ }ll}
\nu\PP_\omega & = & HF^1[\emptyset], 
& \text{the $1$-hereditarily finite sets,}
\\
\varrho\PP_\omega & = & HF^{1/2}[\emptyset], 
& \text{the $1/2$-hereditarily finite sets, and}
\\
\mu\PP_\omega& = &\multicolumn{2}{@{}l}{\text{the well-founded, hereditarily finite sets.}}
\end{array}
\]
\end{corollary}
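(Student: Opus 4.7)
The plan is to obtain all three statements by combining Corollary~\ref{C-finite}, which describes $\nu\PP_\omega$, $\varrho\PP_\omega$ and $\mu\PP_\omega$ as classes of trees, with the anti-foundation axiom, which supplies a canonical bijection between (strongly extensional) trees and sets by way of the decoration map. The skeleton mirrors the argument for $\nu\PP$ in Corollary~\ref{C-all} and for $\mu\PP_\omega$ in Corollary~\ref{C-her}; only the cardinality bookkeeping is new.

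First I would fix a strongly extensional tree $T$ with root~$r$ and form its (unique) decoration $d\colon T\to \Set$, guaranteed by the anti-foundation axiom (Example~\ref{E-non}). Since $\ker d$ is a graph bisimulation on $T$ that relates the root to itself and respects depth, strong extensionality forces $d$ to be injective on the vertices reachable from $r$; hence the well-pointed graph $(T,r)$ is determined up to isomorphism by the set $d(r)$, and conversely every set~$X$ yields, via its canonical picture and tree expansion (Remark~\ref{R-exp}), a strongly extensional tree pointed by $X$. This sets up a bijection between strongly extensional trees and sets, restricting Corollary~\ref{C-all} to the power of Corollary~\ref{C-bisim}.

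Next I would translate the three finiteness conditions across this bijection. The tree $T_{q_0}$ of Remark~\ref{R-exp} is finitely branching at every node precisely when, at each vertex $y$ reachable from $q_0$ in the canonical picture, the set of $\in$-predecessors of~$y$ is finite; by the definition of the canonical picture this says exactly that every set appearing hereditarily in $X=d(q_0)$ is finite, i.e. $X\in HF^1[\emptyset]$. This gives the first clause $\nu\PP_\omega\cong HF^1[\emptyset]$. For the third clause, a tree is finite iff it is both finitely branching and well-founded, and under the correspondence this becomes: $X$'s canonical picture is finitely branching and well-founded, i.e. $X$ is a hereditarily finite set in the classical sense. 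The second clause is the interesting one: a strongly extensional tree $T$ is rational iff the set of isomorphism classes of subtrees $\{T_x : x\in T\}$ is finite; under the decoration $d$ these isomorphism classes are in bijection with the vertices $\{d(x) : x\in T_{q_0}\}$ of the canonical picture of $d(r)$, so rationality of $T$ is equivalent to the canonical picture of $d(r)$ being finite, which is exactly the definition of $HF^{1/2}[\emptyset]$.

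The main obstacle is the $\varrho$-clause, because ``rational'' is about subtrees up to isomorphism while ``$1/2$-hereditarily finite'' is about the vertices of the canonical picture. The bridge is the identification (coming from strong extensionality of $T$ and injectivity of $d$ on reachable vertices) of isomorphism classes of subtrees $T_x$ with the sets $d(x)$ appearing in the canonical picture of $d(r)$; once this is spelled out, rationality of $T$ is literally finiteness of the vertex set of that canonical picture. Everything else is a routine transfer of the tree-theoretic statements in Corollary~\ref{C-finite} through the decoration bijection.
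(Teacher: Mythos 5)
Your overall route --- transfer the three tree-theoretic descriptions of Corollary~\ref{C-finite} through the decoration correspondence supplied by the anti-foundation axiom --- is exactly what the paper intends; the paper itself gives no more than the remark that, by the Barwise--Moss definitions, $HF^1[\emptyset]$ and $HF^{1/2}[\emptyset]$ are the sets whose canonical pictures are finitely branching, respectively finite. But one step in your middle paragraph is false as stated: the kernel of the decoration $d\colon T\to\Set$ of a strongly extensional tree is a \emph{graph} bisimulation, not a \emph{tree} bisimulation --- it need not respect depth or parents --- so strong extensionality says nothing about it, and $d$ is in general far from injective on $T$. The tree expansion of $\Omega=\{\Omega\}$ is the infinite path, which is strongly extensional, yet its decoration is constant. (If $d$ were injective on all of $T$, the canonical picture of $d(r)$ would always be a tree, which already fails for $2=\{0,1\}$.)

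The repair is to interpose the well-pointed graph: by Proposition~\ref{P-bisim} the strongly extensional tree $T$ corresponds to the well-pointed graph $(T/{\sim},[r])$, and the decoration of \emph{that} graph is injective because its kernel is a congruence (Lemma~\ref{L-graph}) on a simple coalgebra; its image is then precisely the canonical picture of $d(r)$. With this substitution your three translations go through: the branching of $T$ at a node equals the cardinality of the set decorating it (distinct children of a node of a strongly extensional tree have non-isomorphic, hence non-bisimilar, subtrees), so finite branching of $T$ says $d(r)\in HF^1[\emptyset]$; isomorphism classes of subtrees of $T$ biject with vertices of $T/{\sim}$, i.e.\ with vertices of the canonical picture, so rationality of $T$ says the canonical picture is finite, i.e.\ $d(r)\in HF^{1/2}[\emptyset]$; and finiteness of $T$ is finite branching plus well-foundedness, giving the well-founded hereditarily finite sets as in Corollary~\ref{C-her}.
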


\subsection{Labeled transition systems}\hfill

\smallskip\noindent
Here we consider, for a set~$A$ of actions, labeled transition systems (LTS) as coalgebras for~$\PP({-}{\times} A)$.
A \textit{bisimulation} between two labeled transition systems 
$G$ and~$G'$ is a relation $R\subseteq G\times G'$ such that
\[\text{if}\quad x\mathrel{R}x'\quad\parbox[t]{84mm}{then for every
transition $x\to^a x'$ in $G$ there exists
$y' \in G'$  and a transition
$y\to^a y'$ with $x'\mathrel{R}y'$, and vice versa.}\]
States $x,y$ of an LTS are called \textit{bisimilar} if
$x\mathrel{R}y$ for some bisimulation $R\subseteq G\times G$.

\begin{lemma}
\label{L-LTS}
For every LTS the greatest congruence merges precisely the bisimilar
pairs of states.
\end{lemma}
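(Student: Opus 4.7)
The plan is to reduce the statement to the general results of Rutten invoked for Lemma \ref{L-graph}, by verifying that the functor $H = \PP(-\times A)$ preserves weak pullbacks and that the coalgebraic notion of bisimulation for $H$-coalgebras coincides with the LTS notion of bisimulation defined in the text. Once these two facts are in place, the assertion follows exactly as in the graph case: for any coalgebra whose structure functor preserves weak pullbacks, Rutten shows that bisimilarity is itself a congruence and that it is the largest one (equivalently, it is the kernel of the simple reflection $e_{(G,\alpha)}$ of Proposition~\ref{P-simp}).

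First I would check preservation of weak pullbacks for $H = \PP(-\times A)$. The power-set functor $\PP$ is well-known to preserve weak pullbacks, and $(-)\times A$ preserves all limits (as a right adjoint to $(-)^A$), hence in particular pullbacks, and therefore weak pullbacks. The composition of two weak-pullback-preserving functors preserves weak pullbacks, so $H$ does.

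Next I would match the two notions of bisimulation. Given $H$-coalgebras $\alpha\colon G\to\PP(G\times A)$ and $\alpha'\colon G'\to\PP(G'\times A)$, a coalgebraic bisimulation is a relation $R\subseteq G\times G'$ carrying a coalgebra structure $\rho\colon R\to \PP(R\times A)$ for which the two projections $\pi,\pi'$ are coalgebra homomorphisms. Unfolding the homomorphism conditions in the power-set summand, this says exactly that whenever $x\mathrel{R}y$ then (i) for every $x\to^a x'$ in $G$ there exist $y'\in G'$ with $y\to^a y'$ and $x'\mathrel{R}y'$, and (ii) symmetrically. This is literally the LTS bisimulation of the text. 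Hence the largest coalgebraic bisimulation on an LTS coincides with the largest LTS bisimulation, i.e.\ with the bisimilarity relation on its state set.

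Finally, applying Rutten's theorem (cited just before Lemma~\ref{L-graph}) to the weak-pullback-preserving functor $H = \PP(-\times A)$, the largest congruence on an $H$-coalgebra is precisely the kernel equivalence of some homomorphism and coincides with the largest bisimulation; by the previous paragraph this is the bisimilarity relation on states. The only point that requires any genuine verification is the identification of coalgebraic bisimulations with LTS bisimulations, and this is a direct unfolding of definitions; everything else is a citation of the standard machinery already invoked in Lemma~\ref{L-graph}.
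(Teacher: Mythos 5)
Your proposal is correct and follows essentially the same route as the paper, which simply cites Rutten's general results on the grounds that $\PP({-}\times A)$ preserves weak pullbacks; you merely supply the routine verifications (preservation of weak pullbacks by composition, and the unfolding of coalgebraic bisimulation into the LTS notion) that the paper leaves implicit.
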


This, again, follows from general results of Rutten~\cite{R1} since
$\PP({-}{\times}A)$~preserves weak pullbacks.

\begin{corollary}
\label{P-LTS}
An LTS together with an initial state $q_0$ is well-pointed iff it
 is
\begin{enumerate}[\rm (a)]
\item reachable: every state can be reached from~$q_0$ (by a
    sequence of actions), and
\item simple: distinct states are non-bisimilar.
\end{enumerate}
\end{corollary}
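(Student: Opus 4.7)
The plan is to deduce the corollary directly from the general characterization of well-pointed coalgebras established earlier, combined with the fact (Lemma~\ref{L-LTS}) that bisimilarity is the greatest congruence on any LTS. Recall from the remark following Definition~\ref{D-well} that ``well-pointed'' decomposes as ``simple + reachable''. So the proof reduces to matching each of the two abstract conditions with the concrete conditions (a) and (b) for the functor $H = \PP({-}\times A)$.

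For the simplicity half, I would observe that a coalgebra $(G,\alpha)$ has no proper quotient iff its only congruence is the identity. By Lemma~\ref{L-LTS} the greatest congruence on the LTS consists exactly of pairs of bisimilar states, hence the identity is the only congruence precisely when no two distinct states are bisimilar. This yields equivalence with condition (b).

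For the reachability half, I would unfold what a subcoalgebra of an LTS is: a subset $G' \subseteq G$ such that whenever $x \in G'$ and $x \to^a y$ in $G$, then $y \in G'$ (i.e., $G'$ is closed under taking successors along any action). The least such subset containing $q_0$ is obviously the set of states reachable from $q_0$ by finite sequences of actions. So the pointed coalgebra $(G,\alpha,q_0)$ has no proper subcoalgebra containing $q_0$ iff every state of $G$ is reachable from $q_0$, giving condition (a). (Alternatively, one can route this through Lemma~\ref{L-wp}: the functor $\PP({-}\times A)$ preserves intersections, and its canonical graph in the sense of Definition~\ref{D-can} puts an edge $x \to y$ precisely when $x \to^a y$ for some $a\in A$; directed paths in the canonical graph then correspond to LTS action sequences.)

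Combining these two equivalences immediately gives the corollary. There is no real obstacle here; the only small point worth spelling out is the explicit description of subcoalgebras of $\PP({-}\times A)$-coalgebras, which is entirely parallel to the power-set case in Example~\ref{E-wf}(1). The result is essentially a transcription of Corollary~\ref{C-graph} from graphs to labeled transition systems, with ``bisimulation'' and ``reachable'' now interpreted along labeled transitions.
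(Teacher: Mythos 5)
Your proposal is correct and matches the paper's (implicit) argument: the paper derives the corollary from Lemma~\ref{L-LTS} exactly as Corollary~\ref{C-graph} is derived from Lemma~\ref{L-graph}, using the decomposition ``well-pointed $=$ simple $+$ reachable'' and the concrete description of quotients (via the greatest congruence $=$ bisimilarity) and of subcoalgebras (successor-closed subsets). Your explicit unfolding of the subcoalgebra condition and the optional detour via Lemma~\ref{L-wp} are just spelled-out versions of what the paper leaves to the reader.
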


The \textit{tree expansion} of a state~$q$ is a (non-ordered) tree
with edges labeled in~$A$, shortly, an $A$-labeled tree. For
$A$-labeled trees we modify Definition~\ref{D-bisim} and speak about
\textit{tree bisimulation} if a bisimulation $R\subseteq T_1\times
T_2$ also fulfils~\mbox{(a)--(c)} of Definition~\ref{D-bisim}. An $A$-labeled
tree~$T$ is \textit{strongly extensional} iff every tree bisimulation
$R\subseteq T\times T$ is trivial.

\begin{proposition}
\label{P-LTS2}
Tree expansion is a bijection between well-pointed LTS and strongly
extensional $A$-labeled trees.
\end{proposition}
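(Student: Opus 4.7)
The plan is to prove Proposition \ref{P-LTS2} in complete analogy with Proposition \ref{P-bisim}, replacing $\PP$ by $\PP(-\times A)$ and graph bisimulation by LTS bisimulation. The key point enabling the argument is that the functor $\PP(-\times A)$ preserves weak pullbacks (since both $\PP$ and the product functor $-\times A$ do), so by Rutten~\cite{R1} the largest congruence on an LTS is bisimilarity (Lemma~\ref{L-LTS}), and the composition of two bisimulations is again a bisimulation.

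First I would verify that tree expansion sends well-pointed LTS to strongly extensional $A$-labeled trees. Given a well-pointed LTS $(G,q_0)$ with tree expansion $T_{q_0}$, whose nodes are finite action-labeled paths $q_0\xrightarrow{a_1}\cdots\xrightarrow{a_n}q_n$ from $q_0$, suppose $R\subseteq T_{q_0}\times T_{q_0}$ is a tree bisimulation. Define $\bar R\subseteq G\times G$ by $(x,y)\in\bar R$ iff there exist paths $p,q$ from $q_0$ ending in $x$ and $y$, respectively, with $p\mathrel{R}q$. Then $\bar R$ is an LTS bisimulation, so by simplicity of $(G,q_0)$ we have $\bar R\subseteq \Delta_G$; hence $p\mathrel{R}q$ forces $p$ and $q$ to end in the same state. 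An induction on the common depth (guaranteed by condition (c) of Definition~\ref{D-bisim}) together with the parent condition (b) then shows $p=q$, so $R\subseteq\Delta_{T_{q_0}}$.

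Next I would prove existence: given a strongly extensional $A$-labeled tree $T$ with root $r$, regard $T$ as a coalgebra for $\PP(-\times A)$. Let $\sim$ be bisimilarity on $T$ and form the simple quotient $(T/{\sim},[r])$, which is well-pointed by Lemma~\ref{L-LTS} (together with reachability of $T$ from $r$, which passes to the quotient). Its tree expansion $T'=(T/{\sim})_{[r]}$ is isomorphic to $T$: the relation $R\subseteq T\times T'$ pairing a node $x\in T$ with the $\sim$-class of the unique path from $r$ to $x$ is a tree bisimulation. Using weak-pullback preservation, both $R\circ R^{-1}$ and $R^{-1}\circ R$ are tree bisimulations on $T$ and $T'$, respectively, and both trees are strongly extensional (the second by the previous paragraph). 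So both composites are contained in the diagonals, and totality of $R$ on both sides yields an isomorphism $T\cong T'$.

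For uniqueness, suppose well-pointed LTS $(G,q_0)$ and $(G',q_0')$ have isomorphic tree expansions. The relation $R\subseteq G\times T_{q_0}$ pairing $q\in G$ with each path from $q_0$ ending in $q$ is an LTS bisimulation, and similarly for $G'$. Composing with the given isomorphism of tree expansions yields an LTS bisimulation between $(G,q_0)$ and $(G',q_0')$; by simplicity of both and reachability, this bisimulation is (the graph of) an isomorphism of pointed LTS. The main obstacle in the whole argument is the existence step, specifically showing that the tree expansion of the bisimilarity quotient is isomorphic back to $T$; but this is exactly where strong extensionality of $T$ is used, mirroring Proposition~\ref{P-bisim}, and everything goes through because $\PP(-\times A)$ preserves weak pullbacks.
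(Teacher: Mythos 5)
Your proof is correct and follows exactly the route the paper intends: the paper's own ``proof'' is the single remark that the argument is analogous to that of Proposition~\ref{P-bisim}, and you have carried out precisely that analogy (strong extensionality of tree expansions of well-pointed LTS, existence via the bisimilarity quotient $T/{\sim}$, uniqueness via a bisimulation between the two LTS), with the weak-pullback preservation of $\PP({-}{\times}A)$ supplying the needed closure of bisimulations under composition. Nothing is missing.
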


The proof is analogous to that of Proposition~\ref{P-bisim}.
Also the rest is analogous to the case of~\PP\ above:

\begin{corollary}
\label{C-LTS}
$
\begin{array}[t]{r@{\ }c@{\ }l}
  \nu\PP({-}{\times}A) & \cong & 
  \text{all strongly extensional $A$-labeled trees,}
  \\
  \nu\PP_\lambda({-}{\times}A) & \cong & 
  \text{all $\lambda$-branching, strongly extensional $A$-labeled trees.}
\end{array}
$
\end{corollary}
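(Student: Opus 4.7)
The plan is to combine three ingredients already at hand: the class version of the final-coalgebra theorem (Corollary~\ref{C-class}), the tree-expansion bijection of Proposition~\ref{P-LTS2}, and the observation that the functors $\PP({-}{\times}A)$ and $\PP_\lambda({-}{\times}A)$ preserve intersections. The latter is immediate, since $\PP$ and $\PP_\lambda$ preserve intersections (see Examples~\ref{E-coc}) and the product functor $({-}{\times}A)$ does too, so composites preserve intersections.

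First, I would apply Corollary~\ref{C-class} to $H=\PP({-}{\times}A)$. Since $\PP$ itself has no small final coalgebra, one must work with the class extension $H^{*}$ here. This yields that a final coalgebra for $H^{*}$ is carried by the collection of all well-pointed $H$-coalgebras up to isomorphism, with coalgebra structure given by formula~(\ref{coalg}). Using Corollary~\ref{P-LTS} one may already describe these well-pointed coalgebras internally as reachable, simple labeled transition systems with a chosen initial state.

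Next, I would invoke Proposition~\ref{P-LTS2}: tree expansion provides a bijection between (representatives of) well-pointed LTS and strongly extensional $A$-labeled trees. To promote this bijection to a coalgebra isomorphism it suffices to check that on each side the coalgebra structure is determined by reading off, for the distinguished state (respectively, the root), the set of pairs (label, successor state / subtree). On the tree side this is exactly the inverse of tree-tupling; on the well-pointed side it is precisely the structure coming from~(\ref{coalg}) together with Remark~\ref{R-final}. Hence the two coalgebras are isomorphic, giving the first clause of the corollary.

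For the $\PP_\lambda$ clause the argument is identical once one observes that a well-pointed coalgebra for $\PP_\lambda({-}{\times}A)$ is just a well-pointed LTS in which every state has strictly fewer than $\lambda$ outgoing transitions, which under tree expansion corresponds exactly to a $\lambda$-branching strongly extensional $A$-labeled tree. (In the case $\lambda$ is a sufficiently large regular cardinal, $\nu\PP_\lambda({-}{\times}A)$ actually exists as a set, as in Corollary~\ref{C-lambda}; otherwise one again applies the class version.) I expect the only real subtlety to be this set/class bookkeeping and the verification that tree expansion transports the canonical coalgebra structure of Remark~\ref{R-final} to the obvious one on trees; the rest is a direct transcription of the arguments already given for Corollaries~\ref{C-bisim} and~\ref{C-lambda}.
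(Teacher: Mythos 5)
Your proposal is correct and follows essentially the same route as the paper, which disposes of this corollary by noting that the argument is ``analogous to the case of $\PP$'': combine the description of the final coalgebra as all well-pointed coalgebras (in its class version, Corollary~\ref{C-class}, for $\PP({-}{\times}A)$, and the set version for $\PP_\lambda({-}{\times}A)$) with the tree-expansion bijection of Proposition~\ref{P-LTS2}. Your additional checks (preservation of intersections by the composite functors, and that tree expansion transports the canonical coalgebra structure to the inverse of tree-tupling) are exactly the details the paper leaves implicit.
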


\begin{corollary}
\label{C-end}
For the finitely branching LTS we have
\begin{align*}
\nu\PP_\omega({-}{\times}A)&\cong\text{all finitely branching, strongly
extensional $A$-labeled trees,}\\
\varrho\PP_\omega({-}{\times}A)&\cong\text{all rational, finitely
  branching strongly extensional $A$-labeled trees,}\\ 
\mu\PP_\omega({-}{\times}A)&\cong\text{all finite extensional $A$-labeled
trees.}
\end{align*}
\end{corollary}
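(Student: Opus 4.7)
The plan is to apply Theorems \ref{T-final}, \ref{T-init} and \ref{T-rho} to $H = \PP_\omega(-\times A)$ and translate the descriptions of well-pointed (and well-founded well-pointed, and finite well-pointed) coalgebras via the bijection of Proposition~\ref{P-LTS2}. The functor $\PP_\omega(-\times A)$ preserves intersections and is finitary, so all three theorems apply.

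First I would handle $\nu H$. By Theorem~\ref{T-final}, $\nu H$ is the set of all well-pointed coalgebras up to isomorphism, and by Proposition~\ref{P-LTS2} these are in bijection with strongly extensional $A$-labeled trees via tree expansion. Since a coalgebra for $\PP_\omega(-\times A)$ assigns to each state only finitely many outgoing transitions, the tree expansion of every state is finitely branching; conversely, a finitely branching strongly extensional tree is the tree expansion of some (in fact, its own quotient-by-bisimilarity) well-pointed coalgebra for $\PP_\omega(-\times A)$. This gives the first isomorphism.

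Next, for $\varrho H$ I would invoke Theorem~\ref{T-rho}: $\varrho H$ consists of the finite well-pointed coalgebras up to isomorphism. Arguing as in Example~\ref{E-rat-2}, a finite well-pointed LTS with state set $\{q_1,\dots,q_n\}$ has only finitely many distinct subtree expansions (one per state), so its tree expansion is a rational strongly extensional $A$-labeled tree. Conversely, given such a rational tree $T$, collapse it by the equivalence ``same subtree up to isomorphism''; the result is a finite LTS which is well-pointed (simplicity by construction, reachability since every subtree is reachable from the root in $T$) and whose tree expansion is $T$. Combined with Proposition~\ref{P-LTS2}, this yields the second isomorphism.

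Finally, for $\mu H$ I would apply Theorem~\ref{T-init}: $\mu H$ consists of the well-founded, well-pointed coalgebras. An LTS, viewed as a coalgebra for $\PP_\omega(-\times A)$, has the same cartesian subcoalgebras as its underlying $\PP_\omega$-graph (forgetting labels), so by Example~\ref{E-wf}(1) and Proposition~\ref{P-can}, well-foundedness means no infinite directed path. This is precisely well-foundedness of the tree expansion. On well-founded trees, strong extensionality collapses to ordinary extensionality: a tree-bisimulation between well-founded trees can be shown trivial by induction on depth, so we just need that siblings with isomorphic subtrees are identified. Finally, a well-founded, finitely branching tree is finite by König's Lemma, giving the description as all finite extensional $A$-labeled trees.

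The main obstacle here is essentially routine: verifying that the bijection of Proposition~\ref{P-LTS2} restricts well to the finite/rational/well-founded sub-cases. The most delicate point is checking that a rational strongly extensional $A$-labeled tree really arises from a \emph{finite} well-pointed coalgebra (rather than merely one with finitely many bisimilarity classes of states) — but this follows because, once we take the quotient by bisimilarity, the isomorphism-classes of subtrees are in bijection with the states, and rationality gives finiteness.
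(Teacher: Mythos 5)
Your proposal is correct and follows essentially the same route as the paper, which derives Corollary~\ref{C-end} by declaring the LTS case "analogous to the case of $\PP$": apply Theorems~\ref{T-final}, \ref{T-init} and~\ref{T-rho}, transport along the tree-expansion bijection of Proposition~\ref{P-LTS2}, identify well-foundedness with the absence of infinite paths, reduce strong extensionality to extensionality on well-founded trees, and invoke K\"onig's Lemma. You in fact supply more detail than the paper does, e.g.\ on why rational strongly extensional trees come from genuinely finite coalgebras (where your "isomorphic subtrees" equivalence coincides with bisimilarity on strongly extensional trees, as in the proof of Proposition~\ref{P-bisim}).
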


\section{Conclusion}

For functors~$H$ satisfying the (mild) assumption of
preservation of intersections
we described (a)~the final coalgebra as the set of all
well-pointed coalgebras, (b)~the initial algebra as the set of all well-pointed
coalgebras that are well-founded, and (c)~in the case where $H$~is finitary,
the initial iterative algebra as the set of all finite well-pointed
coalgebras. This is based on the observation that given an element of
a final coalgebra, the subcoalgebra it generates has no proper
subcoalgebras nor proper quotients---shortly, this subcoalgebra is
well-pointed. And different elements define non-isomorphic
well-pointed subcoalgebras. We then combined this with our result that for all set functors
the initial algebra is precisely the final well-founded coalgebra.
 (For set functors preserving inverse images this was proved by Taylor \cite{Ta1}.) More generally, for endofunctors of varieties preserving
	intersections we proved that the final coalgebra is carried
	by the sets of all well-pointed coalgebras, and the initial
	algebra is carried by the set of all well-founded, well-pointed
	coalgebras, and we presented  a concrete description. Numerous examples demonstrate that this
view of final coalgebras and initial algebras is useful in
applications. 

More generally, for functors preserving finite intersections the fact that initial algebras coincide with final well-founded coalgebras was
proved in locally finitely presentable categories. The
description of the final coalgebra was formulated concretely only in varieties of algebras. In future research we intend to generalize
this result to a wider class of base categories.


\begin{thebibliography}{00}

\bibitem{Ac}
P.~Aczel,
\emph{Non-well-founded Sets},
CSLL Lect. Notes.~\textbf{14}, Stanford CSLI Publications,
Stanford~1988.

\bibitem{AM}
P.~Aczel and N.~Mendler,
A final coalgebra theorem,
\emph{Lect. N. Comput. Sci.} (1989), \mbox{357--365}.

\bibitem{A}
J.~Ad\'amek,
Free algebras and automata realizations in the language of
categories,
\emph{Comment. Math. Univ. Carolinae} \textbf{15} (1974),
\mbox{589--602}.

\bibitem{AHS} J.~Ad\'amek, H.~Herrlich and G.~E.~Strecker, Abstract
  and Concrete Categories: The Joy of Cats, 2nd edition, Dover Publications, 2009.

\bibitem{ALM}
J.~Ad\'amek, D.~L\"ucke and S.~Milius,
Recursive coalgebras of finitary functors,
\emph{Theoret.~Inform.~Appl.} \textbf{41}
(2007), \mbox{442--462}.




\bibitem{AMV}
J.~Ad\'amek, S.~Milius and J.~Velebil,
Iterative algebras at work,
\emph{Math. Struct. Comput. Sci.}
\textbf{16} (2006), \mbox{1085--1131}.


\bibitem{AMV0}
J.~Ad\'amek, S.~Milius and J.~Velebil,
On coalgebra based on classes,
\emph{Theoret.~Comput.~Sci.} \textbf{316} (2004), \mbox{3--23}. 



\bibitem{AR}
J.~Ad\'amek and J.~Rosick\'y,
\emph{Locally Presentable and Accessible Categories}, Cambridge
University Press, 1994.


\bibitem{BM}
J.~Barwise and L.~S. Moss,
\emph{Vicious Circles},
CSLI Publications, Stanford~1996.

\bibitem{bms11}
M.~M.~Bonsangue, S.~Milius and A.~Silva,
Sound and complete axiomatizations of coalgebraic language
equivalence, available at \url{http://arxiv.org/abs/1104.2803}, 2011.

\bibitem{BT}
J.~Brzozowski and H.~Tamm, 
Theory of \'Atomata,
Proc. 15th International Conference on Developments in Language Theory (DLT'11), ser. Lecture Notes Comput. Sci., vol. 6795. Springer, 2011, pp. 105-116.

\bibitem{CUV}
V.~Capretta, T.~Uustalu and V.~Vene,
Recursive coalgebras from comonads,
\emph{Inform. and Comput.} \textbf{204} (2006), 437--468.

\bibitem{E}
S.~Eilenberg,
\emph{Automata, Languages and Machines, Vol.~A},
Academic Press, New York 1974.

\bibitem{Gi}
S.~Ginali,
Regular trees and the free iterative theory,
\emph{J. Comput. System Sci.} \textbf{18} (1979), \mbox{228--242}.

\bibitem{GU}
P.~Gabriel and F.~Ulmer,
\emph{Lokal pr\"asentierbare Kategorien},
Lect. N. Mathem.~221, Springer-Verlag Berlin, 1971.

\bibitem{Gu}
H.-P. Gumm,
On minimal coalgebras,
\emph{Appl. Cat. Str.} \textbf{16} (2008), \mbox{313--332}.

\bibitem{Gu2}
H.-P. Gumm,
From $T$-coalgebras to filter structures and transition systems,
CALCO~2005,
\emph{Lect. N. Comput. Sci.} \textbf{3629} (2005), \mbox{194--212}.





\bibitem{MA}
E.~G. Manes and M.~A. Arbib,
\emph{Algebraic approaches to program semantics},
Springer-Verlag, New York, 1986.

\bibitem{M1}
S.~Milius,
Completely iterative algebras and completely iterative monads,
\emph{Inform. and Comput.} \textbf{196} (2005), \mbox{1--41}.

\bibitem{M2}
S.~Milius, 
A sound and complete calculus for finite stream circuits,
Proc.~25th Annual Symposium on Logic in Computer Science (LICS'10),
IEEE Computer Society 2010, \mbox{449--458}

\bibitem{N}
E. Nelson,
Iterative algebras,
\emph{Theoret. Comput. Sci.} \textbf{25} (1983), \mbox{67--94}.

\bibitem{O}
G.~Osius,
Categorical set theory: a characterization of the category of sets,
\emph{J. Pure Appl. Algebra} \textbf{4} (1974), \mbox{79--119}.

\bibitem{R1}
J. J. M. M. Rutten,
Universal coalgebra: a theory of systems,
\emph{Theoret. Comput. Sci.} \textbf{249} (2000), \mbox{3--80}.

\bibitem{R2}
J. J. M. M. Rutten,
Algebraic specification and coalgebraic sybnthesis of Mealy automata,
\emph{Elect. N. Theoret. Comput. Sci.} \textbf{160} (2006),
\mbox{305--319}.

\bibitem{RT}
J. J. M. M. Rutten and D. Turi,
On the foundations of final semantics: non-standard sets, metric
spaces, partial orders,
\emph{Lect. N. Comput. Sci.} \textbf{666}, Springer 1993,
\mbox{477--530}.

\bibitem{S}
D.~Schwencke,
Coequational logic for accessible functors,
\emph{Inform. Comput.} \textbf{208} (2010), \mbox{1469--1489}.

\bibitem{Tar}
A.~Tarski,
A lattice theoretical fixed point theorem and its applications,
\emph{Pacific J. Math.} \textbf{5} (1955), \mbox{285--309}.


\bibitem{Ta1}
P.~Taylor,
Towards a unified treatment of induction~I: the general
recursion theorem, preprint \mbox{1995--6}, see
\texttt{www.paultaylor.eu/ordinals/\#towuti}.

\bibitem{Ta2}
P.~Taylor,
\emph{Practical Foundations of Mathematics},
Cambridge University Press, 1999.

\bibitem{Ti}
J. Tiurin,
Unique fixed points vs. least fixed points,
\emph{Theoret. Comput. Sci.} \textbf{12} (1980), \mbox{229--254}.

\bibitem{Tr}
V.~Trnkov\'a,
On a descriptive classification of set functor~I,
\emph{Comment. Math. Univ. Carolinae} \textbf{12} (1971),
\mbox{323--352}.

\bibitem{TAKR}
V.~Trnkov\'a, J.~Ad\'amek, V.~Koubek and J.~Reiterman,
Free algebras, input processes and free monads,
\emph{Comment. Math. Univ. Carolinae} \textbf{16} (1975),
\mbox{339--351}.

\bibitem{W}
J.~Worrell,
On the final sequence of a finitary set functor,
\emph{Theoret. Comput. Sci.} \textbf{338} (2005), \mbox{184--199}.

\end{thebibliography}
\end{document}